\pdfoutput=1
\documentclass[11pt,a4paper]{article}

\usepackage[T1]{fontenc}
\usepackage[utf8]{inputenc}
\usepackage{lmodern}
\usepackage[margin=1in]{geometry}
\usepackage{microtype}
\usepackage{amsmath,amssymb,amsthm,mathtools}
\usepackage[dvipsnames]{xcolor}
\usepackage{tikz}
\usetikzlibrary{arrows.meta,positioning,shapes.geometric,calc,fit,backgrounds,svg.path}
\makeatletter
\newcommand\FAicon[1]{\begingroup\edef\FA@scale{\strip@pt\dimexpr 1em/640\relax}%
  \tikz[baseline=-0.8em,scale=\FA@scale,yscale=-1]\fill svg {#1};\endgroup}
\makeatother
\DeclareRobustCommand\faCircleXmark{\FAicon{M320 576C461.4 576 576 461.4 576 320C576 178.6 461.4 64 320 64C178.6 64 64 178.6 64 320C64 461.4 178.6 576 320 576zM231 231C240.4 221.6 255.6 221.6 264.9 231L319.9 286L374.9 231C384.3 221.6 399.5 221.6 408.8 231C418.1 240.4 418.2 255.6 408.8 264.9L353.8 319.9L408.8 374.9C418.2 384.3 418.2 399.5 408.8 408.8C399.4 418.1 384.2 418.2 374.9 408.8L319.9 353.8L264.9 408.8C255.5 418.2 240.3 418.2 231 408.8C221.7 399.4 221.6 384.2 231 374.9L286 319.9L231 264.9C221.6 255.5 221.6 240.3 231 231z}}
\DeclareRobustCommand\faCopyright{\FAicon{M320 576C461.4 576 576 461.4 576 320C576 178.6 461.4 64 320 64C178.6 64 64 178.6 64 320C64 461.4 178.6 576 320 576zM269.1 370.9C297.2 399 342.8 399 370.9 370.9C380.3 361.5 395.5 361.5 404.8 370.9C414.1 380.3 414.2 395.5 404.8 404.8C357.9 451.7 282 451.7 235.1 404.8C188.2 357.9 188.2 282 235.1 235.1C282 188.2 357.9 188.2 404.8 235.1C414.2 244.5 414.2 259.7 404.8 269C395.4 278.3 380.2 278.4 370.9 269C342.8 240.9 297.2 240.9 269.1 269C241 297.1 241 342.7 269.1 370.8z}}
\DeclareRobustCommand\faCreativeCommonsZero{\FAicon{M319.6 72C461.4 72 568 182.1 568 320C568 467.1 449.5 568 319.6 568C185.6 568 72 458.5 72 320C72 187.1 176.7 72 319.6 72zM320.4 116.7C202.2 116.7 116.7 214.6 116.7 320C116.7 429.8 207.9 522.8 320.4 522.8C423.6 522.8 523.2 441.7 523.2 320C523.3 206.2 433 116.7 320.4 116.7zM320 177.2C238.1 177.2 217.5 254.5 217.5 320C217.5 385.5 238.1 462.8 320 462.8C401.9 462.8 422.5 385.5 422.5 320C422.5 254.5 401.9 177.2 320 177.2zM320 231.1C323.3 231.1 326.4 231.6 329.2 232.3C335.1 237.4 338 244.4 332.3 254.2L277.8 354.4C276.1 341.7 275.9 329.3 275.9 320C275.9 291.2 277.9 231.1 320 231.1zM360.8 277.3C363.7 292.7 364.1 308.7 364.1 320C364.1 348.9 362.1 408.9 320 408.9C306.5 408.9 287.4 401.2 299.9 382.5L360.8 277.3z}}
\DeclareRobustCommand\faRegistered{\FAicon{M320 576C461.4 576 576 461.4 576 320C576 178.6 461.4 64 320 64C178.6 64 64 178.6 64 320C64 461.4 178.6 576 320 576zM264 208L344 208C383.8 208 416 240.2 416 280C416 308.9 399 333.8 374.4 345.3L404.6 395.6C411.4 407 407.7 421.7 396.4 428.5C385.1 435.3 370.3 431.6 363.5 420.3L322.5 352L288.1 352L288.1 408C288.1 421.3 277.4 432 264.1 432C250.8 432 240.1 421.3 240.1 408L240.1 232C240.1 218.7 250.8 208 264.1 208zM336 304L344 304C357.3 304 368 293.3 368 280C368 266.7 357.3 256 344 256L288 256L288 304L336 304z}}
\usepackage{booktabs,tabularx,array,multirow,longtable}
\usepackage{pifont}
\usepackage{enumitem}
\usepackage[nottoc,notlot,notlof]{tocbibind}
\usepackage[hidelinks]{hyperref}
\usepackage{xurl}
\usepackage[capitalise,nameinlink]{cleveref}
\usepackage{caption}
\usepackage[section]{placeins}
\newcolumntype{L}[1]{>{\raggedright\arraybackslash}p{#1}}
\newcolumntype{Y}{>{\raggedright\arraybackslash}X}

\definecolor{wBlack}{HTML}{000000}
\definecolor{wOrange}{HTML}{E69F00}
\definecolor{wSky}{HTML}{56B4E9}
\definecolor{wGreen}{HTML}{009E73}
\definecolor{wYellow}{HTML}{F0E442}
\definecolor{wBlue}{HTML}{0072B2}
\definecolor{wVermillion}{HTML}{D55E00}
\definecolor{wPurple}{HTML}{CC79A7}

\newcommand{\Collapse}{{\faCircleXmark}{\,\,\textsc{Certified}}}
\newcommand{\Cond}{{\faCopyright}{\,\,\textsc{Conditional}}}
\newcommand{\Open}{{\faCreativeCommonsZero}{\,\,\textsc{Open}}}
\newcommand{\Survive}{{\faRegistered}{\,\,\textsc{Outside reach}}}
\newcommand{\vC}{{\faCircleXmark}}
\newcommand{\vK}{{\faCopyright}}
\newcommand{\vO}{{\faCreativeCommonsZero}}
\newcommand{\vS}{{\faRegistered}}

\theoremstyle{plain}
\newtheorem{theorem}{Theorem}[section]
\newtheorem{proposition}[theorem]{Proposition}
\newtheorem{lemma}[theorem]{Lemma}
\newtheorem{corollary}[theorem]{Corollary}
\theoremstyle{definition}
\newtheorem{definition}[theorem]{Definition}

\newtheorem{question}[theorem]{Open Question}
\theoremstyle{remark}
\newtheorem{remark}[theorem]{Remark}
\crefname{hypothesis}{Hypothesis}{Hypotheses}
\crefname{question}{Open Question}{Open Questions}

\newcommand{\Z}{\mathbb{Z}}
\newcommand{\R}{\mathbb{R}}
\newcommand{\F}{\mathbb{F}}
\newcommand{\Zq}{\mathbb{Z}_q}
\newcommand{\Lat}{\mathcal{L}}
\newcommand{\poly}{\mathrm{poly}}
\newcommand{\negl}{\mathrm{negl}}
\newcommand{\BQP}{\mathsf{BQP}}
\newcommand{\LWE}{\mathsf{LWE}}
\newcommand{\RLWE}{\mathsf{RLWE}}
\newcommand{\MLWE}{\mathsf{MLWE}}
\newcommand{\SIS}{\mathsf{SIS}}
\newcommand{\MSIS}{\mathsf{MSIS}}
\newcommand{\DCP}{\mathsf{DCP}}
\newcommand{\EDCP}{\mathsf{EDCP}}
\newcommand{\SVP}{\mathsf{SVP}}
\newcommand{\uSVP}{\mathsf{uSVP}}
\newcommand{\GapSVP}{\mathsf{GapSVP}}
\newcommand{\SIVP}{\mathsf{SIVP}}
\newcommand{\BDD}{\mathsf{BDD}}
\newcommand{\CVP}{\mathsf{CVP}}
\newcommand{\HReg}{\mathsf{H}_{\mathrm{Reg}}}
\newcommand{\HEDCP}{\mathsf{H}_{\mathrm{EDCP}}}
\newcommand{\HHS}{\mathsf{H}_{\mathrm{HS}}}
\newcommand{\Hstar}{\mathsf{H}^{\star}}
\newcommand{\gu}{\gamma_{\mathrm{u}}}
\newcommand{\gs}{\gamma^{\star}}
\newcommand{\ket}[1]{\lvert #1\rangle}
\newcommand{\bra}[1]{\langle #1\rvert}
\newcommand{\Otil}{\widetilde{O}}

\newcommand{\Q}{\mathbb{Q}}
\newcommand{\SurviveDag}{\Survive\textsuperscript{\,\dag}}
\newcommand{\bR}{\textsuperscript{R}}
\newcommand{\bH}{\textsuperscript{H}}

\hypersetup{pdftitle={Beyond LWE: Post-Quantum Assumptions under a Counterfactual Break of the Dihedral Coset Problem},pdfauthor={Ramses Fernandez}}
\usepackage{pgfplots}
\pgfplotsset{compat=1.18}

\title{\textbf{Beyond LWE: Post-Quantum Assumptions under a\\ Counterfactual Break of the Dihedral Coset Problem}}
\author{Ramses Fernandez\textsuperscript{\href{https://orcid.org/0000-0002-8959-636X}{[0000-0002-8959-636X]}}\\[0.3em]%\thanks{The author declares an affiliation with AegisQ Technologies as an employee.}\\[0.3em]
\small AegisQ Technologies\\
\small\texttt{rfernandez@aegisq.tech}}
\date{\today}

\begin{document}
\maketitle

\begin{abstract}
A refuted 2026 claim of an efficient quantum algorithm for the Dihedral Coset Problem (DCP) motivates this survey of the consequences of a genuine breakthrough. We distinguish a noiseless cyclic coset-state solver, a solver with a prescribed failure exponent, and an extrapolated-state solver. Regev's rare-corruption observation already makes the first hypothesis sufficient for lattice algorithms at some polynomial approximation factors; the witnessing algorithm determines the quantitative bounds. We compose unstructured reductions with explicit sample, state-preparation and success conditions, and compare their sufficient guarantees with more than thirty lattice and non-lattice assumptions. We restate the structured state conversion of Wen and Zheng and audit its forward parameter certificate; the printed bound supplies no certificate at the ML-KEM ring and modulus. We use an explicit affine embedding for heuristic gap estimates, explain the lossiness connection from decision NTRU to search Ring-LWE, and distinguish ordinary from module uniqueness. Chen's withdrawn 2024 quantum algorithm for LWE and a proposed repair of its post-processing would, if correct, reach only a many-sample, small-noise LWE regime that the failure-exponent hypothesis already covers for exponents below $1/2$, and no deployed parameter set. Tables separate rigorous consequences, heuristic comparisons and unverified applications, and distinguish solving a proof assumption from attacking a scheme. Classical structural attacks on Classic McEliece, which do not involve DCP but put the concrete security of the scheme under revision, are summarised here and analysed in detail in a companion paper. The survey closes with open questions and a scenario-planning checklist that requires a complete efficient reduction.
\end{abstract}

\noindent\textbf{Keywords:} Dihedral coset problem $\cdot$ Learning with errors $\cdot$ Hidden subgroup problem

\tableofcontents

% ======================================================================
\section{Introduction}\label{sec:intro}
% ======================================================================

\subsection{Post-quantum cryptography and the place of LWE}\label{sec:intro-pqc}

In 1994 Shor showed that a quantum computer factors integers and computes discrete logarithms in polynomial time \cite{Sho97}. His algorithms solve instances of the \emph{abelian hidden subgroup problem}, and they undermine the factoring and discrete-logarithm assumptions behind RSA, finite-field Diffie--Hellman and elliptic-curve cryptography. Post-quantum cryptography is the response, public-key primitives based on problems for which no efficient quantum algorithm is known. NIST began a public standardisation process in 2016. Its first outputs, published in August 2024, are FIPS 203 (ML-KEM, derived from CRYSTALS-Kyber), FIPS 204 (ML-DSA, derived from CRYSTALS-Dilithium) and FIPS 205 (SLH-DSA, derived from SPHINCS$^+$) \cite{FIPS203,FIPS204,FIPS205,Kyber18,ducas2017,bernstein2019}. A fourth standard, FN-DSA (derived from Falcon \cite{Falcon20}), is in preparation as FIPS 206. In March 2025 NIST selected the code-based KEM HQC as a second key-encapsulation standard \cite{NISTIR8545,HQC}, and in May 2026 it advanced nine candidates to the third round of a separate process for additional signatures \cite{NIST26sig}, one of which, the lattice-isomorphism signature Hawk, has since been withdrawn by its submission team (\cref{sec:lip}) \cite{NIST26r3}.

Three of the four schemes selected in NIST's main process, the three published standards and FN-DSA, are built on lattices, and two of them, ML-KEM and ML-DSA, on the \emph{Learning with Errors} (LWE) problem in its module form. LWE, introduced by Regev in 2005 \cite{Reg09}, owes this position to an unusual combination of properties. It enjoys a worst-case to average-case reduction: solving random LWE instances is at least as hard as approximating standard lattice problems in the \emph{worst} case, quantumly \cite{Reg09} and, for some parameters, classically \cite{Pei09,BLPRS13}. It is versatile, supporting public-key encryption, identity- and attribute-based encryption, and fully homomorphic encryption \cite{GPV08,BGV12,Pei16}. And its algebraically structured variants, Ring-LWE and Module-LWE \cite{LPR13,LS15}, are efficient enough to deploy at scale.

The quantum hardness of LWE is a conjecture supported by reductions and by the study of attacks in specified parameter regimes. Quantum lattice-reduction and sieving algorithms give important benchmarks: the cited heuristic classical sieve has running time $2^{0.292d+o(d)}$ in dimension $d$ \cite{BDGL16}, and the cited quantum random-walk sieve has running time $2^{0.2570d+o(d)}$ in its quantum-memory model \cite{CL21}. These are algorithmic estimates, not lower bounds on all attacks. A different route is provided by reductions to dihedral problems. Regev reduced approximate unique-SVP to the \emph{dihedral coset problem} with a failure parameter \cite{Reg04}; Brakerski, Kirshanova, Stehlé and Wen connected LWE to an extrapolated form of that problem under parameter-dependent quantum reductions \cite{BKSW18}. This survey asks which cryptographic consequences those reductions justify for a hypothetical efficient solver, and at which parameters.

The cited dihedral algorithms remain subexponential: Kuperberg's sieves take time $2^{O(\sqrt{\log N})}$, while Regev's polynomial-space variant takes $2^{O(\sqrt{\log N\log\log N})}$ \cite{Kup05,Reg04b,Kup13}. In Regev's lattice reduction, $N=(2M)^n$ with $M=2^{4n}$, hence $\log N=\Theta(n^2)$ \cite{Reg04}. Substituting this modulus into the first running-time bound gives $2^{O(n)}$, which does not establish a polynomial-time lattice algorithm or a concrete advantage over classical attacks. Sample consumption, bad-register tolerance and state-preparation accuracy must also match the reduction. \Cref{sec:prelim} explains these requirements.

\subsection{Previous surveys and research context}\label{sec:intro-alarms}

The relevant surveys approach this subject from complementary directions. Childs and van Dam \cite{CvD10} explain algebraic quantum algorithms; Peikert \cite{Pei16} develops lattice assumptions, reductions and constructions; Biasse, Bonnetain, Kirshanova, Schrottenloher and Song \cite{BBKSS23} examine quantum cryptanalysis and its costs. Chen and Sun \cite{CS24} give a dedicated dihedral overview with measurement limitations, while Dutto, Mercuri, Murru and Romano \cite{DMM25} survey the HSP from mathematical and cryptographic perspectives. \Cref{tab:prior-surveys} compares these works with the question studied here.

\begin{table}[htbp]
\centering\small
\renewcommand{\arraystretch}{1.13}
\begin{tabularx}{\textwidth}{@{}L{3.1cm}Y Y@{}}
\toprule
\textbf{Prior survey} & \textbf{Relevant scope and overlap} & \textbf{Focus of the present survey}\\
\midrule
Childs--van Dam (2010) \cite{CvD10} & Algebraic quantum algorithms, including hidden subgroups and hidden shifts; explains the algorithmic framework behind dihedral methods. & Fix state-input solver hypotheses and track their cryptographic consequences through quantitative reductions.\\
Peikert (2016) \cite{Pei16} & SIS, LWE and ring variants, their worst-case foundations and cryptographic applications; includes the NTRU lossiness connection. & Distinguish an algorithm for a proof assumption, loss of a worst-case hardness premise, and an explicit scheme attack.\\
Biasse et al.\ (2023) \cite{BBKSS23} & Quantum attacks and cost models for classical and post-quantum assumptions; overlaps on lattices, codes, isogenies and hidden shifts. & Ask what follows if specified dihedral solvers become efficient, accounting for samples, preparation errors and instance promises.\\
Chen--Sun (2024) \cite{CS24} & Dihedral HSP and DCP, Fourier sampling, alternative measurements and bounds on success from coset samples. & Use the solver interface as a premise and audit the reductions needed to transfer its capability to cryptographic instances.\\
Dutto et al.\ (2025) \cite{DMM25} & Abelian and non-abelian HSP, Fourier and black-box methods, and links to lattices, graph isomorphism and code equivalence. & Compare a selected range of assumptions under separate cyclic, robust and extrapolated-state hypotheses.\\
\bottomrule
\end{tabularx}
\caption{Comparison by organising question and analytical output. The scope descriptions concern the cited versions; they are not claims that these works omit every individual connection used here.}
\label{tab:prior-surveys}
\end{table}

\paragraph{Positioning and coverage.} Our contribution is the combined parameter and assumption audit: compatible solver domains, available classical samples and quantum states, preparation and success losses, and the promises required by concrete embeddings. This builds on the mechanisms explained in the cited literature; it is not a priority claim for the underlying reductions. \Cref{sec:intro-contrib} identifies their antecedents and the quantitative derivations made explicit here. The classification is selective: it includes assumptions with a direct dihedral connection, lattice assumptions reached by the composed oracles, and prominent alternatives that illustrate the limits of those routes. Inclusion outside a known route is not evidence of a separation. The recent structured connection of \cite{WZ26} is treated as a primary reduction to audit, rather than as another survey. Concrete attack estimation has its own literature \cite{APS15,LvdPdW12,SL26}, discussed in \cref{sec:prelim-algorithms}; our heuristic comparisons do not replace it.

\paragraph{Historical context.} The following claims and cryptanalytic developments motivate the need to distinguish algorithmic statements, parameter-qualified consequences and deployed attacks.

\paragraph{Structured lattices, 2014--2017.} Quantum algorithms for computing unit groups, class groups and principal ideal generators in number fields of arbitrary degree \cite{EHKS14,BS16}, combined with a classical method for recovering a \emph{short} generator from an arbitrary one in cyclotomic fields \cite{CDPR16}, broke schemes whose security relied on the hardness of finding short generators of principal ideals. Cramer, Ducas and Wesolowski extended the approach to arbitrary ideals, solving the shortest vector problem in ideal lattices of cyclotomic fields in quantum polynomial time within an approximation factor $\exp(\Otil(\sqrt n))$ \cite{CDW17}, and Pellet-Mary, Hanrot and Stehlé obtained trade-offs using a preprocessing phase that depends only on the field \cite{PHS19}. These attacks exploit algebraic structure, but at subexponential approximation factors, far above the polynomial factors on which Ring-LWE and Module-LWE are based. They exhibit algorithms exploiting particular algebraic structures; their stated approximation factors do not themselves give attacks on the deployed Ring-LWE or Module-LWE parameters considered here.

\paragraph{Eldar and Hallgren, 2021--2022.} Eldar and Hallgren described an efficient quantum algorithm for the bounded distance decoding problem on a class of integer lattices with a subexponential approximation factor, using a quantum state on lattices as an approximate eigenvector to self-reduce the instance \cite{EH22}. Ducas and van Woerden observed that known classical, and even deterministic, polynomial-time algorithms, LLL in particular, already solve the instances covered by the claim \cite{DvW21note}. This comparison supplies no improvement over the cited classical algorithms at those parameters, and no attack on the cryptographic instances considered here.

\paragraph{Chen, 2024, and a proposed repair.} In April 2024 Chen claimed polynomial-time quantum algorithms for $\GapSVP$ and $\SIVP$ on all $n$-dimensional lattices within approximation factors $\widetilde\Omega(n^{4.5})$, obtained through a quantum algorithm for LWE with certain polynomial modulus-to-noise ratios. On April 18 the author reported a bug in Step 9 of the algorithm, found by Hongxun Wu and, independently, by Thomas Vidick, and stated that he did not know how to fix it; the claim of a polynomial-time quantum algorithm for LWE with polynomial modulus-to-noise ratios was withdrawn \cite{Che24}. In September 2025 Zhang proposed an exact replacement for Steps 8 and 9 \cite{Zha25}. Its current version states the consequence for LWE only conditionally, on assumptions that include side information about a run-dependent offset, supplied by the state-preparation oracle, and in October 2025 Apon published a rebuttal of its first and third versions \cite{Apo25}. \Cref{sec:chen-zhang} places both claims in the map of this survey.

\paragraph{Classical breaks of post-quantum candidates, 2022.} Two candidates fell to \emph{classical} cryptanalysis. SIDH and its KEM SIKE, an isogeny-based scheme in the fourth round of the NIST process, were broken by torsion-point attacks that recover the secret isogeny in polynomial time \cite{CD23,MMPPW23,Rob23}. The multivariate signature Rainbow, a finalist, was broken by Beullens with a key-recovery attack that takes on average 53 hours on a laptop for the security-level-1 parameters \cite{Beu22}. Neither break used a quantum computer.

\paragraph{Structural attacks on McEliece, 2024--2026.} Starting with a subexponential distinguisher in 2024 \cite{Ran25}, classical structural attacks on binary Goppa codes led in 2026 to a quasi-polynomial distinguisher \cite{GIJS26} and to conditional key-recovery estimates below the classical-work references discussed by their authors \cite{Saa26cost,Wei26}; the lowest are about $2^{89}$ to $2^{128}$ per run of the underlying sparse linear algebra, from bit operations with free memory to memory-charged costs, depending on the parameter set and the cost model, under heuristic assumptions tested only on small keys \cite{Wei26}. Key-recovery challenges with brute-force labels up to $254$ were solved along the way \cite{BLT26,Hem26,Saa26a,Wei26,Saa26b}. The German BSI advised against new deployments of the scheme \cite{BSI26}. None of these attacks is quantum; \cref{sec:mceliece-status} summarises them, and a companion paper \cite{FV26mceliece} analyses them in detail, separating the proved, experimental and heuristic claims.

These episodes illustrate three recurring issues. Several high-profile claims received rapid scrutiny, including counterexamples and formal developments. Algebraic structure can enable attacks that generic estimates miss. Diversifying assumptions can reduce reliance on one problem family, provided the combined construction has an appropriate security argument; diversity alone is not a security guarantee.

\subsection{The Simon episode}\label{sec:intro-simon}

On August 6, 2026, Daniel R.\ Simon made public a preliminary draft, received by the ePrint archive on August 3, claiming a polynomial-time quantum algorithm for the Dihedral Coset Problem \cite{Sim26}. According to the abstract of its current revision, the method follows Regev's reduction of the dihedral subgroup problem to modular subset sum but erases sample bits without a subset-sum oracle, tolerates faulty samples at rates up to $1/O(\log n)$, and would yield quantum algorithms for approximate SVP with factor $\sqrt n\,\mathrm{polylog}(n)$ and for LWE with matching parameters. The claim carried unusual weight. The DCP is the problem to which Regev reduced the unique Shortest Vector Problem in 2002 \cite{Reg04}, and it is equivalent, in an extrapolated form, to LWE \cite{BKSW18}. The claim therefore prompted questions about ML-KEM and ML-DSA \cite{FIPS203,FIPS204}; consequences for their parameters require the sample, distribution and solver-domain checks developed below. It is a small irony of the history that, according to contemporaneous reporting \cite{PQC26news}, the author of the claim is the same Simon whose 1994 algorithm for hidden periods over $\F_2^n$ \cite{Sim97} inspired Shor's factoring algorithm \cite{Sho97}.

The claim did not survive scrutiny. Within days, Daniel Apon identified a gap in the argument for the third lemma of the preprint and stated publicly that the lemma is false, a one-page counterexample was posted, and the revised drafts through August 17 did not repair it in the cited analyses \cite{Sim26,PQC26news}. On August 15, Gupte, Ragavan and Zhandry published a no-go result, whose main bound is formalised in Lean 4 for a formal model of the algorithm, showing that the procedure keeps too little information about the Fourier labels after its third step to recover the hidden shift with non-negligible advantage; either the algorithm is incorrect or its expected running time is not polynomial \cite{GRZ26,GRZLean}. Guo and Yang then gave rigorous statements and proofs of the four lemmas of \cite{Sim26}, and isolated the hypothesis they all require, that a certain partition be fixed independently of the measured string, which the algorithm, as specified, does not satisfy \cite{GY26}. An independent Lean 4 audit reached the same conclusion: one lemma admits a corrected proof, but the adaptive-selection step on which the polynomial running time depends is not repaired \cite{RoyessAudit}. 

The author has not withdrawn the claim. The revision dated October 7, 2026, posted on October 8 as the last of six revisions recorded by the ePrint archive, adds a note arguing that, because the groups the algorithm keeps are post-selected on all-zero Hadamard outcomes, the analysis of \cite{GRZ26}, which traces out the selection bits, mispredicts the algorithm's output; the note concedes that the fourth lemma fails if the selection bits of those groups are traced out \cite{Sim26}. In our analysis the note does not remove the obstruction: the bound of \cite{GRZLean} holds for every Boolean-output channel applied to the state that the algorithm keeps after its partial subset-sum measurement, with the Fourier labels reduced to their digests; its one-attempt form counts restarts as an outcome; and post-selection that succeeds with non-negligible probability can raise an exponentially small advantage only by a factor of order the inverse of its acceptance probability, which leaves it negligible (\cref{sec:simon-response}). \Cref{sec:simon} describes the strategy, the three analyses and the response in detail.

\subsection{Why a counterfactual survey}\label{sec:intro-why}

The episode motivates an explicit audit of the consequences of a specified quantum breakthrough. Refuting that algorithm does not establish DCP hardness. The cited cyclic sieves remain subexponential \cite{Kup05,Reg04b,Kup13}, and relaxed variants have distinct algorithms: Bai, Jangir, Kirshanova, Ngo and Youmans give a quasi-polynomial EDCP algorithm over power-of-two moduli \cite{BJKNY25}, and Boucher, Fouque and Shen give a sieve for a cyclotomic generalisation \cite{BFS26}. Their time and sample bounds, moduli and amplitude requirements must all be matched to a lattice reduction. Improving sample efficiency would address one obstruction, but would not by itself establish polynomial running time or applicability to deployed distributions.

The purpose of this counterfactual survey is to ask what a polynomial-time dihedral algorithm would imply through specified reductions. The cyclic hypotheses are qualitatively equivalent after existentially quantifying the failure exponent, by Regev's rare-corruption observation (\cref{lem:hs-rare}); a prescribed small exponent remains a distinct quantitative requirement. The extrapolated-state hypothesis supplies a separate route whose relation to the cyclic ones is unresolved. The classification records sufficient guarantees and their limitations, rather than predicting a breakthrough or giving operational migration advice.

\begin{enumerate}
  \item \textbf{Lattice consequences extend beyond LWE.} Regev's DCP reduction supplies a worst-case ordinary-uSVP oracle \cite{Reg04}. Combining it with the BDD connection and quantum step gives the SIVP and SIS guarantees stated in \cref{sec:reach}, using mechanisms already present in \cite{LM09,Reg09,SSTX09}. These consequences depend on approximation factors, norm bounds and input promises. Solving a SIS proof assumption removes that hardness premise from the security argument; it does not by itself yield a forgery (\cref{sec:sis}).
  \item \textbf{Some non-lattice actions have a direct route.} Free and transitive cyclic group actions reduce to noiseless DCP when their group structure, public inputs and coherent evaluation meet the full conditions of \cref{thm:reach}\ref{it:hs}. Isogeny-based actions motivate this route \cite{CJS14,CLMPR18,BKV19}; commutativity alone does not verify its implementation requirements (\cref{sec:isogenies}).
  \item \textbf{Parameters decide applicability.} We compare sufficient reduction bounds with instance parameters and identify where those comparisons give no certificate (\cref{sec:reach,sec:parameters}). This is a limit of the analysed routes, not a security margin.
  \item \textbf{Sample availability matters.} The unstructured compositions have explicit sample requirements. The structured forward certificate is audited in \cref{sec:wz-audit}; the polynomial-modulus adapter and conversions for deployed distributions still require the checks in \cref{q:wz}.
\end{enumerate}

\subsection{Summary of results}\label{sec:intro-summary}

We formalise the counterfactual in three variants, noiseless cyclic coset states ($\HHS$), Regev's DCP with failure parameter ($\HReg(f)$), and the extrapolated DCP (EDCP) of Brakerski, Kirshanova, Stehlé and Wen with $\ell$ states ($\HEDCP[\ell]$), and compose the published reductions into a reach theorem with explicit thresholds, deriving the quantitative compositions with their stated premises (\cref{sec:hypothesis}). The following points summarise the classification.

\paragraph{Quantitative hypotheses.} We distinguish quantitative counterfactuals for
dihedral coset algorithms. A polynomial-time solver
for noiseless cyclic coset states already yields
polynomial-factor lattice algorithms, because it
tolerates sufficiently rare corruptions. A prescribed
failure exponent makes the approximation factors
quantitative, while extrapolated-state solvers
provide a separate route with explicit sample
requirements. The resulting guarantees are compared
with cryptographic instances only after accounting
for sample availability, distributions, lattice
promises and finite-dimensional bounds.

\paragraph{Polynomial-factor consequences.} Under $\HReg(f)$, the compositions give worst-case $\GapSVP$ at $\Otil(n^{1+2f})$, $\SIVP$ at $O(n^{3/2+2f})$, and SIS algorithms above the stated sample-dependent bounds. Under $\HHS$ these conclusions hold for some solver-dependent $f$ (\cref{cor:hs-lattice}). Solving a proof assumption does not by itself produce a forgery. For $\HEDCP[\ell]$, the restricted ball composition has an explicit admissible sample set and preparation budget; together with the information bound $\ell=\Omega(n\log q)$, its sufficient noise scale is at most $1/\widetilde\Omega(n^2)$. Its composed SIVP and SIS guarantees are about $n^3$ and $n^{5/2}$ up to logarithmic factors, without bounding every possible algorithm under the hypothesis.

\paragraph{Limits of the classification.} For the few-sample Module-LWE instances considered here,
the unstructured compositions do not furnish a
deployed-parameter attack certificate. The printed structured forward certificate is audited in
\cref{sec:wz-audit} and gives no certificate at the ML-KEM
ring and modulus; the polynomial-modulus adapter and
conversions for deployed distributions remain unverified here. The embedding estimates
also require a finite-dimensional reduction guarantee
and a justified gap estimate. These are limits of the
present classification, not a proof of security or a
measured security margin. Standard-noise LWE is not certified by the unstructured compositions presented here. For NTRU, rotations rule out ordinary uniqueness above factor one in the power-of-two cyclotomic case and above $\sqrt3$ for NTRU Prime; decision lossiness and module uniqueness require separate analysis. Small-norm SIS and the Lattice Isomorphism Problem likewise lack a certificate from these compositions.

\paragraph{Outside lattices.} Binary code-based encryption, hash-based and symmetric-key signatures, multivariate signatures and non-commutative isogeny problems have no applicable attack certificate from the dihedral reductions analysed here. Classical structural attacks have, however, put the concrete security of Classic McEliece under revision since 2024, while HQC, also outside the reach of the counterfactual, is not affected by them according to the BSI \cite{BSI26} (\cref{sec:mceliece-status}).

\paragraph{The 2026 algorithm.} We also explain the strategy of the 2026 preprint, which replaces the subset-sum step of Regev's reduction by a bit-erasure procedure that uses only the top third of the Fourier labels, and the three published analyses showing why it cannot work as specified (\cref{sec:simon}). In our analysis, the author's response of October 7, 2026, which relies on post-selection, does not change that conclusion: the published bound covers every processing of the state that the algorithm keeps, with the Fourier labels reduced to their digests, and post-selection that succeeds with non-negligible probability raises the distinguishing advantage by at most a factor equal to the inverse of its acceptance probability (\cref{sec:simon-response}).

\paragraph{Earlier claims.} Chen's withdrawn 2024 algorithm for LWE \cite{Che24} and Zhang's conditional repair of its post-processing \cite{Zha25} are not dihedral solvers. In our analysis they target a single LWE regime, many samples and $q\in\widetilde\Omega((\alpha q)^4m^2)$, which lies inside the reach of $\HReg(f)$ for every $f<1/2$ (for prime $q$ and $m\geq2n\lg q$); a correct version of Simon's claim would therefore have subsumed Chen's conclusion for LWE in that range. The regime excludes standard-noise LWE and the deployed parameter sets, and no known reduction derives any of the three hypotheses from a solver for it (\cref{sec:chen-zhang}).

In total we classify more than thirty assumptions and schemes, including the NIST standards and the third-round candidates of the additional-signature process, estimate heuristic gaps for deployed parameter sets, and compare the efficiency of the schemes outside the reach.

\subsection{Contributions and organisation}\label{sec:intro-contrib}

\begin{itemize}[leftmargin=*,itemsep=2pt]
  \item \Cref{sec:prelim} introduces the background used in the classification: lattices and Gaussian measures, the main lattice problems and their known complexity, lattice algorithms and concrete security, LWE and SIS with their worst-case reductions, the structured variants (Ring-, Module-, Middle-Product and Order-LWE, LWR, CLWE, succinct LWE) and NTRU, and the hidden-subgroup family (Kuperberg's sieve, DCP, EDCP and hidden shifts), with a map of their relations (\cref{fig:dcp-map}).
  \item \Cref{sec:simon} explains the strategy of \cite{Sim26} within Regev's subset-sum template, and why it fails: the information-theoretic barrier of \cite{GRZ26}, the partition hypothesis of \cite{GY26}, and the lemma-level audit of \cite{RoyessAudit}, together with the author's response of October 2026 and why, in our analysis, it does not remove the barrier; it also places Chen's withdrawn 2024 algorithm for LWE and Zhang's proposed repair in the same map (\cref{sec:chen-zhang}, \cref{fig:claims-map,tab:claims}).
  \item \Cref{sec:hypothesis} formalises the counterfactual as explicit hypotheses $\HHS$, $\HReg(f)$ and $\HEDCP[\ell]$, with their sources, explains Regev's reductions of 2002 and 2005 in enough detail to follow the composition, and proves a \emph{reach theorem} (\cref{thm:reach}) with explicit thresholds: a sufficient $\gu=O(n^{1/2+2f})$ for unique-SVP, $\gamma_{\mathrm g}=\Otil(n^{1+2f})$ for $\GapSVP$, $\gs=O(n^{3/2+2f})$ for $\SIVP$ under $\HReg(f)$ and $\Otil(\ell n^{2})$ under $\HEDCP[\ell]$, and norm bounds above which SIS instances are solved. The quantitative compositions are derived explicitly as propositions (\cref{sec:proofs}), using a version of Regev's quantum step for randomised and average-case decoders (\cref{lem:qstep-imperfect}) and an explicit admissible range for the number of samples used in one ball-reduction invocation of \cite[Thm. 3]{BKSW18} (\cref{lem:qary-count}); an information-theoretic bound shows that the stated fixed-width, inverse-polynomial-success EDCP solver needs $\ell=\Omega(n\log q)$ states (\cref{prop:ell-lower}), which constrains the stated fixed-width solver and the restricted ball composition. The section also analyses a sample-count obstruction for the specified few-sample constructions (\cref{prop:samples}) and illustrates it numerically.
  \item \Cref{sec:wz-audit} uses the full version of \cite{WZ26} to restate the exact conversion of structured states (\cref{lem:wz-state-map}) and to test the printed forward parameter certificate at the ML-KEM and ML-DSA public-key parameters: at the ML-KEM ring and modulus the certificate admits no feasible width (\cref{prop:wz-certificate,tab:wz-audit}), and the polynomial-modulus adapter still requires a complete search, width, success and sample audit.
  \item \Cref{sec:collapse} records sufficient consequences and unresolved checks hypothesis by hypothesis, including the applications to lattice standards and FrodoKEM.
  \item \Cref{sec:lattice-survivors} examines the lattice assumptions that are \emph{not} reached by the composition: SIS at small norm bounds, where it separates solving the assumption of a proof from forging, unstructured CVP-based schemes, NTRU, with its ring-dependent ordinary-uniqueness obstruction (\cref{lem:ntru-sym}), and the Lattice Isomorphism Problem; it estimates heuristic gaps for ML-KEM keys and ciphertexts, FrodoKEM, ML-DSA and FHE-style parameters, together with the reach of the SIS route on the lattice problems behind deployed signatures, and summarises its conclusions in \cref{tab:survivors}.
  \item \Cref{sec:non-lattice} surveys the non-lattice alternatives, codes, hash-based and symmetric-key signatures, multivariate schemes, isogenies and equivalence-problem group actions, describing representative constructions and relevant quantum attacks, summarises the 2024--2026 structural cryptanalysis of Classic McEliece, analysed in detail in a companion paper \cite{FV26mceliece}, and compares the key and signature sizes of the schemes outside the reach.
  \item \Cref{sec:frontier} discusses the research frontier of non-commutative hidden-subgroup problems, including candidates outside the analysed dihedral routes and the limited evidence supplied by measurement lower bounds.
  \item \Cref{sec:open} collects open questions with their background and priorities, scenario-planning consequences for hybrid designs, and a scenario-planning checklist for crypto-agility with worked examples, and concludes.
  \item \Cref{app:notation}, which closes this introduction, collects the notation.
\end{itemize}

\paragraph{Contribution and scope.}
Our contribution is the explicit accounting of parameters and assumptions
in the compositions presented here: sample availability, preparation
error, decoder success and the admissible range of the unstructured
ball reduction. We derive the stated SIVP and SIS guarantees with these
conditions and separate them from heuristic instance estimates.
The quantum connection from $\BDD_{1/\gamma}$ to
$\SIVP_{O(n\gamma)}$ is already recorded in \cite[Sec. 1.2]{LM09},
using Regev's quantum step; neither this connection nor its asymptotic
loss is claimed as new. Our direct SIVP proof makes its constants and
decoder implementation explicit. The SIS compositions use the same
quantum step, with the average-case route following the method
of \cite{SSTX09}. The rare-corruption observation is already present
in Regev's work, the NTRU rotation argument is elementary, and the
information bound is credited to the reviewer.
For the structured route, we restate the state conversion
of \cite{WZ26} and audit its forward width certificate: the printed
bound supplies no certificate at the ML-KEM ring and modulus, while
the polynomial-modulus adapter still needs the checks specified in
\cref{q:wz}. We make no priority claim for the underlying reduction
mechanisms, for these elementary observations, or for an exhaustive
classification of all possible dihedral consequences.

\subsection{How to read this survey}\label{sec:intro-read}

The survey introduces the notation and reduction statements needed to follow its classification, and outlines the arguments behind the central connections. Full proofs, version-specific premises and algorithmic details should be checked in the primary sources. Readers with different aims can follow different paths.
\begin{itemize}
  \item \emph{Migration planners and standards engineers} can read \cref{sec:intro}, the conclusion tables (\cref{tab:collapse,tab:survivors,tab:nonlattice}), \cref{sec:parameters,sec:efficiency} and the scenario-planning checklist of \cref{sec:open}.
  \item \emph{Cryptographers interested in the reductions} can read \cref{sec:prelim,sec:hypothesis,sec:wz-audit} and then the conclusions of \cref{sec:collapse,sec:lattice-survivors}.
  \item \emph{Quantum-algorithms researchers} can read \cref{sec:prelim-hsp,sec:simon,sec:frontier} and the open questions.
\end{itemize}

%Conclusions are given separately for each hypothesis. \Collapse{} denotes a proved sufficient consequence at parameters satisfying all stated conditions; it does not automatically denote an attack on a named scheme. \Cond{} denotes a parameter-dependent or heuristic sufficient comparison, or an application whose verification is incomplete. \Open{} denotes an unresolved connection; \Survive{} means that the analysis supplies no attack certificate by the reductions considered, not a security proof or an impossibility result. The dagger in \SurviveDag{} records an independent revision of concrete security. Basis marks are \textsuperscript{R} for rigorous guarantees, \textsuperscript{H} for heuristic instance comparisons, \textsuperscript{O} for a mathematical open question, and \textsuperscript{U} for a parameter application unverified in this survey. In every lattice comparison the finite-dimensional bound and the instance promise must be checked. A conclusion for a SIS proof assumption does not imply a forgery (\cref{sec:sis}).

%\paragraph{A note on sources.} Several of the works surveyed are 2026 preprints. We cite them as preprints, describe them only to the extent that their public versions (abstracts, specifications or formal developments) allow, and say so when a statement rests on secondary reporting. Paragraphs marked as \emph{informal intuition} are our own explanations and are not taken from the cited works.
% ==================================================================
\subsection{Notation}\label{app:notation}
% ==================================================================

\begingroup\small
\renewcommand{\arraystretch}{1.18}
\begin{longtable}{@{}>{\raggedright\arraybackslash}p{3.8cm}>{\raggedright\arraybackslash}p{11.0cm}@{}}
\toprule
\textbf{Symbol} & \textbf{Meaning}\\
\midrule
\endfirsthead
\toprule
\textbf{Symbol} & \textbf{Meaning}\\
\midrule
\endhead
\bottomrule
\endfoot
\multicolumn{2}{@{}l}{\emph{Lattices}}\\
$\Lat$, $\Lat^*$ & a lattice and its dual\\
$\lambda_i(\Lat)$, $\lambda_1^\infty$ & $i$-th successive minimum (Euclidean); minimum in the $\ell_\infty$ norm\\
$\mathrm{gh}(\Lat)$ & Gaussian heuristic $\sqrt{d/(2\pi e)}\det(\Lat)^{1/d}$\\
$\rho_s$, $D_{\Lat+c,s}$ & Gaussian function $e^{-\pi\|x\|^2/s^2}$; discrete Gaussian on a coset\\
$\eta_\epsilon(\Lat)$ & smoothing parameter\\
$\Lambda_q^\perp(A)$, $\Lambda_q(A^{\top})$ & $q$-ary lattices $\{z:Az\equiv0\}$ and $\{y\equiv A^{\top}s\}$\\
$\SVP_\gamma$, $\uSVP_\gamma$, $\GapSVP_\gamma$, $\SIVP_\gamma$, $\BDD_\delta$, $\CVP_\gamma$ & lattice problems of \cref{def:lattice-problems}; $\CVP_{\Lat,d}$ denotes decoding within distance $d$\\
gap & ratio $\lambda_2/\lambda_1$ of a unique-SVP instance; $g$, $g_A$ its heuristic value (for NTRU keys, the density ratio $\mathrm{gh}(\Lat_h)/\|(f,g)\|$, which is not a unique-SVP gap; \cref{lem:ntru-sym})\\
$\Lat_h$ & NTRU lattice $\{(u,v)\in R^2:uh\equiv v\pmod q\}$\\
\midrule
\multicolumn{2}{@{}l}{\emph{LWE, SIS and variants}}\\
$n$, $m$, $q$ & secret dimension, number of (scalar) samples, modulus\\
$\alpha$, $\sigma$, $\sigma_s$, $\sigma_e$ & noise rate (Gaussian parameter $\alpha q$); standard deviations of secret and error\\
$\LWE_{n,q,\alpha,m}$, $\RLWE$, $\MLWE$ & Learning with Errors and its ring and module versions\\
$\SIS_{n,m,q,\beta}$, $\MSIS$ & Short Integer Solution with absolute norm bound $\beta$; module version\\
$R$, $R_q$, $k$, $\ell'$ & ring $\Z[x]/(x^d+1)$ (degree $256$ in the standards), $R/qR$, module ranks (ML-DSA's $(k,\ell')$ is $(k,\ell)$ in \cite{FIPS204})\\
$\Gamma_{f,\mathcal A}(d)$, $c_0$ & sufficient finite-dimensional ordinary-uSVP bound from \cite[Thm. 1.1]{Reg04}, $O(d^{1/2+2f})$ asymptotically; coefficient of the separate numerical model\\
\midrule
\multicolumn{2}{@{}l}{\emph{Hidden structure}}\\
$D_N$ & dihedral group $\Z_N\rtimes\Z_2$\\
$\DCP_N^f$ & dihedral coset problem with failure parameter $f>0$ (\cref{def:dcp}); $\DCP_N^\infty$ is the noiseless problem, and $f\to0^+$ the limit of increasing tolerance\\
$\EDCP^\ell_{n,N,\rho}$, $\mathrm{G}\text{-}\EDCP^\ell_{n,N,r}$ & extrapolated DCP with $\ell$ states and amplitude $\rho$; Gaussian version of width $r$ (\cref{def:edcp})\\
$\ell$ & number of EDCP states; $\ell=\Omega(n\log N)$ at bounded width and inverse-polynomial success for every secret and arbitrary offsets (\cref{prop:ell-lower})\\
IP-M-EDCP & structured EDCP of \cite{WZ26} (\cref{rem:wz-map})\\
$\kappa$, $c$, $r_0$ & statistical parameter of \cite{BKSW18}, fixed here to $\lceil\log^2n\rceil$, and constant $6\sqrt{2\pi e}$ of \cite[Thm. 3]{BKSW18}; minimal width in $\HEDCP[\ell]$\\
$\mathcal M_q(n)$, $m_{\max}$ & numbers of samples admissible in the proof of \cite[Thm. 3]{BKSW18}, conditions (B1)--(B3); their maximum, $\Theta(n)$ for $n^2\leq q\leq\poly(n)$\\
\midrule
\multicolumn{2}{@{}l}{\emph{Hypotheses and thresholds}}\\
$\HReg(f)$, $\HEDCP[\ell]$, $\HHS$, $\HHS^{\mathrm{ab}}$ & hypotheses of \cref{sec:hyp-def}\\
$\Hstar$ & conjunction $\HReg(f)\wedge\HEDCP$; no strict separation claimed\\
$\gu(d)$ & sufficient bound $\Gamma_{f,\mathcal A}(d)$ in theorem statements; $c_0d^{1/2+2f}$ is the numerical comparison model\\
$\gamma_{\mathrm g}(d)=\Otil(d^{1+2f})$ & $\GapSVP$ factor reached under $\HReg(f)$\\
$\gs_{\mathrm{Reg}}=O(n^{3/2+2f})$, $\gs_{\mathrm E}=\Otil(\ell n^{2})$ & $\SIVP$ factors reached under $\HReg(f)$ and $\HEDCP[\ell]$\\
$g_{\mathrm E}(d)=2d\,\gs_{\mathrm E}(d,\ell)$ & gap above which $\HEDCP[\ell]$ decodes $d$-dimensional lattices in the worst case (\cref{cor:edcp-sivp})\\
$\alpha_{\mathrm{Reg}}(m)$, $\alpha_{\mathrm E}(n,m,\ell)$, $\alpha^*_{\mathrm E}(n,\ell)$ & sufficient LWE noise bounds of the two compositions, and the best admissible value $\alpha_{\mathrm E}(n,m_{\max},\ell)=1/\Otil(\ell n)$ (\cref{sec:reach-thm})\\
$\beta_{\mathrm{Reg}}$, $\beta_{\mathrm E}$, $\beta_{\mathrm{col}}$, $\beta_{\mathrm{void}}$ & sufficient SIS solution bounds and bounds making the stated worst-case hardness premise unavailable, with each route's parameter and distribution conditions\\
\midrule
\multicolumn{2}{@{}l}{\emph{Conclusions (given per hypothesis; compact glyphs \vC, \vK, \vO, \vS{} in multi-column tables)}}\\
\Collapse & reached by a composition of published reductions, or by a proposition of \cref{sec:proofs}, at the relevant parameters\\
\Cond & parameter-dependent sufficient comparison, heuristic estimate or unverified application\\
\Open & an unresolved connection, without an established consequence here\\
\Survive & no certificate from the reductions analysed; not a security or impossibility claim\\
\SurviveDag & outside reach, with concrete security under active revision for unrelated reasons\\
\textsuperscript{R}, \textsuperscript{H}, \textsuperscript{O}, \textsuperscript{U} & rigorous, heuristic, mathematical open question, unverified parameter application\\
(A), (B), (C) & explicit attack; proof assumption solvable; guarantee no longer informative (\cref{sec:sis})\\
\end{longtable}
\endgroup
\addtocounter{table}{-1}

% ======================================================================
\section{Preliminaries}\label{sec:prelim}
% ======================================================================

This section collects the background used in the rest of the survey. It is longer than a list of definitions because the reach of the counterfactual depends on the exact form of several classical results, the approximation factors in worst-case reductions, the number of samples a reduction consumes, the error tolerance of a quantum algorithm, and these details are easy to lose in a summary.

\subsection{Lattices and Gaussian measures}\label{sec:prelim-lattices}

A \emph{lattice} is a discrete additive subgroup $\Lat\subset\R^d$. Every lattice is generated by a basis $B=(b_1,\dots,b_k)$ of linearly independent vectors, $\Lat=\Lat(B)=\{Bz:z\in\Z^k\}$; the integer $k$ is the \emph{rank}, and the lattice is \emph{full-rank} if $k=d$. The \emph{determinant} or covolume is $\det(\Lat)=\sqrt{\det(B^{\top}B)}$, independent of the choice of basis. The $i$-th \emph{successive minimum} $\lambda_i(\Lat)$ is the smallest $r$ such that the closed ball of radius $r$ contains $i$ linearly independent lattice vectors; in particular $\lambda_1(\Lat)$ is the length of a shortest nonzero vector. We use the Euclidean norm throughout unless a superscript says otherwise, as in $\lambda_1^\infty$. The \emph{dual lattice} is $\Lat^*=\{x\in\mathrm{span}(\Lat):\langle x,v\rangle\in\Z\ \text{for all}\ v\in\Lat\}$, with $\det(\Lat^*)=1/\det(\Lat)$.

Minkowski's theorem bounds $\lambda_1(\Lat)\leq\sqrt d\,\det(\Lat)^{1/d}$ for a full-rank lattice. For ``random'' lattices the \emph{Gaussian heuristic} predicts the more precise value
\[
  \lambda_1(\Lat)\approx\mathrm{gh}(\Lat):=\sqrt{\frac{d}{2\pi e}}\,\det(\Lat)^{1/d},
\]
which is the radius of a ball of volume $\det(\Lat)$, up to lower-order terms. The minima of a lattice and of its dual are linked by Banaszczyk's transference theorem: $1\leq\lambda_1(\Lat)\,\lambda_d(\Lat^*)\leq d$ \cite{Ban93}.

\paragraph{Gaussian measures.} For $s>0$ let $\rho_s(x)=\exp(-\pi\|x\|^2/s^2)$. The \emph{discrete Gaussian} distribution $D_{\Lat+c,s}$ on a coset $\Lat+c$ assigns to each point $x$ the probability $\rho_s(x)/\rho_s(\Lat+c)$. Micciancio and Regev introduced the \emph{smoothing parameter}
\[
  \eta_\epsilon(\Lat)=\min\{s>0:\rho_{1/s}(\Lat^*\setminus\{0\})\leq\epsilon\},
\]
the smallest width at which a Gaussian, reduced modulo the lattice, is $\epsilon$-close to uniform \cite{MR07}. Two facts about it are used repeatedly. First, it is bounded by the last minimum: $\eta_\epsilon(\Lat)\leq\sqrt{\ln(2d(1+1/\epsilon))/\pi}\cdot\lambda_d(\Lat)$ \cite{MR07}. Second, Gaussian tail and anti-concentration estimates control the lengths and independence of samples; the width and error conditions used here are stated in \cref{lem:tails} \cite{MR07,Reg09}. Gaussian measures are the technical heart of every worst-case to average-case reduction discussed below.

\paragraph{$q$-ary lattices.} For a modulus $q$ and a matrix $A\in\Zq^{n\times m}$ define
\begin{gather*}
  \Lambda_q^\perp(A)=\{z\in\Z^m:Az\equiv0\pmod q\},\\
  \Lambda_q(A^{\top})=\{y\in\Z^m:y\equiv A^{\top}s\pmod q\ \text{for some}\ s\in\Z^n\}.
\end{gather*}
Both contain $q\Z^m$, so all their successive minima are at most $q$. They are dual to each other up to scaling, $\Lambda_q^\perp(A)=q\cdot\Lambda_q(A^{\top})^*$, and when $A$ has full rank modulo a prime $q$ their determinants are $q^n$ and $q^{m-n}$ respectively. LWE lives on lattices of the form $\Lambda_q(A^{\top})$ and SIS on lattices of the form $\Lambda_q^\perp(A)$; when we write LWE with an $m\times n$ matrix $A$, the corresponding lattice is $\Lambda_q(A)=\{y\equiv As\}$.

\subsection{Lattice problems and their complexity}\label{sec:prelim-problems}

We use the following standard problems, parameterised by an approximation factor $\gamma=\gamma(d)\geq1$.

\begin{definition}[Lattice problems]\label{def:lattice-problems}
Given a basis of $\Lat\subset\R^d$:
\begin{itemize}
  \item $\SVP_\gamma$: find a nonzero $v\in\Lat$ with $\|v\|\leq\gamma\lambda_1(\Lat)$.
  \item $\GapSVP_\gamma$: decide whether $\lambda_1(\Lat)\leq1$ or $\lambda_1(\Lat)>\gamma$.
  \item $\SIVP_\gamma$: find $d$ linearly independent vectors of $\Lat$ of length at most $\gamma\lambda_d(\Lat)$.
  \item $\uSVP_\gamma$: under the promise $\lambda_2(\Lat)>\gamma\lambda_1(\Lat)$, find a shortest nonzero vector.
  \item $\BDD_\delta$: given $t$ with $\mathrm{dist}(t,\Lat)\leq\delta\lambda_1(\Lat)$, find the closest lattice vector.
  \item $\CVP_\gamma$: given $t$, find $v\in\Lat$ with $\|t-v\|\leq\gamma\,\mathrm{dist}(t,\Lat)$.
\end{itemize}
\end{definition}

The \emph{gap} of a $\uSVP$ instance is the ratio $\lambda_2/\lambda_1$; a $\uSVP_\gamma$ algorithm solves every instance whose gap exceeds $\gamma$. The problems are closely related. Lyubashevsky and Micciancio proved that $\uSVP$, $\BDD$ and $\GapSVP$ are equivalent up to a factor $\sqrt{n/\log n}$; in particular $\BDD_{1/(2\gamma)}$ reduces to $\uSVP_\gamma$ and $\uSVP_\gamma$ reduces to $\BDD_{1/\gamma}$, a connection tight up to a small constant, and $\GapSVP$ reduces to $\BDD$ with a loss of $\sqrt{n/\log n}$ \cite{LM09}. Their paper used this to show that the Ajtai--Dwork cryptosystem \cite{AD97} and Regev's 2004 cryptosystem \cite{Reg04c}, previously based only on $\uSVP$, can be based on worst-case $\GapSVP$ with factors $O(n^{2.5})$ and $O(n^2)$ respectively.

The approximation factor determines how hard these problems are. Exact $\SVP$ is $\mathsf{NP}$-hard under randomised reductions \cite{Ajt98}. At the other end, $\GapSVP_\gamma$ lies in $\mathsf{NP}\cap\mathsf{coAM}$ for $\gamma=\sqrt{d/\log d}$ \cite{GG00} and in $\mathsf{NP}\cap\mathsf{coNP}$ for $\gamma=\sqrt d$ \cite{AR05}, so it is unlikely to be $\mathsf{NP}$-hard at those factors. The worst-case foundations used below involve polynomial approximation factors. The cited general basis-reduction algorithms supply much larger approximation guarantees in polynomial time \cite{LLL82,SE94}; their guarantees are not lower bounds against other algorithms or statements about every structured lattice. The counterfactual asks which polynomial factors become attainable through the specified reductions.

\subsection{Algorithms and concrete security}\label{sec:prelim-algorithms}

Concrete security estimates for lattice schemes rest on a small number of algorithmic families.

\paragraph{Basis reduction.} The LLL algorithm finds, in polynomial time, a basis whose first vector satisfies $\|b_1\|\leq2^{(d-1)/2}\lambda_1(\Lat)$ \cite{LLL82}. The BKZ algorithm of Schnorr and Euchner generalises it by calling an exact $\SVP$ solver on projected sublattices of dimension $\beta$, the \emph{block size} \cite{SE94}. A commonly used heuristic estimates $\|b_1\|\approx\delta_\beta^{\,d}\det(\Lat)^{1/d}$, where the root Hermite factor $\delta_\beta$ decreases slowly with $\beta$; the standard estimate is $\delta_\beta\approx\big(\tfrac{\beta}{2\pi e}(\pi\beta)^{1/\beta}\big)^{1/(2(\beta-1))}$ for large $\beta$ \cite{APS15}. The cost of BKZ is dominated by the cost of $\SVP$ in dimension $\beta$, and the usual block-reduction cost model associates polynomial approximation factors with large blocks and exponential cost. This is a description of that model, not an algorithm-independent lower bound.

\paragraph{Exact SVP.} Enumeration solves $\SVP$ in time $2^{O(d\log d)}$ and polynomial memory. Sieving algorithms trade memory for time: the cited heuristic classical sieve runs in time $2^{0.292d+o(d)}$ \cite{BDGL16}, and the cited quantum random-walk variant in time $2^{0.2570d+o(d)}$ \cite{CL21}, improving on earlier quantum sieves \cite{LMvdP13} and on the exponent $0.2653$ of Laarhoven's quantum algorithm, as reported in \cite{CL21}. Among provable algorithms, Aggarwal, Dadush, Regev and Stephens-Davidowitz solve $\SVP$ in time and space $2^{d+o(d)}$ by discrete Gaussian sampling \cite{ADRS15}. The General Sieve Kernel of Albrecht et al.\ is an implementation framework for sieving used in public lattice challenges \cite{ADHKPS19}.

\paragraph{Estimating schemes.} An LWE instance is attacked either as a unique-SVP instance in an embedding lattice (the \emph{primal} attack, which succeeds when BKZ with block size $\beta$ isolates the planted short vector), or through short vectors in the dual lattice that distinguish LWE samples from uniform (the \emph{dual} attack), or by combinatorial and algebraic methods. Albrecht, Player and Scott surveyed these attacks and their costs and introduced the estimator methodology used in concrete lattice-security analysis \cite{APS15}. Two broader surveys complement it: Laarhoven, van de Pol and de Weger give a tutorial on hard lattice problems, basis reduction, enumeration and sieving and their use in security estimates \cite{LvdPdW12}, and Shen and Li survey the current state of classical lattice algorithms \cite{SL26}. Throughout the survey, ``concrete security'' refers to estimates of this kind; the counterfactual supplies additional sufficient algorithmic guarantees. The estimates quoted for existing attacks do not bound the cost of all algorithms that a future solver might enable.

\subsection{Learning with Errors}\label{sec:prelim-lwe}

\begin{definition}[LWE \cite{Reg09}]
Let $n,q,m$ be integers and $\chi$ an error distribution on $\Z$, typically a discrete Gaussian of parameter $\alpha q$ for a \emph{noise rate} $\alpha\in(0,1)$. Search-$\LWE_{n,q,\chi,m}$: given $(A,b=As+e\bmod q)$ with $A\leftarrow\Zq^{m\times n}$ uniform, $s\in\Zq^n$ and $e\leftarrow\chi^m$, find $s$. Decision-$\LWE$: distinguish $(A,b)$ from uniform.
\end{definition}

Under the conditions of the normal-form reduction, one can transfer hardness from a uniform secret to a secret drawn from the error distribution: Applebaum, Cash, Peikert and Sahai showed that LWE with a uniform secret reduces to LWE with a secret drawn from $\chi$, the \emph{normal form} \cite{ACPS09}. The concrete LWE-based schemes considered here use short secrets, but their precise bounded distributions and the sample cost of conversions must still be checked.

\paragraph{LWE as a lattice problem.} The pair $(A,b)$ gives a closest-vector instance on the $q$-ary lattice $\Lambda_q(A)=\{y\in\Z^m:y\equiv As\pmod q\}$, with a planted lattice point at distance $\|e\|$ from a chosen lift of $b$. It meets a BDD promise only if that distance is sufficiently small relative to the lattice minimum. For short-secret LWE, an affine embedding contains the vector $(s,e,\tau)$; this gives an ordinary-uSVP instance only when that vector is shortest and sufficiently separated from independent vectors. A Gaussian-heuristic comparison is
\begin{equation}\label{eq:gap}
  g_{\LWE}\;\approx\;\frac{\mathrm{gh}(\text{embedding lattice})}{\|(s,e,\tau)\|}.
\end{equation}
\Cref{prop:samples} specifies the embedding, its determinant and the additional heuristic assumptions. The number of samples affects this comparison.

\paragraph{Regev's cryptosystem.} The simplest LWE-based encryption scheme illustrates how LWE is used. The public key is an LWE instance $(A,b)$ with $m=O(n\log q)$ samples. To encrypt a bit $\mu$, choose a random subset $S\subseteq\{1,\dots,m\}$ and output $(\sum_{i\in S}a_i,\ \sum_{i\in S}b_i+\mu\lfloor q/2\rfloor)$, where $a_i$ are the rows of $A$. Decryption subtracts $\langle\sum_{i\in S}a_i,s\rangle$ and checks whether the result is closer to $0$ or to $q/2$; correctness holds because the accumulated error $\sum_{i\in S}e_i$ is small. Semantic security follows from decision-LWE together with the leftover hash lemma \cite{Reg09}.

\paragraph{Worst-case hardness.} Regev's main theorem states that, for $\alpha q>2\sqrt n$, an efficient algorithm for $\LWE_{n,q,\alpha}$ yields an efficient \emph{quantum} algorithm for $\SIVP$ and $\GapSVP$ within $\Otil(n/\alpha)$ in the worst case \cite{Reg09}. The proof is iterative. Given samples from a discrete Gaussian $D_{\Lat,r}$ on an arbitrary lattice $\Lat$, the LWE oracle is used to solve $\BDD$ on the dual lattice $\Lat^*$ within distance about $\alpha q/r$ (the \emph{classical step}); a quantum procedure then turns this $\BDD$ solver into samples from a narrower Gaussian $D_{\Lat,r'}$ with $r'<r$ (the \emph{quantum step}). Iterating from a very wide Gaussian, which is easy to sample, down to width about $\sqrt n\,\eta_\epsilon(\Lat)/\alpha$ yields short lattice vectors, which solve $\SIVP$. The quantum step is the only place where quantum computation is needed. Peikert gave a classical reduction from $\GapSVP$ to LWE with an exponential modulus $q\geq2^{n/2}$ \cite{Pei09}, and Brakerski, Langlois, Peikert, Regev and Stehlé showed that LWE in dimension $n$ with modulus $q$ reduces to LWE in dimension about $n\log q$ with a polynomial modulus, which yields classical hardness for polynomial moduli at the cost of a larger dimension \cite{BLPRS13}.

\paragraph{Search and decision.} Search and decision LWE are connected by reductions with specific modulus, distribution and sample requirements. Regev's reduction works for prime polynomial moduli \cite{Reg09}; Micciancio and Mol gave sample-preserving reductions for a wide class of moduli \cite{MM11}. Sudo, Tezuka, Hara and Yoshida gave a \emph{quantum} search-to-decision reduction, based on a quantum Goldreich--Levin theorem over $\Zq$ for prime $q$, that invokes the decision distinguisher only twice, whereas classical reductions require polynomially many invocations \cite{STHY23}.

\paragraph{Algorithms for special parameters.} LWE is easy when the error is very small. Arora and Ge linearise the system by observing that each error coordinate is a root of a known low-degree polynomial, which solves LWE in subexponential time when the error magnitude is $o(\sqrt n)$ and in polynomial time when it is constant, given enough samples \cite{AG11}. Micciancio and Peikert showed, conversely, a hardness reduction for LWE with binary errors when the number of samples is limited \cite{MP13}. Both results illustrate that the number of samples is a security parameter of LWE in its own right.

\subsection{The Short Integer Solution problem}\label{sec:prelim-sis}

\begin{definition}[SIS \cite{ajtai1996}]
$\SIS_{n,m,q,\beta}$: given uniform $A\in\Zq^{n\times m}$, find nonzero $z\in\Z^m$ with $Az\equiv0\pmod q$ and $\|z\|\leq\beta$, i.e.\ a short nonzero vector of $\Lambda_q^\perp(A)$.
\end{definition}

A solution exists whenever $\beta\geq\sqrt m\,q^{n/m}$, by Minkowski's theorem applied to $\Lambda_q^\perp(A)$. Ajtai's 1996 paper, which founded lattice-based cryptography, showed that solving SIS on average is at least as hard as approximating lattice problems within polynomial factors in the worst case, and deduced that the function $f_A(x)=Ax\bmod q$ on short inputs $x$ is one-way and, for suitable parameters, collision-resistant \cite{ajtai1996}. Micciancio and Regev tightened the reduction using Gaussian measures, basing SIS on $\SIVP$ within $\Otil(n)$ factors \cite{MR07}, and Micciancio and Peikert extended it to moduli as small as $\beta\cdot n^{\delta}$ for any constant $\delta>0$ \cite{MP13}. In the form we use, for $q\geq\beta\cdot\omega(\sqrt{n\log n})$, $\SIS_{n,m,q,\beta}$ is at least as hard as $\SIVP_{\beta\cdot\Otil(\sqrt n)}$ in the worst case \cite{MR07,GPV08,MP13}.

SIS is the basis of lattice-based hash functions and signatures. Gentry, Peikert and Vaikuntanathan showed how to generate $A$ together with a short basis of $\Lambda_q^\perp(A)$ (a \emph{trapdoor}), how to use it to sample short preimages of $f_A$ from a discrete Gaussian without leaking the trapdoor, and how this yields hash-and-sign signatures in the random-oracle model and identity-based encryption from the dual of Regev's scheme \cite{GPV08}. Micciancio and Peikert simplified trapdoor generation with structured ``gadget'' matrices \cite{MP12}. Lyubashevsky gave signatures without trapdoors in the Fiat--Shamir paradigm, using rejection sampling to make the distribution of signatures independent of the secret key \cite{Lyu12}; this ``Fiat--Shamir with aborts'' design underlies ML-DSA. The duality $\Lambda_q^\perp(A)=q\cdot\Lambda_q(A^{\top})^*$ connects the two problems: short vectors of $\Lambda_q^\perp(A)$ are exactly what the dual attack uses to distinguish LWE samples from uniform, so suitable nontrivial SIS solutions can distinguish LWE when the induced error inner product has a detectable bias. A norm bound alone must be interpreted together with the error distribution and distinguishing advantage.

\subsection{Algebraically structured variants}\label{sec:prelim-structured}

Plain LWE needs public matrices of $\Theta(n^2\log q)$ bits. Structured variants replace the integers by a ring of algebraic integers, so that one ring element encodes a whole structured block of the matrix.

\paragraph{Ring-LWE.} Let $R=\Z[x]/(x^d+1)$ with $d$ a power of two (the ring of integers of a power-of-two cyclotomic field) and $R_q=R/qR$. A Ring-LWE sample is $(a,b=a\cdot s+e)\in R_q^2$ with uniform $a$, a secret $s\in R_q$ and a short error $e\in R$. Multiplication by $a$ is a negacyclic $d\times d$ matrix, so one ring sample is $d$ structured scalar LWE samples. Lyubashevsky, Peikert and Regev introduced the problem for general number fields, proved a quantum reduction from worst-case approximate $\SVP$ on ideal lattices to search Ring-LWE, and a search-to-decision reduction for cyclotomic fields with suitable moduli \cite{LPR13}. Peikert, Regev and Stephens-Davidowitz later gave a direct worst-case hardness reduction for decision Ring-LWE over any number field and modulus, with the error requirements of their theorem \cite{PRS17}.

\paragraph{Module-LWE and Module-SIS.} Module-LWE uses vectors in $R_q^k$: samples are $(a,b=\langle a,s\rangle+e)$ with $a,s\in R_q^k$. It interpolates between Ring-LWE ($k=1$) and LWE ($d=1$). Brakerski, Gentry and Vaikuntanathan introduced it as ``general LWE'' \cite{BGV12}, and Langlois and Stehlé proved worst-case to average-case reductions for Module-LWE and Module-SIS from problems on module lattices \cite{LS15}. Albrecht and Deo showed that Module-LWE of rank $k$ with modulus $q$ reduces to Ring-LWE with modulus $q^k$ \cite{AD17}. ML-KEM and ML-DSA use Module-LWE over $R=\Z[x]/(x^{256}+1)$ with small rank $k\in\{2,3,4\}$ (ML-KEM) or rank pairs up to $(8,7)$ (ML-DSA) \cite{FIPS203,FIPS204}.

\paragraph{Middle-Product, Order and general structured LWE.} Roşca, Sakzad, Stehlé and Steinfeld introduced Middle-Product LWE (MP-LWE), in which the product of polynomials is replaced by the ``middle'' coefficients of the product, so that the problem is defined independently of any particular number field; they proved that MP-LWE is at least as hard as Polynomial-LWE for every polynomial in an exponentially large family, so hardness of any source family transfers under the corresponding parameter map \cite{RSSS17}. Njah Nchiwo and Pellet-Mary improved the reduction from RLWE to MP-LWE \cite{NP26}, and Pellet-Mary and Xia gave a unified reduction, parameterised by field elements, that recovers the earlier ones as special cases and applies to number fields satisfying their polynomial-boundedness conditions \cite{PX26}. Bolboceanu, Brakerski, Perlman and Sharma introduced Order-LWE, a version over orders of number fields, and used it to study Ring-LWE with entropic secrets \cite{BBPS19}. Peikert and Pepin gave a unified framework for these algebraically structured variants with simpler and tighter reductions among them \cite{PP24}. The direction of all these reductions matters below: they show that the variants are \emph{at least as hard as} Ring-LWE, not that they are easier.

\paragraph{Short distributions.} Deployed schemes use secrets and errors with small bounded coefficients rather than Gaussians. Boudgoust, Jeudy, Roux-Langlois and Wen proved that Module-LWE with a uniform $\eta$-bounded secret, for any $1\leq\eta\ll q$, and Gaussian error is at least as hard as standard Module-LWE when the module rank is at least logarithmic in the ring degree, in both search and decision versions; they also proved that search Module-LWE with uniform $\eta$-bounded error is at least as hard as the standard problem when the number of samples is close to the module rank \cite{BJRW23}.

\paragraph{Learning with Rounding.} Banerjee, Peikert and Rosen replaced the random error by deterministic rounding: an LWR sample is $(a,\lfloor(p/q)\langle a,s\rangle\rceil\bmod p)$ for a smaller modulus $p$ \cite{BPR12}. They showed that LWR is at least as hard as LWE when $q/p$ exceeds the LWE error by a large enough factor, and used it to build pseudorandom functions. An LWR instance is again a $\BDD$ instance, with an error of size about $q/p$.

\paragraph{Assumptions for advanced primitives.} Wee introduced the $\ell$-succinct LWE assumption, a falsifiable assumption implied by evasive LWE, to build attribute-based encryption and laconic function evaluation for circuits with sublinear ciphertext overhead \cite{Wee24}. Abram, Malavolta and Roy introduced the decomposed LWE assumption, which can be interpreted as postulating the circular security of a natural lattice-based encryption scheme, used it for succinct randomised encodings and attribute-based encryption for RAM programs, and showed that it is implied by succinct LWE \cite{AMR25}. Abram, Arnon, Cini, Lou, Malavolta and Roy proved the converse, so that the two assumptions are equivalent under appropriate parameters \cite{AACLMR26}. Both are strengthenings of LWE, in the sense that their hardness implies that of LWE. For an algorithm the relevant direction is the converse: an LWE solver breaks succinct LWE directly, because its instances contain a standard LWE instance with a uniform matrix, and breaks decomposed LWE only through the reduction of \cite{AACLMR26}, at the parameters for which it is proved (\cref{sec:collapse-variants}).

\paragraph{Continuous LWE.} Bruna, Regev, Song and Tang introduced Continuous LWE (CLWE), in which samples are Gaussian vectors $y\in\R^n$ labelled by $\gamma\langle y,w\rangle+e\bmod1$ for a hidden unit vector $w$; they proved its hardness from worst-case lattice problems by a quantum reduction and deduced the hardness of learning mixtures of Gaussians \cite{BRST21}. Gupte, Vafa and Vaikuntanathan gave a classical reduction from LWE to CLWE \cite{GVV22}.

\subsection{NTRU}\label{sec:prelim-ntru}

In the NTRU cryptosystem of Hoffstein, Pipher and Silverman the secret key is a pair of polynomials $f,g$ with small coefficients in a ring $R_q=\Z_q[x]/(\Phi)$, and the public key is $h=g/f$ \cite{HPS98}. Encryption adds a small multiple of $h$ to an encoding of the message; decryption multiplies by $f$, which turns the ciphertext into a short element that can be recovered exactly and reduced modulo a small modulus. Key recovery asks for a short vector of the rank-two module lattice $\Lat_h:=\{(u,v)\in R^2:uh\equiv v\pmod q\}$, which contains $(f,g)$; for coefficient size $\sigma$, the ratio $\mathrm{gh}(\Lat_h)/\|(f,g)\|=\Theta(\sqrt q/\sigma)$ measures how much shorter the key is than a typical lattice vector. Because $\Lat_h$ is an $R$-module, the rotations $x^i(f,g)$ are as short, or almost as short, as the key itself, giving the ring-dependent uniqueness bounds of \cref{lem:ntru-sym} in \cref{sec:ntru}. The \emph{decision NTRU} problem asks to distinguish $h$ from a uniform element of $R_q$.

Three results frame its relation to LWE. Stehlé and Steinfeld gave an NTRU variant with Gaussian keys of width about $\sqrt q\cdot\poly(n)$, for which the public key is statistically close to uniform, so that ciphertexts are Ring-LWE samples and security reduces to Ring-LWE \cite{SS11}. Pellet-Mary and Stehlé observed that decision NTRU reduces to search Ring-LWE, which gives only an upper bound on its hardness, and provided reduction-based lower bounds: worst-case approximate $\SVP$ on ideal lattices reduces to an average-case search variant of NTRU, and another average-case search variant reduces to decision NTRU \cite{PS21}. The relation $hf-g=0$ is a homogeneous short relation,
not a sample from the standard Ring-LWE distribution:
$h$ depends on the short pair $(f,g)$. The connection from
decision NTRU to search Ring-LWE instead uses a lossiness
argument; see \cite[Sec. 4.4.4]{Pei16}. To compose that
connection with a dihedral solver, one must match the
Ring-LWE oracle's sample count, secret and error
distributions, and success probability. For certain large-modulus NTRU parameter families, independent attacks are known: Albrecht, Bai and Ducas showed that the norm map to a subfield reduces key recovery to a lower-dimensional lattice when $q$ is large, breaking the ``overstretched'' parameters used in some multilinear-map and homomorphic-encryption proposals \cite{ABD16}, and Kirchner and Fouque showed that lattice reduction itself performs much better than expected on such parameters because of a dense sublattice \cite{KF17}. NTRU Prime replaces the cyclotomic ring by a prime-degree field with a large Galois group, to avoid algebraic structure that attacks might exploit \cite{BCLV17}.

\subsection{Hidden subgroups, dihedral cosets and hidden shifts}\label{sec:prelim-hsp}

\paragraph{The hidden subgroup problem.} In the hidden subgroup problem (HSP) for a finite group $G$, one is given oracle access to a function $f:G\to S$ that is constant on the left cosets of an unknown subgroup $H$ and takes distinct values on distinct cosets; the task is to find $H$. The \emph{standard method} prepares $|G|^{-1/2}\sum_{g}\ket{g}\ket{f(g)}$ and measures the second register, leaving a uniformly random \emph{coset state} $|H|^{-1/2}\sum_{h\in H}\ket{gh}$. For abelian $G$, measuring the quantum Fourier transform of coset states yields elements of the annihilator of $H$, and polynomially many of them determine $H$; this is Simon's algorithm over $\F_2^n$ \cite{Sim97} and the core of Shor's algorithms over cyclic groups \cite{Sho97}. For non-abelian groups the Fourier transform is indexed by irreducible representations, and the question becomes which measurements on coset states are informative. Measuring only the name of the representation (\emph{weak} Fourier sampling) suffices for normal subgroups \cite{HRT03}; measuring also inside the representation (\emph{strong} Fourier sampling) helps for some groups and provably fails for others (\cref{sec:frontier}). Ettinger, H{\o}yer and Knill showed that the \emph{query} complexity of the HSP is polynomial for every finite group: polynomially many coset states always contain enough information to identify $H$, although extracting it may take exponential time \cite{EHK04}.

\paragraph{The dihedral group.} The dihedral group $D_N=\Z_N\rtimes\Z_2$ is the simplest non-abelian group of cryptographic interest. Its interesting subgroups are generated by a reflection $(d,1)$, and coset sampling for the subgroup $\{0,(d,1)\}$ yields states $\tfrac{1}{\sqrt2}(\ket{0,x}+\ket{1,x+d})$ with uniformly random $x$. Ettinger and H{\o}yer showed that $O(\log N)$ such states determine $d$ information-theoretically, but their procedure for extracting it takes exponential time \cite{EH00}.

\paragraph{Kuperberg's sieve.} Applying the Fourier transform over $\Z_N$ to the second register of a coset state and measuring it yields a uniformly random classical label $y\in\Z_N$ and the qubit
\[
  \ket{\psi_y}=\tfrac{1}{\sqrt2}\big(\ket{0}+\omega^{yd}\ket{1}\big),\qquad\omega=e^{2\pi i/N}.
\]
For $N$ a power of two, a qubit with label $y=N/2$ is $\tfrac{1}{\sqrt2}(\ket0+(-1)^d\ket1)$ and reveals the parity of $d$ by a Hadamard measurement. Kuperberg's idea is to manufacture such labels by combining qubits: measuring the parity of two qubits $\ket{\psi_{y_1}}\otimes\ket{\psi_{y_2}}$ leaves, with probability $1/2$ each, a qubit equivalent to $\ket{\psi_{y_1+y_2}}$ or $\ket{\psi_{y_1-y_2}}$, with known label. By repeatedly combining pairs whose labels agree on many low-order bits, the algorithm zeroes those bits in stages, reaching label $N/2$ after $O(\sqrt{\log N})$ stages; the parity of $d$ follows, and the remaining bits are found recursively. The algorithm runs in time, query complexity and quantum space $2^{O(\sqrt{\log N})}$ \cite{Kup05}. Regev gave a variant with polynomial space and running time $2^{O(\sqrt{\log N\log\log N})}$ \cite{Reg04b}, and Kuperberg a second algorithm, the collimation sieve, with $2^{O(\sqrt{\log N})}$ quantum time, $O(\log N)$ quantum space and $2^{O(\sqrt{\log N})}$ classical space, which performs better when that classical memory can be accessed quantumly \cite{Kup13}. All three algorithms consume a subexponential number of coset states, which is one obstacle to composing them with reductions that supply only polynomially many states.

\paragraph{DCP with failure parameter.} Regev's reduction from lattices produces dihedral coset states that are only \emph{mostly} correct. He therefore defined a version of the problem that tolerates a fraction of corrupted inputs.

\begin{definition}[DCP with failure parameter {\cite[Def. 2.1]{Reg04}}]\label{def:dcp}
Let $N\geq2$ be an integer and $f>0$, and use binary logarithms in this definition. The input to $\DCP_N^f$ is a list of $\poly(\log N)$ registers. Each is, with probability at least $1-1/(\log N)^f$, in the state $\tfrac{1}{\sqrt2}(\ket{0,x}+\ket{1,(x+d)\bmod N})$ on $1+\lceil\log N\rceil$ qubits, where $x\in\Z_N$ is arbitrary and $d\in\Z_N$ is fixed; otherwise it is in an arbitrary state $\ket{b,x}$ (a \emph{bad} register). The goal is to output $d$ with probability $\poly(1/\log N)$ in time $\poly(\log N)$.
\end{definition}

A smaller $f$ means more bad registers, hence a stronger algorithm; for $f=0$ every register could be bad, which is why $f>0$ is required, and we write $f\to0^+$ for the limit of increasing tolerance. The noiseless problem, written $\DCP_N^\infty$ ($f=\infty$), is the plain DCP. The main theorem of \cite{Reg04} states that a solution to $\DCP^f$ yields a quantum algorithm for $\uSVP_{\gamma}$ with $\gamma=\Theta(n^{1/2+2f})$; \cref{sec:regev02} explains how. The same paper shows that an algorithm solving a $1/\poly(\log N)$ fraction of legal subset-sum instances with modulus $N$ yields a solution to $\DCP^f$ with $f=1$ \cite[Thm. 1.3]{Reg04}, so that average-case subset sum would give a quantum worst-case to average-case reduction for $\Theta(n^{2.5})$-unique-SVP \cite[Cor. 1.5]{Reg04}. This connection between DCP and subset sum is the template that the 2026 claim tried to complete (\cref{sec:simon}).

\paragraph{Extrapolated DCP.} Brakerski, Kirshanova, Stehlé and Wen generalised the coset states to superpositions over more than two multiples of the secret.

\begin{definition}[Extrapolated DCP {\cite[Def. 4]{BKSW18}}]\label{def:edcp}
For integers $n,N$ and an amplitude function $\rho:\Z\to\R_{\geq0}$, $\EDCP^\ell_{n,N,\rho}$ asks, given $\ell$ states $\sum_{j\in\mathrm{supp}(\rho)}\rho(j)\ket{j}\ket{(x_k+j\cdot s)\bmod N}$ (normalisation omitted), $k=1,\dots,\ell$, with arbitrary $x_k\in\Z_N^n$ and a fixed secret $s\in\Z_N^n$, to find $s$. The two main instantiations are \emph{uniform} EDCP, with $\rho$ uniform on $\{0,\dots,M-1\}$, and \emph{Gaussian} EDCP, $\mathrm{G}\text{-}\EDCP_{n,N,r}$, with $\rho(j)\propto\exp(-\pi j^2/r^2)$. For $n=1$ and $\rho$ the indicator of $\{0,1\}$ this is $\DCP_N$ without bad registers.
\end{definition}

They proved that LWE and EDCP are equivalent under quantum polynomial-time reductions, the extrapolation growing as the LWE noise rate decreases \cite{BKSW18}. Quantitatively, from $m=\Omega(\kappa)$ samples of $\LWE_{n,q,\alpha}$ they produce $\ell$ states of $\mathrm{G}\text{-}\EDCP_{n,q,r}$ for every $r<1/\big(c\,\sqrt m\,\kappa\,\ell\,\alpha\,q^{n/m}\big)$, with $c=6\sqrt{2\pi e}$ and $\kappa$ a statistical parameter, using a reduction based on intersections of balls \cite[Thm. 3]{BKSW18}; a simpler reduction based on separating cubes, in the spirit of Regev's 2002 reduction, gives $r<1/(32\,m\kappa\alpha\ell q^{n/m})$ \cite[Thm. 2]{BKSW18}. The proof of Theorem 3 needs a lower bound on the minimum of $\Lambda_q(A)$ and a radius condition \cite[Lem. 6, App. B.1]{BKSW18} that together restrict $m$ to a range excluding both $m$ close to $n$ and $m=\Theta(n\log q)$ (\cref{sec:reach-thm}). The factor $\ell$ appears because the $\ell$ states are produced by repetitions of one procedure on the same samples and must all be valid; \cref{sec:proofs} shows that it cannot be treated as a constant. Conversely, for widths $r=\Omega(\sqrt\kappa)$, $\ell=\Omega(\kappa)$ states of $\mathrm{G}\text{-}\EDCP_{n,N,r}$ yield $\ell$ samples of $\LWE_{n,N,\alpha}$ with $\alpha=1/r$ \cite[Thm. 4]{BKSW18}. The authors observe that, for polynomial $M$, a DCP algorithm yields an algorithm for uniform EDCP, giving consequences for LWE within the corresponding parameter promises; they also note a connection with the generalised hidden shift problem of Childs and van Dam, discussed below. Wen and Zheng extended the equivalence to Module-LWE over $\Z[x]/(x^d+1)$, $d$ a power of two, with constant module rank, polynomial modulus and $O(\log q)$ module samples, via a structured problem, IP-M-EDCP, that is equivalent to Module-LWE and reduces to plain EDCP \cite{WZ26}; \cref{sec:wz-audit} recovers the exact state map and checks its parameter certificate; the first ordinary EDCP output has modulus $q^d+1$, and no deployed attack follows from the audited bounds. Bai, Jangir, Kirshanova, Ngo and Youmans gave a quasi-polynomial algorithm for EDCP over power-of-two moduli; it requires quasi-polynomially many EDCP samples, while the reductions from LWE provide only polynomially many, and it does not apply to the moduli of deployed schemes \cite{BJKNY25}. Boucher, Fouque and Shen introduced the Cyclotomic Coset Problem, a generalisation of DCP that keeps a hidden-subgroup structure, and solved it with a quantum sieve in time and sample complexity $2^{O_p(\log n\log q)}$ using polynomial quantum space; this is quasi-polynomial for polynomial moduli, but the available reductions again generate only limited approximate states, so the result does not give a quasi-polynomial algorithm for standard LWE \cite{BFS26}.

\paragraph{Related problems.} Chen, Liu and Zhandry gave polynomial-time quantum algorithms for variants of average-case lattice problems via a \emph{filtering} technique: for SIS in the infinity norm with very wide matrices, polynomial modulus and bound $q/2-O(1)$, and for EDCP with suitable amplitude functions, both through LWE with quantum auxiliary states \cite{CLZ22}. Kothari, O'Donnell and Wu gave classical, deterministic algorithms for all the SIS$^\infty$ variants treated there, eliminating the exponential quantum speedup \cite{KOW26}. Cartor, Manganiello and Youmans generalised the $\BDD$-to-EDCP direction to finite abelian groups with translation-invariant metrics, covering Hamming-, rank- and Lee-metric decoding \cite{CMY26}; we return to the consequences for codes in \cref{sec:codes}.

\paragraph{Hidden shifts.}
\begin{definition}[Hidden shift]
Given injective $f_0,f_1:G\to S$ on an abelian group $G$ with $f_1(x)=f_0(x+s)$ for all $x$, find $s$.
\end{definition}

The abelian hidden shift problem is an instance of the HSP in the generalised dihedral group $G\rtimes\Z_2$. For $G=\Z_N$, coset sampling reduces it to DCP: superposing over $x$ and a control bit and measuring the function register yields the states of \cref{def:dcp}, with no bad registers (\cref{lem:hs-dcp}). The converse is not known: an algorithm with oracle access to $f_0,f_1$ may use them in ways that a supply of coset states does not allow. Kuperberg's algorithm applies to every finite abelian group \cite{Kup05}. Childs and van Dam studied a \emph{generalised} hidden shift problem, in which an injective function $f:\{0,\dots,M-1\}\times\Z_N\to S$ satisfies $f(b+1,x+s)=f(b,x)$; for $M=2$ it is the dihedral HSP, for $M=N$ it is an abelian problem solvable by Shor's methods, and for $M=N^{\varepsilon}$ with fixed $\varepsilon>0$ they gave a polynomial-time quantum algorithm that applies the pretty good measurement to $\max\{3,\log(1/\varepsilon)\}$ registers at once and reduces the resulting state-distinguishing task to integer programming in constant dimension \cite{CvD07}. The problem is close in spirit to EDCP with a large extrapolation, which is the connection noted in \cite{BKSW18}. Childs and van Dam give a broad account of algebraic quantum algorithms \cite{CvD10}; \cref{tab:prior-surveys} explains how our organising question differs.

\paragraph{A map of the relations.} \Cref{fig:dcp-map} summarises how the problems of this section are related. The left column contains the lattice problems, the middle column the dihedral coset problems that the reductions produce, and the right column the hidden-subgroup and hidden-shift problems from which the dihedral coset problem comes. Read from left to right, the figure shows why an efficient algorithm for a dihedral coset problem would reach lattices; read upwards in the right column, it displays the route for cyclic isogeny actions satisfying the full evaluation and public-input conditions of \cref{sec:isogenies}. The tags anticipate the three hypotheses of \cref{sec:hyp-def}, each of which assumes that the tagged problem is easy.

\begin{figure}[htbp]
\centering
\begin{tikzpicture}[
  font=\footnotesize,
  every node/.style={inner sep=2pt},
  box/.style={draw, rounded corners=2pt, align=center, text width=34mm, minimum height=10.5mm, line width=0.6pt},
  lat/.style={box, fill=wBlue!9, draw=wBlue!85!black},
  dcp/.style={box, fill=wOrange!13, draw=wOrange!85!black},
  hsp/.style={box, fill=wGreen!9, draw=wGreen!75!black},
  app/.style={box, fill=black!4, draw=black!60, dashed},
  red/.style={-{Stealth[length=2.3mm]}, line width=0.7pt},
  eqv/.style={{Stealth[length=2.3mm]}-{Stealth[length=2.3mm]}, line width=0.7pt},
  lab/.style={font=\scriptsize, fill=white, align=center, inner sep=1pt}
]
\newcommand{\hyptag}[1]{\\[1pt]{\scriptsize\color{wVermillion!85!black}#1}}
% column headings
\node[font=\footnotesize\bfseries, align=center, text width=34mm, anchor=south] at (0,4.95)    {Lattice problems};
\node[font=\footnotesize\bfseries, align=center, text width=34mm, anchor=south] at (5.4,4.95)  {Dihedral coset\\problems};
\node[font=\footnotesize\bfseries, align=center, text width=34mm, anchor=south] at (10.8,4.95) {Hidden subgroups\\and shifts};
% left column
\node[lat] (gap)  at (0,4.2)   {$\GapSVP_\gamma$, $\SIVP_\gamma$\\(worst case)};
\node[lat] (usvp) at (0,2.4)   {$\uSVP_\gamma$\\(worst case)};
\node[lat] (lwe)  at (0,0.6)   {$\LWE_{n,q,\alpha}$};
\node[lat] (mlwe) at (0,-1.2)  {Module-LWE};
% middle column
\node[dcp] (ss)   at (5.4,4.2)  {Subset sum\\(average case)};
\node[dcp] (dcpf) at (5.4,2.4)  {$\DCP_N^f$ (with bad registers)\hyptag{$\HReg(f)$}};
\node[dcp] (edcp) at (5.4,0.6)  {$\EDCP_{n,N,\rho}$ (over $\Z_N^n$)\hyptag{$\HEDCP$}};
\node[dcp] (ipm)  at (5.4,-1.2) {IP-M-EDCP (structured)};
% right column
\node[hsp] (hsp)   at (10.8,4.2)  {HSP on finite $G$\\(easy for abelian $G$)};
\node[hsp] (dhsp)  at (10.8,2.4)  {Dihedral HSP\\on $D_N=\Z_N\rtimes\Z_2$};
\node[hsp] (dcp)   at (10.8,0.6)  {$\DCP_N$, perfect states\\(noiseless coset states)\hyptag{$\HHS$}};
\node[app] (csidh) at (10.8,-1.45) {Cyclic isogeny actions\ with efficient evaluation};
% reductions
\draw[red] (usvp) -- node[lab, above=1pt] {\cite{Reg04}\\$\Theta(n^{1/2+2f})$} (dcpf);
\draw[red] (dcpf) -- node[lab] {$f=1$ \cite{Reg04}} (ss);
\draw[red] (gap.west) to[out=205, in=155, looseness=0.55] node[lab] {\cite{Reg09}} (lwe.west);
\draw[eqv] (lwe) -- node[lab, above=1pt] {\cite{BKSW18}} (edcp);
\draw[eqv] (mlwe) -- node[lab, above=1pt] {\cite{WZ26}} (ipm);
\draw[red] (ipm) -- node[lab] {\cite{WZ26}} (edcp);
\draw[red] (dcp.north west) -- node[lab, sloped, above=1pt] {zero noise} (dcpf.south east);
\draw[red] (dhsp) -- node[lab] {coset sampling,\\\cref{lem:hs-dcp}} (dcp);
\draw[red] (dhsp) -- node[lab] {special case} (hsp);
\draw[red] (csidh) -- node[lab] {cyclic class group\\\cite{CJS14,BKV19}} (dcp);
\end{tikzpicture}
\caption{Relations between lattice problems, dihedral coset problems, and hidden-subgroup and hidden-shift problems. An arrow $A\to B$ means that $A$ reduces to $B$, so that an efficient quantum algorithm for $B$ yields one for $A$; a double arrow denotes an equivalence. Labels give the source of each step, with the approximation factor $\gamma$ of the first reduction and the failure parameter $f$ of \cref{def:dcp}. The zero-noise arrow uses a solver for a prescribed noisy problem on noiseless inputs. In the converse algorithmic direction, \cref{lem:hs-rare} permits some solver-dependent failure exponent; it does not identify any fixed-$f$ problems. The structured edges record the parameter-qualified published connection; the first EDCP modulus is $q^d+1$, and the printed forward bound supplies no ML-KEM certificate (\cref{sec:wz-audit}). Tags in red identify the corresponding solver hypotheses of \cref{sec:hyp-def}. The $\HEDCP$ tag covers only its stated polynomial-modulus Gaussian domain, not every EDCP state in the diagram; in particular it does not include the first large-modulus structured output. The cited algorithms for noiseless cyclic DCP include Kuperberg's $2^{O(\sqrt{\log N})}$ sieves and Regev's polynomial-space variant with an additional $\sqrt{\log\log N}$ in the exponent \cite{Kup05,Reg04b,Kup13}. General abelian actions require the corresponding group and evaluation conditions \cite{CJS14}.}
\label{fig:dcp-map}
\end{figure}

% ======================================================================
\section{The claimed algorithms: strategy and status}\label{sec:simon}
% ======================================================================

This section describes the algorithm of \cite{Sim26} and the reasons it does not solve DCP, and, in \cref{sec:chen-zhang}, Chen's earlier claim for LWE and a proposed repair of it. We rely on the preprint itself, in its revision dated October 7, 2026, which we have read in full \cite{Sim26}; on the refutation of Gupte, Ragavan and Zhandry and its Lean development \cite{GRZ26,GRZLean}; on the rigorous restatement of the lemmas by Guo and Yang \cite{GY26}; and on the independent Lean audit \cite{RoyessAudit}. The three analyses predate the October revision: the Lean development of \cite{GRZLean} models the draft dated August 11, Guo and Yang's paper was received on August 17, and the audit, as consulted, does not discuss the October note. Contemporaneous public commentary is cited through \cite{PQC26news,Quentir26} and is used for dates and for the public objections that preceded the formal analyses. Where the sources describe the same step at different levels of detail, we say which source we follow. We do not reconstruct details that none of the sources states; the paragraph of informal intuition in \cref{sec:simon-failure}, the assessment in \cref{sec:simon-response} and our other readings of the draft are marked as ours. Statements about the two Lean developments refer to their public repositories as consulted on October 2, 2026, whose descriptions we re-read on October 18; we were not able to record commit identifiers, and the repositories may have changed since. A Lean development verifies the stated theorems about a formal model of the algorithm; whether the model matches every detail of a draft that was revised several times is argued in natural language by the authors, and we report it as such.

\subsection{The template: Regev's reduction from DCP to subset sum}\label{sec:simon-template}

Kuperberg's and Regev's algorithms start from the same observation \cite{Kup05,Reg04}, recalled in \cref{sec:prelim-hsp}. Applying the quantum Fourier transform over $\Z_N$ to the second register of a coset state $\tfrac{1}{\sqrt2}(\ket{0,x}+\ket{1,x+d})$ and measuring it yields a uniformly random classical \emph{Fourier label} $y\in\Z_N$ together with the qubit $\ket{\psi_y}=\tfrac{1}{\sqrt2}(\ket{0}+e^{2\pi i\,yd/N}\ket{1})$. The labels are known; the phases carry the information about $d$. A qubit whose label is $N/2$ reveals the parity of $d$ directly, and the difficulty is to produce such a label from random ones.

The two algorithms differ in how they combine qubits. Kuperberg combines them two at a time: a parity measurement on $\ket{\psi_{y_1}}\otimes\ket{\psi_{y_2}}$ produces a qubit with label $y_1\pm y_2$, and pairing qubits whose labels agree on many low-order bits zeroes those bits in stages. The price is that each stage discards most of the qubits, so that $2^{O(\sqrt{\log N})}$ samples are needed \cite{Kup05}. Regev instead combines many qubits at once. Taking the tensor product of $Q$ qubits gives the state
\[
  2^{-Q/2}\sum_{b\in\{0,1\}^Q}e^{2\pi i\,d\,(\sum_i b_iy_i)/N}\,\ket{b},
\]
in which the phase of each selection vector $b$ depends on $d$ only through the subset sum $\sum_ib_iy_i$. If one could coherently map every $b$ with a given subset sum to a single register value, the selection bits would disappear and the phases would encode $d$ in a form that a Fourier measurement can read. Regev proved that an algorithm solving a $1/\poly(\log N)$ fraction of legal subset-sum instances with modulus $N$ suffices for this, and obtains a solution to DCP with failure parameter $f=1$ \cite[Thm. 1.3]{Reg04}. Schematically, the template prepares a superposition over selection vectors, partially measures the subset sum, and then removes the selection bits coherently before a bit of $d$ can be read; Gupte, Ragavan and Zhandry call this last part the \emph{uncomputation stage} \cite{GRZ26}. In Regev's reduction, the subset-sum solver is what makes the uncomputation possible. Average-case modular subset sum with these parameters is not known to be easy, the subset-sum instances in question are themselves closely related to lattice problems \cite{LO85,LPS10}, so the template had not, before 2026, produced an efficient DCP algorithm.

\subsection{Simon's strategy}\label{sec:simon-strategy}

According to its abstract, the algorithm ``is based on Regev's polynomial-time reduction of the Dihedral Subgroup Problem (DSP) to the modular subset sum problem'', but ``uses a different technique to erase sample bits without use of a subset sum oracle'' \cite{Sim26}. The intended contribution is thus a replacement, inside Regev's template, of the subset-sum solver in the uncomputation stage. We follow the step-by-step description of Guo and Yang \cite{GY26}, which covers the same operations as the seven-step specification formalised in \cite{GRZLean} and the detailed description in Section 2.2 of the October revision \cite{Sim26}, divided slightly differently (see below). The algorithm works with $N=2^n$ and the parameters
\begin{gather*}
  Q=kn^{c+1}\ \text{samples}\ (k>c),\qquad m=c\log n\ \text{bits per group}\ (c\geq12),\\
  G=Q/m\ \text{groups},\qquad a=n/\log n,\qquad L=\log n.
\end{gather*}
With $c\geq12$ the algorithm consumes at least $kn^{13}$ coset states per output bit: polynomially many, as the reductions require, but a large polynomial. It proceeds as follows.
\begin{enumerate}[label=\textbf{Step \arabic*.}]
  \item \emph{Fourier sampling.} Apply the Fourier transform to the $Q$ samples, producing quantum Fourier labels \cite{GY26,GRZLean}.
  \item \emph{Subset-sum measurement.} Measure a subset-sum quantity, obtaining a classical outcome $z'$ and retaining a quantum register $(b,h)$ \cite{GRZLean}; in \cite{GY26} this step produces the record entries $Y$ and $z'$.
  \item \emph{Grouping.} Partition $\{1,\dots,Q\}$ into $G$ groups of $m$ indices, compute the subset sum of each group and keep its top $L$ bits \cite{GY26}; the Lean specification calls the results the ``high group words'' \cite{GRZLean}. The Lean specification models this step through coordinatewise \emph{digests} of the Fourier labels that retain only their $n-\ell$ most significant bits, with $\ell=\lfloor 2n/3\rfloor$ \cite{GRZLean}.
  \item \emph{Hadamard measurement.} Apply a Hadamard transform to all $Q$ selection bits and measure, obtaining a classical string $\varphi'$ divided into blocks $\varphi'_j$, one per group \cite{GY26}.
  \item \emph{Adaptive partition.} A group whose block $\varphi'_j$ is all-zero contributes no sign. Let $A$ consist of the first $a$ such groups and $B$ of the rest \cite{GY26}; the remaining words are kept in the order of $A$ \cite{GRZLean}.
  \item \emph{Erasure.} Perform the ``certified reversible erasure'' of the selection information, including carry-flag detection, and the associated measurements \cite{GRZLean}. This is the bit-erasure technique of the abstract.
  \item \emph{Output.} Measure the resulting register $h^*$ and output one bit of the hidden shift \cite{GY26,GRZLean}.
\end{enumerate}
Steps 3--6 play the role of the uncomputation stage of the template: instead of a subset-sum solver, they use the high-order part of the group sums and the Hadamard outcomes to decide which selection information to erase. In the draft's own numbering, which \cite{GRZLean} follows, step 4 is our Step 4 together with the choice of $A$, and step 5 measures the group words of the groups in $B$ and keeps those of $A$ in order; its step 6 adds the group words of $A$ modulo $n$ into a sum $s^*$, uses it to erase the last of them, and proceeds only if a carry flag is zero and the Hadamard-transformed low-order bits of $s^*$ are all zero, the latter an event of expected probability $2/n$; and its step 7 computes and measures $h'=h\oplus h^*$, which erases $h$, before $h^*$ is measured in the Hadamard basis to give the last bit $d_n$ of the hidden shift \cite{Sim26}. The procedure is repeated polynomially many times for each bit, and the remaining bits are obtained by the recursion of Regev's template.

\paragraph{The four lemmas.} The correctness argument of \cite{Sim26} rests on four lemmas, of which only Lemma 2 is proved in full in the draft; Lemmas 1, 3 and 4 are proved by sketches. In the October revision, and in the readings of \cite{GY26,RoyessAudit}:
\begin{itemize}
  \item \emph{Lemma 1} asserts that, if $k>c$, then with constant probability the measured Hadamard string has at least $n/\log n$ all-zero groups, so that Step 5 succeeds \cite{Sim26}. Guo and Yang state it as follows: if $Z$ is the number of groups with $\varphi'_j=0$, then $\Pr[Z\geq a]\to1$ under the hypothesis $k>c$ \cite{GY26}.
  \item \emph{Lemma 2} asserts that the contribution of the groups in $B$ is the same for two values of $z^*$ that differ only in their most significant bit $h^*$. It is an exact statement, with a complete proof in \cite{Sim26}, as \cite{GY26} confirm.
  \item \emph{Lemma 3} asserts that after the measurements no value $z^*$ contributes an amplitude $\alpha_{z^*}$ of exponentially large magnitude, except with probability $O(2^{-n})$ over the measured record, and a corollary bounds by $\sqrt n$, except with probability $O(1/n)$, the amplitude collected by each value of the top $\log n$ bits of $z^*$ \cite{Sim26}. The October sketch computes the expectation of $\alpha_{z^*}^2$ under the actual distribution of the Hadamard string by rewriting it as an expectation under the uniform distribution, which is how it keeps the two distributions apart. Guo and Yang prove, assuming the partition hypothesis discussed below, that $\mathbb E\big[\sum_{z^*}\alpha_{z^*}^2\big]\leq1$, hence $\Pr\big[\max_{z^*}|\alpha_{z^*}|>\theta\big]\leq\theta^{-2}$ for every threshold $\theta>0$ \cite{GY26}.
  \item \emph{Lemma 4} asserts, for $c\geq12$, that the amplitudes collected by two values of the top $\log n$ bits of $z^*$ that differ only in $h^*$ differ by an expected multiplicative factor $1\pm O(n^{-((c-1)/2-1)})$, with probability at least $1-1/n$, so that the bias survives amplification \cite{Sim26}. Guo and Yang prove instead an additive bound on the difference between the two amplitudes that determine the output bit, of order $n^{-((c-1)/2-1)}$ times a factor depending on the record, plus a residual term, and observe that the closing argument needs no more than this \cite{GY26}.
\end{itemize}
The closing argument of the draft combines Lemma 4 with the corollary of Lemma 3: the differences between paired amplitudes add up to at most $1/n$ except with probability $1/n$, so that measuring $h^*$ in the Hadamard basis returns $d_n$ with probability ``significantly greater than 1/2'' \cite{Sim26}.

\paragraph{Claimed consequences.} The theorem of the draft asserts that there is a constant $c'$ such that, from samples that are correct with probability at least $1-1/(c'\log n)$, a polynomial-time quantum algorithm computes the last bit of the hidden shift \cite{Sim26}. Its abstract states that, combined with the reductions of \cite{Reg04,BKSW18}, it yields polynomial-time quantum algorithms for SVP with approximation factor $\sqrt n\,\mathrm{polylog}(n)$ and for LWE with $\alpha=\sqrt n\,\mathrm{polylog}(n)$, in the draft's notation; its final corollary states the same with $O(\cdot)$, and for these parameters the draft cites personal communications from S. Ragavan and D. Micciancio \cite{Sim26}. In the language of \cref{sec:hypothesis}, it is a claim of the type $\HReg(f)$, which also implies $\HHS$; \cref{rem:simon-claim} relates the claimed tolerance to the failure exponent. The draft's route to LWE through \cite{BKSW18} needs more than a cyclic solver: \cite{BKSW18} connects LWE to extrapolated coset states over $\Z_N^n$ with polynomial $N$, not to cyclic coset states over $\Z_{2^n}$; no general implication from a cyclic solver to an extrapolated one is known (\cref{sec:hyp-def}). Under $\HReg(f)$, LWE is reached instead through the unique-SVP embedding of \cref{thm:reach}\,\ref{it:lwe}, with its sample and noise conditions. The draft does not mention any deployed parameter set \cite{Sim26}.

\begin{figure}[tbp]
\centering
\begin{tikzpicture}[
  >=Stealth, font=\small,
  st/.style={draw=wBlack, line width=0.6pt, align=center, inner sep=3pt, minimum height=12mm, text width=22mm, fill=white},
  bad/.style={draw=wBlack, line width=0.6pt, align=center, inner sep=3pt, minimum height=12mm, text width=22mm, fill=white},
  warn/.style={draw=wBlack, line width=0.6pt, align=center, inner sep=3pt, minimum height=12mm, text width=22mm, fill=white},
  ar/.style={->, line width=0.7pt}
]
\node[st] (s1) {\textbf{1} Fourier\\ sampling};
\node[st, right=5mm of s1] (s2) {\textbf{2} subset-sum\\ measurement};
\node[bad, right=5mm of s2] (s3) {\textbf{3} grouping:\\ top bits only};
\node[st, right=5mm of s3] (s4) {\textbf{4} Hadamard,\\ measure $\varphi'$};
\node[warn, below=14mm of s4] (s5) {\textbf{5} choose $A$\\ from $\varphi'$};
\node[st, left=5mm of s5] (s6) {\textbf{6} certified\\ erasure};
\node[st, left=5mm of s6] (s7) {\textbf{7} output\\ one bit};
\draw[ar] (s1) -- (s2); \draw[ar] (s2) -- (s3); \draw[ar] (s3) -- (s4); \draw[ar] (s4) -- (s5);
\draw[ar] (s5) -- (s6); \draw[ar] (s6) -- (s7);
\node[above=1.5mm of s3, align=center, font=\scriptsize] {\textsc{Information loss}:\\ Advantage $\leq Q\sqrt{K/N}$ \cite{GRZ26}};
\node[right=3mm of s5, align=left, font=\scriptsize, text width=30mm] {\textsc{Adaptive choice}:\\ Breaks Hypothesis (U) \cite{GY26}};
\node[below=2mm of s6, align=center, font=\scriptsize, text width=80mm] {Steps 3--6 form the uncomputation stage, where Regev's template uses a subset-sum solver \cite{Reg04}.};
\end{tikzpicture}
\caption{The seven steps of the algorithm of \cite{Sim26}, as described in \cite{GY26,GRZLean}, and the two places where the analyses locate the failure. The draft, and \cite{GRZLean} after it, numbers our Step 4 and the choice of $A$ as step 4, and the measurement of the words of $B$ as step 5.}
\label{fig:simon}
\end{figure}

\subsection{Why the strategy does not work}\label{sec:simon-failure}

The three cited analyses challenge the claim by different routes. The first proves a limitation for its formal observation model; the second and third identify gaps in the correctness argument. Relating the formal model to the evolving draft remains a separate step, as noted above. The author's response to the first analysis is assessed in \cref{sec:simon-response}.

\paragraph{An information-theoretic barrier.} Gupte, Ragavan and Zhandry prove a limitation that cannot be removed by repairing the analysis while retaining their formal observation model \cite{GRZ26}. Their argument concerns the output bit for two secrets $d_0,d_1$ whose least significant bits differ: an algorithm that recovers $d$ bit by bit must, in particular, distinguish such a pair. The key observation is structural. Because Step 3 passes the Fourier labels to the rest of the computation only through coordinatewise digests $H_i:\Z_N\to[K]$, functions applied separately to each label $y_i$, with at most $K$ possible values, the final observation depends on the labels only through the digests $H_i(y_i)$. The main theorem of the Lean development, stated for a formal model of the algorithm's observation (the seven-step specification of \cite{GRZLean}), asserts that for every $n$, every sample count $Q$, every digest size $1\leq K\leq N=2^n$, every coordinatewise digest family $H=(H_1,\dots,H_Q)$, every such pair $d_0,d_1$ and every Boolean-output quantum channel applied to the algorithm's observation, the digests, the Step 2 outcome $z'$ and the retained register $(b,h)$, the output distributions satisfy
\[
  \Delta\big(P_{d_0},P_{d_1}\big)\;\leq\;Q\sqrt{K/N},
\]
where $\Delta$ is the statistical (trace) distance \cite{GRZLean}; the Lean development proves it by a hybrid argument, and the bound grows linearly with the number of samples. For the algorithm of \cite{Sim26}, the digests keep $n-\ell$ bits with $\ell=\lfloor2n/3\rfloor$, so $K=2^{n-\ell}$ and the bound is $Q\cdot2^{-\ell/2}\approx Q\cdot2^{-n/3}$, negligible for every polynomial $Q$. Within that observation model, the bound allows arbitrary post-processing: retries, several invocations, adaptive stopping, majority votes and randomised decisions. Their refutation corollary turns it into the dichotomy quoted in \cref{sec:intro}: for every sufficiently large $n$, every pair with opposite least significant bits and every prefix-stable controller, either one of the two executions has expected sample count at least $2^{\lfloor n/9\rfloor}$, or both terminate almost surely and their output laws are within $\poly(n)\cdot2^{-\lfloor n/9\rfloor}$ \cite{GRZLean}. Thus the formalised algorithm cannot have both the claimed efficiency and correctness. In the words of the abstract, the method operates ``using only the most-significant third of the classical Fourier labels'', whereas any successful DCP algorithm in Regev's template would require ``extensive use of the classical Fourier labels in the uncomputation stage'' \cite{GRZ26}. The barrier is not specific to Simon's choices of parameters: it applies to algorithms satisfying the same coordinatewise-digest observation model \cite{GRZ26,Quentir26}.

\paragraph{Informal intuition.} The following example illustrates one parity-extraction method; it is not a characterisation of all quantum measurements. Write $d=2d'+d_{\mathrm{lsb}}$. A combined phase qubit with label $Y$ has phase $2\pi Y(2d'+d_{\mathrm{lsb}})/N$. Taking $Y=N/2$ eliminates the contribution of $d'$ while preserving the parity-dependent sign. Producing this label requires control of its low-order bits. This explains why those bits matter to this method, but does not itself prove that other methods fail. The formal limitation is the bound of \cite{GRZLean}, with its coordinatewise-digest and observation-model premises.

\paragraph{The partition hypothesis.} Guo and Yang give rigorous statements and complete proofs of the lemmas of \cite{Sim26}, in some cases stronger than the originals: Lemma 1 holds with probability tending to one, provided $k>c$; Lemma 3 holds at every threshold without the well-behavedness condition of the original; and Lemma 4 holds in an additive form, which is all the closing argument uses \cite{GY26}. According to Guo and Yang, however, the lemmas require the following hypothesis:
\begin{quote}
\emph{Hypothesis (U).} The $A/B$ partition is fixed and does not vary with $\varphi'$; with the rest of the record held fixed, $\varphi'_B$ ranges over all of $\{0,1\}^{Q_B}$ \cite{GY26}.
\end{quote}
In Step 5, $A$ is defined as the first $a$ groups whose block $\varphi'_j$ is all-zero, which makes $A$ a function of the measured string $\varphi'$: which groups end up in $A$ depends on which blocks happened to be measured as zero. The algorithm therefore does not supply Hypothesis (U), and, as Guo and Yang conclude, establishing the four lemmas ``does not by itself establish the correctness of the algorithm'' \cite{GY26}.

\paragraph{Lemma-level audit.} The independent Lean audit examines the published proofs one by one, and distinguishes carefully between what it formalises and what it argues in natural language \cite{RoyessAudit}.
\begin{itemize}
  \item \emph{Lemma 1.} The published proof uses a simultaneous swap of selection bits and low parts of the samples that is meant to preserve the fibres of the measured subset sum. The audit shows that the swap changes the contribution by $B(\varphi_j-\varphi_i)(H_i-H_j)$, which need not vanish modulo the measured modulus, and gives a counterexample with $n=8$ in which the contribution changes from $1$ to $5$ modulo $128$. The constant-probability claim nevertheless has a corrected proof, formalised in Lean for $n\geq1024$, $k=24$, $c=12$ (in the audit's notation), using collision bounds and Chebyshev's inequality.
  \item \emph{Lemma 3.} After $A$ is chosen adaptively, states with identical $B$-components become perfectly correlated rather than pairwise independent, contrary to the variance argument. An analysis in two copies, stated in natural language and not formalised, indicates that the first-zero decoder makes almost every Fourier mode diagonal. For one natural repair, the decoder without a guard, the correct and wrong output masses are $1/2+o(1)$ each; for the draft's own post-selection in its step 6, subject to one guard estimate still to be completed, the calculation predicts masses $1/(2L)+o(1/n)$ each and acceptance $1/L+o(1/n)$, with $L=n/2$. In both cases the output bit carries no bias. The audit also notes that conditioning on acceptance introduces a factor equal to the inverse of the acceptance probability, which can spoil the claimed exponential bounds. It concludes that Lemma 3 is ``not repaired in the sense needed by the paper's polynomial-time algorithm''.
  \item \emph{Lemma 4.} A printed exponent contains an extra $+n$, which is repairable, but the multiplicative amplitude bound remains unproved: independent coordinates can become correlated after conditioning, and nearly equal counts of signed terms give no multiplicative control when cancellation is possible. The audit records Lemma 4 as unproved, not refuted; its clean-oracle calculation, also in natural language and pending the same guard estimate, contradicts the closeness of branch amplitudes that the proposed decoder needs.
  \item \emph{Implementation.} Every decoder the audit analyses meets a coherent scale of $2^{n/2}$, and none of the realisations considered has polynomial cost.
\end{itemize}

\paragraph{The revision record.} The preprint was received by the ePrint archive on August 3, 2026, made public on August 6, and revised six times, last on October 8; the PDF of that revision is dated October 7, 2026, and is still labelled a preliminary draft \cite{Sim26,PQC26news}. According to the first note on the ePrint record, the proof of Lemma 3 in the first draft had conflated pairwise independence under the uniform and the actual distributions of the Hadamard string, and the revised proof keeps them apart; a separate note, added after \cite{GRZ26} appeared, said that the refutation was being evaluated. According to the ePrint record, the October revision adds a note addressing the claims of \cite{GRZ26}, discussed in \cref{sec:simon-response}, and clarifies the proof of Lemma 3 in response to comments by Sam Jaques \cite{Sim26}. Both revisions concern the proof of Lemma 3, whose October sketch still rests on pairwise independence over the Hadamard string \cite[\S2.2]{Sim26}. The audit finds that pairwise independence still fails once $A$ has been chosen adaptively \cite{RoyessAudit}, Guo and Yang locate the same dependence in Hypothesis (U) \cite{GY26}, and the barrier of \cite{GRZ26} does not depend on any lemma, so revising those lemmas alone cannot overcome it within the formal observation model.

\begin{table}[htbp]
\centering\small
\renewcommand{\arraystretch}{1.2}
\begin{tabularx}{\textwidth}{@{}L{3.0cm}Y L{3.4cm}@{}}
\toprule
\textbf{Ingredient} & \textbf{Finding} & \textbf{Source}\\
\midrule
Algorithm as specified & Fails: advantage $\leq Q\sqrt{K/N}\approx Q\,2^{-n/3}$ for post-processing within the formal observation model; exponential samples or negligible advantage. & \cite{GRZ26,GRZLean} (Lean)\\
Lemma 1 & Repaired: published proof invalid (the October sketch keeps the swap argument \cite{Sim26}); the constant-probability claim has a corrected proof, and holds with probability tending to one. & \cite{RoyessAudit} (Lean), \cite{GY26}\\
Lemma 2 & Holds: exact, with complete proof. & \cite{GY26}\\
Lemma 3 & Conditional: holds under Hypothesis (U); not repaired for the algorithm as specified; the October sketch still varies the Hadamard string freely with the rest of the record fixed. & \cite{GY26,RoyessAudit}; \cref{sec:simon-response}\\
Lemma 4 & Conditional: additive form holds under (U); multiplicative form unproved; the author concedes that it fails when the selection bits of $A$ are traced out. & \cite{GY26,RoyessAudit}, \cite[\S3]{Sim26}\\
Partition rule (Step 5) & Fails: $A$ depends on $\varphi'$, so Hypothesis (U) is not supplied. & \cite{GY26}\\
Polynomial implementation & Unknown: analysed decoders have coherent scale $2^{n/2}$. & \cite{RoyessAudit}\\
Response of October 7 & No escape: post-selection accepted with probability $\Omega(1/n)$ stays within the bound, which covers every channel on the digest-restricted state, restarts included; the adaptive step is asserted, not proved. & \cite[\S3]{Sim26}; our analysis (\cref{sec:simon-response})\\
\bottomrule
\end{tabularx}
\caption{Status of the ingredients of \cite{Sim26} according to the published analyses and, for the revision of October 7, 2026, our analysis.}
\label{tab:simon}
\end{table}

\subsection{The author's response of October 2026}\label{sec:simon-response}

The revision dated October 7 adds a section, ``A Note on [6]'', in which the author answers the refutation of Gupte, Ragavan and Zhandry, reference [6] of the draft \cite[\S3]{Sim26}. The note restates the result of \cite{GRZ26} as the statement that the density matrix after Step 2 is exponentially close to the one obtained if all the selection qubits $b_i$ are traced out, which would destroy the relation between the phases of the two values of $h$ that encodes $d_n$. It then makes three claims.
\begin{itemize}
  \item \emph{A concession.} Lemmas 1--3 are ``completely compatible'' with \cite{GRZ26}; Lemmas 2 and 3 concern the part of the state indexed by $B$, which could be traced out without affecting the algorithm. ``Rather, it's Lemma 4, the claim that the $\phi_A$ state portions are roughly evenly distributed among $z^*$ values, that fails once the bits in $\phi_A$ are traced out'', and this ``would certainly cause the algorithm to fail''.
  \item \emph{Post-selection.} Because $A$ is chosen so that the Hadamard outcomes on its selection bits are all zero, the values of $h$ and $h^*$ computed when those bits are traced out may ``diverge non-negligibly from the actual case, despite the proof in [6]''. According to the note, the closeness proved in \cite{GRZ26} is an average over the Fourier labels, and an all-zero Hadamard outcome changes that average.
  \item \emph{Three variations.} If all the selection bits are measured as zero after their Hadamard transform, the algorithm works, but this happens with exponentially small probability. If a set $A$ of $n/\log n$ groups is fixed in advance and the other selection bits are measured, the algorithm works in the also exponentially unlikely case that the Hadamard outcomes on $A$ are all zero, whereas tracing out the bits of $A$ predicts failure; according to the note, it works equally well if the other bits are not measured, provided the outcomes on $A$ are all zero ``and would form exactly the set A chosen by the algorithm''. For the algorithm as specified, ``its output will be identical to that of the previous variation in which $A$ is pre-selected, conditioned on the Hadamard transforms all coming out as hoped, the only difference is the much larger success probability due to post-selection''.
\end{itemize}
The note reconciles this with \cite{GRZ26} by observing that, whatever set $A$ is post-selected, ``the probability that that particular set produced all zeroes when Hadamard-transformed was a priori exponentially small, and thus nominally consistent with the result in [6]'', and summarises that post-selection lets the required deviation ``hold with high probability overall, rather than an exponentially small one''. The note concludes that tracing out all the selection bits ``will thus produce an incorrect prediction of the algorithm's output''. The note does not discuss the statement of the main theorem of \cite{GRZ26}, a bound for every channel applied to the digest-restricted state, or its Lean development, and it names no step of the draft that their formal model leaves out. It accepts the closeness of density matrices as it restates it, and maintains only that tracing out the selection bits mispredicts the output of the post-selected algorithm. The algorithm it defends has the seven steps modelled in \cite{GRZLean} (\cref{sec:simon-strategy}).

\paragraph{Assessment.} We find in the note no argument that removes the barrier, for four reasons.
\begin{enumerate}
  \item \emph{The bound already covers post-selection.} The main theorem of \cite{GRZLean} bounds the trace distance between the states of the two secrets formed by the digests of the Fourier labels, the Step 2 outcome $z'$ and the retained register $(b,h)$; its proof dephases the selection bits one at a time and shows that the fully dephased state does not depend on the secret. This is the closeness that the note restates, with one difference that matters: it concerns the digest-restricted state, not the state that keeps the full labels, from which $d_n$ can in principle be read, as a subset-sum solver does in Regev's template. Since no channel increases trace distance, the bound applies to the output law of every Boolean-output channel applied to that state, with no restriction on efficiency; Hadamard transforms, measurements, adaptive choices and post-selection are all such channels. The exact procedure is not literally of this form, because the draft's step 3 keeps the top bits of the exact group sums, which depend on the low-order label bits through carries. The refutation corollary replaces these words by words rebuilt from the digests, which differ only on a set of labels of probability at most polynomial times $2^{\ell-n}$; it applies the main theorem to the rebuilt procedure, grouping, Hadamard transforms, choice of $A$ with a restart if too few groups qualify, the two post-selections of the draft's step 6 and the final measurements, and transfers the bound to the exact procedure. In this one-attempt form a restart is an outcome in its own right, and the total-variation distance between the laws of the outcomes restart, output 0 and output 1 is at most $\poly(n)\cdot2^{-\lfloor n/3\rfloor}$ \cite{GRZLean}.
  \item \emph{Conditioning loses only the acceptance probability.} Let $\varepsilon$ bound the total-variation distance between the two outcome laws, restarts included; let $p_0,p_1$ be the acceptance probabilities for the two secrets, $q_0,q_1$ the probabilities of output 1 given acceptance, and $\Delta_1=p_0q_0-p_1q_1$, $\Delta_0=p_0(1-q_0)-p_1(1-q_1)$. Then $p_0(q_0-q_1)=(1-q_1)\Delta_1-q_1\Delta_0$ and $|\Delta_0|,|\Delta_1|\leq\varepsilon$, so $p_0|q_0-q_1|\leq\varepsilon$: given acceptance, the distinguishing advantage is at most $\varepsilon/p_0$. The bound concerns the acceptance event of the whole algorithm, the union over all candidate sets $A$; that each particular set is a priori exponentially unlikely does not enter. Read as conditional probabilities, the draft's claims put the acceptance probability of a run at $\Omega(1/n)$, constant probability for Lemma 1 and for the carry flag, and expected probability $2/n$ for the all-zero outcome of step 6, so the conditional advantage remains at most $O(n)\cdot\poly(n)\cdot2^{-\lfloor n/3\rfloor}$, still negligible. The same computation answers the remark that the closeness proved in \cite{GRZ26} is an average over the Fourier labels: the algorithm's success probability is the same average, and an all-zero Hadamard outcome is one of the accepted events it covers. Whatever the acceptance probability $p$, an advantage $\delta$ given acceptance requires $p\leq\varepsilon/\delta$, which is exponentially small for $\delta\geq1/\poly(n)$; the third point then applies. The audit applies the same elementary rule to the tail bounds of Lemma 3, where conditioning on acceptance mass $\rho$ costs a factor $1/\rho$, notes that a concrete lower bound on the acceptance probability of step 6 still needs a new hypothesis or repair, and, in a two-copy calculation stated in natural language and pending one guard estimate, predicts for that step an acceptance probability $1/L+o(1/n)$ with $L=n/2$ and equal correct and wrong output masses: acceptance of order $1/n$ and no bias \cite{RoyessAudit}.
  \item \emph{Amplification needs exponentially rare events.} Post-selection can turn a small distance into a large one only on events of exponentially small probability, as in the three variations of the note. For the algorithm itself, repeating a run until such an event occurs costs exponentially many samples in expectation, which is the first alternative of the refutation corollary \cite{GRZLean}; for the note's variations, which the corollary does not model, the same conclusion follows from the main theorem and the second point. The note relies on the opposite move when it passes from its second variation to its third: it invokes the closeness proved in \cite{GRZ26} for measuring or not measuring the selection bits of $B$, a bound of order $Q\cdot2^{-n/3}$, and then conditions on all-zero outcomes on $A$, an event it calls exponentially unlikely. By the same accounting the closeness gives no control after such conditioning; this is the reasoning that the note applies to $A$ but not to $B$.
  \item \emph{The decisive step is the adaptive choice.} The note's final claim, that choosing $A$ after the Hadamard measurement gives the same output as fixing it in advance and conditioning on the same outcome, is asserted without proof. Since $A$ is a function of the measured string, the two procedures leave the same state for each complete outcome, so the claim amounts to the correctness of the pre-selected procedure when one also conditions on $A$ being the set that the algorithm would choose. That condition constrains the outcomes on $B$; it is the conditioning excluded by Hypothesis (U) \cite{GY26} and the source of the correlations found by the audit \cite{RoyessAudit}. The note derives that correctness from its second variation, which does not condition on $B$, and from the proof of its Section 2, which, according to \cite{GY26}, needs a fixed partition. Within the formal model, the claim cannot hold together with the claimed correctness of the pre-selected variation: together they would give a conditional advantage bounded away from zero at acceptance probability $\Omega(1/n)$, hence an unconditional advantage of order $1/n$, far above the bound.
\end{enumerate}
The clarified sketch of Lemma 3 does not address the adaptive choice either. It varies the Hadamard string with the rest of the record fixed and treats each of its bits as free, ``assuming uniform $D$'', although the same string determines which groups form $A$. Lemma 1 is still proved by the swap argument for which the audit gives a counterexample, and Lemma 4 is still stated in multiplicative form with a sketch \cite{Sim26,RoyessAudit}. The revision therefore leaves the findings of \cref{tab:simon} unchanged; the table records the October sketches and adds a row for the note. It does make one point explicit: by the author's own account, correctness rests on Lemma 4 in the regime in which the selection bits of $A$ are not traced out, which is where the adaptive choice of $A$ enters.

\subsection{What the failure teaches}\label{sec:simon-lessons}

The three analyses and the response discussed in \cref{sec:simon-response} support four lessons, each of which bears on the counterfactual studied in the rest of this survey.
\begin{enumerate}
  \item \emph{Uncomputation requires a justified algorithm.} Regev's template reduces DCP to removing the selection bits after a partial measurement of a subset sum \cite{Reg04}. Simon's strategy attacks exactly that step, and the barrier of \cite{GRZ26} constrains algorithms in the stated observation model: a digest that keeps a third of each label leaves a distinguishing advantage of at most $Q\cdot2^{-n/3}$, and more generally the bound $Q\sqrt{K/N}$ is non-negligible for polynomial $Q$ only if each coordinatewise digest retains all but $O(\log n)$ bits of its label. Within that coordinatewise-digest model, an algorithm realising $\HReg(f)$ would therefore have to make, in the words of \cite{GRZ26}, ``extensive use of the classical Fourier labels'' in the uncomputation stage.
  \item \emph{Adaptive measurements need adaptive analyses.} The gap identified in \cite{GY26} is a classical pitfall: choosing a subset of registers using measurement outcomes can invalidate independence assumptions; the cited analyses identify this failure in the proposed partition rule. The audit's correlation counterexamples make the same point at the level of individual lemmas \cite{RoyessAudit}.
  \item \emph{Formal developments make the verified scope explicit.} Both negative analyses come with Lean developments, which verify the stated bounds for formal models of the algorithm and, in the audit, part of the lemma-level arguments \cite{GRZLean,RoyessAudit}; the decisive one appeared nine days after the preprint became public \cite{PQC26news}. The formal bounds support the refutation of the specified model; the correspondence between model and draft remains an informal step requiring its own justification. This is also why the refutation does not make the counterfactual moot: it disposes of a template with limited label information, not of DCP, as \cref{sec:intro} and \cref{rem:kuperberg} explain.
  \item \emph{Post-selection does not create information.} When a distinguishing advantage is bounded for every channel acting on the retained registers, post-selection that succeeds with non-negligible probability can increase it by at most the inverse of the acceptance probability. Larger gains require accepting exponentially rare events and paying for them with exponentially many samples. The same accounting applies to any proposed DCP algorithm that combines a lossy observation with adaptive post-selection (\cref{sec:simon-response}).
\end{enumerate}

% ======================================================================
\subsection{Chen's LWE algorithm and a proposed repair}\label{sec:chen-zhang}

Two earlier claims bear on the same question from another direction: instead of solving a dihedral problem, they target LWE directly. We describe them from the preprints themselves, Chen's paper in its revision of April 19, 2024, which carries the author's note of April 18 \cite{Che24} and Zhang's in its eighth arXiv version, of May 14, 2026 \cite{Zha25}, and from the public rebuttal of the latter \cite{Apo25}. The comparisons with the reach theorem at the end of this subsection are our analysis.

\paragraph{Chen's claim.} In April 2024 Chen announced a polynomial-time quantum algorithm for LWE with certain polynomial modulus-noise ratios \cite{Che24}, following earlier work on quantum algorithms for LWE-type problems given quantum samples \cite{CLZ22,CHLLT25}. As stated, the algorithm solves LWE with $m\geq\Omega(n\log q)$ samples and noise $D_{\Z,\beta}$, $\beta=\alpha q$, whenever $q\in\widetilde\Omega(\beta^4m^2)$, in time $\poly(m,\log q,\beta)$. Combined with Regev's reduction \cite{Reg09} at $q\in\widetilde O(n^4)$ and $\alpha\in\widetilde O(n^{-3.5})$, it would give $\SIVP$ and $\GapSVP$ for all $n$-dimensional lattices within factors of order $n^{4.5}$, up to logarithmic factors \cite[Thm. 1.6, Cor. 1.7]{Che24}. The algorithm introduces Gaussian functions with complex variances, exploiting the ``Karst wave'' in the discrete Fourier transform of complex Gaussians, and windowed quantum Fourier transforms with complex Gaussian windows. In nine steps it prepares a superposition over a lattice under a complex Gaussian window, alternates quantum Fourier transforms with a second window, measures low-order bits, extracts a state with purely imaginary Gaussian amplitudes, and in Step 9 extracts a linear equation over the secret; repetitions give a linear system, solved by Gaussian elimination. On April 18, 2024 the author reported that Step 9 contains a bug, found by Hongxun Wu and, independently, Thomas Vidick, that he did not know how to fix, and stated that the claim of a polynomial-time quantum algorithm for LWE with polynomial modulus-noise ratios no longer holds; the paper remains online, with a clarification of Step 8, as a source of techniques \cite{Che24}. The details of the bug are given in Section 3.5.9 of the updated paper; we do not reconstruct them.

\paragraph{Zhang's proposed repair.} Zhang proposed to replace Chen's Steps 8 and 9 by a single exact post-processing routine \cite{Zha25}. In the eighth version, this Step 9$^\dagger$ measures the residue $\tau=X_1\bmod D^2$ of the first coordinate, applies a diagonal quadratic phase that cancels the chirp, using a run-local gauge value, and applies a quantum Fourier transform over $\Z_M$. Writing $b=(-1,2p_1s,2p_1e)\in\Z^{1+n+m}$ for the planted vector of Chen's lattice formulation of LWE and $Q=(c+1)\|b\|^2$, the paper proves, under conditions AC1--AC5 described below (AC1--AC4 for the uniformity statement, given resonance), that accepted outcomes $u$ satisfy $\langle b,u\rangle\equiv0\pmod Q$ with probability $1-\negl(n)$ and are uniform on the hyperplane $\{v\in\Z_Q^{1+n+m}:\langle b,v\rangle\equiv0\pmod Q\}$, so that $O(n+m)$ of them determine $b$, hence $s$ and $e$ \cite[Lem. 2.2, Lem. 3.3, Thm. 3.4]{Zha25}. The consequence for LWE is stated conditionally: assuming Chen's reduction from LWE to a lattice problem and its unique-shortest-vector promise, if Chen's front end can be compiled in a re-tuned configuration satisfying AC1--AC4 and the Gaussian-envelope hypotheses of the paper's Proposition 2.6, which imply AC5, the routine solves Chen's chosen-secret instance and, through Chen's classical reductions, LWE in the same parameter regime \cite[Cor. 3.5]{Zha25}. AC1 concerns the state after Chen's Step 7, AC2 requires $Q$ to be odd, AC3 keeps Chen's scaling, and AC5 is a spectral concentration condition, which the paper proves for Gaussian envelopes under additional hypotheses. AC4 assumes that the preparation oracle ``also returns, as classical side information,'' the residue $v_1^*(E)\bmod Q$ of the run-dependent affine offset of the front end, or an equivalent gauge value. The paper does not assume that this value is derivable from the coordinate registers, calls it an additional oracle promise for a black-box oracle, and offers as the alternative a compilation that requires $E$ to be ``available coherently during the preparation'' \cite[Sections 1 and 2, Prop. 2.3]{Zha25}. The paper does not show that a re-tuned front end supplies AC4 or meets the hypotheses of Proposition 2.6, and it gives no cost for supplying AC4.

\paragraph{The published rebuttal.} Apon's rebuttal \cite{Apo25}, first posted on October 18, 2025 and revised on October 28, raises two objections, both about the first steps of Zhang's routine. First, it points out that in the first version, of September 15, 2025, the coherent shift of Step 9$^\dagger$ adds to a copy of the coordinate registers a shift along the planted vector, so that computing it presupposes the answer. Second, it argues that in the third version, of October 1, the latest when the rebuttal was written, the idea behind the routine, a ``pair-shift difference'' that subtracts a shifted copy to cancel the unknown offset, relies on a reversible ``coordinate evaluator'' meant to reproduce Chen's pipeline up to Step 8.f, although that pipeline measures a register in its first step, and it concludes that there is no evidence that an efficient evaluator of this kind exists. The revision of October 28 replies to a note of October 24 in which, according to the rebuttal, Zhang argued that deferred measurement resolves this point; the reply states that the deferred circuit needs two strongly correlated measurements to return the same value, which happens with negligible probability. Its author believes that these issues refute the paper's claims. The eighth version, which the rebuttal predates, no longer contains the pair-shift step. It receives one copy of the state after Chen's Step 7, does not assume that two copies with the same offset can be obtained, and takes the residue of the run-dependent offset, which depends on Chen's measurement outcomes, from the oracle as side information (AC4). In our reading, what the pair-shift step was meant to do, remove the dependence on the unknown offset, now rests on that assumption, which, as noted above, the paper does not show how to meet. Neither the paper nor the rebuttal has been peer reviewed.

\begin{table}[htbp]
\centering\small
\renewcommand{\arraystretch}{1.2}
\begin{tabularx}{\textwidth}{@{}L{2.3cm}Y Y L{3.0cm}@{}}
\toprule
\textbf{Claim} & \textbf{What was claimed} & \textbf{Where it would enter the map} & \textbf{Status}\\
\midrule
Chen 2024 \cite{Che24} & Quantum polynomial-time search-LWE for $m\geq\Omega(n\log q)$ and $q\in\widetilde\Omega((\alpha q)^4m^2)$; hence $\SIVP$ and $\GapSVP$ within factors of order $n^{4.5}$ via \cite{Reg09}. & A many-sample, small-noise LWE regime inside the reach of $\HReg(f)$ for $f<1/2$; large-width $\mathrm G\text{-}\EDCP$ via \cite[Thm. 4]{BKSW18}; no known route to the three hypotheses; excludes ML-KEM, FrodoKEM and standard-noise LWE. & Withdrawn: bug in Step 9, found by H. Wu and, independently, T. Vidick; claim withdrawn on April 18, 2024.\\
Zhang 2025--26 \cite{Zha25} & An exact replacement for Chen's Steps 8--9 whose accepted outcomes, under AC1--AC5, are uniform on the dual hyperplane of the planted vector; LWE in Chen's regime if the front end meets the conditions. & The same regime as Chen's claim, if the conditions held; no new consequence. AC4 requires the preparation oracle to supply a residue of a run-dependent offset, which the paper does not assume derivable from the coordinate registers. & Conditional: version 8 of May 2026, not peer reviewed; rebuttal of versions 1 and 3 \cite{Apo25}.\\
Simon 2026 \cite{Sim26} & Quantum polynomial-time DCP tolerating faulty samples at rate $1/O(\log n)$; SVP at $\sqrt n\,\mathrm{polylog}(n)$ and LWE through \cite{Reg04,BKSW18}. & A claim of type $\HReg(f)$ for every constant $f>0$ (\cref{rem:simon-claim}); a correct version would subsume Chen's conclusion for LWE (\cref{thm:reach}\,\ref{it:lwe}). & Refuted: \cref{sec:simon-failure,sec:simon-response}.\\
\bottomrule
\end{tabularx}
\caption{Three claimed polynomial-time quantum algorithms bearing on LWE, their place in the counterfactual map and their status.}
\label{tab:claims}
\end{table}

\begin{figure}[htbp]
\centering
\begin{tikzpicture}[font=\footnotesize,
  claimnode/.style={draw=black!65,dashed,fill=black!3,rounded corners=2pt,align=center,
    text width=34mm,minimum height=14mm,inner sep=3pt},
  hypnode/.style={draw=wBlack,fill=wSky!12,rounded corners=2pt,align=center,
    text width=34mm,minimum height=12mm,inner sep=3pt},
  resnode/.style={draw=wVermillion,line width=1.1pt,align=center,
    text width=34mm,minimum height=12mm,inner sep=3pt},
  outnode/.style={draw=wBlack,line width=0.6pt,align=center,
    text width=36mm,minimum height=12mm,inner sep=3pt},
  lab/.style={font=\scriptsize,fill=white,inner sep=1.2pt,align=center},
  arr/.style={->,>=Stealth,line width=.8pt},
  carr/.style={->,>=Stealth,line width=.8pt,densely dotted}]
 \node[claimnode] (sim) at (0,5.8)
   {Simon 2026 \cite{Sim26}\\$\DCP_N$, faults at rate\\$1/O(\log n)$\\[2pt]\textsc{Refuted}};
 \node[claimnode] (che) at (5.5,5.8)
   {Chen 2024 \cite{Che24}\\LWE, $m\geq\Omega(n\log q)$,\\$q\in\widetilde\Omega(\beta^4m^2)$\\[2pt]\textsc{Withdrawn}};
 \node[claimnode] (zha) at (11,5.8)
   {Zhang 2025--26 \cite{Zha25}\\exact routine for\\Chen's Steps 8--9\\[2pt]\textsc{Conditional}};
 \node[hypnode] (reg) at (0,2.9)
   {$\HReg(f)$, every\\constant $f>0$\\($\HHS$: some $f_{\mathcal A}$, L. \ref{lem:hs-rare})};
 \node[resnode,text width=40mm] (lwe) at (5.5,2.9)
   {Uniform LWE, $m\geq2n\lg q$,\\$\alpha\leq\alpha_{\mathrm{Reg}}(m)$; it contains\\Chen's regime if $f<1/2$};
 \node[resnode] (edcp) at (11,2.9)
   {$\mathrm G\text{-}\EDCP_{n,N,r}$, large widths\\only; from Chen's regime:\\$r\geq\widetilde\Omega(N^{3/4}\ell^{1/2})$};
 \node[resnode,text width=40mm] (wc) at (2.4,0)
   {Worst-case $\uSVP$, $\BDD$,\\$\GapSVP$, $\SIVP$ at\\polynomial factors};
 \node[outnode] (dep) at (8.8,0)
   {ML-KEM, FrodoKEM,\\standard-noise LWE\\[2pt]\textsc{Outside the regime}};
 \draw[carr] (sim) -- node[lab,right]{if correct;\\R. \ref{rem:simon-claim}} (reg);
 \draw[carr] (che) -- node[lab,right]{claimed, on\\a subregime} (lwe);
 \draw[carr] (zha) -- node[lab,above]{repairs if\\AC1--AC5} (che);
 \draw[arr] (reg) -- node[lab,above]{Thm. \ref{thm:reach}\\\ref{it:lwe}} (lwe);
 \draw[arr] (reg.south) -- node[lab,left]{\cite{Reg04};\\Thm. \ref{thm:reach}} (wc.north west);
 \draw[arr] (lwe.south) -- node[lab,right]{\cite{Reg09}:\\$\Otil(n/\alpha)$} (wc.north east);
 \draw[arr] (lwe) -- node[lab,above]{\cite{BKSW18},\\Thm. 4} (edcp);
 \draw[arr,dashed,draw=black!55] (lwe.south east) -- node[lab,right]{not\\reached} (dep.north);
\end{tikzpicture}
\caption{Where the three claimed algorithms would enter the counterfactual map, drawn in the solver direction of \cref{fig:reach}: an arrow from $A$ to $B$ means that an algorithm for $A$ yields one for $B$. Dotted arrows are claims, and the one from Chen's claim ends at the node that contains its regime; solid arrows are proved under the cited premises; the dashed arrow marks parameters outside the claimed regime. The inclusion of Chen's regime, the width bound for $\mathrm G\text{-}\EDCP$ and the comparison with deployed parameters are our analysis (\cref{sec:chen-zhang}).}
\label{fig:claims-map}
\end{figure}

\paragraph{Where the claims enter the counterfactual.} \Cref{fig:claims-map,tab:claims} summarise the following comparison.
\begin{enumerate}
  \item \emph{A direct LWE solver, not a dihedral one.} Chen's statement, and Zhang's conditional version of it, concern LWE in a single regime: many samples, $m\geq\Omega(n\log q)$, and noise that is small relative to the modulus, $\alpha=\beta/q\leq1/\widetilde\Omega(\beta^3m^2)$. No known reduction derives $\HHS$, $\HReg(f)$ or $\HEDCP[\ell]$ from a solver for this regime: no reduction of cyclic DCP to LWE is known, and the only known general route from an LWE solver to extrapolated states is the converse reduction \cite[Thm. 4]{BKSW18}, which turns $\ell$ states of $\mathrm G\text{-}\EDCP_{n,N,r}$ into $\ell$ samples of LWE with $\alpha=1/r$. With Chen's condition, $N\in\widetilde\Omega((N/r)^4\ell^2)$, and $\ell\geq\Omega(n\log N)$ states, this covers only widths $r\geq\widetilde\Omega(N^{3/4}\ell^{1/2})$, up to the noise distribution produced by that reduction, whereas $\HEDCP[\ell]$ requires every $r\geq r_0$.
  \item \emph{Inside the reach of $\HReg(f)$.} \Cref{thm:reach}\,\ref{it:lwe} solves search-LWE with $m\geq2n\lg q$ samples, prime $q$ and noise $D_{\Z^m,\alpha q}$ whenever $\alpha\leq\alpha_{\mathrm{Reg}}(m)=1/(8\sqrt m\,\gu(m+1))$, that is, $\alpha\leq1/O(m^{1+2f})$. Chen's regime gives $\alpha\leq1/\widetilde\Omega(\beta^3m^2)$, so for noise parameters $\beta\geq1$ and every $f<1/2$, each instance of that regime with $m\geq2n\lg q$ and prime $q$ lies inside the reach once $n$ is large enough, for all but a $q^{-n}$ fraction of $A$. At the parameters of the lattice corollary ($\beta\approx\sqrt n$, $\alpha\approx n^{-3.5}$, $m=\Theta(n\log q)$) the same holds for every $f<5/4$. The corollary's factor of order $n^{4.5}$ is weaker than the $\HReg(f)$ guarantees, $\SIVP$ at $O(n^{3/2+2f})$ and $\GapSVP$ at $\Otil(n^{1+2f})$, for $f<3/2$ and $f<7/4$ respectively. Under $\HEDCP[\ell]$, with $q\leq\poly(n)$ and the samples reduced to $m_{\max}(n,q)$, the certified scale $\alpha^*_{\mathrm E}(n,\ell)=1/\Otil(\ell n)$ exceeds the corollary's $\alpha\approx n^{-3.5}$ whenever $\ell\leq n^{5/2}/\mathrm{polylog}(n)$. Under $\HHS$ the comparison depends on the witness through $f_{\mathcal A}$ (\cref{lem:hs-rare}).
  \item \emph{Subsumed by a correct version of Simon's claim.} Read as a claim of type $\HReg(f)$ (\cref{sec:simon-strategy}), the tolerance claimed in \cite{Sim26} exceeds the one that $\HReg(f)$ grants for every constant $f>0$ (\cref{rem:simon-claim}). A solver with that tolerance would therefore satisfy $\HReg(f)$ for every constant $f$, in particular for $f<1/2$, and by the previous point would solve every instance of Chen's regime with $m\geq2n\lg q$, prime $q$ and $\beta\geq1$. In this precise sense Simon's claim was the more far-reaching one: it would have implied Chen's conclusion for LWE in that range, and stronger worst-case factors; no converse implication is known.
  \item \emph{No deployed parameter set.} Chen's condition and sample count fail for the deployed LWE-based KEMs. For ML-KEM and FrodoKEM, with $\beta=\sqrt{2\pi}\,\sigma$ for the error standard deviation $\sigma$, the ratio $\beta^4m^2/q$ ranges from about $4\times10^3$ (FrodoKEM-1344) to $3\times10^4$ (FrodoKEM-640) before the logarithmic factors hidden in $\widetilde\Omega$, and each key offers $m=n$ samples rather than $\Omega(n\log q)$. Standard-noise LWE, with $\alpha=1/\Otil(\sqrt n)$, is excluded as well, since the regime requires $\alpha\leq1/\widetilde\Omega(m^2)$ when $\beta\geq1$. A correct version of the withdrawn claim would thus have certified only LWE instances that \cref{tab:collapse} already lists as reached under $\HReg(f)$ for $f<1/2$ (many samples, small noise), and none of the parameter sets of \cref{tab:standards}.
  \item \emph{A repaired step is not a repaired algorithm.} Zhang's routine addresses Chen's post-processing. Its consequence for LWE remains conditional on the front end and on side information about a run-dependent offset that the paper does not show how to obtain from LWE samples. Even if AC1--AC5 were established, the result would solve LWE only in the regime of the first point and would add no consequence to the classification. We therefore record the Chen--Zhang route as a withdrawn claim with a conditional repair, not as a hypothesis of this survey.
\end{enumerate}

% ======================================================================
\section{The counterfactual hypotheses and their reach}\label{sec:hypothesis}\label{sec:reach}
% ======================================================================

This section turns ``DCP is easy'' into three precise hypotheses, which differ in their access model, robustness to faulty inputs and extrapolation requirements, and derives their consequences. We compose published reductions while retaining the parameters and promises needed at each step. The propositions of \cref{sec:proofs} make the resulting bounds, decoder success and preparation error explicit; the underlying mechanisms are credited there. The information-theoretic bound of \cref{prop:ell-lower}, due to an anonymous reviewer, constrains the number of extrapolated states. To make the compositions checkable, \cref{sec:regev02} explains Regev's reduction from lattices to DCP in detail.

\subsection{Three hypotheses}\label{sec:hyp-def}

``DCP is easy'' is ambiguous in four ways that change the consequences: the tolerance to bad registers, the group ($\Z_N$ with $N$ exponential, versus $\Z_N^n$ with $N$ polynomial), the robustness to non-uniform amplitudes, and the access model (coset states, or oracle access to the functions that define them). We therefore state three hypotheses. All three are \emph{state-input} hypotheses: the algorithm receives quantum states and nothing else.

\begin{enumerate}[label=\textbf{Hypothesis \thesection.\arabic*}, ref=\thesection.\arabic*,
  leftmargin=1.5em, labelindent=0pt, labelsep=0.4em, itemindent=*, align=left, itemsep=3pt, topsep=4pt]
  \item\label[hypothesis]{hyp:reg} ($\HReg(f)$). Fix a constant $f>0$. There is a quantum algorithm solving
$\DCP_N^f$ in time $\poly(\log N)$, with inverse-polynomial
success probability, for every $N$.
  \item\label[hypothesis]{hyp:edcp} ($\HEDCP[\ell]$). Let $\ell=\ell(n,N)$ be bounded by a polynomial in $n$. There are a constant $r_0>0$ and a quantum polynomial-time algorithm that, for every polynomially bounded modulus $N$ and every $r\geq r_0$, solves $\mathrm{G}\text{-}\EDCP^\ell_{n,N,r}$ from $\ell$ states, for every secret and arbitrary offsets, with success probability at least $1/\poly(n)$. We write $\HEDCP$ when the sample complexity is left unspecified.
  \item\label[hypothesis]{hyp:hs} ($\HHS$). There is a quantum algorithm that, given polynomially many noiseless dihedral coset states over $\Z_N$ (the input of $\DCP_N^\infty$, \cref{def:dcp}), recovers the hidden shift in time $\poly(\log N)$ with inverse-polynomial success probability, for every $N$, every secret and arbitrary offsets. The generalised form $\HHS^{\mathrm{ab}}$ asserts the same for finite abelian groups given by generators and relations.
\end{enumerate}
\addtocounter{theorem}{3}% the three hypotheses keep the numbers 4.1--4.3 of the theorem counter

\begin{lemma}\label{lem:hs-rare}
Let $L=\max\{1,\log_2N\}$, using the binary logarithm in the corruption probability of \cref{def:dcp}. Suppose a noiseless
cyclic coset-state algorithm $\mathcal A$ uses at most
$Q(L)\leq C L^a$ registers and succeeds with probability at
least $p(L)\geq c L^{-b}$, for every secret and arbitrary
offsets. Under the corruption model of $\DCP_N^f$, its success
is at least
\[
 p(L)-Q(L)L^{-f}.
\]
Consequently, $\mathcal A$ establishes $\HReg(f)$ for every
fixed $f>a+b$, with sufficiently large input size. In
particular,
\[
 \HHS\quad\Longleftrightarrow\quad
 \bigl(\exists\,f>0\text{ constant}:\HReg(f)\bigr).
\]
\end{lemma}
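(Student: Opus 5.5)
The plan is a coupling argument on the corruption pattern, followed by a comparison of polynomial exponents, and then the two directions of the equivalence. In $\DCP_N^f$ each register is independently good with probability at least $1-L^{-f}$. The bad registers are in arbitrary states $\ket{b,x}$, and nothing in \cref{def:dcp} asks the input to look like a noiseless instance. The key observation is that a good register is exactly a noiseless coset state with some offset $x$. The noiseless guarantee of $\mathcal A$ holds for \emph{arbitrary} offsets, and the offsets may even be chosen adversarially or at random, since success is uniform in the offsets.

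First, condition on the event $E$ that all of the at most $Q(L)$ registers used by $\mathcal A$ are good. On $E$ the input is a valid noiseless instance, so $\mathcal A$ succeeds with probability at least $p(L)$. To make this precise I would write the input as a mixture over corruption patterns. The pattern ``all good'' has weight at least $1-Q(L)L^{-f}$ by a union bound. Success probability is linear in the input density matrix and nonnegative on the other components, so the overall success is at least
\[
(1-Q(L)L^{-f})\,p(L)\ \geq\ p(L)-Q(L)L^{-f}.
\]
One subtlety I would handle explicitly is that the corruption events might be correlated with the offsets or chosen adaptively. Here I use only the per-register marginal bound and the union bound, which holds under any dependence.

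Second, I plug in the polynomial bounds. We have $Q(L)L^{-f}\leq CL^{a-f}$ and $p(L)\geq cL^{-b}$. For $f>a+b$ the loss term is $o(L^{-b})$, so for $L$ large enough the success is at least $(c/2)L^{-b}$, which is inverse-polynomial. The running time is unchanged, $\poly(\log N)$, and the required $\poly(\log N)$ register count matches. Small $N$ is handled by brute force or absorbed into constants. This gives $\HReg(f)$ for every fixed $f>a+b$, and in particular $\HHS\Rightarrow\exists f:\HReg(f)$.

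Third, the converse direction is the zero-noise observation recorded in \cref{fig:dcp-map}. A $\DCP_N^f$ solver must succeed on every input distribution allowed by \cref{def:dcp}, and noiseless coset states are one such input, since each register is good with probability $1\geq 1-L^{-f}$. So the solver witnesses $\HHS$ with the same time and success bounds.

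The main obstacle I expect is the first step: justifying that $\mathcal A$'s guarantee applies verbatim on the conditioned event. The conditional distribution of offsets given $E$ may be arbitrary, which is why the hypothesis ``for every secret and arbitrary offsets'' is essential. I would also state carefully that the bad registers are simply ignored through the mixture decomposition rather than simulated.
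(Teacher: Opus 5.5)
Your proof is correct and is essentially the paper's argument. The paper bounds each register's trace distance from its ideal coset state by $L^{-f}$, sums over the $Q(L)$ registers by a hybrid argument and uses contractivity under $\mathcal A$; your decomposition over corruption patterns with a union bound is the mixture form of the same estimate, does not need the registers to be independent, gives the slightly sharper $(1-Q(L)L^{-f})p(L)$, and your exponent comparison, treatment of small $N$ and zero-noise converse match the paper's.
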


\begin{proof}
A register corrupted with probability at most $L^{-f}$ has
trace distance at most $L^{-f}$ from its ideal coset state.
The tensor-product hybrid bound gives total distance at most
$Q(L)L^{-f}$. Applying $\mathcal A$ and measuring its output
cannot increase this distance. If $f>a+b$, then
$CL^{a-f}\leq cL^{-b}/2$ for sufficiently large $L$, preserving
inverse-polynomial success. The finitely many smaller input
sizes can be handled separately. The reverse implication
uses inputs with no bad registers. This is the
rare-corruption observation underlying
\cite[Cor. 1.2]{Reg04}.
\end{proof}

\paragraph{Sources and justification.} The hypotheses isolate the input promises of the reductions, rather than claiming minimal solver assumptions. Regev's Definition 2.1 and Theorem 1.1 give the probabilistic bad-register model and the fixed-$f$ lattice consequence \cite{Reg04}. His Corollary 1.2 already records the qualitative consequence of a noiseless solver; \cref{lem:hs-rare} makes its sample and success dependence explicit. The unstructured hypothesis $\HEDCP[\ell]$ is used with the Gaussian-state reductions of \cite{BKSW18}. We expose $\ell$ because the forward reduction must prepare all states correctly, while \cref{prop:ell-lower} gives $\ell=\Omega(n\log N)$ at fixed width. The structured state map of \cite{WZ26} and its parameter audit appear in \cref{sec:wz-audit}. Its large-modulus output is not an additional clause hidden inside $\HEDCP$; the exact positive composition there uses $\HHS^{\mathrm{ab}}$ (or $\HHS$ for rank one), subject to the forward preparation and state budget. The cyclic and abelian hidden-shift hypotheses use the coset-state access model of \cite{Kup05,CJS14}; \cref{lem:hs-dcp} gives the required state preparation from efficient injective functions.

\paragraph{What the hypotheses mean.} The parameter $f$ in $\HReg(f)$ is prescribed: each register is bad with probability at most $(\log N)^{-f}$ in the model of \cref{def:dcp}. It is not an unrestricted adversarial entangled-noise model. The vector-state hypothesis $\HEDCP[\ell]$ concerns polynomial moduli and Gaussian widths at least $r_0$. The cyclic hypothesis $\HHS$ concerns noiseless states; \cref{lem:hs-rare} explains why sufficiently rare corruptions preserve its inverse-polynomial success. Known subexponential cyclic sieves \cite{Kup05,Reg04b,Kup13} and the quasi-polynomial EDCP algorithm \cite{BJKNY25} do not furnish these polynomial-time, polynomial-sample guarantees in the required regimes.

\paragraph{Relations between the hypotheses.}
For each fixed $f>0$, $\HReg(f)$ implies $\HHS$.
Conversely, \cref{lem:hs-rare} shows that $\HHS$ implies
$\HReg(f_{\mathcal A})$ for some constant $f_{\mathcal A}$
depending on the sample complexity and success probability
of its witnessing algorithm. Thus the existential form of
$\HReg$ and $\HHS$ are qualitatively equivalent. A specified
small value of $f$ remains a stronger quantitative
requirement: the lemma gives no such value independently
of the solver. Also, $\HReg(f_1)$ implies $\HReg(f_2)$
whenever $f_2\geq f_1$, because the permitted corruption
probability decreases. We establish no additional
implication involving $\HEDCP[\ell]$. Its relation to the
cyclic hypotheses remains a separate question. We retain $\Hstar:=\HReg(f)\wedge\HEDCP$ as notation for the conjunction, without asserting a strict separation from $\HHS$. The next lemmas concern smooth extrapolation and coset-state preparation, respectively.

\begin{lemma}\label{lem:dcp-edcp}
Let $\rho:\Z\to\R_{\geq0}$ have finite support $J$, with the index $j\in J$ stored in binary on $\poly(n)$ qubits and membership in $J$ decidable in classical polynomial time, and set $\varepsilon:=\max_k\big|\rho(2k+1)/\rho(2k)-1\big|$ over the $k$ with $\rho(2k)\rho(2k+1)>0$. Suppose a quantum algorithm $\mathcal A$ recovers $s\in\Z_N^n$ from $\ell$ states of the form $a\ket{0}\ket{y_k}+b\ket{1}\ket{y_k+s}$ with arbitrary $y_k$ and $|b/a-1|\leq\varepsilon$, succeeding with probability $p$. Then there is a quantum algorithm that solves $\EDCP^{\ell'}_{n,N,\rho}$ with $\ell'=O(\ell/\pi_\rho)$ samples and success probability $p-2^{-\Omega(\ell)}$, where $\pi_\rho:=\Pr_{j\sim\rho^2}[\,j\text{ and }j\oplus1\text{ both in }\mathrm{supp}(\rho)\,]$.
\end{lemma}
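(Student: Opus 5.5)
The plan is to convert each EDCP state into a two-term state by measuring all bits of the index $j$ except the least significant one. Write $j=2k+t$ with $t\in\{0,1\}$.

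\emph{Step 1: the measurement.} Each input state is $\sum_{j\in J}\rho(j)\ket{j}\ket{x+js}$, up to normalisation. Membership in $J$ is classically decidable and $j$ is stored in binary, so we can reversibly compute $k=\lfloor j/2\rfloor$ into an ancilla and measure it. With probability proportional to $\rho(2k)^2+\rho(2k+1)^2$ the outcome is $k$, and the post-measurement state is
\[
\rho(2k)\ket{0}\ket{x+2ks}+\rho(2k+1)\ket{1}\ket{x+(2k+1)s}.
\]
We then uncompute the high bits of $j$, which is possible because $k$ is now classical. Setting $y=x+2ks$, this is exactly $a\ket0\ket{y}+b\ket1\ket{y+s}$ with $b/a=\rho(2k+1)/\rho(2k)$.

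\emph{Step 2: good and bad outcomes.} Call an outcome \emph{good} when both $2k$ and $2k+1$ lie in $\mathrm{supp}(\rho)$; the classical value $k$ together with the membership test decides this. On a good outcome, the definition of $\varepsilon$ gives $|b/a-1|\le\varepsilon$. The outcome is good exactly when the measured $j$, distributed as $\rho^2$, has both $j$ and $j\oplus1$ in the support. The probability of that event is precisely $\pi_\rho$, which is why the statement pairs $j$ with $j\oplus1$ rather than with $j+1$. On a bad outcome the state collapses to a single basis term and is discarded. The offsets $y$ may be arbitrary, and this is harmless because $\mathcal A$ is assumed to work for arbitrary offsets. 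We also need to fix the convention for the sign of $b/a$ (positive amplitudes), so that $a,b$ are as required by the hypothesis on $\mathcal A$.

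\emph{Step 3: counting samples.} Processing $\ell'=\lceil c\,\ell/\pi_\rho\rceil$ states gives a number of good states that is a sum of independent Bernoulli($\pi_\rho$) variables with mean $c\ell$. A Chernoff bound shows that at least $\ell$ states are good except with probability $2^{-\Omega(\ell)}$ when the constant $c$ is large enough, say $c=2$. We feed the first $\ell$ good states to $\mathcal A$. The remaining point is independence: the good states are produced independently and each has the right form conditionally on its outcome. Since $\mathcal A$ succeeds with probability $p$ for every choice of offsets and ratios in the allowed range, conditioning on the classical outcomes keeps the success probability at $p$. A union bound then gives overall success at least $p-2^{-\Omega(\ell)}$.

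The main obstacle I expect is this conditioning step. The ratios $b/a$ vary from state to state with $k$, and they are random, so the hypothesis on $\mathcal A$ has to be read as holding for every admissible sequence of ratios, not for one fixed ratio. I will state that reading explicitly, because it is what lets conditioning on the classical $k$'s preserve $p$.
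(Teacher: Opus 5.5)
Your proposal is correct and follows the paper's proof essentially step for step. You measure $\lfloor j/2\rfloor$, relabel the surviving pair as a two-point state with ratio $\rho(2k+1)/\rho(2k)$, keep the outcomes with both $2k$ and $2k+1$ in the support (probability exactly $\pi_\rho$), and use a Chernoff bound to extract $\ell$ good states from $O(\ell/\pi_\rho)$ samples; your explicit reading of the hypothesis on $\mathcal A$ as holding for every sequence of admissible per-state ratios and offsets is what the paper's proof uses implicitly when it lets $b/a$ vary with $k$.
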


\begin{proof}
The algorithm needs no access to $\rho$ beyond the given states and the membership test for $J$: it only measures a function of the index register, never prepares amplitudes, and it is efficient when $\pi_\rho\geq1/\poly(n)$. Write each extrapolated state as $\sum_{k}\big(\rho(2k)\ket{2k}\ket{x+2ks}+\rho(2k+1)\ket{2k+1}\ket{x+(2k+1)s}\big)$ and measure the observable $\lfloor j/2\rfloor$. On outcome $k$ the state collapses, up to normalisation, to $\rho(2k)\ket{2k}\ket{y}+\rho(2k+1)\ket{2k+1}\ket{y+s}$ with $y=x+2ks$, which after relabelling the first register as a qubit is a two-point state with $b/a=\rho(2k+1)/\rho(2k)$. The outcome $k$ is known, but $y$ is not; since $\mathcal A$ accepts arbitrary $y_k$, this does not matter. A sample is useful when both $2k$ and $2k+1$ lie in the support, which happens with probability $\pi_\rho$; by a Chernoff bound, $O(\ell/\pi_\rho)$ samples yield $\ell$ useful ones except with probability $2^{-\Omega(\ell)}$. Run $\mathcal A$ on them.
\end{proof}

For Gaussian $\rho$ of width $r$, first truncate to $|j|\leq tr$: by an elementary estimate of the Gaussian tail, each state moves by trace distance at most $e^{-\pi t^2/2}$ for $t\geq1$. On the truncated support, $\rho(2k+1)/\rho(2k)=e^{-\pi(4k+1)/r^2}$ with $|4k+1|\leq2tr+1$, so $\varepsilon\leq e\pi(2tr+1)/r^2\leq3e\pi t/r$ when $r\geq7t$ (using $|e^{-x}-1|\leq e|x|$ for $|x|\leq1$), and $\pi_\rho\geq1/2$. Robust two-point solvers with tolerance $\varepsilon$ therefore handle $\mathrm{G}\text{-}\EDCP$ of width $r=\Omega(t/\varepsilon)$, at a cost of $\ell e^{-\pi t^2/2}$ in success probability; this is the quantitative form of the observation in \cite{BKSW18} that a DCP algorithm yields an EDCP algorithm. The lemma is stated over $\Z_N^n$. $\HReg(f)$ concerns the cyclic group $\Z_N$, and we do not know an unconditional translation between two-point solvers over $\Z_N$ and over $\Z_N^n$ (\cref{q:translate}); this is why $\HEDCP$ is stated separately. (Regev's own embedding of $\Z^n$ into $\Z_N$, recalled in \cref{sec:regev02}, works for differences with small coordinates, which is the situation of the two-point problem but not of an arbitrary secret $s\in\Z_N^n$.)

\begin{lemma}\label{lem:hs-dcp}
Let $f_0,f_1:\Z_N\to S$ be injective, efficiently computable in superposition, with $f_1(x)=f_0(x+s)$. Then $\poly(\log N)$ registers of $\DCP_N^{\infty}$ with secret $-s$ can be prepared with $O(1)$ evaluations each. In particular, under $\HHS$ the shift $s$ can be recovered in quantum polynomial time from oracle access to $f_0,f_1$.
\end{lemma}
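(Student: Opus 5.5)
The plan is to use standard coset sampling for the generalised dihedral group $\Z_N\rtimes\Z_2$, and then apply $\HHS$ directly.

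\emph{Preparing one register.} Start from
\[
\tfrac{1}{\sqrt{2N}}\sum_{b\in\{0,1\}}\sum_{x\in\Z_N}\ket{b}\ket{x}\ket{0}.
\]
Compute $f_b(x)$ into the third register, controlled on $b$. This uses one evaluation of $f_0$ and one of $f_1$, so $O(1)$ evaluations per register. Then measure the third register and obtain some $v\in S$.

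Because $f_1(x)=f_0(x+s)$ and translation by $s$ is a bijection of $\Z_N$, the images of $f_0$ and $f_1$ coincide. Injectivity of each $f_b$ then shows that $v$ has exactly two preimages: $(0,x_0)$ with $f_0(x_0)=v$, and $(1,x_0-s)$. The post-measurement state is therefore
\[
\tfrac{1}{\sqrt2}\big(\ket{0,x_0}+\ket{1,(x_0+(-s))\bmod N}\big),
\]
which is a noiseless register of $\DCP_N^\infty$ with secret $d=-s$. The offset $x_0$ is uniformly random. This is more than \cref{def:dcp} and $\HHS$ require, since both allow arbitrary offsets.

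\emph{Recovering the shift.} Let $Q(\log N)=\poly(\log N)$ be the number of registers the witnessing $\HHS$ algorithm uses. Prepare that many registers independently, as above, and run the algorithm on them. With inverse-polynomial success probability it returns $-s$. Negating gives $s$. The total cost is $O(Q)$ evaluations plus $\poly(\log N)$ gates. As usual, the success probability can be checked, and amplified by repetition: a candidate $s$ is verified by comparing $f_1(x)$ with $f_0(x+s)$ at one point, and injectivity makes a single point decisive.

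\emph{Main obstacle.} The only nonroutine point is preparing the uniform superposition over $\Z_N$ when $N$ is not a power of two. I would try exact preparation first, using exact amplitude amplification on a superposition over $\{0,\dots,2^{\lceil\log N\rceil}-1\}$ restricted to $[0,N)$. If exactness is awkward, I would fall back on an approximate preparation with error $2^{-\poly(\log N)}$ per register, and absorb it with the same tensor-product hybrid bound as in the proof of \cref{lem:hs-rare}. That costs at most $Q\cdot2^{-\poly(\log N)}$ in success probability, which is negligible.

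I would also check that the measured outcome is allowed to depend on the register contents. It is, because the algorithm is only promised correct behaviour for arbitrary offsets, and the outcome $v$ affects nothing but the offset.
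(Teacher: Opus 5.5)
Your proposal is correct and follows the paper's proof: coset sampling over $(b,x)$, measuring the function register, and using injectivity to leave exactly the two preimages $(0,x_0)$ and $(1,x_0-s)$, which gives a noiseless coset state with secret $-s$. Your additional details are sound, and they go beyond what the paper writes out. These are preparing the uniform superposition when $N$ is not a power of two, and verifying a candidate shift at a single point by injectivity.
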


\begin{proof}
Prepare $\frac{1}{\sqrt{2N}}\sum_{b\in\{0,1\}}\sum_{x\in\Z_N}\ket{b}\ket{x}\ket{f_b(x)}$ and measure the last register, obtaining some $z\in S$. By injectivity there are exactly two preimages: $x_0=f_0^{-1}(z)$ with $b=0$, and $x_1=f_1^{-1}(z)=x_0-s$ with $b=1$. The post-measurement state is $\tfrac{1}{\sqrt2}(\ket{0,x_0}+\ket{1,x_0-s})$, a dihedral coset state with uniformly random $x_0$ and secret $-s$, and no bad registers.
\end{proof}

\begin{remark}\label{rem:kuperberg}
Kuperberg's algorithm uses $2^{O(\sqrt{\log N})}$ registers and time \cite{Kup05}. Regev's reduction produces instances with $N=2^{\Theta(n^2)}$ (\cref{sec:regev02}) and needs an algorithm that tolerates bad registers, while the reductions of \cite{BKSW18,WZ26} provide only polynomially many samples. These resource and modulus requirements prevent these direct compositions from establishing the claimed improvements for the LWE instances considered here. In particular, the quasi-polynomial EDCP algorithm of \cite{BJKNY25} requires quasi-polynomially many samples and power-of-two moduli; no matching preparation from ML-KEM is supplied by this comparison. A mismatch with its original modulus $q=3329$ alone would not rule out every possible reduction. The cyclotomic sieve of \cite{BFS26} runs in time $2^{O_p(\log n\log q)}$, quasi-polynomial for polynomial $q$, but the available reductions again generate only limited approximate states. For hidden-shift instances with perfect states and no sample bottleneck, such as CSIDH, Kuperberg-type algorithms \emph{do} apply, at subexponential cost \cite{CJS14,BS20,Pei20}.
\end{remark}

\subsection{Regev's reduction from unique-SVP to DCP}\label{sec:regev02}

Because $\HReg(f)$ is defined by what Regev's reduction needs, we summarise how that reduction works, following \cite{Reg04}. It has two stages.

\paragraph{From unique-SVP to the two-point problem.} The intermediate problem is a vector version of DCP. The input to the \emph{two-point problem} with failure parameter $f$ consists of $\poly(n\log M)$ registers; each is, with probability at least $1-1/(n\log(2M))^f$, in a state $\tfrac{1}{\sqrt2}(\ket{0,a}+\ket{1,a'})$ with $a,a'\in\{0,\dots,M-1\}^n$ and a fixed difference $a'-a$, and otherwise in an arbitrary state; the goal is to output the difference. Given a lattice with a unique shortest vector $u$, the reduction creates a superposition over many lattice points and then collapses it, by a measurement, onto just two points whose difference is $u$. The collapse uses a random partition of space into translated cubes whose side length is proportional to $\|u\|$. Because $u$ is unique up to sign and every lattice vector not parallel to it is much longer, a cube can contain at most a pair of points differing by $u$; measuring the cube index leaves the superposition of those two points. Registers become \emph{bad} when the cube boundary separates the pair, which happens with small but non-zero probability, and this is the origin of the failure parameter. Coordinates are written in an LLL-reduced basis, which keeps them in a range of size $M$.

\paragraph{From the two-point problem to DCP.} The vector difference is encoded into a single integer by the map
\[
  (a_1,\dots,a_n)\longmapsto a_1+a_2\cdot2M+\dots+a_n(2M)^{n-1},
\]
which embeds $\{0,\dots,M-1\}^n$ into $\Z_N$ with $N=(2M)^n$; the spacing $2M$ prevents carries from mixing coordinates, so the cyclic difference determines the vector difference \cite[Lem. 3.2]{Reg04}. With $M=2^{4n}$, chosen to accommodate the size of the LLL coefficients, the modulus is $N=2^{n(4n+1)}$, so $\log N=\Theta(n^2)$.

\paragraph{Parameters.} Balancing the probability that a pair is split against the required failure rate gives the approximation factor. With cubes the reduction handles $\Theta(n^{1+2f})$-unique-SVP; replacing cubes by balls improves this to $\Theta(n^{1/2+2f})$ \cite{Reg04}. Combined with the subset-sum connection of \cite[Thm. 1.3]{Reg04}, which yields DCP with failure parameter $f=1$, this gives a quantum worst-case to average-case reduction for $\Theta(n^{2.5})$-unique-SVP \cite[Cor. 1.5]{Reg04}. Two features of the reduction drive the rest of this section: the oracle must tolerate bad registers, and the reduction makes no use of algebraic structure, so it also applies to ideal and module lattices that meet the ordinary uniqueness promise.

\subsection{Technical lemmas}\label{sec:technical}

We use the following facts, stated with the parameters needed below. Throughout, $q$ is prime. Fix $f>0$ and a witnessing algorithm $\mathcal A$. Write $\Gamma_{f,\mathcal A}(d)$ for a sufficient finite-dimensional ordinary-uSVP promise bound for the reduction of \cite[Thm. 1.1]{Reg04}; asymptotically it is $O(d^{1/2+2f})$. In theorem statements below, $\gu(d)$ denotes such a valid bound, enlarged if necessary to be at least one. For numerical comparisons only, we use the model $c_0d^{1/2+2f}$. Its constant, the starting dimension and the running-time polynomial may depend on $f$ and $\mathcal A$; no uniform limit as $f\to0^+$ is asserted.

\begin{lemma}[Random $q$-ary lattices {\cite[Lem. 5.3]{GPV08}}]\label{lem:qary}
Let $m\geq2n\lg q$. For all but a $q^{-n}$ fraction of $A\in\Zq^{m\times n}$, $\lambda_1(\Lambda_q(A))\geq\lambda_1^\infty(\Lambda_q(A))\geq q/4$. For all but a negligible fraction of $A\in\Zq^{n\times m}$ and any function $\omega(\sqrt{\log m})$, the smoothing parameter satisfies $\eta_\epsilon(\Lambda_q^\perp(A))\leq\omega(\sqrt{\log m})$ for some negligible $\epsilon$.
\end{lemma}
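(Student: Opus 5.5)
The plan is to prove the two claims of \cref{lem:qary} separately. Both follow from counting arguments over the uniform choice of $A$, as in \cite[Lem. 5.3]{GPV08}.

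\emph{Minimum of $\Lambda_q(A)$.} The inequality $\lambda_1\geq\lambda_1^\infty$ is immediate, since $\|y\|\geq\|y\|_\infty$ for every vector $y$. It therefore suffices to show that, for most $A$, no nonzero $y\in\Lambda_q(A)$ has $\|y\|_\infty<q/4$. Every such $y$ is congruent to $As$ for some $s\in\Zq^n$. Vectors with $s\equiv0$ lie in $q\Z^m$, so if they are nonzero they have $\infty$-norm at least $q$. We may therefore fix a nonzero $s$. Because $q$ is prime, $As$ is uniform on $\Zq^m$ when $A$ is uniform. The probability that every coordinate of $As$ lands in the centred interval $(-q/4,q/4)$ is then at most $(1/2)^m$, since that interval contains at most $q/2$ residues. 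A union bound over the $q^n-1$ nonzero secrets bounds the failure probability by $q^n2^{-m}$. With $m\geq2n\lg q$ this is at most $q^n\cdot q^{-2n}=q^{-n}$. The only care needed is the count of residues in $(-q/4,q/4)$, which for odd prime $q$ is at most $q/2$; this gives the factor $1/2$ exactly.

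\emph{Smoothing of $\Lambda_q^\perp(A)$.} I would bound $\eta_\epsilon(\Lambda_q^\perp(A))$ using duality, $\Lambda_q^\perp(A)^*=\frac1q\Lambda_q(A^{\top})$. Applying the first part to $A^{\top}$, which needs $m\geq2n\lg q$, gives $\lambda_1^\infty(\Lambda_q(A^{\top}))\geq q/4$ for all but a $q^{-n}$ fraction. Hence the dual lattice has $\ell_\infty$-minimum at least $1/4$. A single minimum bound does not control the whole dual Gaussian sum, so the main step is the direct GPV estimate. Write
\[
\rho_{1/s}\bigl(\tfrac1q\Lambda_q(A^{\top})\setminus\{0\}\bigr)=\sum_{s'\neq0}\rho_{q/s}\bigl(A^{\top}s'+q\Z^m\bigr).
\]
Then take the expectation over $A$. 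For each nonzero $s'$ the coset $A^{\top}s'$ is uniform, so the expected contribution of that coset is $q^{-m}\rho_{q/s}(\Z^m)$, which is about $(q/s)^m q^{-m}=s^{-m}$ when $q/s$ is large. Summing over $q^n$ secrets and handling the small-coordinate terms separately gives an expected total of order $q^n s^{-m}$. This is plus a term controlled by the smoothing of $\Z^m$ at width $s=\omega(\sqrt{\log m})$, namely $\eta_\epsilon(\Z^m)\leq\omega(\sqrt{\log m})$ from \cite{MR07}. Markov's inequality then converts this expectation into a negligible $\epsilon$ for all but a negligible fraction of $A$.

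I expect the main obstacle to be this second part: splitting the dual sum correctly and keeping $\epsilon$ negligible uniformly. If the direct estimate becomes messy, I would simply invoke \cite[Lem. 5.3]{GPV08} as stated, since the lemma is quoted from there.
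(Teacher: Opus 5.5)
\paragraph{First part: the residue count.}
Your first part rests on a false count. The centred interval $(-q/4,q/4)$ contains at most $q/2$ residues only for $q=2$ or $q\equiv3\pmod4$. For $q\equiv1\pmod4$ it contains $(q+1)/2$ of them: $\{-1,0,1\}$ for $q=5$, seven of thirteen for $q=13$. A fixed nonzero $s$ therefore fails with probability $((q+1)/(2q))^m$, and your union bound gives $q^n((q+1)/(2q))^m\leq q^{-n}(1+1/q)^{2n\lg q}$, not $q^{-n}$.

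No sharper counting recovers the stated constant, because the statement itself can fail for such $q$. Take $q=5$, $n=2$ and $m=10\geq2n\lg q$. The six lines of $\Z_5^2$ give pairwise independent events $As\in\{-1,0,1\}^{10}\cup\{-2,0,2\}^{10}$ (centred residues mod $5$), each of probability $p=2(3/5)^{10}-5^{-10}$. By Bonferroni the exceptional fraction is at least $6p-15p^2>0.07>1/25$.

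What your argument does prove is the bound $q^n((q+1)/(2q))^m=q^{-\Omega(n)}$. Pairing $s$ with $-s$ restores exactly $q^{-n}$ whenever $(1+1/q)^{2n\lg q}\leq2$, for instance when $q\geq3n\lg q$. The applications only need a negligible exceptional set. However, explicit terms such as the $q^{-n}$ in $\delta_{\rm mat}$ of \cref{prop:edcp-lwe} inherit the correction.

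\paragraph{Second part: the smoothing bound.}
Your claim that a single minimum bound cannot control the dual Gaussian sum is wrong for the $\ell_\infty$ minimum. The proof in \cite{GPV08}, which the paper cites without reproducing, takes exactly the step you set aside. It uses Peikert's $\ell_\infty$ smoothing bound $\eta_\epsilon(\Lambda)\leq\sqrt{\ln(2m(1+1/\epsilon))/\pi}\,/\lambda_1^\infty(\Lambda^*)$. With $\lambda_1^\infty(\Lambda_q^\perp(A)^*)\geq1/4$ this gives $\eta_\epsilon(\Lambda_q^\perp(A))\leq4\sqrt{\ln(2m(1+1/\epsilon))/\pi}$. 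Absorbing the factor $4$ into the given $\omega(\sqrt{\log m})$ function yields a negligible $\epsilon$. The Euclidean bound of \cref{lem:tails}(c) would only give $O(\sqrt m)$ here.

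Your averaging route is still a valid alternative once made precise, but three details need fixing.
\begin{itemize}
\item Your display omits the $s'=0$ coset $q\Z^m\setminus\{0\}$, which contributes $\rho_{1/s}(\Z^m\setminus\{0\})$.
\item Even with that term, the display is an equality only when $A^\top$ has full rank; otherwise it is an upper bound, which suffices.
\item The estimate $(q/s)^mq^{-m}$ has the wrong size when $q<s$.
\end{itemize}
Written out, the bound is $\mathbb E_A[\rho_{1/s}(\Lambda_q^\perp(A)^*\setminus\{0\})]\leq\rho_{1/s}(\Z^m\setminus\{0\})+q^{n-m}\rho_{q/s}(\Z)^m\leq\rho_{1/s}(\Z^m\setminus\{0\})+(\sqrt2\,(1/q+1/s))^m$. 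This uses $\rho_r(\Z)\leq1+r$ and $q^n\leq2^{m/2}$, which is the only place $m\geq2n\lg q$ enters. Both terms are negligible for $s=\omega(\sqrt{\log m})$. Markov's inequality, with $\epsilon$ equal to the square root of this expectation, then concludes.

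Unlike the cited route, yours does not pass through the first part, so it is unaffected by the counting problem. It yields only a negligible exceptional fraction, which is all this half of the statement claims.
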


\begin{lemma}\label{lem:qary-count}
Let $m\geq n$ satisfy $q^{(m-n)/m}\geq2\sqrt{2\pi e}$, and let $A\in\Zq^{m\times n}$ be uniform. Then $\lambda_1(\Lambda_q(A))\geq\min\{q,\sqrt m\,q^{(m-n)/m}/(2\sqrt{2\pi e})\}$ except with probability $(3/4)^m$, and except with probability $2^{-m+1}$ if moreover $q^{(m-n)/m}\geq\sqrt{2\pi e}\,m$.
\end{lemma}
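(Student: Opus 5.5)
The plan is a union bound over short nonzero vectors of $\Lambda_q(A)$ that are not in $q\Z^m$, combined with a count of integer points in a Euclidean ball. Set $R:=\min\{q,\sqrt m\,q^{(m-n)/m}/(2\sqrt{2\pi e})\}$. Every nonzero $y\in q\Z^m$ has $\|y\|\geq q\geq R$, so it suffices to show that, with the stated probability, no $y\in\Lambda_q(A)\setminus q\Z^m$ has $\|y\|<R$. Any such $y$ satisfies $y\equiv As\pmod q$ for some $s\not\equiv0$. For prime $q$ and fixed nonzero $s$, the vector $As\bmod q$ is uniform on $\Zq^m$ over the choice of $A$. Since $R\leq q$, which is at most the diameter of the fundamental box, each residue class has at most a bounded number of representatives of norm below $R$. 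I would handle this by counting residues $x\in\Zq^m$ whose centred lift has norm below $R$; this count is at most $|B_m(R)\cap\Z^m|$, because the centred lift is injective.

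The first step is the lattice-point count. A unit cube centred at each integer point of $B_m(R)$ lies inside $B_m(R+\sqrt m/2)$, so the number of points is at most $\mathrm{vol}\,B_m(R+\sqrt m/2)\leq (2\pi e/m)^{m/2}(R+\sqrt m/2)^m$ up to polynomial factors. A cleaner route is the Gaussian or exponential-moment bound $|B_m(R)\cap\Z^m|\leq e^{\pi t R^2}\sum_{z\in\Z^m}e^{-\pi t\|z\|^2}$, with $t$ chosen as $m/(2\pi R^2)$-ish. Either way the target estimate is $|B_m(R)\cap\Z^m|\leq (c'R/\sqrt m)^m$ whenever $R\geq\sqrt m$ times a constant. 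This is where the hypothesis $q^{(m-n)/m}\geq2\sqrt{2\pi e}$ enters, since it guarantees that $R$ is large enough relative to $\sqrt m$ for the volume term to dominate the cube fattening.

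The second step is the union bound. Summing over the $q^n-1$ nonzero $s$ gives failure probability at most $q^n\cdot|B_m(R)\cap\Z^m|/q^m$. Substituting $R\leq\sqrt m\,q^{(m-n)/m}/(2\sqrt{2\pi e})$, the factor $q^{m-n}$ cancels exactly against the $m$-th power of $q^{(m-n)/m}$. What remains is $(c'/(2\sqrt{2\pi e}))^m$ times polynomial corrections, and the constants should be tracked so that this is at most $(3/4)^m$.

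For the stronger claim, when $q^{(m-n)/m}\geq\sqrt{2\pi e}\,m$ the radius $R$ exceeds $\sqrt m$ by a factor of order $\sqrt m$. The fattening term $\sqrt m/2$ is then a $1/\sqrt m$ fraction of $R$, and $(1+O(1/\sqrt m))^m$ does not blow up, so the ratio approaches the clean factor $2^{-m}$. The extra factor $2$ in the bound $2^{-m+1}$ absorbs the remaining slack.

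The main obstacle is getting the constants in the ball-point count sharp enough. The exact factor $2\sqrt{2\pi e}$ has to yield $3/4$ rather than something larger, which means the Stirling approximation of the ball volume and the cube-fattening loss must be controlled together. I would first try the Gaussian-moment bound with an optimised $t$, since it gives explicit constants without Stirling error terms, and fall back on the volume argument only if that fails. The remaining steps are routine.
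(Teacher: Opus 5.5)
Your skeleton is the same as the paper's. The union bound over the nonzero $s$ and the integer points of norm below $R$ gives failure probability at most $q^{n-m}N_m(R)$, where $N_m(r)$ counts integer points of norm less than $r$. The paper then bounds $N_m$ by cube fattening and the ball volume, which is your fallback option; your centred-lift count of residues yields the same $N_m(R)$.

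The constants are not delicate. Put $R_0:=\sqrt m\,q^{(m-n)/m}/(2\sqrt{2\pi e})\geq R$ and count at $R_0$, by monotonicity of $N_m$. The Stirling bound $\Gamma(m/2+1)\geq\sqrt{\pi m}\,(m/2e)^{m/2}$ is an exact inequality, and $R_0\sqrt{2\pi e/m}=q^{(m-n)/m}/2$, so
\[
 q^{n-m}N_m(R_0)\;\leq\;q^{n-m}\,\frac{\bigl((R_0+\sqrt m/2)\sqrt{2\pi e/m}\bigr)^m}{\sqrt{\pi m}}
 \;=\;\frac{1}{\sqrt{\pi m}}\Bigl(\frac12+\frac{\sqrt{2\pi e}}{2\,q^{(m-n)/m}}\Bigr)^m .
\]
The first hypothesis makes the second summand at most $1/4$. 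The fattening is therefore not dominated by the volume term: it costs up to $(3/2)^m$ against the clean $2^{-m}$, and that product is where $3/4$ comes from.

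Your justification of the second bound is wrong as written. You claim the fattening is a $1/\sqrt m$ fraction of the radius and that $(1+O(1/\sqrt m))^m$ does not blow up. That is false: $(1+c/\sqrt m)^m=e^{\Theta(\sqrt m)}$ is unbounded, and the factor $2$ in $2^{-m+1}$ cannot absorb it. A ratio of order only $1/\sqrt m$ is what you get if you measure the fattening against $R=\min\{q,R_0\}$ in the case $R=q$. Indeed $q\geq q^{(m-n)/m}\geq\sqrt{2\pi e}\,m$ gives only $q/\sqrt m\gtrsim\sqrt m$.

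The repair is to count at $R_0$, as above. The second hypothesis gives $R_0\geq m^{3/2}/2$, so the ratio is $\sqrt m/(2R_0)\leq 1/m$, i.e.\ the second summand in the display is at most $1/(2m)$. The loss is then $(1+1/m)^m\leq e$, and $e/\sqrt{\pi m}\leq 2$ yields $2^{-m+1}$.

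Your preferred Gaussian-moment route also works and is stronger than the paper's. For every $t>0$, $N_m(r)\leq e^{\pi t r^2}\bigl(\sum_{z\in\Z}e^{-\pi t z^2}\bigr)^m$. Take $\rho=r/\sqrt m$ and $t=1/(2\pi\rho^2)$; Poisson summation then turns this into $(\sqrt{2\pi e}\,\rho\,\vartheta)^m$ with $\vartheta=\sum_{k\in\Z}e^{-2\pi^2\rho^2k^2}$. At $r=R_0$ the first hypothesis gives $\rho\geq1$, hence $\vartheta<1+10^{-8}$, and the failure probability is at most $\bigl((1+10^{-8})/2\bigr)^m$. There is no fattening loss at all, and the second hypothesis is needed only to keep $\vartheta^m\leq2$ for very large $m$. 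The paper's volume argument is more elementary but pays the $(3/2)^m$ factor.
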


\begin{proof}
Let $R$ be the minimum in the statement. A nonzero $y\in\Z^m$ with $\|y\|<R\leq q$ is not in $q\Z^m$, so it lies in $\Lambda_q(A)$ only if $As\equiv y$ for some $s\neq0$, which for each fixed $s$ has probability $q^{-m}$; hence $\Pr[\lambda_1<R]\leq q^{n-m}N_m(R)$, where $N_m(R)$ counts the integer points of norm less than $R$. Unit cubes centred at those points lie in the ball of radius $R+\sqrt m/2$, whose volume is at most $\big((R+\sqrt m/2)\sqrt{2\pi e/m}\big)^m/\sqrt{\pi m}$. With $R\leq\sqrt m\,q^{(m-n)/m}/(2\sqrt{2\pi e})$ this gives $\Pr[\lambda_1<R]\leq\big(\tfrac12+\sqrt{2\pi e}/(2q^{(m-n)/m})\big)^m/\sqrt{\pi m}$, which is at most $(3/4)^m$ when $q^{(m-n)/m}\geq2\sqrt{2\pi e}$, and at most $\big(\tfrac12(1+1/m)\big)^m/\sqrt{\pi m}\leq e\,2^{-m}/\sqrt{\pi m}\leq2^{-m+1}$ when $q^{(m-n)/m}\geq\sqrt{2\pi e}\,m$.
\end{proof}

This is Lemma 6 of \cite{BKSW18} with an explicit range of validity; as stated there, for all $m\geq n$, the lemma fails at $m=n$, where $\Lambda_q(A)=\Z^n$ for invertible $A$. The first statement of \cref{lem:qary} says that a random $q$-ary lattice has no unusually short vectors once there are enough rows: a vector $As\bmod q$ with all coordinates below $q/4$ is unlikely to exist for $s\neq0$ when $m$ is large compared with $n\log q$. The second says that $\Lambda_q^\perp(A)$ is ``smooth'' at a width barely above constant, a property used in the distributional analysis of Gaussian preimage sampling \cite{GPV08}. The smoothing bound alone does not supply a trapdoor or an efficient sampler.

\begin{lemma}[Gaussian facts \cite{Ban93,MR07,Reg09}]\label{lem:tails}
Let $\Lambda\subset\R^m$ be a full-rank lattice.
\begin{enumerate}[label=(\alph*)]
  \item For every $s>0$, a sample of $D_{\Lambda,s}$ has norm greater than $s\sqrt m$ with probability less than $2^{-m}$ \cite[Lem. 1.5]{Ban93}; the same holds for a continuous Gaussian vector of parameter $s$ in $\R^m$.
  \item If $s\geq\sqrt2\,\eta_\epsilon(\Lambda)$ with $\epsilon\leq1/10$, a sample of $D_{\Lambda,s}$ lies outside any fixed hyperplane with probability at least $1/10$ \cite[Lem. 3.15]{Reg09}.
  \item $\eta_{2^{-m}}(\Lambda)\leq\sqrt m/\lambda_1(\Lambda^*)$, and $\eta_\epsilon(\Z)\leq\sqrt{\ln(2+2/\epsilon)/\pi}$ \cite[Lems. 3.2 and 3.3]{MR07}; in particular $\eta_{2^{-m}}(\Z)\leq\sqrt{m/2}$ for $m\geq2$.
  \item $\lambda_m(\Lambda)\leq\sqrt{2m}\,\eta_\epsilon(\Lambda)$ for every $\epsilon\leq1/10$.
\end{enumerate}
\end{lemma}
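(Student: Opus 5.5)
Most of this lemma restates the cited sources, so the plan is to check each citation's normalisation against ours. The only items needing an argument are the two numerical consequences in (c) and the bound (d).

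For (a), I would quote Banaszczyk's tail lemma \cite[Lem. 1.5]{Ban93} in the form
\[
 \Pr_{x\sim D_{\Lambda+c,s}}\bigl[\|x\|>cs\sqrt m\bigr]<\bigl(c\sqrt{2\pi e}\,e^{-\pi c^2}\bigr)^m .
\]
Setting $c=1$ gives a bound $(\sqrt{2\pi e}\,e^{-\pi})^m\approx0.178^m<2^{-m}$. The continuous case follows from the same moment-generating-function argument with sums replaced by integrals. I would keep this sharper $0.178^m$ bound, because it is needed for (d).

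For (b), I would take \cite[Lem. 3.15]{Reg09} as stated. For the first inequality of (c), I would take \cite[Lem. 3.2]{MR07}, and for the second \cite[Lem. 3.3]{MR07} specialised to $\Z$. The ``in particular'' clause is then a computation: with $\epsilon=2^{-m}$ we need
\[
 \ln\bigl(2+2^{m+1}\bigr)\leq\pi m/2 .
\]
The left side is at most $(m+1)\ln2+\ln(1+2^{-m})$. This is below $\pi m/2$ at $m=2$ (about $2.30$ against $3.14$), and the right side grows faster in $m$.

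For (d), I would avoid transference, which only gives the weaker factor $m$, and use a direct existence argument. Put $s=\sqrt2\,\eta_\epsilon(\Lambda)$, and suppose the lattice vectors of norm at most $s\sqrt m$ span a proper subspace. Then they lie in some hyperplane $H$. By (b), $D_{\Lambda,s}$ puts mass at least $1/10$ outside $H$. All of that mass sits on vectors of norm greater than $s\sqrt m$, so by the sharpened (a) it is less than $0.178^m$.

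This is a contradiction whenever $0.178^m\leq1/10$, that is, for every $m\geq2$. Hence $m$ linearly independent lattice vectors of norm at most $\sqrt{2m}\,\eta_\epsilon(\Lambda)$ exist.

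The main obstacle is the small-dimension edge: with only the $2^{-m}$ tail of (a), the argument fails for $m\leq3$, which is why the sharper constant is needed. The remaining case $m=1$, $\Lambda=\lambda\Z$, I would treat directly. The dual is $\lambda^{-1}\Z$, so $\rho_{1/s}(\Lambda^*\setminus\{0\})\geq2e^{-\pi s^2/\lambda^2}$. The condition $\epsilon\leq1/10$ then forces
\[
 \eta_\epsilon(\Lambda)\geq\lambda\sqrt{\ln 20/\pi}>\lambda/\sqrt2 ,
\]
which gives $\lambda_1=\lambda\leq\sqrt2\,\eta_\epsilon(\Lambda)$ as required.
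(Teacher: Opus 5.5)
Your proof is correct. For (d) it uses the same two ingredients as the paper, the tail bound (a) and the hyperplane bound (b), but combines them differently.

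\emph{The two routes.} The paper argues by sampling: draw $O(m)$ samples of $D_{\Lambda,s}$, each short except with probability $2^{-m}$ and off any fixed hyperplane with probability at least $1/10$. It concludes that, with positive probability, the samples contain $m$ short linearly independent vectors. You argue by contradiction on the distribution itself. If all lattice vectors of norm at most $s\sqrt m$ lay in a hyperplane $H$, then the mass of at least $1/10$ that (b) places off $H$ would sit entirely on long vectors, and (a) bounds that mass. This needs no repeated sampling or concentration argument.

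\emph{What your version adds.} You noticed that the $2^{-m}$ tail is too weak in small dimension. Facts (a) and (b) as stated only guarantee a short sample off a given hyperplane with probability $1/10-2^{-m}$, which is positive only for $m\geq4$. So the paper's proof, which relies only on those two facts, does not cover $m\leq3$. Banaszczyk's sharper constant $(\sqrt{2\pi e}\,e^{-\pi})^m\approx0.179^m$ closes this for every $m\geq2$, and your direct computation for $\Lambda=\lambda\Z$ settles $m=1$. Your argument therefore proves (d) in every dimension. The paper's proof is adequate only for the large dimensions in which it actually applies the lemma. Your checks of the normalisations in (a)--(c), including the inequality $\ln(2+2^{m+1})\leq\pi m/2$ for $m\geq2$, are also fine.

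\emph{One correction.} The coset form you quote, with $D_{\Lambda+c,s}$ and bound $(c\sqrt{2\pi e}\,e^{-\pi c^2})^m$, is not what Banaszczyk proves, and it is false in general. Take $\Lambda=\lambda\Z$ with $\lambda>2s$ and shift $\lambda/2$: every point of the coset then has norm greater than $s$, so the tail probability is $1$. His coset statement carries a factor $2$ and normalises by $\rho(\Lambda)$ rather than $\rho(\Lambda+u)$. You only use the unshifted lattice, so your argument is unaffected. Still, quote Lemma 1.5 for $\Lambda$ itself, and do not use the same letter $c$ for both the shift and the radius multiplier.
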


\begin{proof}[Proof of (d)]
Sample from $D_{\Lambda,s}$ with $s=\sqrt2\,\eta_\epsilon(\Lambda)$. By (a) each sample has norm at most $s\sqrt m$ except with probability $2^{-m}$, and by (b) it lies in any fixed hyperplane with probability at most $9/10$; hence $O(m)$ samples contain $m$ linearly independent vectors of norm at most $s\sqrt m$ with positive probability, so $\lambda_m(\Lambda)\leq s\sqrt m$.
\end{proof}

\begin{lemma}[$\BDD$, $\uSVP$ and $\GapSVP$ \cite{LM09}]\label{lem:lm09}
For every $\gamma\geq1$: $\BDD_{1/(2\gamma)}$ reduces to $\uSVP_\gamma$; and $\GapSVP_{\gamma'}$ with $\gamma'=O(\gamma\sqrt{d/\log d})$ reduces to $\uSVP_\gamma$ (via $\BDD$). All reductions are classical, polynomial-time, and dimension-preserving up to one.
\end{lemma}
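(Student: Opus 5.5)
The statement of \cref{lem:lm09} has two parts. I would prove each by a direct reduction in the style of \cite{LM09}, with one shared construction, the embedding of a target vector into a lattice of one higher dimension.

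\emph{First part.} Let $(B,t)$ be a $\BDD_{1/(2\gamma)}$ instance, so that $\mathrm{dist}(t,\Lat)\leq\lambda_1(\Lat)/(2\gamma)$. Write $e=t-v$ for the unknown error, where $v$ is the closest lattice vector, and put $\mu=\|e\|$. I would form the embedded basis
\[
  B'=\begin{pmatrix}B & t\\ 0 & \beta\end{pmatrix}
\]
for a parameter $\beta>0$. The lattice $\Lat'=\Lat(B')$ contains $(e,\beta)=(t-v,\beta)$, of norm $\sqrt{\mu^2+\beta^2}$.

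I would then check the uniqueness promise. Every vector of $\Lat'$ independent of $(e,\beta)$ is either $(w,0)$ with $w\in\Lat\setminus\{0\}$, of norm at least $\lambda_1(\Lat)$, or $(ke+w',k\beta)$ with $|k|\geq2$. In the second case the norm is at least $\max\{k\beta,\ \mathrm{dist}(kt,\Lat)\}$. The number $\mathrm{dist}(kt,\Lat)$ is at least $\lambda_1-k\mu$ only for small $k$, so the clean way is the standard one: take $\beta\approx\mu$. This needs $\mu$, which is unknown. I would handle that by guessing $\mu$ up to a factor $1+1/\poly(d)$ over polynomially many candidates; the candidate range is fixed by running LLL on the instance. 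Then $\lambda_1(\Lat')\leq\sqrt2\,\mu$, while $\lambda_2(\Lat')$ is at least of order $\lambda_1(\Lat)$, because $\lambda_1(\Lat)\geq2\gamma\mu$. This gives a gap of about $\sqrt2\gamma$.

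Tightening the constant to the stated $1/(2\gamma)$ is the delicate step, and I would follow the exact accounting of \cite[Sec. 3]{LM09}, including the case analysis for $|k|\geq2$. The $\uSVP_\gamma$ oracle returns $\pm(e,\beta)$. Checking that the last coordinate is $\pm\beta$ lets us read off $e$, and then $v=t-e$. A wrong guess of $\mu$ is detected by verifying the candidate's distance.

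\emph{Second part.} I would not reprove the $\GapSVP\to\BDD$ reduction. I would quote \cite{LM09}: $\GapSVP_{\gamma'}$ reduces to $\BDD_{1/\gamma''}$ with $\gamma'=O(\gamma''\sqrt{d/\log d})$, by the Goldreich--Goldwasser style coAM protocol turned into a reduction. That reduction perturbs a random lattice point by a Gaussian of radius about $\sqrt{d/\log d}$ and asks the $\BDD$ oracle to recover it. I would then compose with the first part, taking $\gamma''=2\gamma$, which keeps the factor $O(\gamma\sqrt{d/\log d})$. Each step is classical and deterministic polynomial time apart from the guessing and sampling. The dimension grows by exactly one, in the embedding step only.

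The main obstacle I expect is the constant-factor bookkeeping in the embedding. The error norm is unknown, and the multiples $k(e,\beta)$ for $|k|\geq2$ must be excluded. I would resolve both as \cite{LM09} does, allowing a polynomial number of guesses and an inessential loss that is absorbed into the statement's constants.
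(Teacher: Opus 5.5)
Your route is the one the paper indicates when it cites \cite{LM09}. Kannan's embedding gives the first claim. The Goldreich--Goldwasser-style reduction of \cite{LM09} to $\BDD_{1/(2\gamma)}$, composed with the first claim, gives the second, and the dimension grows by one only in the embedding. The second part is fine.

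The gap is in the one estimate the first part needs. The vectors of $\Lat'$ independent of $(e,\beta)$ are exactly $(w+ke,k\beta)$ with $w\in\Lat\setminus\{0\}$ and arbitrary $k\in\Z$. Your list therefore omits $k=\pm1$, and its branch $|k|\geq2$ contains the dependent multiples $k(e,\beta)$.

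More seriously, your bound $\max\{|k|\beta,\lambda_1-|k|\mu\}$ gives only $\lambda_2(\Lat')\geq\lambda_1(\Lat)/2$ at $\beta=\mu$. That certifies a gap of only $\gamma/\sqrt2$. The conclusion you draw, a gap of about $\sqrt2\gamma$, is false. Because the two blocks are orthogonal, $\|(w+ke,k\beta)\|^2\geq(\lambda_1-|k|\mu)_+^2+k^2\beta^2$, which at $\beta=\mu$ is at least $\lambda_1^2/2$. Together with $\lambda_1(\Lat')\leq\sqrt2\,\mu$ this gives a gap of at least $\lambda_1(\Lat)/(2\mu)\geq\gamma$.

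This bound is attained. Take $\Lat=\Z$, $t=1/(2\gamma)$ and an integer $\gamma\geq2$: the vector $\gamma(t,\mu)-(1,0)=(-1/2,1/2)$ has norm $\lambda_1/\sqrt2$, and the gap is exactly $\gamma$. So nothing ``delicate'' is left to tighten. The constant $1/(2\gamma)$ is exactly what the plain embedding certifies, with no slack. At this boundary even the strict promise $\lambda_2>\gamma\lambda_1$ of the paper's definition of $\uSVP_\gamma$ fails.

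This is why your treatment of the unknown $\mu$ does not work as stated. With $\beta=c\mu$ the same computation gives $\lambda_2(\Lat')\geq c\lambda_1(\Lat)/\sqrt{1+c^2}$ and $\lambda_1(\Lat')\leq\sqrt{1+c^2}\,\mu$. That is only a gap of $2c\gamma/(1+c^2)$, which is below $\gamma$ for $c\neq1$. The first claim has no constant into which this loss could be absorbed.

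A geometric grid of ratio $1+\varepsilon$, keeping the candidate closest to $t$, proves for instance that $\BDD_{1/(2\gamma(1+\varepsilon^2))}$ reduces to $\uSVP_\gamma$ with polynomially many guesses. To get the constant exactly as stated, your argument needs the exact height $\mu$, plus a non-strict uniqueness promise at the boundary. Otherwise the statement should carry this $1+o(1)$ loss. The loss is immaterial where the paper applies the lemma, because $\gu$ there is only a sufficient bound and may be enlarged. It still has to be accounted for explicitly, not deferred.
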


The first reduction is Kannan's embedding: a $\BDD$ target $t$ close to a lattice point $v$ becomes the unusually short vector $(t-v,\mu)$ of the lattice generated by the basis and $(t,\mu)$ for a suitable $\mu$, whose gap is about $1/\delta$. The second is a randomised self-reduction in the style of Goldreich and Goldwasser: perturb a lattice point by a random short vector and ask the $\BDD$ oracle to undo the perturbation, which succeeds reliably only when the lattice has no short vectors \cite{GG00,LM09}. Decoding can also be obtained from short \emph{dual} vectors, with a larger loss.

\begin{lemma}\label{lem:bdd-sivp}
For every $\gamma\geq1$, $\BDD_{1/(2n\gamma)}$ on $n$-dimensional lattices reduces in classical polynomial time to $\SIVP_\gamma$ on their duals.
\end{lemma}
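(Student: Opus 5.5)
We reduce $\BDD_{1/(2n\gamma)}$ on $\Lat$ to $\SIVP_\gamma$ on $\Lat^*$ by the standard rounding argument with short dual vectors. Given a basis of $\Lat$ and a target $t=v+e$ with $v\in\Lat$ and $\|e\|\leq\lambda_1(\Lat)/(2n\gamma)$, we first compute a basis of $\Lat^*$ as $B^{-\top}$. We then call the $\SIVP_\gamma$ oracle on it. It returns $n$ linearly independent vectors $w_1,\dots,w_n\in\Lat^*$ with $\|w_i\|\leq\gamma\lambda_n(\Lat^*)$.

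Banaszczyk's transference theorem, $\lambda_1(\Lat)\lambda_n(\Lat^*)\leq n$, gives $\|w_i\|\leq\gamma n/\lambda_1(\Lat)$. Hence
\[
|\langle w_i,e\rangle|\leq\|w_i\|\,\|e\|\leq\frac{\gamma n}{\lambda_1(\Lat)}\cdot\frac{\lambda_1(\Lat)}{2n\gamma}=\frac12 ,
\]
and the inequality is strict if the $\BDD$ radius is strict. If only the non-strict bound holds, the tie case $|\langle w_i,e\rangle|=1/2$ has to be handled. We could shrink the radius by an arbitrarily small factor, or simply try both roundings in the tie cases, observing that exact ties with value $1/2$ can be detected. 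We expect the precise handling of this boundary case, and hence the exact constant $2n\gamma$, to be the only delicate point. As a first choice we would note that the definition of $\BDD_\delta$ permits the stated bound, and treat ties by enumerating the at most $2^{\#\text{ties}}$ candidates, checking each with the uniqueness of the closest point, or else adopt the convention that the oracle radius is strict.

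Since $\langle w_i,v\rangle\in\Z$, rounding $\langle w_i,t\rangle$ to the nearest integer recovers $c_i:=\langle w_i,v\rangle$ exactly. Let $W$ be the matrix with columns $w_i$. It is invertible because the $w_i$ are linearly independent. We then solve $W^{\top}v=c$ for $v$ over the rationals and output $v$. All steps are polynomial time, namely basis inversion, one oracle call, $n$ roundings and a linear solve, and the reduction is dimension-preserving. The same argument applies to rank-$d$ lattices by working in $\mathrm{span}(\Lat)$.
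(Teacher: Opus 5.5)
Your proposal is correct, and it reaches the bound by a more elementary route than the paper.

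\textbf{The paper's route.} The paper does not round against the $w_i$ directly. It converts the $\SIVP_\gamma$ output into a basis $T$ of $\Lat^*$ whose Gram--Schmidt vectors satisfy $\|\tilde t_i\|\leq\gamma\lambda_n(\Lat^*)$, using \cite[Lem. 7.1]{MG02}. It then takes the dual basis $T^{-\top}$ of $\Lat$ in reverse order, whose Gram--Schmidt norms are the reciprocals $1/\|\tilde t_i\|$. Babai's nearest-plane algorithm \cite{Bab86} on this basis decodes up to radius $\tfrac12\min_i\|\tilde b_i\|\geq1/(2\gamma\lambda_n(\Lat^*))$. Banaszczyk's theorem then finishes exactly as in your argument.

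\textbf{Comparison.} Your dual-rounding argument reaches the same radius $1/(2\max_i\|w_i\|)$ with only Cauchy--Schwarz and one linear solve. It needs neither the basis-conversion lemma nor nearest-plane. The paper's route gives a genuine basis of $\Lat$, so its decoder always outputs a point of $\Lat$. Its intermediate radius is also governed by Gram--Schmidt norms, which can be smaller than $\max_i\|w_i\|$. In your version, $W^{-\top}c$ lies a priori only in the superlattice $W^{-\top}\Z^n=\Lat(W)^*\supseteq\Lat$, because the $w_i$ generate only a sublattice of $\Lat^*$. This is harmless when the rounding is correct. But a wrongly rounded candidate must be discarded by a membership test in $\Lat$, not only by comparing distances.

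\textbf{Ties.} Your concern about enumerating exponentially many tie patterns is unnecessary. A tie $|\langle w_i,e\rangle|=\|w_i\|\,\|e\|=\tfrac12$ forces equality in Cauchy--Schwarz, hence $w_i\parallel e$. Since the $w_i$ are linearly independent, at most one index can tie. So at most two candidates arise, and you keep whichever lies in $\Lat$ and is closer to $t$. This is a closest vector because $\|e\|\leq\lambda_1(\Lat)/2$. The paper's nearest-plane argument has the same boundary subtlety at radius exactly $\tfrac12\min_i\|\tilde b_i\|$ and does not address it.
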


\begin{proof}
Let $t$ satisfy $\mathrm{dist}(t,\Lat)\leq\lambda_1(\Lat)/(2n\gamma)$. The $\SIVP_\gamma$ oracle applied to $\Lat^*$ returns linearly independent $s_1,\dots,s_n\in\Lat^*$ of length at most $\gamma\lambda_n(\Lat^*)$, which can be converted into a basis $T$ of $\Lat^*$ whose Gram--Schmidt vectors satisfy $\|\tilde t_i\|\leq\|\tilde s_i\|\leq\gamma\lambda_n(\Lat^*)$ \cite[Lem. 7.1]{MG02}. The dual basis $B=T^{-\top}$ of $\Lat$, taken in reverse order, has Gram--Schmidt norms $1/\|\tilde t_n\|,\dots,1/\|\tilde t_1\|$, so Babai's nearest-plane algorithm with this basis returns the closest lattice point to every target within distance $\tfrac12\min_i\|\tilde b_i\|\geq1/(2\gamma\lambda_n(\Lat^*))$ \cite{Bab86}. By Banaszczyk's transference theorem $\lambda_1(\Lat)\lambda_n(\Lat^*)\leq n$ \cite{Ban93}, this radius is at least $\lambda_1(\Lat)/(2n\gamma)$.
\end{proof}

\begin{lemma}[Regev's worst-case reduction {\cite[Thm. 3.1]{Reg09}}]\label{lem:regev09}
Let $\alpha\in(0,1)$ and $q$ satisfy $\alpha q>2\sqrt n$. An efficient algorithm that solves search-$\LWE_{n,q,\alpha'}$ for every $\alpha'\leq\alpha$, from polynomially many samples, yields an efficient quantum algorithm for $\SIVP_{\gamma}$ and $\GapSVP_{\gamma}$ on all $n$-dimensional lattices, with $\gamma=\Otil(n/\alpha)$.
\end{lemma}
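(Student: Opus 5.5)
The plan is to recover the statement from the iterative structure of \cite{Reg09} rather than reprove it. I would organise the proof around a single invariant, a polynomial supply of samples from $D_{\Lat,r}$ for the current width $r$, and show how one round of the loop shrinks $r$ by a constant factor while using only the LWE oracle and one quantum subroutine.

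\emph{Initialisation.} Given an arbitrary basis of $\Lat$, first LLL-reduce it. Then sample $D_{\Lat,r}$ for an exponentially large $r$, for instance $r=2^{2n}\lambda_n(\Lat)$. At this width a continuous Gaussian rounded to the lattice with the LLL basis is statistically close to the discrete Gaussian. The estimates needed here are those of \cref{lem:tails}(c) and the smoothing bound of \cite{MR07}.

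\emph{Classical step.} Suppose $r\geq\sqrt2\,q\,\eta_\epsilon(\Lat)$ and we have polynomially many samples from $D_{\Lat,r}$. Given a $\BDD$ target $x\in\Lat^*+e$ with $\|e\|\leq\alpha q/(\sqrt2\,r)$, build LWE samples as follows. For each discrete Gaussian sample $v$, output
\[
  \bigl(\,a=B^{-1}v\bmod q,\ \ b=\langle x,v\rangle/q+\text{Gaussian smudging}\bmod 1\,\bigr).
\]
By smoothing, $a$ is nearly uniform. The secret is the coefficient vector of the closest dual point modulo $q$, and the error is close to a Gaussian of parameter at most $\alpha$, but the exact value depends on the unknown $\|e\|$. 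This is why the hypothesis asks for a solver working for every $\alpha'\leq\alpha$. Rounding the outcome of the search-LWE oracle gives the closest dual vector modulo $q$, and iterating on $(x-\text{recovered})/q$ in the standard way recovers it exactly.

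\emph{Quantum step.} Take a $\BDD$ oracle on $\Lat^*$ with radius $\alpha q/(\sqrt2\,r)$. Prepare a Gaussian superposition over $\Lat^*$ plus a bounded error, and use the $\BDD$ oracle to uncompute the lattice-point register; this leaves the periodic Gaussian state. Applying the quantum Fourier transform then yields samples from $D_{\Lat,r'}$ with $r'=r\sqrt n/(\alpha q)\leq r/2$, using $\alpha q>2\sqrt n$. Each iteration uses the oracle polynomially often, and there are $O(n)$ halvings, so the total cost stays polynomial.

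\emph{Termination and conclusions.} The loop halts when $r$ reaches $\sqrt{2n}\,\eta_\epsilon(\Lat)/\alpha$, still within the smoothing condition of the classical step. From samples at this width, $O(n)$ draws give $n$ independent vectors of norm $\Otil(n/\alpha)\,\eta_\epsilon(\Lat)$, by \cref{lem:tails}(a),(b). Since $\eta_\epsilon\leq\Otil(1)\lambda_n$, this solves $\SIVP_{\Otil(n/\alpha)}$. For $\GapSVP$, apply the same loop to the dual: \cite[Sec. 3]{Reg09} uses the $\BDD$ oracle on $\Lat$ at radius about $\alpha\sqrt{n}\,\lambda_n$-style parameters together with a Goldreich--Goldwasser style test, whose success characterises whether $\lambda_1$ is small, as in \cref{lem:lm09}, with transference \cite{Ban93} converting between the two minima.

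\emph{Main obstacle.} The delicate point is the classical step's error analysis. One must show that the smudged noise is statistically close to a Gaussian of parameter $\sqrt{(\|e\|r/q)^2+\alpha^2/2}\leq\alpha$ for every target and not just on average. This requires the smoothing-parameter estimate on $\Lat$ and a careful treatment of the unknown noise rate. Since the lemma is quoted from \cite[Thm. 3.1]{Reg09}, I would defer to that proof for these calculations and only verify that its hypotheses are exactly $\alpha q>2\sqrt n$ and oracle success for all $\alpha'\leq\alpha$ with polynomially many samples.
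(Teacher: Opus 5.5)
Your route is the paper's. The paper does not reprove the lemma either: it quotes \cite[Thm. 3.1]{Reg09} and describes the same alternation of a classical $\BDD$ step driven by the LWE oracle with a quantum step that narrows the Gaussian.

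\textbf{The gap is the oracle's error distribution.} This is the one point the paper adds beyond the citation, and your closing check misses it. Your classical step produces $b\in\R/\Z$ whose error is close to a continuous Gaussian of parameter $\beta=\sqrt{(\|e\|r/q)^2+\alpha^2/2}$. Regev's theorem concerns that error, rounded to the nearest integer when an integer-valued oracle is called. Here $\LWE_{n,q,\alpha'}$ has errors $D_{\Z,\alpha'q}$, which is how the lemma is used in \cref{cor:edcp-sivp}. Neither torus-valued samples nor their deterministic rounding have that law, so the hypotheses are not ``exactly'' Regev's.

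You need one more step. Replace $qb$ by a sample of $D_{\Z,r_0,qb}$ with $r_0\geq\eta_\varepsilon(\Z)=\omega(\sqrt{\log n})$. By \cref{lem:round}, the resulting error is $2\varepsilon$-close to $D_{\Z,\sqrt{\beta^2q^2+r_0^2}}$. This channel cannot increase the statistical distance of your samples from the ideal ones. Running the construction at a noise rate smaller by a factor $1-o(1)$ keeps the parameter at most $\alpha q$. Because $\alpha q>2\sqrt n$ dominates $r_0$, this costs only a factor $1+o(1)$ in the per-round shrinkage and leaves $\gamma=\Otil(n/\alpha)$ unchanged.

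\textbf{Two smaller points.}

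First, your cost count should rest on reuse of samples, not only on the number of oracle calls. The $\BDD$ solver of a round is built from one fixed classical batch of $D_{\Lat,r}$ samples. That batch is copied into each of the polynomially many runs of the quantum step. The solver must also succeed on essentially every target, because it is run coherently. If each output sample consumed fresh input samples, the requirement would multiply over the $O(n)$ rounds and become exponential.

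Second, the final width $\sqrt{2n}\,\eta_\epsilon(\Lat)/\alpha$ is not within the classical step's smoothing condition. Its ratio to $\sqrt2\,q\,\eta_\epsilon(\Lat)$ is $\sqrt n/(\alpha q)<1/2$. What holds is that the last input width, $r\alpha q/\sqrt n$ for the target width $r$, satisfies the condition. This is how \cite{Reg09} anchors the chain $r_i=r(\alpha q/\sqrt n)^i$.
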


The structure of the proof, an iteration that alternates a classical $\BDD$ step using the LWE oracle with a quantum step that narrows a discrete Gaussian, is described in \cref{sec:prelim-lwe}. For our purposes the lemma is a black box with one important property: the reduction chooses the LWE parameters $(n,q,\alpha,m)$ itself, so it may request the polynomially many samples allowed by its LWE-oracle interface. This is different from a fixed public-key sample budget. Regev states the theorem for errors drawn from a rounded continuous Gaussian: inside the reduction the oracle is called on samples whose continuous Gaussian error is rounded to the nearest integer. Replacing that deterministic rounding by the randomised rounding of \cref{lem:round} below produces discrete Gaussian errors at the cost of an additive term $\omega(\sqrt{\log n})$, in quadrature, in their parameter, which is negligible against $\alpha q>2\sqrt n$; the lemma therefore also applies to algorithms for errors $D_{\Z,\alpha'q}$, the convention of \cite{BKSW18}. This observation is standard \cite{Pei10}.

The quantum step of that iteration is useful on its own, because it converts a dual-lattice decoder satisfying the stated radius, success and coherent-implementation requirements into a Gaussian sampler for the primal lattice.

\begin{lemma}[Regev's quantum step {\cite[Lem. 3.14]{Reg09}}]\label{lem:qstep}
There is an efficient quantum algorithm that, given an $n$-dimensional lattice $\Lat$, a number $d<\lambda_1(\Lat^*)/2$ and an oracle that solves $\CVP_{\Lat^*,d}$, that is, finds the closest point of $\Lat^*$ to any target within distance $d$ of $\Lat^*$, outputs a sample from $D_{\Lat,\sqrt n/(\sqrt2\,d)}$, up to statistical distance $2^{-\Omega(n)}$.
\end{lemma}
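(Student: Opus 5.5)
The plan is to reprove Regev's quantum step in the standard two-phase form: first build a quantum state encoding a Gaussian over the dual lattice, then apply a Fourier transform to turn it into a Gaussian over the primal lattice $\Lat$. Set $s:=\sqrt n/(\sqrt2\,d)$, so the target is $D_{\Lat,s}$.

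\emph{Phase 1 (dual Gaussian over the torus).} Work on a fine grid of the fundamental parallelepiped of $\Lat^*$, with a basis computed by LLL and a grid spacing of exponentially small mesh. Prepare the superposition
\[
\sum_{x\in\Lat^*}\sum_{z}\rho_{1/\sqrt2\cdot\sqrt2/s'}(z)\ket{z}\ket{x+z\bmod\Lat^*},
\]
where the Gaussian in the perturbation register $z$ has width about $d/\sqrt n$. By \cref{lem:tails}(a), all but a $2^{-\Omega(n)}$ amplitude fraction of $z$ satisfies $\|z\|\leq d$. On that part, the $\CVP_{\Lat^*,d}$ oracle, applied to the reduced point $x+z$, returns the lattice point, and hence $z$. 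We therefore uncompute the first register reversibly, which leaves approximately $\sum_{y}\rho(\mathrm{dist}(y,\Lat^*))\ket{y\bmod\Lat^*}$, a periodised Gaussian over the dual. The promise $d<\lambda_1(\Lat^*)/2$ ensures that the closest point is unique, so the uncomputation is well defined. The error comes only from the tail mass, which costs $2^{-\Omega(n)}$ in trace distance.

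\emph{Phase 2 (Fourier transform).} Apply the quantum Fourier transform over $\R^n/\Lat^*$, discretised to $\Lat/\text{(grid)}$. By Poisson summation, the Fourier transform of $\sum_{x\in\Lat^*}\rho_{r}(\cdot-x)$ is proportional to $\sum_{v\in\Lat}\rho_{1/r}(v)\ket v$. With the width chosen so that $1/r=\sqrt n/(\sqrt2 d)$, measuring gives a sample of $D_{\Lat,s}$ (amplitudes $\rho_{s\sqrt2}$ square to $\rho_s$). This is where the factor $\sqrt2$ in the statement enters.

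\emph{Main obstacle.} The main difficulty is the discretisation bookkeeping. The continuous torus must be replaced by a finite grid so that the QFT is exact on a finite abelian group, and we must show that the resulting distribution is within $2^{-\Omega(n)}$ of $D_{\Lat,s}$. This requires controlling the Gaussian tails in both registers and the rounding error of the grid, and ensuring the oracle is only queried within distance $d$. I would follow Regev's construction directly, with a grid of size $2^{\poly(n)}$, and bound each approximation step by an explicit trace distance. The remaining steps are routine estimates based on \cref{lem:tails}.
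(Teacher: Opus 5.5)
Your proposal follows the same route as the paper's sketch of Regev's argument: a Gaussian superposition of amplitude width $d/\sqrt n$, reduction modulo $\mathcal P(\Lat^*)$, and one oracle call to erase the perturbation register, which works because $d<\lambda_1(\Lat^*)/2$ makes $y\mapsto y\bmod\Lat^*$ injective on the ball of radius $d$; a quantum Fourier transform then turns the periodic dual Gaussian into a primal Gaussian of reciprocal width, and like the paper you defer the discretisation to Regev. Fix one slip in Phase~2: with $r=d/\sqrt n$ you need $1/r=\sqrt n/d=\sqrt2\,s$, not $1/r=\sqrt n/(\sqrt2\,d)$, since squaring the amplitudes $\rho_{\sqrt2 s}$ is what yields $D_{\Lat,s}$; likewise your Phase~1 state should simply read $\sum_z\rho_{d/\sqrt n}(z)\ket{z}\ket{z\bmod\Lat^*}$, without the redundant sum over $\Lat^*$ and the undefined $s'$.
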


The algorithm has four steps \cite{Reg09}: (1) prepare the Gaussian superposition $\sum_{y}\rho_{d/\sqrt n}(y)\ket{y}$ over a fine grid, truncated to $\|y\|<d$; (2) compute $t=y\bmod\mathcal P(\Lat^*)$, the reduction modulo a fundamental parallelepiped of $\Lat^*$, in a second register; (3) use the oracle once, on the second register, to find the lattice point closest to $t$, hence $y$, and erase the first register, possible because $d<\lambda_1(\Lat^*)/2$ makes $y$ a function of $t$; (4) apply the quantum Fourier transform to the remaining register, which maps a periodic Gaussian state on $\Lat^*$ to a Gaussian superposition on $\Lat$ of reciprocal width, and measure. In Regev's reduction the decoder is built from the LWE oracle and from wider Gaussian samples; under $\HReg(f)$ a worst-case decoder is available directly (\cref{lem:lm09}), which is what makes \cref{prop:sivp-reg,prop:sis-reg} below sharper than a detour through LWE. Two technicalities arise when the lemma is used with the decoders available to us: they are randomised and may fail, and in one application they succeed only on average over the target. The next lemma handles both.

\begin{lemma}\label{lem:qstep-imperfect}
Let $\Lat\subset\R^n$ be a full-rank lattice, $0<d<\lambda_1(\Lat^*)/2$, and let $\mathcal D$ be a quantum algorithm that, on input $t\in\R^n$, outputs a point of $\Lat^*$. Let $y$ be a continuous Gaussian vector of parameter $d/\sqrt{2n}$ conditioned on $\|y\|<d$, let $t:=y\bmod\mathcal P(\Lat^*)$, and let
\[
  \bar p:=\Pr\big[\mathcal D(t)=t-y\big],
\]
the probability being over $y$ and the internal randomness of $\mathcal D$; note that $t-y$ is the point of $\Lat^*$ closest to $t$. Run the algorithm of \cref{lem:qstep} with step (3) implemented by running $\mathcal D$ reversibly, using its output to erase the first register, and uncomputing $\mathcal D$, and let $w:=\sqrt n/(\sqrt2\,d)$. Then:
\begin{enumerate}[label=(\alph*)]
  \item the output distribution is within statistical distance $\sqrt{2(1-\bar p)}+2^{-\Omega(n)}$ of $D_{\Lat,w}$;
  \item for every set $G$ of outputs with $\delta_G:=1-D_{\Lat,w}(G)+2^{-\Omega(n)}\leq\bar p^{\,2}$, the output lies in $G$ with probability at least $(\bar p-\sqrt{\delta_G})^2$.
\end{enumerate}
In particular, if $\mathcal D$ errs with probability at most $\varepsilon$ on every target within distance $d$ of $\Lat^*$, the output is within statistical distance $\sqrt{2\varepsilon}+2^{-\Omega(n)}$ of $D_{\Lat,w}$.
\end{lemma}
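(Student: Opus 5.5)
The plan is to compare the actual run of the modified algorithm with an ideal run in which step (3) is performed by a perfect decoder. Everything is then reduced to a single inner-product estimate between the two states reached after step (3).

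\emph{The ideal state.} After steps (1)--(2) the joint state is
\[
  \ket{\Phi}=\sum_y\sqrt{\mu(y)}\,\ket{y}\ket{t(y)}\ket{0}_{\mathrm{anc}},
\]
where $\mu$ is the truncated Gaussian on the fine grid. Since $d<\lambda_1(\Lat^*)/2$, the map $y\mapsto t(y)$ is injective on the ball $\|y\|<d$. With a perfect decoder, step (3) yields the ideal state
\[
  \ket{\Psi_{\mathrm{id}}}=\sum_y\sqrt{\mu(y)}\,\ket{0}\ket{t(y)}\ket{0}_{\mathrm{anc}}.
\]
Step (4) applied to this state gives $D_{\Lat,w}$ up to $2^{-\Omega(n)}$, by \cref{lem:qstep}.

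\emph{The actual state.} Write the reversible run of $\mathcal D$ on input $t$ as
\[
  U\ket{t}\ket{0}=\sqrt{p_t}\,\ket{t}\ket{t-y(t)}\ket{g_t}+\ket{t}\ket{\mathrm{err}_t},
\]
where $\ket{\mathrm{err}_t}$ collects all branches with a wrong output. The implementation then does three things: it subtracts $t$ minus the decoder output from the first register, which is a controlled permutation; it applies $U^{-1}$; and it leaves the result as the actual state $\ket{\Psi}$. On the correct branch the first register becomes $\ket{0}$. After $U^{-1}$, the component of $\ket{\Psi}$ along $\ket{0}\ket{t}\ket{0}_{\mathrm{anc}}$ therefore has amplitude $\sqrt{\mu(y)}\,p_t$ plus cross terms. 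Those cross terms lie in states whose first register is nonzero, because a wrong output leaves $y-(\text{wrong difference})\neq0$.

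\emph{The overlap bound.} The cleaner way to run this is to compute the overlap directly:
\[
  \langle\Psi_{\mathrm{id}}|\Psi\rangle
  =\sum_y\mu(y)\,\bra{t}\bra{0}U^{-1}\Pi_{\mathrm{good}}U\ket{t}\ket{0}
  =\sum_y\mu(y)\,p_{t(y)}=\bar p,
\]
where $\Pi_{\mathrm{good}}$ projects onto the branches with correct output. Since $\bar p$ is real and nonnegative, this gives
\[
  \bigl\|\ket{\Psi}-\ket{\Psi_{\mathrm{id}}}\bigr\|^2=2-2\bar p .
\]
The main obstacle is justifying this overlap computation carefully. The subtraction acts as the identity on the good branch and maps the bad branches into states orthogonal to first-register $\ket{0}$. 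For this reason I would do the bookkeeping with $\Pi_{\mathrm{good}}$ explicitly, rather than tracking the garbage states. Randomness of $\mathcal D$ is handled by purifying it into ancillas.

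\emph{Part (a).} Step (4) and the measurement are the same unitary-then-measure in both runs, and they do not increase distance. The statistical distance between the output laws is therefore at most the Euclidean distance between the states, which is $\sqrt{2(1-\bar p)}$. Adding the $2^{-\Omega(n)}$ error of the ideal run gives (a).

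\emph{Part (b).} Let $P_G$ be the projector, after step (4), onto outcomes in $G$, and let $\ket{\Psi_{\mathrm{id}}'}$ denote the ideal state after step (4). Decompose
\[
  \ket{\Psi}=\bar p\,\ket{\Psi_{\mathrm{id}}}+\ket{\perp},
\]
with $\ket{\perp}$ orthogonal to $\ket{\Psi_{\mathrm{id}}}$. Then $\|P_G\Psi\|\geq\bar p\,\|P_G\Psi_{\mathrm{id}}'\|-\|P_G\perp\|$. For the first term, $\|P_G\Psi_{\mathrm{id}}'\|\geq\sqrt{1-\delta_G}$. For the second, use the orthogonality of $\ket{\perp}$ to $\ket{\Psi_{\mathrm{id}}}$: this gives $\|P_G\perp\|\leq\|(I-P_G)\Psi_{\mathrm{id}}'\|\cdot\|\perp\|$-type bounds, so that $\|P_G\perp\|\leq\sqrt{\delta_G}\sqrt{1-\bar p^2}$ up to rearrangement. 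Combining these, and using the hypothesis $\delta_G\leq\bar p^2$ so that the difference is nonnegative, should yield $\|P_G\Psi\|\geq\bar p-\sqrt{\delta_G}$; squaring this gives (b).

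\emph{The final claim.} If $\mathcal D$ errs with probability at most $\varepsilon$ on every target within distance $d$, then $\bar p\geq1-\varepsilon$ since $t$ is always within distance $d$ of $\Lat^*$. Substituting into (a) gives the stated bound $\sqrt{2\varepsilon}+2^{-\Omega(n)}$.
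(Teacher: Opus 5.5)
Your overlap computation $\langle\Psi_{\mathrm{id}}|\Psi\rangle=\bar p$ is correct and matches the paper's proof: after the subtraction, projecting the first register onto $\ket{0}$ selects exactly the correct-output branch, which gives $\bra{t}\bra{0}U^{-1}\Pi_{\mathrm{good}}U\ket{t}\ket{0}=p_t$. Part (a) and the final claim also follow the paper. The gap is in (b).

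The triangle inequality $\|P_G\ket{\Psi'}\|\geq\bar p\,\|P_G\ket{\Psi'_{\mathrm{id}}}\|-\|P_G\ket{\perp}\|$ is valid, but the next estimate $\|P_G\ket{\perp}\|\leq\sqrt{\delta_G}\sqrt{1-\bar p^{2}}$ is false. Orthogonality of $\ket{\perp}$ to the ideal state says nothing about the norm of $P_G\ket{\perp}$. For example, suppose both the ideal state and $\ket{\perp}$ lie in the range of $P_G$. Then $\delta_G$ is only the $2^{-\Omega(n)}$ term, yet $\|P_G\ket{\perp}\|=\|\ket{\perp}\|=\sqrt{1-\bar p^{2}}$. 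With only the true bound $\|P_G\ket{\perp}\|\leq\sqrt{1-\bar p^2}$, your inequality yields $\bar p\sqrt{1-\delta_G}-\sqrt{1-\bar p^{2}}$. This is negative whenever $\bar p<1/\sqrt2$, which is exactly the regime where (b) is used: decoders with success $1/\poly(n)$ in \cref{prop:sis-edcp}. Subtracting norms throws away the near-orthogonality of $P_G\ket{\perp}$ and $P_G\ket{\Psi'_{\mathrm{id}}}$. What orthogonality actually controls is the inner product $\langle\Psi'_{\mathrm{id}}|P_G|\perp\rangle=-\langle\Psi'_{\mathrm{id}}|(1-P_G)|\perp\rangle$.

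The fix is not to decompose $\ket{\Psi'}$ at all, but to project onto the ideal state. By Cauchy--Schwarz and $\|\Psi'\|=1$,
\[
\|P_G\Psi'\|\;\geq\;|\langle\Psi'_{\mathrm{id}}|P_G|\Psi'\rangle|\;\geq\;|\langle\Psi'_{\mathrm{id}}|\Psi'\rangle|-|\langle\Psi'_{\mathrm{id}}|(1-P_G)|\Psi'\rangle|\;\geq\;\bar p-\|(1-P_G)\Psi'_{\mathrm{id}}\|\;\geq\;\bar p-\sqrt{\delta_G}.
\]
Here $\|(1-P_G)\Psi'_{\mathrm{id}}\|^2$ is the probability that the ideal run lands outside $G$, which is at most $\delta_G$. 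Squaring, which is legitimate because $\delta_G\leq\bar p^{2}$, gives (b); this is the paper's argument.

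The expression you were aiming for, $\bar p\sqrt{1-\delta_G}-\sqrt{\delta_G(1-\bar p^{2})}$, is in fact a valid bound, at least as strong under $\delta_G\leq\bar p^2$. However, it comes from the triangle inequality for angles between rays, $\arccos\|P_G\Psi'\|\leq\arccos\bar p+\arcsin\sqrt{\delta_G}$, not from any bound on $\|P_G\ket{\perp}\|$.
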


\begin{proof}
After step (2) the state is $\sum_y c_y\ket{y}\ket{t(y)}$, where $|c_y|^2\propto\rho_{d/\sqrt n}(y)^2=\rho_{d/\sqrt{2n}}(y)$ on the ball $\|y\|<d$, which is the distribution of $y$ in the statement; since $d<\lambda_1(\Lat^*)/2$, the map $y\mapsto t(y)$ is injective on the ball. Let $A$ be a unitary implementation of $\mathcal D$ that leaves its input register unchanged, $A\ket{t}\ket{0}=\ket{t}\sum_x\ket{x}\ket{\gamma_{t,x}}$, where $\|\gamma_{t,x}\|^2$ is the probability that $\mathcal D$ outputs $x$ on input $t$, and let $S$ be the permutation $\ket{y}\ket{t}\ket{x}\mapsto\ket{y-t+x}\ket{t}\ket{x}$. Step (3) applies $A^\dagger SA$ (Bennett's compute--use--uncompute \cite{Ben73}); with an exact oracle it maps $\ket{y}\ket{t(y)}\ket{0}$ to $\ket{0}\ket{t(y)}\ket{0}$. For each $y$, writing $t=t(y)$,
\[
  \bra{0}\bra{t}\bra{0}\,A^\dagger SA\,\ket{y}\ket{t}\ket{0}
  =\sum_{x,x'}\langle 0|y-t+x\rangle\,\langle x'|x\rangle\,\langle\gamma_{t,x'}|\gamma_{t,x}\rangle
  =\|\gamma_{t,t-y}\|^2=\Pr[\mathcal D(t)=t-y],
\]
and terms with distinct $y$ are orthogonal because the second register is never changed. Hence the actual state $\ket\Psi$ and the ideal state $\ket{\Psi_0}$ after step (3) satisfy $\langle\Psi_0|\Psi\rangle=\sum_y|c_y|^2\Pr[\mathcal D(t(y))=t(y)-y]=\bar p$, and the same holds after the unitary step (4). For (a): the outcomes of one measurement on two pure states are within statistical distance at most their trace distance $\sqrt{1-|\langle\Psi_0|\Psi\rangle|^2}\leq\sqrt{2(1-\bar p)}$, and the ideal outcome is $2^{-\Omega(n)}$-close to $D_{\Lat,w}$ by \cref{lem:qstep}. For (b): let $P_G$ project onto the outcomes in $G$; then
\[
  \|P_G\Psi\|\;\geq\;|\langle\Psi_0|P_G|\Psi\rangle|\;\geq\;|\langle\Psi_0|\Psi\rangle|-|\langle\Psi_0|(1-P_G)|\Psi\rangle|\;\geq\;\bar p-\|(1-P_G)\Psi_0\|\;\geq\;\bar p-\sqrt{\delta_G},
\]
since $\|(1-P_G)\Psi_0\|^2$ is the probability that the ideal procedure outputs a point outside $G$. The last claim is (a) with $\bar p\geq1-\varepsilon$, because every $t(y)$ lies within distance $\|y\|<d$ of $\Lat^*$.
\end{proof}

\begin{lemma}[Randomised rounding]\label{lem:round}
Let $\varepsilon\in(0,1/2]$, $r\geq\eta_\varepsilon(\Z)$ and $s_c>0$. Let $e$ be a continuous Gaussian variable of parameter $s_c$ and, given $e$, let $y\leftarrow D_{\Z,r,e}$ be a discrete Gaussian sample over $\Z$ centred at $e$. Then $y$ is within statistical distance $2\varepsilon$ of $D_{\Z,\sqrt{s_c^2+r^2}}$. Consequently, for $a\in\Z$, rounding $a+e$ randomly in this way yields $a+y$.
\end{lemma}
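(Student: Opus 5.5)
The plan is to compute the law of $y$ directly and compare it with $D_{\Z,\sqrt{s_c^2+r^2}}$ by a Poisson-summation argument, the standard convolution lemma for Gaussians (Peikert's convolution theorem specialised to $\Z$).

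\textbf{Step 1: the law of $y$.} For each $z\in\Z$,
\[
 \Pr[y=z]=\int_{\R}\rho_{s_c}(e)\,s_c^{-1}\,\frac{\rho_r(z-e)}{\rho_r(\Z-e)}\,de .
\]
The only obstacle to evaluating this is the normaliser $\rho_r(\Z-e)$, which depends on $e$.

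\textbf{Step 2: control the normaliser.} Since $r\geq\eta_\varepsilon(\Z)$, Poisson summation gives
\[
 \rho_r(\Z-e)\in\bigl[r(1-\varepsilon),\,r(1+\varepsilon)\bigr]\quad\text{for every }e\in\R,
\]
because $\rho_{1/r}(\Z\setminus\{0\})\leq\varepsilon$ bounds the nonconstant Fourier terms. Substituting the constant $r$ for the normaliser therefore changes $\Pr[y=z]$ by a multiplicative factor in $[1/(1+\varepsilon),\,1/(1-\varepsilon)]$, uniformly in $z$.

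\textbf{Step 3: the idealised integral.} With the normaliser replaced by $r$, the integral is the continuous convolution $\rho_{s_c}*\rho_r$ evaluated at $z$, up to constants. Completing the square gives
\[
 \frac{1}{s_c r}\int_{\R}\rho_{s_c}(e)\,\rho_r(z-e)\,de=\frac{1}{\sqrt{s_c^2+r^2}}\,\rho_{\sqrt{s_c^2+r^2}}(z),
\]
which is proportional to $\rho_s(z)$ with $s=\sqrt{s_c^2+r^2}$.

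\textbf{Step 4: from pointwise bounds to statistical distance.} Steps 2 and 3 together show that $\Pr[y=z]=c\,\rho_s(z)\,(1+\delta_z)$, where $c$ is a normalising constant independent of $z$ and $\delta_z$ lies in $[-\varepsilon/(1+\varepsilon),\,\varepsilon/(1-\varepsilon)]$. Summing over $z$ shows that $c\,\rho_s(\Z)$ is within a factor $1\pm O(\varepsilon)$ of $1$. The ratio of the law of $y$ to $D_{\Z,s}$ is then pointwise in $\bigl[\tfrac{1-\varepsilon}{1+\varepsilon},\tfrac{1+\varepsilon}{1-\varepsilon}\bigr]$. A ratio in $[1-\delta,1+\delta]$ gives statistical distance at most $\delta/2$.

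\textbf{Main obstacle.} The only delicate point is getting the constant to be exactly $2\varepsilon$. The ratio bound gives $\delta=\frac{2\varepsilon}{1-\varepsilon}$, hence distance at most $\frac{\varepsilon}{1-\varepsilon}\leq2\varepsilon$ for $\varepsilon\leq1/2$. This is where the hypothesis $\varepsilon\leq1/2$ is used.

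\textbf{The final clause.} This is immediate. Rounding $a+e$ randomly means sampling $D_{\Z,r,a+e}$, and since $a\in\Z$ this equals $a+D_{\Z,r,e}$ by translation invariance of $\Z$. So the rounded value is $a+y$.
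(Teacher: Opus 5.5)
Your proof is correct and follows the paper's argument essentially step for step: Poisson summation to confine the normaliser $\rho_r(\Z-e)$ to $[1-\varepsilon,1+\varepsilon]\cdot r$, the Gaussian convolution identity, and a pointwise ratio bound in $[\tfrac{1-\varepsilon}{1+\varepsilon},\tfrac{1+\varepsilon}{1-\varepsilon}]$ giving distance $\varepsilon/(1-\varepsilon)\leq2\varepsilon$. The only difference is cosmetic: you obtain $\rho_s(\Z)\in[1-\varepsilon,1+\varepsilon]\cdot s$ by summing your pointwise bounds over $z$, whereas the paper gets it directly from $s\geq r\geq\eta_\varepsilon(\Z)$.
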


\begin{proof}
For $r\geq\eta_\varepsilon(\Z)$ and every $c\in\R$, Poisson summation gives $\rho_r(\Z+c)\in[1-\varepsilon,1+\varepsilon]\cdot r$ \cite{MR07}. Hence, with $s:=\sqrt{s_c^2+r^2}$ and the convolution identity $\int_\R\rho_{s_c}(e)\rho_r(y-e)\,de=(s_cr/s)\,\rho_s(y)$,
\[
  \Pr[y]=\int_{\R}\frac{\rho_{s_c}(e)}{s_c}\cdot\frac{\rho_r(y-e)}{\rho_r(\Z-e)}\,de\;\in\;\Big[\frac1{1+\varepsilon},\frac1{1-\varepsilon}\Big]\cdot\frac{\rho_s(y)}{s}.
\]
Since $s\geq r\geq\eta_\varepsilon(\Z)$, also $\rho_s(\Z)\in[1-\varepsilon,1+\varepsilon]\cdot s$, so the ratio of $\Pr[y]$ to $D_{\Z,s}(y)=\rho_s(y)/\rho_s(\Z)$ lies in $[(1-\varepsilon)/(1+\varepsilon),(1+\varepsilon)/(1-\varepsilon)]$, and the statistical distance is at most $\varepsilon/(1-\varepsilon)\leq2\varepsilon$. The last statement holds because $D_{\Z,r,a+e}=a+D_{\Z,r,e}$ for $a\in\Z$. This is the one-dimensional, continuous case of the convolution theorem of \cite{Pei10}.
\end{proof}

\subsection{The reach theorem}\label{sec:reach-thm}

The thresholds that govern the consequences are
\begin{gather*}
  \gamma_{\mathrm g}(d):=O\big(\gu(d)\sqrt{d/\log d}\big)=\Otil(d^{1+2f}),\qquad
  \alpha_{\mathrm{Reg}}(m):=\frac{1}{8\sqrt m\;\gu(m+1)},\\
  \gs_{\mathrm{Reg}}(n):=\Big(1+\frac1n\Big)\sqrt2\,n\,\gu(n+1)=O\big(n^{3/2+2f}\big),\\
  \beta_{\mathrm{Reg}}(m):=4\sqrt2\,m\,\gu(m+1)=O\big(m^{3/2+2f}\big),\qquad
  g_{\mathrm E}(d):=2d\,\gs_{\mathrm E}(d,\ell),\\
  \alpha_{\mathrm E}(n,m,\ell):=\frac{1}{c\,r_0\,\kappa\,\ell\sqrt m\,q^{n/m}}\quad(m\in\mathcal M_q(n)),\qquad
  \alpha^*_{\mathrm E}(n,\ell):=\alpha_{\mathrm E}(n,m_{\max},\ell),\\
  \beta_{\mathrm E}(n,m,\ell):=\max\Big\{\frac{\sqrt m}{\sqrt2\,\alpha^*_{\mathrm E}(n,\ell)},\;4\sqrt2\,m\Big\},
\end{gather*}
where $c=6\sqrt{2\pi e}$ is the constant of \cite[Thm. 3]{BKSW18} and $r_0$ the constant of $\HEDCP[\ell]$. The set $\mathcal M_q(n)$ contains the numbers of samples for which the proof of \cite[Thm. 3]{BKSW18} applies:
\begin{enumerate}[label=(B\arabic*)]
  \item $m=\Omega(\kappa)$;
  \item $q^{(m-n)/m}\geq2\sqrt{2\pi e}$, so that the lower bound on $\lambda_1(\Lambda_q(A))$ of \cite[Lem. 6]{BKSW18} holds except with probability $(3/4)^m$ (\cref{lem:qary-count});
  \item $\sqrt m\,q^{(m-n)/m}/(3\sqrt{2\pi e})\leq q$, the radius condition of \cite[App. B.1]{BKSW18}, which keeps the balls of the reduction below the minimum of $\Lambda_q(A)$, itself at most $q$;
\end{enumerate}
and $m_{\max}=m_{\max}(n,q):=\max\mathcal M_q(n)$. Condition (B2) is a mild lower bound on $m$, $m\geq n\ln q/\ln\big(q/(2\sqrt{2\pi e})\big)$. Condition (B3) is an upper bound on $m$: in the regime $m=\Theta(n\log q)$ it fails, since $\sqrt m\,q^{-n/m}$ grows with $n$, so the width bound of \cite[Thm. 3]{BKSW18} cannot be used there. For ML-KEM's modulus $q=3329$ and $n=768$, for instance, $\mathcal M_q(n)$ consists of roughly $1.35n\leq m\leq5n$. The parameter $\kappa$ is the statistical parameter of \cite{BKSW18}: Brakerski, Kirshanova, Stehlé and Wen treat $n$, $q$ and $m$ as functions of $\kappa$ and use it to truncate Gaussian amplitudes to $|j|\leq\sqrt\kappa\,r$ and to bound failure probabilities by $2^{-\Omega(\kappa)}$. We fix $\kappa:=\lceil\log^2n\rceil$, which makes those probabilities negligible; soft-$O$ notation hides powers of $\log n$ and $\log q$, including $\kappa$. Finally, $\gs_{\mathrm E}(n,\ell)$ denotes the approximation factor $\Otil(n/\alpha)$ that \cref{lem:regev09} yields at $\alpha=\alpha^*_{\mathrm E}(n,\ell)/2$ (\cref{cor:edcp-sivp}), and $g_{\mathrm E}(d)$ is the gap above which the worst-case route of $\HEDCP[\ell]$ decodes $d$-dimensional lattices. For $n\leq q\leq\poly(n)$ (\cref{cor:edcp-ceiling}), $\mathcal M_q(n)\neq\emptyset$ and $m_{\max}=\Theta(n)$, so that $\alpha^*_{\mathrm E}=1/\Otil(\ell n)$, $\gs_{\mathrm E}=\Otil(\ell n^2)$, $g_{\mathrm E}(d)=\Otil(\ell d^3)$ and, for $m=\Theta(n\log q)$, $\beta_{\mathrm E}=\Otil(\ell n^{3/2})$.

\begin{theorem}\label{thm:reach}\label{prop:reach}
Let $f>0$ and let $\ell$ be as in $\HEDCP[\ell]$.
\begin{enumerate}[label=(\roman*)]
  \item\label{it:usvp} Under $\HReg(f)$, $\uSVP_{\gamma}$ is in $\BQP$ for every $\gamma\geq\gu(d)$, in the worst case over all $d$-dimensional lattices, including ideal and module lattices.
  \item\label{it:bdd} Under $\HReg(f)$, $\BDD_\delta$ is in $\BQP$ for $\delta\leq1/(2\gu(d+1))$, and $\GapSVP_{\gamma_{\mathrm g}(d)}$ is in $\BQP$, in the worst case.
  \item\label{it:lwe} Under $\HReg(f)$, search-$\LWE_{n,q,\alpha,m}$ with $m\geq 2n\lg q$ and $\alpha\leq\alpha_{\mathrm{Reg}}(m)$ is in $\BQP$ for all but a $q^{-n}$ fraction of $A$; for $m=\Theta(n\log q)$ this covers every $\alpha\leq1/\Otil(n^{1+2f})$. More generally, every LWE-type instance (unstructured, ring, module, middle-product or rounding) whose embedding lattice is a $\gu$-unique-SVP instance, with the planted vector as its shortest vector, is solved. Decision-LWE is then solved directly.
  \item\label{it:sivp} Under $\HReg(f)$, $\SIVP_{\gs_{\mathrm{Reg}}(n)}$ is in $\BQP$ in the worst case (\cref{prop:sivp-reg}).
  \item\label{it:sis} Under $\HReg(f)$, $\SIS_{n,m,q,\beta}$ with $m\geq2n\lg q$ is in $\BQP$ whenever $\beta\geq\beta_{\mathrm{Reg}}(m)$, for all but a $q^{-n}$ fraction of $A$ (\cref{prop:sis-reg}). For $\beta\cdot\Otil(\sqrt n)\geq\gs_{\mathrm{Reg}}(n)$, that is for $\beta\geq\Otil(n^{1+2f})$, the cited worst-case hardness premise of \cite{MR07,MP13} becomes unavailable, under the reduction's parameter conditions.
  \item\label{it:edcp} Under $\HEDCP[\ell]$, let $q\leq\poly(n)$. Search-$\LWE_{n,q,\alpha,m}$ with $m\in\mathcal M_q(n)$ is solved with probability $1/\poly(n)$ whenever $\alpha<\alpha_{\mathrm E}(n,m,\ell)$, with largest certified bound $\alpha<\alpha^*_{\mathrm E}(n,\ell)=1/\Otil(\ell n)$ (\cref{prop:edcp-lwe}); $\SIVP_{\gs_{\mathrm E}}$ and $\GapSVP_{\gs_{\mathrm E}}$ are in $\BQP$, and so is $\BDD$ on $d$-dimensional lattices with gap at least $g_{\mathrm E}(d)$, in the worst case (\cref{cor:edcp-sivp}); and, if $\mathcal M_q(n)\neq\emptyset$, $\SIS_{n,m,q,\beta}$ with uniform $A$, $m\geq2n\lg q$ and $q>\sqrt2\,\beta_{\mathrm E}(n,m,\ell)$ is solved with probability $1/\poly(n)$ for every $\beta>\beta_{\mathrm E}(n,m,\ell)$ (\cref{prop:sis-edcp}). Since $\ell=\Omega(n\log q)$ for every algorithm at width $r_0$ (\cref{prop:ell-lower}), the restricted ball composition certifies noise scales at most $\alpha\approx n^{-2}$ and yields factors $\gamma\approx n^3$ and bounds $\beta\approx n^{5/2}$, up to polylogarithmic factors (\cref{cor:edcp-ceiling}).
  \item\label{it:hs} Under $\HHS$, the vectorization problem for every free and transitive action $\star$ of a cyclic group $\langle\mathfrak g\rangle\cong\Z_N$ on a set $X$ is solved, provided $N$ and $\mathfrak g$ are known and $x\mapsto\mathfrak g^x\star E$ is computable in quantum polynomial time for arbitrary $x\in\Z_N$. All required evaluation inputs must be public or efficiently recoverable, and the evaluation must use consistent encodings and polynomial total cost. Under $\HHS^{\mathrm{ab}}$ the same holds for free and transitive abelian group actions satisfying these conditions.
\end{enumerate}
\end{theorem}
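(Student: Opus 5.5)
The theorem is a composition statement, so I will prove it item by item, each as a chain of earlier lemmas with the parameters carried through.

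\textbf{Item \ref{it:usvp}} is \cite[Thm. 1.1]{Reg04} with $\gu(d)$ defined as a valid finite-dimensional promise bound. The $\DCP^f_N$ oracle is instantiated at $N=(2M)^d$, and the reduction uses no algebraic structure, so ideal and module lattices satisfying the ordinary promise are covered.

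\textbf{Item \ref{it:bdd}} follows from \cref{lem:lm09}. The Kannan embedding raises the dimension by one, giving $\BDD_{1/(2\gu(d+1))}$, and the Goldreich--Goldwasser route yields $\gamma_{\mathrm g}$.

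\textbf{Item \ref{it:lwe}.} I will use \cref{lem:qary}, so that for all but a $q^{-n}$ fraction of $A$ we have $\lambda_1(\Lambda_q(A))\geq q/4$. By \cref{lem:tails}(a), $\|e\|\leq\alpha q\sqrt m$ except with probability $2^{-m}$. With $\alpha\leq\alpha_{\mathrm{Reg}}(m)$ the target then lies within $\lambda_1/(2\gu(m+1))$, so item \ref{it:bdd} decodes it and linear algebra modulo prime $q$ recovers $s$. The general LWE-type clause is item \ref{it:usvp} applied to the embedding lattice. Decision-LWE follows by solving and checking whether the residual is short.

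\textbf{Items \ref{it:sivp} and \ref{it:sis}} are the propositions \cref{prop:sivp-reg,prop:sis-reg}. Their plan is as follows. Item \ref{it:bdd} gives a worst-case decoder for $\Lat^*$ at radius $d=\lambda_1(\Lat^*)/(2\gu(n+1))$. \Cref{lem:qstep-imperfect} turns this decoder into samples of $D_{\Lat,w}$ with $w=\sqrt n/(\sqrt2 d)$. By \cref{lem:tails}(b,d), $O(n)$ samples of norm at most $w\sqrt n$ span $\Lat$. Banaszczyk transference then converts $1/\lambda_1(\Lat^*)$ into $\lambda_n(\Lat)$, which gives $\gs_{\mathrm{Reg}}$. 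The one delicate point is ensuring $w\geq\sqrt2\,\eta$; I will handle it by checking both cases and using the $(1+1/n)$ slack. For SIS I apply the same sampler to $\Lambda_q^\perp(A)$, whose dual is $\frac1q\Lambda_q(A^\top)$ with minimum at least $1/4$. A Gaussian sample is nonzero with constant probability and has norm at most $w\sqrt m=\beta_{\mathrm{Reg}}$. The premise-unavailability remark is just a comparison with $\SIVP_{\beta\Otil(\sqrt n)}$.

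\textbf{Item \ref{it:edcp}.} For $m\in\mathcal M_q(n)$, I will compose \cite[Thm. 3]{BKSW18} (valid by \cref{lem:qary-count} and (B1)--(B3)) with $\HEDCP[\ell]$ at width $r=r_0$. This requires $\alpha<\alpha_{\mathrm E}$ and gives \cref{prop:edcp-lwe}. Feeding the result into \cref{lem:regev09} at $\alpha^*_{\mathrm E}/2$, with \cref{lem:round} for discrete errors, gives $\gs_{\mathrm E}$. \Cref{lem:bdd-sivp} applied on the dual gives $\BDD$ above gap $g_{\mathrm E}$. For SIS, I will use the LWE solver as a $\BDD$ decoder inside \cref{lem:qstep-imperfect}(b); it works only on average over targets, so $G$ is taken as the set of short nonzero outputs. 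This is \cref{prop:sis-edcp}. The ceiling then follows from \cref{prop:ell-lower} by substituting $\ell=\Omega(n\log q)$.

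\textbf{Item \ref{it:hs}.} Set $f_0(x)=\mathfrak g^x\star E$ and $f_1(x)=\mathfrak g^x\star E'$. Freeness and transitivity make both maps injective with $f_1(x)=f_0(x+s)$. \Cref{lem:hs-dcp} then prepares the states and $\HHS$ recovers the shift; the abelian case is the same argument with $\HHS^{\mathrm{ab}}$.

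I expect the main obstacle to be the average-case decoder step in \ref{it:edcp}. The success of the Theorem 3 composition is $1/\poly$ and is averaged over targets, so \cref{lem:qstep-imperfect}(b) needs $\delta_G\leq\bar p^2$, and that forces amplifying the decoder's success first. My first attempt will be repetition with verification, checking each candidate by the residual norm, to push $\bar p$ close to $1$.
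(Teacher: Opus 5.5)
Your top-level argument is the paper's. Items (i)--(iii) and (vii) are derived from \cite[Thm. 1.1]{Reg04}, \cref{lem:lm09}, \cref{lem:qary,lem:tails} and \cref{lem:hs-dcp} essentially as you describe. In (iii) the paper also uses the acceptance test $\|b-v\|\le\alpha q\sqrt m$, which isolates the closest point because $2\alpha q\sqrt m<q/4$ and so permits amplification, and it records the $q^{n-m}$ full-rank exception. Items (iv)--(vi) are discharged by the propositions of \cref{sec:proofs}. Where your sketches of those propositions go beyond citation, though, they misplace the difficulty, and the remedy you plan for (vi) is neither needed nor justified.

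\emph{The obstacle in (vi).} In \cref{prop:sis-edcp} there is nothing to amplify before invoking \cref{lem:qstep-imperfect}(b). Take $G=\{z\in\Lat\setminus\{0\}:\|z\|\le\beta'\}$. By \cref{lem:tails}(a) and the Poisson-summation bound on the probability of the zero vector (as in \cref{prop:sis-reg}), $1-D_{\Lat,w}(G)\le2^{-m}+2^{-3m/2}$, so $\delta_G=2^{-\Omega(m)}$. Meanwhile $\bar p\ge p_{\mathcal S}(A)-2^{-\Omega(m)}$ is inverse-polynomial on average over $A$. Hence $\delta_G\le\bar p^{\,2}$ holds whenever it matters, and the output lies in $G$ with probability at least $\max\{0,p_{\mathcal S}(A)-2^{-\Omega(m)}\}^2$. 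Convexity of $x\mapsto\max\{0,x-\delta\}^2$ then averages this over $A$, which already gives the $1/\poly(n)$ success the statement asks for, with outputs checked ($Az\equiv0$, $z\ne0$, $\|z\|\le\beta$). Handling a decoder of only inverse-polynomial \emph{average} success is exactly what part (b) is for.

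Your proposed fix, repeating the decoder with a residual check to push $\bar p$ close to $1$, does not follow from repetition. Repetition raises the success probability $p(t)$ for each fixed target. But $\bar p=\mathbb E_t[p(t)]$ is averaged over the offset $y$, and repetition cannot help on targets where $p(t)$ is negligible. Bounding the measure of those targets would need a separate argument about how the solver's average-case guarantee spreads over the conditional error laws $D_{\Z^{m},\alpha q/\sqrt2,\,qy}$, and you do not supply one.

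\emph{The delicate point in (iv).} The smoothing condition is not it: $w\ge\sqrt{2n}\,\gu(n+1)/\lambda_1(\Lat^*)\ge\sqrt2\,\eta_{2^{-n}}(\Lat)$ is immediate from \cref{lem:tails}(c), since $\gu\ge1$. The real issue is that the radius $d=\lambda_1(\Lat^*)/(2\gu(n+1))$, an input to the quantum step, depends on the unknown $\lambda_1(\Lat^*)$. \Cref{prop:sivp-reg} estimates it within $2^{(n-1)/2}$ by LLL on $\Lat^*$. It then runs the sampler for the $n^2+1$ radii $d_i=\|v\|(1+1/n)^{-i}/(2\gu(n+1))$ and pools the outputs; the factor $1+1/n$ in $\gs_{\mathrm{Reg}}$ is the cost of this geometric search.

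The $\HReg(f)$ decoder must also first be amplified to error $e^{-n}$ by repetition with the check $\|t-x\|\le d$. This is legitimate because it is a worst-case decoder and $d<\lambda_1(\Lat^*)/2$, and it is needed so that \cref{lem:qstep-imperfect}(a) yields samples $2^{-\Omega(n)}$-close to $D_{\Lat,w}$. Finally, the inequality turning $1/\lambda_1(\Lat^*)$ into $\lambda_n(\Lat)$ is the elementary direction $\lambda_1(\Lat^*)\lambda_n(\Lat)\ge1$; the upper transference bound would cost a factor $n$.
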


\begin{proof}
\ref{it:usvp} is \cite[Thm. 1.1]{Reg04} applied to the algorithm postulated by $\HReg(f)$: the reduction of \cref{sec:regev02} generates $\DCP_N$ registers that are bad with probability at most $1/(\log N)^f$, which is the tolerance assumed, and it makes no use of algebraic structure in the input lattice. Keeping the shortest of the candidates returned by repeated runs amplifies the success probability.

\ref{it:bdd}: compose \ref{it:usvp} with \cref{lem:lm09}; the embedding used for $\BDD$ raises the dimension by one.

\ref{it:lwe}: let $b=As+e$ with $e\leftarrow D_{\Z^m,\alpha q}$. By \cref{lem:qary}, $\lambda_1(\Lambda_q(A))\geq q/4$ for all but a $q^{-n}$ fraction of $A$, and by \cref{lem:tails}(a), $\mathrm{dist}(b,\Lambda_q(A))\leq\|e\|\leq\alpha q\sqrt m$ except with probability $2^{-m}$. If $\alpha\leq\alpha_{\mathrm{Reg}}(m)$, then $\alpha q\sqrt m\leq(q/4)/(2\gu(m+1))\leq\lambda_1(\Lambda_q(A))/(2\gu(m+1))$, so $b$ is a $\BDD_{1/(2\gu(m+1))}$ instance, solved by \ref{it:bdd}. A candidate lattice point $v$ is accepted if $\|b-v\|\leq\alpha q\sqrt m$; since $2\alpha q\sqrt m<q/4$, only the closest point passes, which allows amplification. The closest point is $As\bmod q$, from which $s$ follows by linear algebra, since $A$ has full column rank except with probability at most $q^{n-m}$. For $m=\Theta(n\log q)$ and sufficiently large dimension, the bound $\gu(m+1)=O(m^{1/2+2f})$ gives the sufficient noise scale $1/\Otil(n^{1+2f})$, with constants depending on the fixed reduction and witness. For decision-LWE, a uniform $b$ lies within distance $\alpha q\sqrt m$ of $\Lambda_q(A)$ only with negligible probability, by a counting argument, so no candidate passes. The general statement is \ref{it:usvp} applied to the Kannan embedding of the instance; it requires a bound on the gap, which for structured instances must be verified separately since \cref{lem:qary} concerns uniform $A$.

\ref{it:sivp}, \ref{it:sis} and \ref{it:edcp} are proved in \cref{sec:proofs}: \ref{it:sivp} is \cref{prop:sivp-reg}; the first claim of \ref{it:sis} is \cref{prop:sis-reg}, and the second holds because \cite{MR07,MP13} base $\SIS_\beta$ on $\SIVP_{\beta\cdot\Otil(\sqrt n)}$; \ref{it:edcp} combines \cref{prop:edcp-lwe,cor:edcp-sivp,prop:sis-edcp,prop:ell-lower,cor:edcp-ceiling}.

\ref{it:hs}: the functions $f_0(x)=\mathfrak g^x\star E$ and $f_1(x)=\mathfrak g^x\star E'$ with $E'=\mathfrak g^{a}\star E$ are injective (freeness) and satisfy $f_1(x)=f_0(x+a)$; apply \cref{lem:hs-dcp}. For $\HHS^{\mathrm{ab}}$, decompose the group into cyclic factors and apply the generalised hypothesis to the resulting abelian hidden shift.
\end{proof}

\begin{corollary}
\label{cor:hs-lattice}
Under $\HHS$, there is a constant $f_{\mathcal A}>0$ such
that all conclusions of \cref{thm:reach} under
$\HReg(f_{\mathcal A})$ hold, with their stated sample,
distribution and promise conditions. In particular,
unique-SVP, GapSVP and SIVP admit quantum polynomial-time
algorithms at some polynomial approximation factors.
The SIS and LWE conclusions hold at the corresponding
solver-dependent bounds.
\end{corollary}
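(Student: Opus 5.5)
The plan is to derive the corollary directly from \cref{lem:hs-rare} and \cref{thm:reach}, checking only that the failure exponent the lemma produces is a legitimate fixed constant for $\HReg$. Fix a witnessing algorithm $\mathcal A$ for $\HHS$. By hypothesis it runs in time $\poly(\log N)$ and has inverse-polynomial success, so there are constants $C,a,c,b$ with $Q(L)\leq CL^a$ registers and success $p(L)\geq cL^{-b}$, for every secret and arbitrary offsets, where $L=\max\{1,\log_2 N\}$. Set $f_{\mathcal A}:=a+b+1$, which is a positive constant depending only on $\mathcal A$. By \cref{lem:hs-rare}, under the corruption model of $\DCP_N^{f_{\mathcal A}}$ the success of $\mathcal A$ is at least $p(L)-Q(L)L^{-f_{\mathcal A}}\geq cL^{-b}/2$ for all sufficiently large $L$. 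The finitely many small moduli can be solved by brute force in constant time. Hence $\mathcal A$, padded for small inputs, witnesses $\HReg(f_{\mathcal A})$.

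Next I would apply \cref{thm:reach} verbatim with $f=f_{\mathcal A}$. Parts \ref{it:usvp}--\ref{it:sis} then give unique-SVP at $\gu(d)=O(d^{1/2+2f_{\mathcal A}})$, $\GapSVP$ at $\Otil(d^{1+2f_{\mathcal A}})$, $\SIVP$ at $O(n^{3/2+2f_{\mathcal A}})$, and the LWE and SIS statements at $\alpha_{\mathrm{Reg}}(m)$ and $\beta_{\mathrm{Reg}}(m)$. Each of these comes with its stated sample condition ($m\geq2n\lg q$), its distribution condition (uniform $A$ outside a $q^{-n}$ fraction, discrete Gaussian error) and its promise condition. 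Since $f_{\mathcal A}$ is constant, all these factors are polynomial, which gives the ``in particular'' sentence. Part \ref{it:hs} already holds under $\HHS$ directly, and part \ref{it:edcp} concerns a different hypothesis and is not claimed.

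The only point needing care, and the one I expect to be the main obstacle, is that the bound $\Gamma_{f,\mathcal A}$ and the constants in \cref{thm:reach} depend on the witness. I would therefore state explicitly that the thresholds are solver-dependent, as the corollary says. I would also verify that the time bound of Regev's reduction \cite[Thm. 1.1]{Reg04} is $\poly(n)$ once $f_{\mathcal A}$ is fixed. This holds because $\log N=\Theta(n^2)$, so the $\poly(\log N)$ registers and running time of $\mathcal A$ are $\poly(n)$. Finally, the inverse-polynomial success of $\mathcal A$ remains inverse-polynomial in $n$ and is amplified by repetition, as in the proof of \cref{thm:reach}\ref{it:usvp}.
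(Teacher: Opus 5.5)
Your proposal is correct and follows the paper's own proof: apply \cref{lem:hs-rare} to obtain $\HReg(f_{\mathcal A})$ for a fixed $f_{\mathcal A}>a+b$ (your choice $a+b+1$ works), then invoke the fixed-$f$ items \ref{it:usvp}--\ref{it:sis} of \cref{thm:reach}. Your additional checks are consistent with what the paper leaves implicit: handling the finitely many small moduli, polynomial running time via $\log N=\Theta(n^2)$, amplification, and the solver dependence of $\Gamma_{f,\mathcal A}$.
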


\begin{proof}
Apply \cref{lem:hs-rare}, then the fixed-parameter statements
of \cref{thm:reach}.
\end{proof}

\subsection{Proofs of the compositions}\label{sec:proofs}

We give explicit derivations of items \ref{it:sivp}--\ref{it:edcp}
of \cref{thm:reach} from the results recalled in \cref{sec:technical}.
The reduction mechanisms have antecedents: in particular,
\cite[Sec. 1.2]{LM09} already records the quantum reduction
$\SIVP_{O(n\gamma)}\leq_Q\BDD_{1/\gamma}$, and the average-case
decoder route to SIS follows \cite{SSTX09}. The propositions below
specify the resulting bounds under our solver hypotheses, with the
dimension shifts, matrix promises, Gaussian conventions, preparation
precision and success probabilities needed for composition.
\Cref{lem:qstep-imperfect} supplies the coherent implementation and
error analysis for the decoders used here. The information-theoretic bound of \cref{prop:ell-lower} is due to an
anonymous reviewer.

\begin{proposition}[$\SIVP$ under $\HReg(f)$]\label{prop:sivp-reg}
Assume $\HReg(f)$. There is a quantum polynomial-time algorithm that, given a basis of a full-rank lattice $\Lat\subset\R^n$, outputs $n$ linearly independent vectors of $\Lat$ of length at most $\gs_{\mathrm{Reg}}(n)\,\lambda_n(\Lat)$, except with probability $2^{-\Omega(n)}$.
\end{proposition}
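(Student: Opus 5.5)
The plan is to iterate Regev's quantum step (\cref{lem:qstep-imperfect}), using a worst-case decoder for the dual lattice supplied directly by $\HReg(f)$ instead of an LWE oracle. By \cref{thm:reach}\,\ref{it:bdd}, applied to the $n$-dimensional lattice $\Lat^*$, we have a quantum polynomial-time algorithm for $\BDD_\delta$ on $\Lat^*$ with $\delta=1/(2\gu(n+1))$. This solves $\CVP_{\Lat^*,d}$ for every $d\leq\lambda_1(\Lat^*)/(2\gu(n+1))<\lambda_1(\Lat^*)/2$.

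First, the decoder is made reliable. Its inverse-polynomial success can be amplified to error $2^{-\Omega(n)}$ by repeating it and keeping the candidate closest to the target, since within the BDD radius the closest lattice point is unique. The decoder is then an $\varepsilon$-error CVP oracle with $\varepsilon=2^{-\Omega(n)}$ on every target within distance $d$. The last clause of \cref{lem:qstep-imperfect} then gives, for each such $d$, samples within statistical distance $2^{-\Omega(n)}$ of $D_{\Lat,w}$ with $w=\sqrt n/(\sqrt2\,d)$.

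There is one subtlety: $\lambda_1(\Lat^*)$ is unknown, so $d$ cannot be fixed directly. I would guess it instead. An LLL basis gives a $2^{O(n)}$-approximation to $\lambda_1(\Lat^*)$, so one tries the polynomially many values $d_i=(1+1/n)^{-i}d_{\max}$. For the correct index, $d\geq\lambda_1(\Lat^*)/\bigl(2(1+1/n)\gu(n+1)\bigr)$, which gives
\[
w\leq(1+1/n)\sqrt{2n}\,\gu(n+1)/\lambda_1(\Lat^*)\leq(1+1/n)\sqrt2\,n^{1/2}\gu(n+1)\,\lambda_n(\Lat)/\sqrt{n}\cdot\sqrt n .
\]
Here the last step uses Banaszczyk's transference $\lambda_1(\Lat^*)\lambda_n(\Lat)\leq n$. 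This yields $w\sqrt n\leq\gs_{\mathrm{Reg}}(n)\lambda_n(\Lat)$, matching the constant in $\gs_{\mathrm{Reg}}$ exactly: $(1+1/n)\sqrt2\,n\,\gu(n+1)$.

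Next I would make sure the Gaussian is wide enough to generate the lattice. By \cref{lem:tails}(c), $\eta_{2^{-n}}(\Lat)\leq\sqrt n/\lambda_1(\Lat^*)$, so $w\geq\sqrt2\,\eta$ as long as $\gu\geq1$ and $d$ is at most about $\lambda_1(\Lat^*)/2$. To handle guesses that are too large, where the decoder may fail, I would not rely on the outputs of those runs. Instead, for every guess I draw $O(n)$ samples and retain only lattice vectors of norm at most $w\sqrt n$. Membership in $\Lat$ is checkable, so invalid outputs are discarded, and among all guesses I take the one yielding $n$ independent vectors with the smallest maximal norm.

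The remaining steps are routine. For the correct guess, \cref{lem:tails}(a) bounds every sample norm by $w\sqrt n$ except with probability $2^{-n}$ each. \cref{lem:tails}(b) makes each sample escape the span of the previous ones with probability at least $1/10$, so $O(n)$ samples contain $n$ independent vectors except with probability $2^{-\Omega(n)}$. The statistical distance $2^{-\Omega(n)}$ per sample adds only $\poly(n)2^{-\Omega(n)}$ to the failure probability.

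I expect the main obstacle to be the guessing and validation of $d$. The good guess must land within a factor $(1+1/n)$ of the ideal radius so that the stated constant in $\gs_{\mathrm{Reg}}$ is met. In addition, outputs from oversized guesses, where \cref{lem:qstep-imperfect} gives no guarantee, must be harmless. Selecting by the maximum norm among verified independent lattice vectors handles this, because it can only improve on the good guess.
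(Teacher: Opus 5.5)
Your route is the paper's own: amplify the $\BDD$ oracle of \cref{thm:reach}\ref{it:bdd} on $\Lat^*$ to exponentially small error and feed it to \cref{lem:qstep-imperfect}. Then guess $d$ geometrically, with ratio $1+1/n$, starting from an LLL vector of $\Lat^*$. For each guess, draw $O(n)$ samples, discard non-lattice outputs and keep $n$ independent vectors of minimal maximal length, and analyse only the good guess with \cref{lem:tails}(a)--(c). Your handling of oversized guesses by this selection rule is exactly what makes the paper's pooling harmless.

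One step is justified backwards, however. To obtain $w\sqrt n\le\gs_{\mathrm{Reg}}(n)\lambda_n(\Lat)$ you need the upper bound $1/\lambda_1(\Lat^*)\le\lambda_n(\Lat)$, that is, $\lambda_1(\Lat^*)\lambda_n(\Lat)\ge1$. The inequality you cite, Banaszczyk's $\lambda_1(\Lat^*)\lambda_n(\Lat)\le n$, only gives $1/\lambda_1(\Lat^*)\ge\lambda_n(\Lat)/n$. That is the wrong direction and bounds nothing here.

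The bound you need is the elementary half of transference, which the paper proves directly. A nonzero dual vector has a nonzero integer inner product with one of $n$ linearly independent lattice vectors of length at most $\lambda_n(\Lat)$, so $\lambda_1(\Lat^*)\lambda_n(\Lat)\ge1$. With this replacement your middle expression is simply $(1+1/n)\sqrt{2n}\,\gu(n+1)\lambda_n(\Lat)$, and the constant $(1+1/n)\sqrt2\,n\,\gu(n+1)$ comes out exactly.

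A minor point remains. \Cref{lem:qstep-imperfect} requires the decoder always to output a point of $\Lat^*$. So when you keep the candidate closest to the target, first discard candidates that fail the membership test; the paper returns the first candidate in $\Lat^*$ within distance $d$, or $0$.
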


\begin{proof}
\emph{A reliable decoder.} By \cref{thm:reach}\ref{it:bdd} there is a quantum algorithm $\mathcal B$ that solves $\BDD_{1/(2\gu(n+1))}$ on $n$-dimensional lattices with success probability at least $1/p(n)$, for a polynomial $p$. For $d>0$, let $\mathcal B_d$ run $\mathcal B$ on its target $t$ (for the lattice $\Lat^*$) $n\,p(n)$ times and return the first candidate $x\in\Lat^*$ with $\|t-x\|\leq d$, or $0$ if there is none. If $d\leq\lambda_1(\Lat^*)/(2\gu(n+1))$ and $\mathrm{dist}(t,\Lat^*)\leq d$, each run returns the closest point with probability at least $1/p(n)$, and, since $d<\lambda_1(\Lat^*)/2$, the closest point is the only point of $\Lat^*$ within distance $d$; so $\mathcal B_d$ errs with probability at most $(1-1/p(n))^{np(n)}\leq e^{-n}$.

\emph{Sampling.} For such $d$, \cref{lem:qstep-imperfect} with $\mathcal D=\mathcal B_d$ yields samples within statistical distance $2^{-\Omega(n)}$ of $D_{\Lat,w}$, $w=\sqrt n/(\sqrt2\,d)$.

\emph{Guessing $d$.} The LLL algorithm \cite{LLL82} applied to $\Lat^*$ returns $v\in\Lat^*\setminus\{0\}$ with $\lambda_1(\Lat^*)\leq\|v\|\leq2^{(n-1)/2}\lambda_1(\Lat^*)$. For $i=0,1,\dots,n^2$ let $d_i:=\|v\|(1+1/n)^{-i}/(2\gu(n+1))$. Since $d_0\geq\lambda_1(\Lat^*)/(2\gu(n+1))>d_{n^2}$ and consecutive values differ by the factor $1+1/n$, some index $i^*$ satisfies
\[
  \frac{\lambda_1(\Lat^*)}{(1+1/n)\,2\gu(n+1)}<d_{i^*}\leq\frac{\lambda_1(\Lat^*)}{2\gu(n+1)}.
\]
For each $i$, draw $40n$ samples with $d=d_i$, discard those that are zero or not in $\Lat$ (membership is a linear-algebra test), and pool the rest. Output $n$ linearly independent vectors of the pool, chosen greedily by increasing length; by the exchange property of linear independence, their maximum length is the smallest possible among linearly independent $n$-subsets of the pool.

\emph{Analysis at $i^*$.} Let $w^*:=\sqrt n/(\sqrt2\,d_{i^*})$. A nonzero dual vector has a nonzero integer inner product with one of $n$ linearly independent lattice vectors of length at most $\lambda_n(\Lat)$, so $\lambda_1(\Lat^*)\lambda_n(\Lat)\geq1$ and
\[
  w^*\sqrt n\;<\;\frac{(1+1/n)\sqrt2\,n\,\gu(n+1)}{\lambda_1(\Lat^*)}\;\leq\;\gs_{\mathrm{Reg}}(n)\,\lambda_n(\Lat).
\]
On the other hand $w^*\geq\sqrt{2n}\,\gu(n+1)/\lambda_1(\Lat^*)\geq\sqrt2\,\eta_{2^{-n}}(\Lat)$ by \cref{lem:tails}(c). By \cref{lem:tails}(a,b), a sample of $D_{\Lat,w^*}$ has length at most $w^*\sqrt n$ except with probability $2^{-n}$, and lies outside any fixed hyperplane with probability at least $1/10$. While the samples drawn so far span a proper subspace, each new sample leaves it with probability at least $1/10-2^{-\Omega(n)}\geq1/20$; by a Chernoff bound, $40n$ samples contain $n$ linearly independent ones except with probability $2^{-\Omega(n)}$, and all of them have length at most $w^*\sqrt n$ except with probability $40n\,2^{-n}$. The greedy choice then has maximum length at most $\gs_{\mathrm{Reg}}(n)\lambda_n(\Lat)$.
\end{proof}

\begin{proposition}[SIS under $\HReg(f)$]\label{prop:sis-reg}
Assume $\HReg(f)$, let $q$ be prime and $m\geq2n\lg q$. There is a quantum polynomial-time algorithm that, given $A\in\Zq^{n\times m}$ with $\lambda_1^\infty(\Lambda_q(A^{\top}))\geq q/4$, which holds for all but a $q^{-n}$ fraction of $A$ (\cref{lem:qary}), outputs a nonzero $z\in\Lambda_q^\perp(A)$ with $\|z\|\leq\beta_{\mathrm{Reg}}(m)$, except with probability $2^{-\Omega(m)}$.
\end{proposition}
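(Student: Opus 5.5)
Let $\Lat:=\Lambda_q^\perp(A)\subset\Z^m$, a full-rank lattice with dual $\Lat^*=\frac1q\Lambda_q(A^{\top})$. The promise $\lambda_1^\infty(\Lambda_q(A^{\top}))\geq q/4$ gives $\lambda_1(\Lat^*)\geq\lambda_1^\infty(\Lat^*)\geq 1/4$. I would reuse the sampling part of the proof of \cref{prop:sivp-reg} in dimension $m$, but with a fixed, known decoding radius instead of the guessing loop. Set $d:=1/(8\gu(m+1))$. Then $d\leq\lambda_1(\Lat^*)/(2\gu(m+1))$ and $d<\lambda_1(\Lat^*)/2$.

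\emph{Decoder.} By \cref{thm:reach}\ref{it:bdd}, $\BDD_{1/(2\gu(m+1))}$ on $m$-dimensional lattices is solved with probability at least $1/p(m)$. Running it $m\,p(m)$ times on the basis of $\Lat^*$ and keeping the first candidate within distance $d$ gives a decoder $\mathcal B_d$. This decoder errs with probability at most $e^{-m}$ on every target within distance $d$ of $\Lat^*$, because the closest point is the unique point of $\Lat^*$ in that ball. The last clause of \cref{lem:qstep-imperfect} then gives samples within statistical distance $2^{-\Omega(m)}$ of $D_{\Lat,w}$, where $w=\sqrt m/(\sqrt2\,d)=4\sqrt2\,\sqrt m\,\gu(m+1)$.

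\emph{Extracting a short nonzero vector.} By \cref{lem:tails}(a), a sample has norm at most $w\sqrt m=4\sqrt2\,m\,\gu(m+1)=\beta_{\mathrm{Reg}}(m)$ except with probability $2^{-m}$. For nonzeroness I would check $w\geq\sqrt2\,\eta_{2^{-m}}(\Lat)$. By \cref{lem:tails}(c), $\eta_{2^{-m}}(\Lat)\leq\sqrt m/\lambda_1(\Lat^*)\leq4\sqrt m$, and since $\gu\geq1$ we get $w\geq4\sqrt2\sqrt m$. \Cref{lem:tails}(b) then says each sample avoids any fixed hyperplane, in particular $\{0\}$, with probability at least $1/10$. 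Drawing $O(m)$ samples, discarding zeros and non-members of $\Lat$ (checking $Az\equiv0$), and outputting any remaining vector of norm at most $\beta_{\mathrm{Reg}}(m)$ succeeds except with probability $2^{-\Omega(m)}$ by a Chernoff bound. This absorbs the statistical distance of the sampler and the $2^{-m}$ tail failures.

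\emph{Main obstacle.} The delicate step is the constant bookkeeping. The decoding radius must simultaneously be below $\lambda_1(\Lat^*)/(2\gu)$, so that the worst-case BDD oracle applies with the right dimension shift (hence $\gu(m+1)$, not $\gu(m)$), and be known without guessing. The promise gives only the lower bound $1/4$ on $\lambda_1(\Lat^*)$, and that lower bound is exactly what the fixed choice $d=1/(8\gu(m+1))$ uses. One must also confirm that the statistical-distance loss of the reversible decoder in \cref{lem:qstep-imperfect} stays $2^{-\Omega(m)}$ after amplification, and that the smoothing condition needed for nonzero samples follows from the same promise.
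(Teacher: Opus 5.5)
Your proof is correct and follows the paper's proof step for step: the same fixed radius $d=1/(8\gu(m+1))$, the amplified decoder $\mathcal B_d$, the last clause of \cref{lem:qstep-imperfect}, the width $w=4\sqrt{2m}\,\gu(m+1)$ and the tail bound $w\sqrt m=\beta_{\mathrm{Reg}}(m)$. The only difference is the nonzeroness step.

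The paper uses Poisson summation: $\rho_w(\Lat)\geq w^m/\det(\Lat)\geq w^m/q^n\geq2^{3m/2}$, using $w\geq4$ and $q^n\leq2^{m/2}$. So a single sample is zero with probability at most $2^{-3m/2}$, and no smoothing check or repetition is needed. Your route through \cref{lem:tails}(b,c) instead gives only constant per-sample success and needs $O(m)$ samples. It is valid once you note that $\{0\}$ is contained in every hyperplane through the origin, rather than being a hyperplane itself.
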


\begin{proof}
Let $\Lat:=\Lambda_q^\perp(A)$, of rank $m$; its dual is $\Lat^*=q^{-1}\Lambda_q(A^{\top})$, so $\lambda_1(\Lat^*)\geq\lambda_1^\infty(\Lat^*)\geq1/4$. Set $d:=1/(8\gu(m+1))$, so that $d\leq\lambda_1(\Lat^*)/(2\gu(m+1))$ and $d<\lambda_1(\Lat^*)/2$. With the decoder $\mathcal B_d$ of the previous proof, now for the $m$-dimensional lattice $\Lat^*$, \cref{lem:qstep-imperfect} yields samples within statistical distance $2^{-\Omega(m)}$ of $D_{\Lat,w}$ with $w=\sqrt m/(\sqrt2\,d)=4\sqrt{2m}\,\gu(m+1)$. By \cref{lem:tails}(a), a sample has length at most $w\sqrt m=\beta_{\mathrm{Reg}}(m)$ except with probability $2^{-m}$. It is zero with probability $1/\rho_w(\Lat)$, and by the Poisson summation formula
\[
  \rho_w(\Lat)=\det(\Lat^*)\,w^m\rho_{1/w}(\Lat^*)\;\geq\;\frac{w^m}{\det(\Lat)}\;\geq\;\frac{w^m}{q^n},
\]
because every term of $\rho_{1/w}(\Lat^*)$ is positive and the term at $0$ equals $1$, and $\det(\Lat)=[\Z^m:\Lat]\leq q^n$. Since $w\geq4$ and $q^n\leq2^{m/2}$, a sample is zero with probability at most $2^{-3m/2}$. Output the first sample that is nonzero and has length at most $\beta_{\mathrm{Reg}}(m)$.
\end{proof}

\begin{proposition}[LWE under $\HEDCP{[\ell]}$]\label{prop:edcp-lwe}
Assume $\HEDCP[\ell]$, let $q\leq\poly(n)$ be prime and $m\in\mathcal M_q(n)$. If $\alpha<\alpha_{\mathrm E}(n,m,\ell)$, there is a quantum polynomial-time algorithm that, given $m$ samples of $\LWE_{n,q,\alpha}$ with a uniform matrix, any secret and errors $D_{\Z,\alpha q}$, recovers the secret with probability at least $1/\poly(n)$. Let $p_{\mathcal A}(n)$ be an inverse-polynomial lower bound on the EDCP solver's success at the parameters used.

Given $m'\geq2n\lg q$ independent further samples
with the same secret, and
$\alpha<1/(8\sqrt{m'})$, the recovery procedure can
be repeated and checked against these further
samples. For all but a
\[
 \delta_{\rm mat}
   =q^{-n}+q^{n-m'}+(3/4)^m
\]
fraction of pairs of matrices, $T$ independent
repetitions recover the secret with probability at
least
\[
 1-2^{-m}-2^{-m'}
   -\exp\!\bigl(-T p_{\mathcal A}(n)/8\bigr).
\]
The probability here is over the two error vectors
and the internal randomness, with the matrices fixed.
Over all matrices as well, subtract
$\delta_{\rm mat}$ from the displayed lower bound.
\end{proposition}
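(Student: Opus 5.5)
The plan is to compose the forward ball reduction of \cite[Thm. 3]{BKSW18} with the solver postulated by $\HEDCP[\ell]$, and then to add a verification layer that boosts the inverse-polynomial success.

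\emph{First claim.} Given the $m$ samples $(A,b)$ with $m\in\mathcal M_q(n)$, I apply the reduction of \cite[Thm. 3]{BKSW18} with the width target $r:=r_0$. Conditions (B1)--(B3) are exactly the hypotheses its proof uses. By \cref{lem:qary-count}, (B2) gives the lower bound on $\lambda_1(\Lambda_q(A))$ except with probability $(3/4)^m$ over $A$. (B3) keeps the balls below that minimum, and (B1) together with $\kappa=\lceil\log^2n\rceil$ makes the Gaussian truncation and the failure events $2^{-\Omega(\kappa)}$ negligible. The width condition $r<1/(c\sqrt m\,\kappa\,\ell\,\alpha\,q^{n/m})$ with $r=r_0$ is equivalent to $\alpha<\alpha_{\mathrm E}(n,m,\ell)$. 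So the reduction outputs $\ell$ states of $\mathrm G\text{-}\EDCP^\ell_{n,q,r_0}$ with secret $s$, up to negligible trace distance. All $\ell$ states come from the same samples, and the reduction's analysis already requires each of them to be valid, which is where the factor $\ell$ enters.

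Since $q\leq\poly(n)$ and $r_0\geq r_0$, the hypothesis applies for arbitrary offsets and every secret. It recovers $s$ with probability at least $p_{\mathcal A}(n)$, minus the preparation distance, and this gives the first statement.

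\emph{Checking and amplification.} Fix matrices $A$ and $A'$, where $A'$ is the matrix of the $m'$ further samples. I make three matrix assumptions:
\begin{enumerate}
  \item the lower bound of \cref{lem:qary-count} holds for $A$, which fails for at most a $(3/4)^m$ fraction;
  \item $\lambda_1(\Lambda_q(A'))\geq q/4$ by \cref{lem:qary}, which fails for at most a $q^{-n}$ fraction;
  \item $A'$ has full column rank, which fails for at most a $q^{n-m'}$ fraction.
\end{enumerate}
Together these give $\delta_{\rm mat}$. By \cref{lem:tails}(a), with probability at least $1-2^{-m'}$ the error $e'$ satisfies $\|e'\|\leq\alpha q\sqrt{m'}<q/8$. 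The test is to accept a candidate $\tilde s$ if $\|b'-A'\tilde s\bmod q\|\leq\alpha q\sqrt{m'}$. The true secret passes. For a wrong $\tilde s$, the difference $A'(s-\tilde s)$ is a nonzero vector of $\Lambda_q(A')$: it is nonzero by full rank and has norm at least $q/4$. It would therefore have to be within $2\alpha q\sqrt{m'}<q/4$ of a lattice vector, which is impossible. So no false candidate is ever accepted.

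Run $T$ independent repetitions. Each draws fresh internal randomness but reuses the same $(A,b)$, because only $m$ samples are available. The procedure outputs the first accepted candidate.

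\emph{Main obstacle.} The main obstacle is that the repetitions are not independent over the error vector $e$: the matrices are fixed and $e$ is shared. To handle this, I condition on the good event for $e$ inside the reduction's analysis, which is a norm bound of the form $\|e\|\leq\alpha q\sqrt m$ failing with probability at most $2^{-m}$. Conditioned on that event, each run succeeds with probability at least $p_{\mathcal A}(n)/2$, say, absorbing negligible losses. I expect this to require re-examining where \cite[Thm. 3]{BKSW18} uses the randomness of $e$, and making sure its guarantee holds pointwise for each good $e$ rather than only on average.

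Given this, the runs are independent over internal randomness alone. A Chernoff bound (or simply $(1-p/2)^T\leq e^{-Tp/2}$, which is stronger) shows that all $T$ runs fail with probability at most $\exp(-Tp_{\mathcal A}(n)/8)$. A union bound over this event, the $2^{-m}$ event for $e$ and the $2^{-m'}$ event for $e'$ gives the displayed bound. Finally, averaging over the matrices subtracts $\delta_{\rm mat}$.
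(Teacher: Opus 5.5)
Your architecture matches the paper's. You run the ball construction of \cite[Thm.~3]{BKSW18} at a width in $[r_0,r^*)$ (taking $r=r_0$ is fine) and hand the $\ell$ states to the $\HEDCP[\ell]$ solver. You reject wrong candidates on the holdout samples via $\lambda_1(\Lambda_q(A'))\geq q/4$ and full rank of $A'$, fix good matrices, condition on the two error-norm events, and repeat with fresh internal randomness. Your verification argument and your pointwise-in-$e$ conditioning also match the paper.

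The gap is the per-run success bound. You assert that the construction outputs all $\ell$ states up to negligible trace distance, and then that each run succeeds with probability at least $p_{\mathcal A}(n)/2$. Neither is supported. The factor $\ell$ in the width bound only makes each state invalid with probability about $1/\ell$. For fixed good samples, the guarantee is therefore that all $\ell$ states are valid with probability at least $(1-1/\ell)^\ell\geq1/4$, a bound that never reaches $1/e$. Since the solver is only guaranteed $p_{\mathcal A}$ on valid inputs, nothing gives $p_{\mathcal A}/2$. Moreover, the ball-state preparation is used at an arbitrary fixed inverse-polynomial accuracy \cite[App.~B.1, Lem.~14]{BKSW18}, not a negligible one. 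Its error therefore cannot simply be absorbed.

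What is missing is the error budget that produces the constant $8$ in the statement. The ideal composed procedure succeeds with probability at least $p_{\mathcal A}/4$. Choose the per-use ball-state error at most $p_{\mathcal A}/(16\ell)$ and the total truncation and implementation error at most $p_{\mathcal A}/16$. The hybrid bound and contractivity under the complete measurement instrument (bad branches included, no conditioning on a rare outcome) then place the actual procedure within $p_{\mathcal A}/8$ of the ideal one. Each run therefore succeeds with probability at least $p_{\mathcal A}/8$, which gives exactly $(1-p_{\mathcal A}/8)^T\leq\exp(-Tp_{\mathcal A}/8)$. Your sharper $e^{-Tp_{\mathcal A}/2}$ is not available, although your first, qualitative claim survives with $p_{\mathcal A}/4$ minus this budget.

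Separately, your matrix accounting omits full column rank of $A$ modulo $q$. The preparation needs it to output states with secret $s$, and the first-minimum bound does not imply it. A naive union bound would add $q^{n-m}$ to the $A$-term of $\delta_{\rm mat}$. The paper keeps that term at $(3/4)^m$ by combining the unrounded estimate $(3/4)^m/\sqrt{\pi m}$ from the proof of \cref{lem:qary-count} with $q^{n-m}\leq(2\sqrt{2\pi e})^{-m}$, which follows from (B2).
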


\begin{proof}
Since $\alpha<\alpha_{\mathrm E}$, fix $r\in[r_0,r^*)$ with $r^*=1/(c\sqrt m\,\kappa\ell\alpha q^{n/m})>r_0$. Conditions (B1)--(B3) supply the dimension, first-minimum and radius premises used in the ball construction of \cite[Thm. 3 and App. B.1]{BKSW18}. The good preparation matrix event also requires full column rank. The unrounded bound in \cref{lem:qary-count} is $(3/4)^m/\sqrt{\pi m}$; rank failure costs at most $q^{n-m}$. By (B2), $q^{n-m}\leq[2\sqrt{2\pi e}]^{-m}$, and their sum is at most $(3/4)^m$. The error-norm condition $\|e\|\leq\sqrt m\,\alpha q$ fails with probability at most $2^{-m}$. For such fixed good samples, the ideal reduction prepares $\ell$ states with the input secret, and all are valid with probability at least $(1-1/\ell)^\ell\geq1/4$ for $\ell\geq2$. The asymptotic lower bound in \cref{prop:ell-lower} permits this restriction; bounded exceptional dimensions can be handled separately.

\paragraph{Preparation precision.}
Fix the input samples satisfying the matrix and
error-norm conditions. In the ideal preparation,
the branch on which all $\ell$ states are valid
has probability at least $1/4$. Hence the ideal
composed procedure recovers the secret with
probability at least $p_{\mathcal A}(n)/4$.

Choose the ball-state preparation error per use
at most $p_{\mathcal A}(n)/(16\ell)$, and the total
Gaussian-truncation and circuit-implementation error
at most $p_{\mathcal A}(n)/16$. The hybrid bound and
contractivity under the complete measurement
instrument give distance at most
$p_{\mathcal A}(n)/8$ between the actual and ideal
complete procedures. The actual recovery
probability is therefore at least
$p_{\mathcal A}(n)/8$.

This comparison includes the measurement outcomes
and bad branches; it does not condition on a rare
postselection event. The arbitrary inverse-polynomial
accuracy in the ball-preparation lemma permits the
required choice. Since $\ell$ and
$p_{\mathcal A}^{-1}$ are polynomially bounded,
a sufficiently large fixed precision exponent
suffices. With $\kappa=\lceil\log^2 n\rceil$,
the Gaussian truncation error is eventually smaller
than the allotted inverse-polynomial budget.

If a preparation circuit uses $K$ approximate ball-state primitives in total, allocate $p_{\mathcal A}/(16K)$ to each; here $K$ is polynomially bounded. This is the same hybrid budget. The available arbitrary fixed inverse-polynomial accuracy \cite[App. B.1, Lem. 14]{BKSW18} suffices; no uniform polynomial dependence on $\log(1/\varepsilon)$ for arbitrarily small $\varepsilon$ is assumed.

For every fixed inverse-polynomial precision target,
the reduction runs in time polynomial in the total
input length and in $n,m,\log q,\ell,\kappa$.
The polynomial degree may depend on the fixed
precision exponent and on the postulated solver.
Our choice $\kappa=\lceil\log^2n\rceil$ controls
statistical truncation errors; it is not a bound
on the length of the input.

\paragraph{Verification and amplification.}
Reserve independent samples $(A',b')$ and accept a candidate $\hat s$ when the centred residual $b'-A'\hat s\bmod q$ has norm at most $\sqrt{m'}\alpha q$. Exclude the $q^{-n}+q^{n-m'}$ fraction for which the bound $\lambda_1(\Lambda_q(A'))\geq q/4$ or full column rank fails. For the remaining matrices and bounded errors, a wrong secret differs by a nonzero lattice vector and has residual norm at least $q/4-\sqrt{m'}\alpha q>\sqrt{m'}\alpha q$. Only the correct secret passes.

The error vectors are sampled once. The probability
that either exceeds its stated norm bound is at most
$2^{-m}+2^{-m'}$. Condition on both bounds and on
the good matrices. The residual test rejects every
incorrect candidate and accepts the secret.
Each repetition uses fresh internal randomness and
has conditional success at least
$p_{\mathcal A}(n)/8$. Thus the probability that
all $T$ repetitions fail is at most
$(1-p_{\mathcal A}(n)/8)^T$, bounded by the
displayed exponential. Finally remove the
conditioning by the union bound.

Taking $T=\lceil8(n+2)\ln2/p_{\mathcal A}(n)\rceil$ bounds internal failure by $2^{-n-2}$. Together with the fixed-error tails this gives $1-2^{-\Omega(n)}$ success on good matrices. Removing the matrix conditioning subtracts $\delta_{\rm mat}$ as stated.
\end{proof}

The factor $\ell$ in $\alpha_{\mathrm E}$ is easy to overlook, because it is absent from the qualitative statement ``LWE and EDCP are equivalent''. It cannot be absorbed into a constant: the next proposition gives a lower bound of order $n\log N$ at fixed width and inverse-polynomial success probability.

\begin{proposition}\label{prop:ell-lower}
Let $n,N,\ell\geq1$ and $r>0$. Any algorithm, of any running time, that receives $\ell$ states of $\mathrm{G}\text{-}\EDCP_{n,N,r}$ with a uniformly random secret $s\in\Z_N^n$ and independent uniformly random offsets outputs $s$ with probability at most
\[
  \frac{(1+r)^{2\ell}}{N^n}.
\]
Consequently, success probability $p$ requires $\ell\geq(n\log N-\log(1/p))/(2\log(1+r))$. In particular, an algorithm as in $\HEDCP[\ell]$, run at width $r_0$ with success probability $p\geq1/\poly(n)$, requires $\ell\geq(1-o(1))\,n\log N/(2\log(1+r_0))=\Omega(n\log N)$.
\end{proposition}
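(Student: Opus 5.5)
The plan is to bound the guessing probability with a single operator that dominates every possible input state at once. The offsets are uniform and independent, so the algorithm sees, for secret $s$, the product $\rho_s^{\otimes\ell}$, where $\rho_s$ is one $\mathrm{G}\text{-}\EDCP_{n,N,r}$ state averaged over a uniform offset $x\in\Z_N^n$. Any algorithm, of any running time, is a POVM $\{M_{\hat s}\}$ on this state. If $\sigma$ is a positive operator with $\rho_s^{\otimes\ell}\leq\sigma$ for every $s$, then
\[
  \Pr[\text{success}]=\frac{1}{N^n}\sum_s\mathrm{tr}\big(M_s\rho_s^{\otimes\ell}\big)\leq\frac{1}{N^n}\sum_s\mathrm{tr}(M_s\sigma)\leq\frac{\mathrm{tr}\,\sigma}{N^n}.
\]
It therefore suffices to find one $\sigma_1$ with $\rho_s\leq\sigma_1$ for all $s$ and $\mathrm{tr}\,\sigma_1\leq(1+r)^2$, and then take $\sigma=\sigma_1^{\otimes\ell}$. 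Operator inequalities between positive operators are preserved under tensor products.

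\emph{Computing $\rho_s$.} Write one state as $\ket{\psi_{x,s}}=Z^{-1}\sum_j\rho(j)\ket{j}\ket{x+js}$, where $\rho(j)=e^{-\pi j^2/r^2}$ and $Z^2=\sum_je^{-2\pi j^2/r^2}$. I apply the quantum Fourier transform over $\Z_N^n$ to the second register; this is a fixed unitary and does not change the guessing probability. The state becomes
\[
  N^{-n/2}\sum_y\omega^{\langle y,x\rangle}\ket{\varphi_{y,s}}\ket{y},\qquad \ket{\varphi_{y,s}}=Z^{-1}\sum_j\rho(j)\,\omega^{j\langle y,s\rangle}\ket{j}.
\]
Averaging over uniform $x$ kills the cross terms in $y$, so
\[
  \rho_s=N^{-n}\sum_y\ket{\varphi_{y,s}}\bra{\varphi_{y,s}}\otimes\ket{y}\bra{y}.
\]

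\emph{The dominating operator.} By Cauchy--Schwarz, for every unit phase vector we have
\[
  \ket{\varphi}\bra{\varphi}\leq Z^{-2}\Big(\sum_j\rho(j)\Big)\,\mathrm{diag}\big(\rho(j)\big),
\]
because $|\sum_j\rho(j)\omega^{\cdot}v_j|^2\leq\sum_j\rho(j)\cdot\sum_j\rho(j)|v_j|^2$. This bound is independent of $s$ and $y$. I set $\sigma_1:=N^{-n}\sum_y Z^{-2}(\sum_j\rho(j))\,\mathrm{diag}(\rho(j))\otimes\ket{y}\bra{y}$. Then
\[
  \mathrm{tr}\,\sigma_1=\frac{\big(\sum_je^{-\pi j^2/r^2}\big)^2}{\sum_je^{-2\pi j^2/r^2}}.
\]
The denominator is at least $1$ from the term $j=0$. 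The numerator is at most $(1+r)^2$, by comparing the sum with $1+\int_\R e^{-\pi t^2/r^2}\,dt=1+r$ using monotonicity on each half-line. This gives the displayed bound.

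\emph{Consequences.} Setting $(1+r)^{2\ell}/N^n\geq p$ and taking logarithms gives $\ell\geq(n\log N-\log(1/p))/(2\log(1+r))$. With $p\geq1/\poly(n)$ we have $\log(1/p)=O(\log n)=o(n\log N)$, and $r=r_0$ is a constant, so $\ell=\Omega(n\log N)$. A worst-case solver as in $\HEDCP[\ell]$ succeeds in particular on average over uniform secrets and offsets, so the bound applies to it.

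I expect the main obstacle to be the dominance step, that is, choosing $\sigma_1$ independent of $s$ with the right trace. The Cauchy--Schwarz domination is the step I would check most carefully, including the normalisation $Z$. The other steps are routine.
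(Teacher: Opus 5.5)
Your proof is correct and is essentially the paper's argument: both dominate the offset-averaged single-register state by the same secret-independent operator $S\cdot\mathrm{diag}(a_j)\otimes N^{-n}I$ (with $a_j=\rho_r(j)/Z$ and $S=\sum_ja_j\le 1+r$) via the weighted Cauchy--Schwarz inequality, then tensorise and use $\sum_sM_s\le I$. The only difference is the order of the steps: you average over $x$ first (via the QFT, which makes the state block-diagonal) and apply Cauchy--Schwarz blockwise, whereas the paper applies the operator Cauchy--Schwarz inequality to $\ket{\psi_{x,s}}$ in the computational basis and then averages using $\sum_x\ket{x+js}\bra{x+js}=I$, so no Fourier transform is needed.
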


\begin{proof}
A $\mathrm{G}\text{-}\EDCP$ state is $\ket{\psi_{x,s}}=\sum_{j\in\Z}a_j\ket{j}\ket{x+js}$ with $a_j=\rho_r(j)/\sqrt{\rho_{r/\sqrt2}(\Z)}>0$ and $\sum_ja_j^2=1$. Let $S:=\sum_ja_j=\rho_r(\Z)/\sqrt{\rho_{r/\sqrt2}(\Z)}$. Since $\rho_r$ is decreasing on $[0,\infty)$, $\rho_r(\Z)\leq1+2\int_0^\infty\rho_r=1+r$, and $\rho_{r/\sqrt2}(\Z)\geq1$, so $S\leq1+r$. For vectors $v_j$ and positive weights $w_j$, the Cauchy--Schwarz inequality $|\sum_j\langle u,v_j\rangle|^2\leq(\sum_jw_j)\sum_j|\langle u,v_j\rangle|^2/w_j$ gives the operator inequality $(\sum_jv_j)(\sum_jv_j)^{\dagger}\leq(\sum_jw_j)\sum_jw_j^{-1}v_jv_j^{\dagger}$. With $v_j=a_j\ket{j}\ket{x+js}$ and $w_j=a_j$ it yields $\ket{\psi_{x,s}}\bra{\psi_{x,s}}\leq S\sum_ja_j\ket{j}\bra{j}\otimes\ket{x+js}\bra{x+js}$. Averaging over the uniform offset $x$, and using $\sum_x\ket{x+js}\bra{x+js}=I$ for every $j$ and $s$, the state $\sigma_s$ of one register satisfies
\[
  \sigma_s\;\leq\;S^2\tau,\qquad\tau:=\Big(\sum_j\frac{a_j}{S}\ket{j}\bra{j}\Big)\otimes N^{-n}I,
\]
where $\tau$ is a density operator that does not depend on $s$. Hence $\sigma_s^{\otimes\ell}\leq S^{2\ell}\tau^{\otimes\ell}$, and for every measurement $\{M_s\}$ the probability of a correct guess is $N^{-n}\sum_s\mathrm{Tr}(M_s\sigma_s^{\otimes\ell})\leq S^{2\ell}N^{-n}\sum_s\mathrm{Tr}(M_s\tau^{\otimes\ell})=S^{2\ell}N^{-n}$. The consequences follow by taking logarithms; an algorithm as in $\HEDCP[\ell]$ succeeds with probability $p\geq1/\poly(n)$ for every secret and arbitrary offsets, hence for uniform ones, at width $r_0$ in particular. This argument was suggested by an anonymous reviewer; it replaces a truncation argument of an earlier draft and removes a factor $\log\log n$. Its form, dominating the averaged state by a secret-independent state, is standard, and we do not claim priority for the bound.
\end{proof}

\begin{corollary}\label{cor:edcp-ceiling}
Assume $\HEDCP[\ell]$ and let $q\leq\poly(n)$. For every $m\geq n$ satisfying the radius condition (B3),
\[
  \alpha_{\mathrm E}(n,m,\ell)\;\leq\;\frac{1}{2\,r_0\,\kappa\,\ell\,m}\;\leq\;\frac{(1+o(1))\log(1+r_0)}{r_0\,\kappa\,n^2\log q}\;=\;\frac{1}{\widetilde\Omega(n^2)}.
\]
If moreover $q\geq n$ and $n$ is large, then $\mathcal M_q(n)\neq\emptyset$, $m_{\max}\in\mathcal M_q(n)$, the first bound is attained up to a factor $1+o(1)$ at $m=m_{\max}$, so that $\alpha^*_{\mathrm E}(n,\ell)=(1+o(1))/(2r_0\kappa\ell m_{\max})$ is the largest admissible value, and
\[
  m_{\max}=\frac{(1+o(1))\,2n\ln q}{\ln m_{\max}-2\ln(3\sqrt{2\pi e})}=\Theta\Big(\frac{n\log q}{\log n}\Big),
\]
which is $\Theta(n)$ for $q=n^{\Theta(1)}$.
\end{corollary}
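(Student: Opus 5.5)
The first display is a direct computation from the definition of $\alpha_{\mathrm E}$, the radius condition (B3) and \cref{prop:ell-lower}. The second part is a monotonicity analysis of the function that appears in (B3).

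\emph{First inequality.} I would rewrite (B3), $\sqrt m\,q^{(m-n)/m}/(3\sqrt{2\pi e})\leq q$, by dividing by $q$:
\[
\sqrt m\,q^{-n/m}\leq3\sqrt{2\pi e},\qquad\text{equivalently}\qquad q^{n/m}\geq\sqrt m/(3\sqrt{2\pi e}).
\]
Substituting into the denominator of $\alpha_{\mathrm E}(n,m,\ell)=1/(c\,r_0\kappa\ell\sqrt m\,q^{n/m})$ with $c=6\sqrt{2\pi e}$ gives
\[
c\sqrt m\,q^{n/m}\geq6\sqrt{2\pi e}\cdot m/(3\sqrt{2\pi e})=2m,
\]
hence $\alpha_{\mathrm E}\leq1/(2r_0\kappa\ell m)$. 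For the second inequality I use $m\geq n$ together with \cref{prop:ell-lower} at $N=q$ and width $r_0$. That proposition applies because $\HEDCP[\ell]$ succeeds with probability $1/\poly(n)$ for every secret, and $\log(1/p)=O(\log n)=o(n\log q)$. It gives $\ell\geq(1-o(1))\,n\log q/(2\log(1+r_0))$. Multiplying out yields $(1+o(1))\log(1+r_0)/(r_0\kappa n^2\log q)$. Since $\kappa$ and $\log q$ are polylogarithmic, this is $1/\widetilde\Omega(n^2)$.

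\emph{Structure of $\mathcal M_q(n)$.} Let $g(m):=\ln\big(\sqrt m\,q^{-n/m}\big)=\tfrac12\ln m-n\ln q/m$. This function is strictly increasing in $m>0$, so (B3) is a down-set $m\leq m^*$, where $m^*$ solves
\[
\ln m^*=2\ln(3\sqrt{2\pi e})+2n\ln q/m^*,
\]
and $m_{\max}=\lfloor m^*\rfloor$. Condition (B2) is the lower bound $m\geq m_L:=n\ln q/\ln\big(q/(2\sqrt{2\pi e})\big)$, which is $(1+o(1))n$ when $q\geq n$ is large. Condition (B1) only asks $m=\Omega(\log^2n)$.

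For nonemptiness I would check (B3) at $\lceil m_L\rceil$. At $m_L$ we have $q^{(m_L-n)/m_L}=2\sqrt{2\pi e}$, so (B3) reduces to $2\sqrt{2\pi e}\sqrt{m_L}\leq3\sqrt{2\pi e}\,q$, that is, $\sqrt{m_L}\leq1.5q$. This holds comfortably since $m_L\approx n\leq q$. The rounding up to $\lceil m_L\rceil$ costs a factor $1+O(1/n)$. Hence $\lceil m_L\rceil\leq m^*$, and therefore $m_{\max}\in\mathcal M_q(n)$.

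\emph{Attainment and asymptotics.} Since $\alpha_{\mathrm E}\propto1/(\sqrt m\,q^{n/m})$ and $\sqrt m\,q^{n/m}$ is, after the (B3) substitution, bounded below by $2m$, the bound is tightest at the largest admissible $m$. At $m_{\max}$, (B3) is an equality up to the rounding $m^*-m_{\max}<1$. Replacing $m$ by $m+1$ multiplies $\sqrt m\,q^{-n/m}$ by $(1+1/m)^{1/2}q^{n/(m(m+1))}=1+O(\log q/n)=1+o(1)$, using $m=\Omega(n)$. Hence $c\sqrt{m_{\max}}\,q^{n/m_{\max}}=(1+o(1))\,2m_{\max}$, which gives $\alpha^*_{\mathrm E}=(1+o(1))/(2r_0\kappa\ell m_{\max})$. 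It is the largest admissible value because the bound $1/(2r_0\kappa\ell m)$ decreases in $m$ while smaller $m$ only satisfy (B3) strictly.

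Finally, rearranging the defining equation of $m^*$ gives $m^*=2n\ln q/(\ln m^*-2\ln(3\sqrt{2\pi e}))$, and the same holds for $m_{\max}$ up to a factor $1+o(1)$. Since $n\leq m^*\leq\poly(n)$, we have $\ln m^*=\Theta(\log n)$, which yields $\Theta(n\log q/\log n)$, and $\Theta(n)$ when $q=n^{\Theta(1)}$.

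The main obstacle is the bookkeeping of the $1+o(1)$ factors: the integer rounding at $m_{\max}$ and the $\log(1/p)$ term in the lower bound on $\ell$. Both are absorbed because $m=\Omega(n)$ and $\log q=O(\log n)$.
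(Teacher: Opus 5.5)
Your derivation of the first display matches the paper's. You rewrite (B3) as $q^{n/m}\geq\sqrt m/(3\sqrt{2\pi e})$, substitute with $c=6\sqrt{2\pi e}$, and then use $m\geq n$ and \cref{prop:ell-lower}. Your description of $\mathcal M_q(n)$ as an integer interval is correct, and so is the attainment argument from the near-equality of (B3) at $m_{\max}=\lfloor m^*\rfloor$. The fixed-point formula for $m_{\max}$ is also right. For nonemptiness you check (B3) at the (B2) endpoint $\lceil m_L\rceil$, whereas the paper checks (B3) at $m=n$ and (B2) at $m_{\max}$; both routes work. The rounding factor there is $1+O(\log q/n)$ rather than $1+O(1/n)$, which is harmless given the margin.

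The step that does not go through is the claim that $\alpha^*_{\mathrm E}$ is the largest admissible value. You argue that $1/(2r_0\kappa\ell m)$ decreases in $m$ while smaller $m$ satisfy (B3) only strictly, but this cannot give the claim. For $m<m_{\max}$ that upper bound is \emph{larger} than $\alpha^*_{\mathrm E}\approx1/(2r_0\kappa\ell m_{\max})$. Strictness says nothing about whether the slack is big enough to bring $\alpha_{\mathrm E}(n,m,\ell)$ below $\alpha^*_{\mathrm E}$. In your notation, $\alpha_{\mathrm E}(n,m,\ell)=\frac{1}{2r_0\kappa\ell}\cdot\frac{e^{g(m)}}{3\sqrt{2\pi e}\,m}$. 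This is a product of the decreasing factor $1/m$ and the increasing factor $e^{g(m)}$, so a quantitative comparison is needed.

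The paper supplies it. Since $\frac{d}{dm}\ln\bigl(\sqrt m\,q^{n/m}\bigr)=\frac1{2m}-\frac{n\ln q}{m^2}<0$ for $m<2n\ln q$, $\alpha_{\mathrm E}(n,m,\ell)$ increases with $m$ on that range. It remains to see $m_{\max}<2n\ln q$ for large $n$, and this follows from your own fixed-point equation. Indeed $2n\ln q/m^*=\ln m^*-2\ln(3\sqrt{2\pi e})>1$ as soon as $m^*>18\pi e^2\approx418$, which holds because $m^*\geq n$. Hence every admissible $m\leq m_{\max}$ satisfies $\alpha_{\mathrm E}(n,m,\ell)\leq\alpha_{\mathrm E}(n,m_{\max},\ell)$, which is the claim.
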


\begin{proof}
Condition (B3) is equivalent to $q^{n/m}\geq\sqrt m/(3\sqrt{2\pi e})$, so $\sqrt m\,q^{n/m}\geq m/(3\sqrt{2\pi e})$ and $\alpha_{\mathrm E}(n,m,\ell)\leq3\sqrt{2\pi e}/(c\,r_0\kappa\ell m)=1/(2r_0\kappa\ell m)$, since $c=6\sqrt{2\pi e}$; the second inequality follows from $m\geq n$ and \cref{prop:ell-lower}. The consequences for $\SIVP$ and SIS follow because the thresholds of \cref{cor:edcp-sivp,prop:sis-edcp} are $\Otil(n/\alpha)$ and $\sqrt m/\alpha$. For the second statement: the function $m\mapsto\sqrt m\,q^{-n/m}$ is increasing, so (B3) holds exactly for $m$ up to the largest integer $m_3$ at which it holds; for $q\geq n$ it holds at $m=n$, so $m_3\geq n$. At $m_3$ the inequality $q^{n/m}\geq\sqrt m/(3\sqrt{2\pi e})$ holds with near-equality, by the maximality of $m_3$, which gives the attainment and, after taking logarithms, the formula for $m_3$. Condition (B2) holds at $m_3$, because $q^{(m_3-n)/m_3}=q\cdot q^{-n/m_3}\geq(1-o(1))\,3\sqrt{2\pi e}\,q/\sqrt{m_3}\geq2\sqrt{2\pi e}$ when $q\geq n$, and (B1) holds because $m_3\geq n\geq\kappa$; hence $m_{\max}=m_3$. Since $m\mapsto\sqrt m\,q^{n/m}$ is decreasing for $m<2n\ln q$, which contains $m_{\max}$ for large $n$, $\alpha_{\mathrm E}$ is largest at $m_{\max}$ among admissible $m$. The bound persists if the number of states may depend on the width: the admissible noise rate is governed by the product $\ell r$, and $\ell r\geq r(n\log q-O(\log n))/(2\log(1+r))$ is increasing in $r$, hence smallest at $r=r_0$.
\end{proof}

The sufficient noise bound derived for the ball
reduction with the admissibility conditions
(B1)--(B3) is at most
$1/(2r_0\kappa\ell m)$. Together with
\cref{prop:ell-lower}, this makes the largest noise
scale certified by this calculation at most
$1/\widetilde\Omega(n^2)$. The SIVP and SIS
parameters reported below are the guarantees
obtained by composing this calculation. This is
not a lower bound on every algorithm available
under $\HEDCP[\ell]$.

Condition (B3) constrains the number of samples
used in one invocation of this ball construction,
not the total number of samples an adversary may
possess. Additional samples may be reserved for
verification or a suitable subset may be chosen.
The cube construction of \cite[Thm. 2]{BKSW18}
has its own parameter conditions and is not
excluded by (B3).
The separate cube construction has sufficient width bound $r<1/(32m\kappa\ell\alpha q^{n/m})$ \cite[Thm. 2]{BKSW18}; displaying it does not establish its other premises for a concrete instance.

\begin{corollary}[Worst-case consequences of $\HEDCP{[\ell]}$]\label{cor:edcp-sivp}
Assume $\HEDCP[\ell]$. Then $\SIVP_{\gs_{\mathrm E}}$ and $\GapSVP_{\gs_{\mathrm E}}$ are in $\BQP$ with $\gs_{\mathrm E}=\Otil(\ell n^2)$, and $\BDD_{1/g_{\mathrm E}(d)}$ is in $\BQP$ on $d$-dimensional lattices, in the worst case.
\end{corollary}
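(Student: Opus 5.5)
The plan is to feed the LWE solver of \cref{prop:edcp-lwe} into Regev's worst-case reduction (\cref{lem:regev09}). Then we derive the $\BDD$ statement from the $\SIVP$ statement on the dual lattice via \cref{lem:bdd-sivp}.

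\emph{Step 1: an LWE oracle in Regev's interface.} Fix $n$ and a prime $q\leq\poly(n)$ with $q\geq n$, so that $\mathcal M_q(n)\neq\emptyset$ and $m_{\max}=\Theta(n)$ (\cref{cor:edcp-ceiling}). Set $\alpha:=\alpha^*_{\mathrm E}(n,\ell)/2$. To satisfy $\alpha q>2\sqrt n$ I will choose $q$ polynomially larger than $\ell n^{3/2}$ up to polylogarithmic factors; for instance $q$ a prime near $\ell n^2\kappa^2$ works, because $\alpha^*_{\mathrm E}=1/\Otil(\ell n)$ depends on $q$ only through $m_{\max}$ and $\kappa$. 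For every $\alpha'\leq\alpha$ we have $\alpha'<\alpha_{\mathrm E}(n,m_{\max},\ell)$. \Cref{prop:edcp-lwe} then gives a quantum algorithm that recovers the secret from $m_{\max}$ samples plus $m'=\lceil 2n\lg q\rceil$ verification samples. Its success probability is $1-2^{-\Omega(n)}$ on all but a $\delta_{\rm mat}$ fraction of matrices, so overall it is at least $1-\negl(n)$. The verification condition $\alpha'<1/(8\sqrt{m'})$ holds since $\alpha=1/\Otil(\ell n)$. The oracle is quantum and is queried on errors $D_{\Z,\alpha'q}$; by the remark after \cref{lem:regev09}, Regev's reduction accepts discrete Gaussian errors via \cref{lem:round}. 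Since the reduction already runs on a quantum computer, a quantum LWE oracle is harmless. Its internal calls can be made coherently by deferring measurement, or, as in Regev's classical step, invoked only on classical samples. I expect to note that Regev's classical step uses the oracle only classically, so no coherent implementation is needed.

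\emph{Step 2: worst-case factors.} \Cref{lem:regev09} now yields $\SIVP_\gamma$ and $\GapSVP_\gamma$ with $\gamma=\Otil(n/\alpha)=\Otil(n\cdot\ell n)=\Otil(\ell n^2)=\gs_{\mathrm E}$. The main obstacle is the success-probability mismatch: Regev's theorem assumes an oracle correct on a noticeable fraction of inputs, uniformly over all $\alpha'\leq\alpha$. Our oracle fails on a $\delta_{\rm mat}$ fraction of matrices and has success $1/\poly(n)$ before amplification. I will argue that the verification-based amplification in \cref{prop:edcp-lwe} gives $1-\negl(n)$ overall success, which suffices for Regev's iterative reduction through a union bound over its polynomially many calls.

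\emph{Step 3: $\BDD$.} Given a $d$-dimensional lattice $\Lat$ with gap at least $g_{\mathrm E}(d)=2d\,\gs_{\mathrm E}(d,\ell)$, apply the $\SIVP_{\gs_{\mathrm E}}$ algorithm to $\Lat^*$. \Cref{lem:bdd-sivp} with $\gamma=\gs_{\mathrm E}$ then decodes within distance $\lambda_1(\Lat)/(2d\gs_{\mathrm E})=\lambda_1(\Lat)/g_{\mathrm E}(d)$, which is the $\BDD_{1/g_{\mathrm E}(d)}$ claim. The gap condition is used only to interpret the decoding radius; \cref{lem:bdd-sivp} itself needs no gap.
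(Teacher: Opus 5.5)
Your proposal is the paper's proof. It makes the same choices: $\alpha=\alpha^*_{\mathrm E}(n,\ell)/2$, with $m_{\max}$ reduction samples and $m'=\lceil 2n\lg q\rceil$ holdout samples in \cref{prop:edcp-lwe}, and a polynomial prime $q>2\sqrt n/\alpha$. It then applies \cref{lem:regev09} with its discrete-Gaussian remark to get $\Otil(n/\alpha)=\Otil(\ell n^2)$, and \cref{lem:bdd-sivp} in dimension $d$ to get $\BDD_{1/g_{\mathrm E}(d)}$.

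Drop the side remark that the oracle is used only classically in Regev's iteration. The decoder built from the oracle in the classical step is run on a superposition of targets in the quantum step. What licenses a quantum LWE oracle is therefore the coherent compute--use--uncompute implementation, and that works because your amplified oracle fails with probability only $2^{-\Omega(n)}$ (compare \cref{lem:qstep-imperfect}).
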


\begin{proof}
Let $q$ be a prime with $n^2\leq q\leq\poly(n)$, chosen below, $m:=m_{\max}(n,q)\in\mathcal M_q(n)$ and $\alpha:=\alpha^*_{\mathrm E}(n,\ell)/2=1/\Otil(\ell n)$. By \cref{prop:edcp-lwe}, with $m'=\lceil2n\lg q\rceil$ further samples (and $\alpha<1/(8\sqrt{m'})$, which holds for large $n$ because $\kappa\to\infty$), $\LWE_{n,q,\alpha'}$ is solved with probability $1-2^{-\Omega(n)}$ from $m+m'$ samples, for every $\alpha'\leq\alpha$ and all but a negligible fraction of matrices. Choose $q>2\sqrt n/\alpha$, which is possible with $q$ polynomial. \Cref{lem:regev09}, with its remark on discrete Gaussian errors (the factor $2$ in $\alpha$ absorbs the $1+o(1)$ loss), gives $\SIVP$ and $\GapSVP$ within $\Otil(n/\alpha)=\Otil(\ell n^2)$. The statement on $\BDD$ follows from \cref{lem:bdd-sivp}, applied in dimension $d$.
\end{proof}

\begin{proposition}[SIS under $\HEDCP{[\ell]}$]\label{prop:sis-edcp}
Assume $\HEDCP[\ell]$. Let $q\leq\poly(n)$ be a prime with $\mathcal M_q(n)\neq\emptyset$, let $2n\lg q\leq m\leq\poly(n)$, and suppose $\beta_{\mathrm E}(n,m,\ell)<q/\sqrt2$. For every $\beta>\beta_{\mathrm E}(n,m,\ell)$ there is a quantum polynomial-time algorithm that solves $\SIS_{n,m,q,\beta}$ with probability at least $1/\poly(n)$ over a uniform $A\in\Zq^{n\times m}$ and its internal randomness.
\end{proposition}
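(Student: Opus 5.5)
The plan follows the proof of \cref{prop:sis-reg}. There, an average-case decoder for $\Lat^*=q^{-1}\Lambda_q(A)$ on $\Lat=\Lambda_q^\perp(A)$ is fed into the quantum step of \cref{lem:qstep-imperfect}, and its output is read off. The change is that the decoder is now built from the LWE solver of \cref{prop:edcp-lwe}, in the style of \cite{SSTX09}, rather than from a worst-case $\BDD$ oracle. Set $\alpha:=\alpha^*_{\mathrm E}(n,\ell)/2$, attained at some $m_0=m_{\max}\in\mathcal M_q(n)$, and set $d:=\alpha\sqrt m$.

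First, check that the decoder sees LWE instances. Take $y$ to be a continuous Gaussian of parameter $d/\sqrt{2m}=\alpha/\sqrt2$ conditioned on $\|y\|<d$. Then the target is $t=y\bmod\mathcal P(\Lat^*)$, so $qt\equiv A^{\top}s+qy\pmod q$ for a uniform $s$, whenever the coset representative is chosen at random. The error $qy$ has parameter $\alpha q/\sqrt2$. Applying the randomised rounding of \cref{lem:round} turns $qt$ into an integer vector with error close to $D_{\Z^m,\alpha' q}$ for some $\alpha'<\alpha^*_{\mathrm E}$, since the $\sqrt2$ leaves room for the rounding term. The conditioning on $\|y\|<d$ costs only $2^{-m}$ by \cref{lem:tails}(a).

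Second, build the decoder $\mathcal D$. Split the $m$ columns of $A$: use $m_0$ of them as the samples for \cref{prop:edcp-lwe}, and the remaining $m-m_0\geq 2n\lg q-m_0$ as the verification samples $m'$. If $m-m_0$ is too small, instead verify on all $m$ coordinates using $\lambda_1(\Lambda_q(A^{\top}))\geq q/4$ from \cref{lem:qary}. The procedure recovers $s$ with the repetition-and-check bound of that proposition, which requires $\alpha<1/(8\sqrt{m'})$. It then outputs the dual point $t-y$, computed as $q^{-1}(A^{\top}s\bmod q)$ lifted next to $t$. This gives average success $\bar p\geq 1-2^{-\Omega(n)}-\delta$, where the $\delta$ averaged over $A$ is negligible.

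Third, verify the geometric conditions: $d<\lambda_1(\Lat^*)/2$ needs $\alpha\sqrt m<1/8$. Then \cref{lem:qstep-imperfect}(b) applies, with $G$ the set of nonzero vectors of length at most $w\sqrt m$, where $w=\sqrt m/(\sqrt2 d)=1/(\sqrt2\alpha)$. This yields $\|z\|\leq m/(\sqrt2\alpha)$. I expect to reconcile this with $\beta_{\mathrm E}$, whose first term is $\sqrt m/(\sqrt2\alpha^*)$, by choosing $d$ more carefully. One option is $d=\alpha^*/\sqrt2$ with the LWE error parameter set so that the norms match. The second term of the maximum, $4\sqrt2 m$, covers the regime where $w$ must be raised to $w\geq4\sqrt{2m}$ so that $\rho_w(\Lat)\geq w^m/q^n$ makes a zero output unlikely, exactly as in \cref{prop:sis-reg}. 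The hypothesis $q>\sqrt2\beta_{\mathrm E}$ guarantees that a vector of length $\beta$ is not in $q\Z^m$ trivially and that $d<1/8$.

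The main obstacle is that $\bar p$ is an average over $A$ as well as over $y$. Part (b) of \cref{lem:qstep-imperfect} needs $\bar p$ for a fixed $A$, so I would use Markov: for a $1-1/\poly$ fraction of $A$ the conditional success is $\geq1/2$. This gives the claimed $1/\poly(n)$ over uniform $A$. A secondary difficulty is that the decoder must be run coherently even though \cref{prop:edcp-lwe} internally measures during its preparation. I would handle this by purifying all intermediate measurements, which the compute–use–uncompute analysis in \cref{lem:qstep-imperfect} already permits, since it only uses the output distribution of $\mathcal D$.
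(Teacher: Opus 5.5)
Your skeleton is the right one: an LWE-based decoder for $\Lat^*$ in the style of \cite{SSTX09}, randomised rounding, and \cref{lem:qstep-imperfect}. The gap is the step you leave open, matching the output norm to $\beta_{\mathrm E}$. That is where the exact threshold comes from, and it cannot be closed by adjusting $d$ on its own.

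The width is $w=\sqrt m/(\sqrt2\,d)$, so the norm bound is $w\sqrt m=m/(\sqrt2\,d)$. With $d=\alpha\sqrt m$ this is $\sqrt m/(\sqrt2\,\alpha)$, not $m/(\sqrt2\,\alpha)$. Shrinking $d$ to $\alpha^*_{\mathrm E}/\sqrt2$, as you suggest, only enlarges it. You want $d$ as large as possible under two caps:
\begin{itemize}
\item if you round with parameter $qd/\sqrt{2m}$, the solver sees noise rate $d/\sqrt m$, which must stay below $\alpha^*_{\mathrm E}$;
\item \cref{lem:qstep-imperfect}, with the available bound $\lambda_1(\Lat^*)\geq1/4$, needs $d<1/8$.
\end{itemize}
These two caps are exactly the two terms of $\beta_{\mathrm E}$.

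Your fixed choice $\alpha=\alpha^*_{\mathrm E}/2$ fails on both counts.
\begin{itemize}
\item It gives norm $\sqrt{2m}/\alpha^*_{\mathrm E}$, twice the first term of $\beta_{\mathrm E}$, so you only cover $\beta$ at least that large. The halving buys nothing: \cref{prop:edcp-lwe} needs only $\alpha<\alpha^*_{\mathrm E}$, and by \cref{lem:round}, rounding with parameter $\alpha q/\sqrt2$ over the continuous error of parameter $\alpha q/\sqrt2$ gives errors of parameter exactly $\alpha q$.
\item More seriously, $\alpha^*_{\mathrm E}$ does not depend on $m$, and $m$ may be any polynomial. So $d=\alpha^*_{\mathrm E}\sqrt m/2\geq1/8$ once $\alpha^*_{\mathrm E}\geq1/(4\sqrt m)$. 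This regime is compatible with every hypothesis of the proposition: there $\beta_{\mathrm E}=4\sqrt2\,m$, and $q>8m$ is allowed.
\item In that regime the hypothesis $d<\lambda_1(\Lat^*)/2$ of \cref{lem:qstep-imperfect} is no longer guaranteed. For larger $m$ it fails outright, because $\Z^m\subseteq\Lat^*$ forces $\lambda_1(\Lat^*)\leq1$. Your check's requirement $\alpha<1/(8\sqrt{m'})$ fails as well.
\end{itemize}
Your claim that $q>\sqrt2\,\beta_{\mathrm E}$ yields $d<1/8$ is incorrect. That hypothesis bounds $\alpha q$ from below, not $\alpha$ from above.

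The missing idea is to derive the noise rate from the target norm. Set $\beta':=\min\{\beta,q/\sqrt2\}$, $\alpha:=\sqrt m/(\sqrt2\,\beta')$ and $d:=\sqrt m\,\alpha$.
\begin{itemize}
\item Since $\beta'>\beta_{\mathrm E}$, the first term of the maximum gives $\alpha<\alpha^*_{\mathrm E}$.
\item The second term gives $\alpha<1/(8\sqrt m)$, i.e.\ $d<1/8$. This, not the zero-output estimate (which only needs $w\geq4$), is the role of the term $4\sqrt2\,m$.
\item Since $\beta'\leq q/\sqrt2$, one has $\alpha q\geq\sqrt m$. So the rounding parameter satisfies $\alpha q/\sqrt2\geq\sqrt{m/2}\geq\eta_{2^{-m}}(\Z)$, as \cref{lem:round} requires. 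This is the role of the hypothesis $\beta_{\mathrm E}<q/\sqrt2$.
\end{itemize}
Then $w=\beta'/\sqrt m$. \Cref{lem:tails}(a), together with the zero-probability estimate of \cref{prop:sis-reg}, gives $\delta_G=2^{-\Omega(m)}$ for $G=\{z\in\Lat\setminus\{0\}:\|z\|\leq\beta'\}$. So the output has norm at most $\beta'\leq\beta$. With this choice the rest of your plan goes through.

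Your amplification inside the decoder is a valid difference from the paper. You repeat the solver with fresh randomness and check candidates against all rows of $A^{\top}$ via $\lambda_1(\Lambda_q(A^{\top}))\geq q/4$. This adapts the fixed-matrix repetition argument of \cref{prop:edcp-lwe} and gives success close to one for most $A$. The paper does without it: it runs the $m_{\max}$-sample solver once, feeds $\bar p\geq p_{\mathcal S}(A)-2^{-\Omega(m)}$ into \cref{lem:qstep-imperfect}(b), and averages over $A$ by convexity of $x\mapsto\max\{0,x-\delta\}^2$. Without amplification, your Markov step would only give a fraction of order $p_{\mathcal S}$ of matrices with $\bar p(A)\geq p_{\mathcal S}/2$, not a $1-1/\poly(n)$ fraction.
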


\begin{proof}
Let $\beta':=\min\{\beta,q/\sqrt2\}>\beta_{\mathrm E}(n,m,\ell)$ and $\alpha:=\sqrt m/(\sqrt2\,\beta')$. Then $\alpha<\min\{\alpha^*_{\mathrm E}(n,\ell),1/(8\sqrt m)\}$ by the definition of $\beta_{\mathrm E}$, and $\alpha q\geq\sqrt m$.

Let $m_0:=m_{\max}(n,q)$, which is smaller than $2n\lg q\leq m$, and let $\mathcal S$ be the algorithm of \cref{prop:edcp-lwe} for $m_0$ samples of $\LWE_{n,q,\alpha}$, without amplification; let $p_{\mathcal S}\geq1/\poly(n)$ be its success probability over a uniform matrix, a uniform secret, errors $D_{\Z^{m_0},\alpha q}$ and its randomness.

Fix $A$ with $\lambda_1^\infty(\Lambda_q(A^{\top}))\geq q/4$, which excludes a $q^{-n}$ fraction (\cref{lem:qary}). Let $\Lat:=\Lambda_q^\perp(A)$, so that $\Lat^*=q^{-1}\Lambda_q(A^{\top})\supseteq\Z^m$ and $\lambda_1(\Lat^*)\geq1/4$, and set $d:=\sqrt m\,\alpha<1/8\leq\lambda_1(\Lat^*)/2$. Define a decoder $\mathcal D$ for $\Lat^*$: on input $t\in\R^m$, (1) pick a uniform $\tilde s\in\Zq^n$ and set $t':=t+q^{-1}A^{\top}\tilde s$; (2) round each coordinate of $qt'$ randomly, $b_i\leftarrow D_{\Z,\alpha q/\sqrt2,\,qt'_i}$; (3) run $\mathcal S$ on the first $m_0$ rows of $(A^{\top},b\bmod q)$, obtaining $\hat s$; (4) let $u:=q^{-1}(A^{\top}(\hat s-\tilde s)\bmod q)\in\R^m$ and output $u+\lfloor t-u\rceil\in\Lat^*$, rounding coordinatewise.

We bound the success probability $\bar p$ of $\mathcal D$ in the sense of \cref{lem:qstep-imperfect}. There $t=y-v$ with $v\in\Lat^*$ and $y$ a continuous Gaussian of parameter $d/\sqrt{2m}=\alpha/\sqrt2$ conditioned on $\|y\|<d$; the conditioning changes probabilities by at most $2^{-m}$ (\cref{lem:tails}(a)). Write $v=q^{-1}A^{\top}s_v+k_v$ with $s_v\in\Zq^n$ and $k_v\in\Z^m$. Then $qt'\equiv A^{\top}s'+qy\pmod q$ with $s':=\tilde s-s_v$, which is uniform and independent of $y$, and $qy$ has independent continuous Gaussian coordinates of parameter $\alpha q/\sqrt2\geq\sqrt{m/2}\geq\eta_{2^{-m}}(\Z)$ (\cref{lem:tails}(c)). By \cref{lem:round}, $b-A^{\top}s'$ is within statistical distance $2m\,2^{-m}$ of $D_{\Z^m,\alpha q}$. Thus the first $m_0$ rows of $(A^{\top},b)$ form, up to statistical distance $2^{-\Omega(m)}$, an LWE instance with a uniform matrix, uniform secret $s'$ and errors $D_{\Z^{m_0},\alpha q}$. If $\mathcal S$ returns $\hat s=s'$, then $u\equiv-q^{-1}A^{\top}s_v\pmod{\Z^m}$, hence $t-u\equiv y\pmod{\Z^m}$, and since $\|y\|_\infty<1/8$ the output is $t-y$, the closest point of $\Lat^*$. Hence $\bar p\geq p_{\mathcal S}(A)-2^{-\Omega(m)}$, where $p_{\mathcal S}(A)$ is the success probability of $\mathcal S$ on the first $m_0$ rows of the fixed matrix $A^{\top}$.

Apply \cref{lem:qstep-imperfect}(b) with $G:=\{z\in\Lat\setminus\{0\}:\|z\|\leq\beta'\}$. The width is $w=\sqrt m/(\sqrt2\,d)=1/(\sqrt2\,\alpha)=\beta'/\sqrt m$, so by \cref{lem:tails}(a) and the argument of \cref{prop:sis-reg} (here $w>4\sqrt2\sqrt m\geq4$), $1-D_{\Lat,w}(G)\leq2^{-m}+2^{-3m/2}$ and $\delta_G=2^{-\Omega(m)}$. The output lies in $G$ with probability at least $\max\{0,p_{\mathcal S}(A)-2^{-\Omega(m)}\}^2$. Averaging over $A$ and using the convexity of $x\mapsto\max\{0,x-\delta\}^2$, the success probability is at least $(p_{\mathcal S}-2^{-\Omega(m)})^2-q^{-n}\geq1/\poly(n)$. Solutions are checked ($Az\equiv0$, $z\neq0$, $\|z\|\leq\beta$) before being output.
\end{proof}

The idea of using an average-case LWE solver as the decoder in the quantum step is that of Stehl\'e, Steinfeld, Tanaka and Xagawa, who re-interpreted Regev's quantum reduction as a reduction from SIS to LWE \cite{SSTX09}; \cref{lem:qstep-imperfect} supplies the analysis for decoders that succeed only on average and with probability $1/\poly(n)$, and \cref{lem:round} reconciles the continuous offsets of the quantum step with the discrete errors of \cite{BKSW18}.

\paragraph{The thresholds in numbers.} The following model illustrates differences between the asymptotic sufficient bounds; it does not estimate finite-dimensional attack costs. \Cref{tab:thresholds} evaluates them, ignoring constants and logarithmic factors, for dimensions $512$ and $1024$ and three values of the failure parameter. Three observations are worth making. First, even in the normalisation $f\to0^+$, the $\SIVP$ threshold $\gs_{\mathrm{Reg}}\approx n^{3/2}$ is about $2^{13.5}$--$2^{15}$, larger than the signature proof factors in this normalisation (\cref{sec:sis}). Second, once the number of states and the radius condition are accounted for, the displayed sufficient bounds from the restricted ball composition are less favourable than the Regev-route bounds for LWE and SIS when $f<1/2$, and for $\SIVP$ when $f<3/4$: the restricted ball composition certifies noise scales at most about $n^{-2}$ and yields $\SIVP$ factors about $n^3$ and SIS bounds about $n^{5/2}$. Third, the normalised unique-SVP threshold $\sqrt n$ is only about $2^{4.5}$--$2^{5}$, comparable to the heuristic gaps of deployed key-recovery lattices (\cref{sec:parameters}); comparisons at this scale depend on constants that \cite{Reg04} does not make explicit, which is why we report them as heuristic comparisons, not as conclusions.

\begin{table}[htbp]
\centering\small
\renewcommand{\arraystretch}{1.2}
\begin{tabular}{@{}lcccccc@{}}
\toprule
& \multicolumn{3}{c}{$n=d=512$} & \multicolumn{3}{c}{$n=d=1024$}\\
\cmidrule(lr){2-4}\cmidrule(lr){5-7}
Threshold (exponent of 2) & $f\to0^+$ & $f=1/4$ & $f=1$ & $f\to0^+$ & $f=1/4$ & $f=1$\\
\midrule
\multicolumn{7}{@{}l}{\emph{Under $\HReg(f)$}}\\
$\gu\approx d^{1/2+2f}$ (unique-SVP) & 4.5 & 9 & 22.5 & 5 & 10 & 25\\
$\gamma_{\mathrm g}\approx d^{1+2f}$ ($\GapSVP$) & 9 & 13.5 & 27 & 10 & 15 & 30\\
$1/\alpha_{\mathrm{Reg}}\approx n^{1+2f}$ (LWE, many samples) & 9 & 13.5 & 27 & 10 & 15 & 30\\
$\gs_{\mathrm{Reg}}\approx n^{3/2+2f}$ ($\SIVP$) & 13.5 & 18 & 31.5 & 15 & 20 & 35\\
$\beta_{\mathrm{Reg}}\approx n^{3/2+2f}$ (SIS) & 13.5 & 18 & 31.5 & 15 & 20 & 35\\
\multicolumn{7}{@{}l}{\emph{Under $\HEDCP[\ell]$, with $\ell\approx n$}}\\
$1/\alpha^*_{\mathrm E}\approx\ell n\approx n^{2}$ (LWE, $m_{\max}$ samples) & \multicolumn{3}{c}{18} & \multicolumn{3}{c}{20}\\
$\gs_{\mathrm E}\approx\ell n^{2}\approx n^{3}$ ($\SIVP$) & \multicolumn{3}{c}{27} & \multicolumn{3}{c}{30}\\
$\beta_{\mathrm E}\approx\ell n^{3/2}\approx n^{5/2}$ (SIS) & \multicolumn{3}{c}{22.5} & \multicolumn{3}{c}{25}\\
\bottomrule
\end{tabular}
\caption{Illustrative values of the thresholds of \cref{thm:reach}, as exponents $e$ such that the threshold is about $2^e$, ignoring constants and logarithmic factors. The rows for $\HEDCP[\ell]$ take $\ell\approx n$, the information-theoretic minimum up to logarithmic factors (\cref{prop:ell-lower}), and do not depend on $f$. The column $f\to0^+$ is a normalisation (\cref{sec:technical}). These are normalisations of sufficient guarantees, not necessary attack thresholds. Applying them requires the finite-dimensional bound and every sample, distribution and instance promise.}
\label{tab:thresholds}
\end{table}

\begin{remark}\label{rem:sis-routes}
Items \ref{it:sis} and \ref{it:edcp} reach SIS through the quantum step applied to the $q$-ary lattice $\Lambda_q^\perp(A)$, not through $\SIVP$: under $\HReg(f)$ the decoder of the dual lattice is the worst-case $\BDD$ algorithm (\cref{prop:sis-reg}); under $\HEDCP[\ell]$ it is the average-case LWE solver, used as in the quantum reduction from SIS to LWE of \cite{SSTX09} (\cref{prop:sis-edcp}). Related quantum equivalences between LWE with quantum auxiliary states and (I)SIS appear in \cite{CLZ22,DFS24,CH26}. Under $\HEDCP[\ell]$, the threshold $\beta_{\mathrm E}=\Otil(\ell n^{3/2})$ coincides, up to polylogarithmic factors, with the norm $\gs_{\mathrm E}/\Otil(\sqrt n)=\Otil(\ell n^{3/2})$ at which the stated worst-case hardness premise becomes unavailable: this composition solves SIS at bounds comparable to those where its stated worst-case foundation becomes uninformative; below them it supplies no converse. Under $\HReg(f)$, the threshold $\beta_{\mathrm{Reg}}=O(m^{3/2+2f})$ lies above the worst-case-premise threshold $\Otil(n^{1+2f})$ by a factor of about $\sqrt n$. The claim that ``no reduction connects DCP to SIS'' should therefore be replaced by the sharper question of \cref{q:sis}: whether DCP connects to SIS \emph{at the small norm bounds used by signature security proofs}. Solving the exact SIS assumptions used at those bounds would remove the hardness premise from those security arguments; it would not by itself produce forgeries (\cref{sec:sis}).
\end{remark}

\begin{remark}\label{rem:robust}
The route of \cref{prop:edcp-lwe} loses twice. The factor $\ell$ in $\alpha_{\mathrm E}$ arises because the $\ell$ states produced by \cite[Thm. 3]{BKSW18} must be valid simultaneously; and the radius condition (B3) forces $m\leq m_{\max}=\Theta(n)$, so that $\sqrt m\,q^{n/m}$ is at least about $m_{\max}/(3\sqrt{2\pi e})$ instead of $O(\sqrt{n\log q})$. A solver whose robustness matched the preparation error model, including possible correlations, might remove the requirement that all states be valid simultaneously. If such a composition allowed constant per-state error growing linearly with width, the factor $\ell$ might be replaced by a constant. This is a conjectural modification of the reduction, not a theorem about arbitrary invalid states, and it leaves the second loss: even then the route would reach noise rates of about $1/(r_0\kappa\,m_{\max})=1/\Otil(n)$, not the standard rate $1/\Otil(\sqrt n)$. We do not build such robustness into $\HEDCP[\ell]$; whether a reduction from LWE to EDCP can avoid both losses is \cref{q:robust}.
\end{remark}

\begin{remark}\label{rem:simon-claim}
The draft \cite{Sim26} asserts, in its October revision as before, tolerance to faulty samples at rates up to $1/O(\log n)$, where its $n$ is $\log N$, and consequences for approximate SVP at factor $\sqrt n\,\mathrm{polylog}(n)$ in the lattice dimension (\cref{sec:simon-strategy}). Since $\log N=\Theta(n^2)$ in Regev's reduction, a failure rate $1/O(\log\log N)=1/O(\log n)$ corresponds to $(\log N)^f=O(\log n)$, that is to $f=\Theta(\log\log n/\log n)\to0$: a stronger tolerance than $\HReg(f)$ grants for any constant $f>0$. If Regev's theorem applied at such an $f$ with controlled constants, which is not established, since the theorem is stated for fixed $f$, it would yield the unique-SVP factor $\sqrt n\,\mathrm{polylog}(n)$ and, through \cref{lem:lm09}, $\GapSVP$ at factor $\Otil(n)$. Even the weaker reading with constant $f>0$ reaches every polynomial factor above $\gs_{\mathrm{Reg}}$.
\end{remark}

\begin{figure}[p]
\centering
\begin{minipage}{\textwidth}
\tikzset{
 reachhyp/.style={draw=wBlack,fill=wSky!12,rounded corners=2pt,align=center,
   text width=37mm,minimum height=10mm,inner sep=3pt},
 reachres/.style={draw=wVermillion,line width=1.1pt,align=center,
   text width=27mm,minimum height=18mm,inner sep=3pt},
 reachcond/.style={draw=wOrange,line width=1.1pt,align=center,inner sep=3pt},
 reachopen/.style={draw=wBlue,dash pattern=on 3pt off 2pt,align=center,
   text width=43mm,minimum height=12mm,inner sep=3pt},
 reachlab/.style={font=\scriptsize,fill=white,inner sep=1.5pt,align=center},
 reacharr/.style={->,>=Stealth,line width=.8pt}}
\small
\textbf{(a) Cyclic hypotheses and group actions}\par\smallskip
{\centering
\begin{tikzpicture}[font=\footnotesize]
 \node[reachhyp] (reg) at (0,0) {$\HReg(f)$\\fixed-$f$ hypotheses};
 \node[reachhyp] (hs) at (5,0) {$\HHS$\\noiseless cyclic states};
 \node[reachres,text width=40mm] (actions) at (10,0)
   {Free, transitive cyclic actions\\efficient inputs and evaluation\\Thm. \ref{thm:reach}\ref{it:hs}};
 \draw[reacharr] (hs.north west) to[out=140,in=40]
   node[reachlab,above]{some solver-dependent $f_{\mathcal A}$; L. \ref{lem:hs-rare}} (reg.north east);
 \draw[reacharr] (reg.south east) to[out=-40,in=-140]
   node[reachlab,below]{zero corruptions} (hs.south west);
 \draw[reacharr] (hs) -- (actions);
\end{tikzpicture}\par}
\smallskip
\textbf{(b) Proved lattice consequences with route-specific parameters}\par\smallskip
{\centering
\begin{tikzpicture}[font=\footnotesize]
 \node[reachhyp] (reg) at (3.1,0) {$\HReg(f)$};
 \node[reachhyp] (edcp) at (10.85,0) {$\HEDCP[\ell]$\\polynomial-modulus domain};
 \node[reachres,text width=40mm,minimum height=13mm] (bdd) at (3.1,-2.2)
   {$\uSVP_{\gu}$, $\BDD_{\delta_{\mathrm u}}$, $\GapSVP_{\gamma_{\mathrm g}}$\\worst-case oracles};
 \node[reachres,text width=40mm,minimum height=13mm] (lwee) at (10.85,-2.2)
   {Uniform LWE: $\alpha<\alpha_{\mathrm E}$\\$m_0\in\mathcal M_q(n)$};
 \draw[reacharr] (reg) -- node[reachlab,right]{\cite{Reg04,LM09}\\Thm. \ref{thm:reach}\ref{it:usvp}--\ref{it:bdd}} (bdd);
 \draw[reacharr] (edcp) -- node[reachlab,right]{P. \ref{prop:edcp-lwe}} (lwee);
 \node[reachres] (sivpr) at (0,-4.45)
   {$\SIVP_{\gs_{\mathrm{Reg}}}$\\$O(n^{3/2+2f})$ factor};
 \node[reachres] (sisr) at (3.1,-4.45)
   {$\SIS_{\beta}$, $\beta\geq\beta_{\mathrm{Reg}}$\\$O(m^{3/2+2f})$ bound};
 \node[reachres] (lwer) at (6.2,-4.45)
   {Uniform LWE\\$\alpha\leq\alpha_{\mathrm{Reg}}(m)$};
 \node[reachres] (wce) at (9.3,-4.45)
   {$\SIVP$, $\GapSVP$\\$\Otil(\ell n^2)$ factor\\$\BDD_{1/g_{\mathrm E}}$};
 \node[reachres] (sise) at (12.4,-4.45)
   {$\SIS_{\beta}$\\$\beta>\beta_{\mathrm E}$};
 \draw[reacharr] (bdd.south west) -- node[reachlab,above left,pos=.55]{P. \ref{prop:sivp-reg}} (sivpr.north);
 \draw[reacharr] (bdd.south) -- node[reachlab,right,pos=.55]{P. \ref{prop:sis-reg}} (sisr.north);
 \draw[reacharr] (bdd.south east) -- node[reachlab,above right,pos=.55]{Thm. \ref{thm:reach}\ref{it:lwe}} (lwer.north);
 \draw[reacharr] (lwee.south west) -- node[reachlab,left,pos=.55]{C. \ref{cor:edcp-sivp}} (wce.north);
 \draw[reacharr] (lwee.south east) -- node[reachlab,right,pos=.55]{P. \ref{prop:sis-edcp}} (sise.north);
\end{tikzpicture}\par}
\vspace{2pt}
{\footnotesize
Here $\delta_{\mathrm u}=1/(2\gu(d+1))$; all factors and bounds are
defined in \cref{sec:reach-thm}. Reg-LWE and both SIS routes require
$m\geq2n\lg q$ and the stated matrix promises; $q$ is prime.
EDCP-LWE uses (B1)--(B3), with extra samples for amplification;
the worst-case composition chooses $q,\alpha$ with $\alpha q>2\sqrt n$.
EDCP-SIS also requires $\beta_{\mathrm E}<q/\sqrt2$.
The SIS edges use dual decoding and Gaussian sampling, not a reversal
of a worst-case hardness reduction.\par}
\medskip
\textbf{(c) Conditional embedding applications}\par\smallskip
{\centering
\begin{tikzpicture}[font=\footnotesize]
 \node[reachres,text width=35mm,minimum height=12mm] (oracle) at (0,0)
   {$\uSVP_{\gu}$ or\\$\BDD_{1/g_{\mathrm E}}$};
 \node[reachcond,text width=65mm,minimum height=12mm] (mlwe) at (8.3,0)
   {Few-sample MLWE and other LWE-type instances\\Thm. \ref{thm:reach}\ref{it:lwe}; P. \ref{prop:samples}};
 \draw[reacharr,dashed] (oracle) -- node[reachlab,above,text width=28mm]{verified embedding\\or BDD promise} (mlwe);
\end{tikzpicture}\par}
\medskip
\textbf{(d) Structured route: state conversion and application checks}\par\smallskip
{\centering
\begin{tikzpicture}[font=\footnotesize]
 \node[reachhyp,text width=35mm] (large) at (0,0)
   {Ordinary EDCP solver at $(k,Q)$\\$Q=q^d+1$};
 \node[reachres,text width=35mm,minimum height=12mm] (ipm) at (5.15,0)
   {IP-M-EDCP solver\\same $Q$ and secret};
 \node[reachcond,text width=35mm,minimum height=12mm] (wz) at (10.3,0)
   {MLWE within the source\\preparation contract};
 \draw[reacharr] (large) -- node[reachlab,above]{L. \ref{lem:wz-state-map}} (ipm);
 \draw[reacharr,dashed] (ipm) -- node[reachlab,above,text width=12mm]{\cite{WZ26}\\Thm. 4} (wz);
\end{tikzpicture}\par}
\vspace{2pt}
{\footnotesize
The left solver is outside the stated $\HEDCP$ domain.
Filtering via \eqref{eq:wz-filter} supplies it under $\HHS^{\mathrm{ab}}$
($\HHS$ for $k=1$) at the stated widths and state budget.
The dashed application requires all source distributions, width,
sample and preparation-error conditions. The audited printed
certificate gives no ML-KEM application (\cref{sec:wz-audit}).\par}
\medskip
\textbf{(e) Research questions}\par\smallskip
{\centering
\begin{tikzpicture}[font=\footnotesize]
 \node[reachopen] at (0,0)
   {Cyclic-to-vector translation?\\$\HReg(f)\Rightarrow\HEDCP[\ell]$?\\Q. \ref{q:translate}};
 \node[reachopen] at (5,0)
   {NTRU key recovery; LIP\\Q. \ref{q:ntru}; Q. \ref{q:gapless}};
 \node[reachopen] at (10,0)
   {SIS at small norm bounds\\Q. \ref{q:sis}};
\end{tikzpicture}\par}
\end{minipage}
\caption{Detailed algorithmic implications. An arrow $A\to B$ means
that a solver for $A$ yields an algorithm for $B$, reversing the
problem-reduction convention of \cref{fig:dcp-map}. Solid arrows are
proved under all premises of the cited result; dashed arrows identify
instance or preparation conditions still to be checked for an
application. The separate research-question boxes assert no reduction.
P., C. and L. locate the explicit derivations; literature citations
identify their sources.
The $\HHS$ lattice factors depend on its witness through $f_{\mathcal A}$.
No general implication between $\HEDCP$ and the cyclic solver hypotheses
is established here. Solving SIS does not automatically yield a forgery.}
\label{fig:reach}
\end{figure}

\subsection{The sample-count obstruction}\label{sec:samples}

Both routes to LWE in \cref{thm:reach}, \ref{it:lwe} and \ref{it:edcp}, need many more samples than the secret dimension: $m\geq2n\lg q$ for the first, and $m\in\mathcal M_q(n)$, which excludes $m$ close to $n$, for the second. That is the natural regime inside Regev's reduction, where the reduction itself chooses $m$, but not the regime of deployed schemes: a public key of ML-KEM contains $m=n$ (scalar) LWE samples, one of ML-DSA $m\approx n$ ($m/n\in\{1,6/5,8/7\}$), and the component $u$ of an ML-KEM ciphertext is again an LWE instance with $m=n$ samples, in a fresh secret. (The second component $v$ is not an LWE sample in that secret, because it contains the encoded message; but recovering the secret from $u$ decrypts $v$.) In this passive public-key setting, no additional supply of independent samples under the same secret is assumed. Interactive access or auxiliary keys require a separate analysis. The following proposition quantifies the loss.

\begin{proposition}[Sample-count obstruction]\label{prop:samples}
Consider normal-form LWE with $n$ secret coordinates, $m$ samples, modulus $q$, and secret and error coordinates of standard deviation $\sigma$.
\begin{enumerate}[label=(\alph*)]
  \item\label{it:gapm} \textbf{Heuristic}: For the embedding of dimension $n+m+1$ defined below, with scale $\tau\asymp\sigma$ and $\log q+|\log\sigma|+|\log(\tau/\sigma)|=o(n+m)$, the first-order heuristic gap estimate is
  \[
    g(m)\;\approx\;\frac{q^{m/(n+m)}}{\sigma\sqrt{2\pi e}},
  \]
  increasing from $g(n)\approx\sqrt q/(\sigma\sqrt{2\pi e})$ to $q/(\sigma\sqrt{2\pi e})$ as $m/n\to\infty$. Under the additional heuristic that the embedded
planted vector is shortest and its gap is represented
by the displayed estimate, comparison with
$\Gamma_{f,\mathcal A}(n+m+1)$ gives a heuristic
sufficient certificate. If the estimated gap is below
this bound, or the bound is unavailable, this
comparison is inconclusive.
  \item\label{it:bkswm} \textbf{Rigorous}: If $m=n$ and $\alpha q\geq1$, the restricted ball construction of \cref{prop:edcp-lwe} supplies no certificate at these parameters. The value $m=n$ is not in $\mathcal M_q(n)$: condition (B2) fails, and so does the lower bound on $\lambda_1(\Lambda_q(A))$ on which \cite[Thm. 3]{BKSW18} relies, since $\Lambda_q(A)=\Z^n$ for invertible $A$; and even taking the width bound of the theorem at face value, every state it would produce has width $r<1/(c\sqrt n\,\kappa\,\ell)\leq1/(c\sqrt n\,\kappa)$ and is within trace distance $2^{-\Omega(n)}$ of a state independent of $s$, so that polynomially many such states carry negligible information about $s$.
\end{enumerate}
\end{proposition}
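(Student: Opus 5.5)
For part (a) I will first fix the affine embedding explicitly. From normal-form samples $b=As+e\bmod q$ with $A\in\Zq^{m\times n}$, take the lattice $\Lat_\tau\subset\Z^{n+m+1}$ generated by the rows of
\[
  B=\begin{pmatrix} I_n & A^{\top} & 0\\ 0 & qI_m & 0\\ 0 & -b^{\top} & \tau\end{pmatrix}.
\]
This lattice contains $(s,-e,\tau)$ up to sign conventions. Its determinant is $q^m\tau$, because the matrix is block triangular.

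The Gaussian heuristic then gives
\[
  \mathrm{gh}(\Lat_\tau)=\sqrt{(n+m+1)/(2\pi e)}\,(q^m\tau)^{1/(n+m+1)}.
\]
For the planted vector, concentration gives $\|(s,e,\tau)\|\approx\sigma\sqrt{n+m}$, since $\tau\asymp\sigma$. Taking the ratio and absorbing $\sqrt{(n+m+1)/(n+m)}=1+o(1)$ leaves
\[
  g(m)\approx q^{m/(n+m+1)}\tau^{1/(n+m+1)}/(\sigma\sqrt{2\pi e}).
\]
The hypothesis $\log q+|\log\sigma|+|\log(\tau/\sigma)|=o(n+m)$ is exactly what is needed to replace $m/(n+m+1)$ by $m/(n+m)$ and to discard $\tau^{1/(n+m+1)}$, both up to factors $2^{o(1)}$ in the first-order sense. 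Monotonicity in $m$ is immediate, and the limits follow at $m=n$ and as $m/n\to\infty$.

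The certificate clause of (a) is not a theorem. I will only record that if one assumes the planted vector is shortest and $\lambda_2/\lambda_1\approx g(m)>\Gamma_{f,\mathcal A}(n+m+1)$, then \cref{thm:reach}\ref{it:usvp} applies in dimension $n+m+1$. Otherwise the comparison says nothing. This part is routine; the only care needed is in stating which heuristics are assumed.

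For part (b) I will check three things in order. First, $m=n\notin\mathcal M_q(n)$, because (B2) would require $q^0=1\geq2\sqrt{2\pi e}$, which is false. Second, for invertible $A$ the set $\Lambda_q(A)$ contains $A\Zq^n+q\Z^n=\Z^n$, so $\lambda_1=1$ and \cite[Lem. 6]{BKSW18} fails. Third, substituting $m=n$ into the width bound of \cite[Thm. 3]{BKSW18} gives $q^{n/m}=q$. With $\alpha q\geq1$ this yields $r<1/(c\sqrt n\,\kappa\,\ell\,\alpha q)\leq1/(c\sqrt n\,\kappa\,\ell)$, and this is at most $1/(c\sqrt n\,\kappa)$ since $\ell\geq1$.

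The main step is the information claim. A $\mathrm{G}$-EDCP state of width $r<1/(c\sqrt n\kappa)$ has amplitudes $\rho_r(j)$. Compared with the $j=0$ amplitude, the mass on $j\neq0$ is at most about $2e^{-2\pi/r^2}$ after normalisation, which equals $e^{-\Omega(n\kappa^2)}=2^{-\Omega(n)}$. So each state is $2^{-\Omega(n)}$-close in trace distance to $\ket{0}\ket{x}$, which does not depend on $s$. By the tensor hybrid bound, $\poly(n)$ states stay within $\poly(n)2^{-\Omega(n)}$ of $s$-independent states. Contractivity then limits any algorithm to guessing $s$ with probability at most $q^{-n}+\negl(n)$.

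The obstacle I expect is bookkeeping rather than ideas: the normalisation constant $\rho_{r/\sqrt2}(\Z)$ at tiny $r$ has to be handled correctly, and the constants must be tracked so that the tail bound really is $2^{-\Omega(n)}$. I will handle this using $\rho_{r/\sqrt2}(\Z)\geq1$ together with the explicit bound $\sum_{j\neq0}e^{-2\pi j^2/r^2}\leq 2e^{-2\pi/r^2}/(1-e^{-2\pi/r^2})$.
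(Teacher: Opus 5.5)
Your proposal is correct and follows the paper's proof in both parts. In (a), your basis generates exactly $\{(x,y,\tau z):Ax-y\equiv zb \pmod q\}$, which is the paper's $L_{\tau,b}$ after $y\mapsto -y$, and you use the same determinant $\tau q^m$, Gaussian-heuristic ratio and first-order simplification. In (b), you follow the same steps: (B2) fails, $\Lambda_q(A)=\Z^n$, the face-value width $r<1/(c\sqrt n\,\kappa\,\ell)$, the tail mass $\sum_{j\neq0}e^{-2\pi j^2/r^2}$, and a hybrid bound.

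Two small details need adding. First, $\lambda_1=1$ contradicts Lemma~6 only because its bound at $m=n$, namely $\min\{q,\sqrt n/(2\sqrt{2\pi e})\}$, exceeds $1$, which requires $n\geq69$. Second, $\Lat_\tau\subset\Z^{n+m+1}$ holds only for integer $\tau$; the paper clears denominators when $\tau$ is rational.
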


\begin{proof}
\ref{it:gapm} Write $b=As+e\pmod q$, with integer representatives for $s$ and $e$, and put $D=n+m$. The homogeneous lattice
\[
 L_0=\{(x,y)\in\Z^n\times\Z^m:
                     Ax+y\equiv0\pmod q\}
\]
has determinant $q^m$, since the map
$(x,y)\mapsto Ax+y$ onto $\Z_q^m$ is surjective.
The vector $(s,e)$ belongs to the affine coset
$Ax+y\equiv b$, not to $L_0$ in general.

For a positive rational embedding scale $\tau$,
define the full-rank lattice
\[
 L_{\tau,b}
   =\{(x,y,\tau z):
       x\in\Z^n,\ y\in\Z^m,\ z\in\Z,\
       Ax+y\equiv zb\pmod q\}.
\]
Clearing a common denominator for rational $\tau$ does not change approximation ratios. The lattice has dimension $D+1$, determinant $\tau q^m$,
and contains the planted vector $(s,e,\tau)$.
The determinant follows from the same surjectivity
argument before scaling the last coordinate by
$\tau$.

Under the Gaussian heuristic for competing vectors,
and the model
$\|s\|^2+\|e\|^2\approx D\sigma^2$, the gap estimate is
\[
 g_\tau(m)\ \approx\
 \frac{\sqrt{(D+1)/(2\pi e)}\,
                 (\tau q^m)^{1/(D+1)}}
      {\sqrt{D\sigma^2+\tau^2}}.
\]
For $\tau$ comparable to $\sigma$, with
$\log q+|\log\sigma|+|\log(\tau/\sigma)|=o(D)$, this agrees
to first order with
$q^{m/D}/(\sigma\sqrt{2\pi e})$.
The heuristic also assumes that the planted
direction is shortest and that no atypically
short competing direction invalidates the gap.
Applying the uSVP theorem requires those promises;
the determinant computation alone does not
establish them.

\ref{it:bkswm}: for invertible $A\in\Zq^{n\times n}$, $A\Zq^n=\Zq^n$, so $\Lambda_q(A)=\Z^n$ and $\lambda_1=1$, below the bound $\sqrt n/(2\sqrt{2\pi e})$ that \cite[Lem. 6]{BKSW18} gives at $m=n$, for $n\geq69$. Taking the width bound at face value, with $m=n$, $q^{n/m}=q$ and $\alpha q^{n/m}=\alpha q\geq1$, so the admissible width is $r<1/(c\sqrt n\,\kappa\,\ell)$. A Gaussian amplitude of width $r$ places relative weight $\sum_{j\neq0}e^{-2\pi j^2/r^2}\leq 2e^{-2\pi c^2 n\kappa^2}(1+o(1))$ outside $j=0$; the state is therefore $2^{-\Omega(n)}$-close to $\ket{0}\ket{x_k}$, which is independent of $s$. By the triangle inequality, $\poly(n)$ such states are jointly $2^{-\Omega(n)}$-close to $s$-independent states.
\end{proof}

\begin{corollary}\label{cor:fewsample}
For a normal-form LWE instance with $m=n$, the
first-order gap estimate is
$\sqrt q/(\sigma\sqrt{2\pi e})$. Comparing it with
$\Gamma_{f,\mathcal A}(2n+1)$ gives a heuristic
sufficient certificate, subject to the planted-vector
and gap assumptions. No converse is claimed. For ML-KEM, the heuristic normalised gaps of keys and ciphertexts are about $0.3$--$0.4$ (\cref{tab:gaps}). These compare with the $f\to0^+$ model only; an explicit finite-dimensional bound and the actual embedding promise remain necessary for a rigorous certificate. At $m=n$, the restricted unstructured ball construction supplies no certificate (\cref{prop:samples}\ref{it:bkswm}). The printed structured forward bound supplies no certificate at the ML-KEM ring and modulus (\cref{prop:wz-certificate}); this is a limitation of that sufficient bound.
\end{corollary}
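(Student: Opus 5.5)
The corollary is obtained by specialising \cref{prop:samples} to $m=n$, so the plan is to assemble its clauses from results already proved rather than to add a new argument.

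\emph{The gap estimate.} Set $m=n$ in \cref{prop:samples}\ref{it:gapm}. Then $D=2n$, the embedding lattice $L_{\tau,b}$ has dimension $2n+1$, and the exponent is $m/(n+m)=1/2$. This gives the first-order estimate $\sqrt q/(\sigma\sqrt{2\pi e})$. The side conditions must still hold: $\tau\asymp\sigma$ and $\log q+|\log\sigma|=o(n)$. Both are satisfied for polynomial $q$ and constant $\sigma$.

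\emph{The heuristic certificate and the absence of a converse.} If the planted vector $(s,e,\tau)$ is shortest and its gap is at least $\Gamma_{f,\mathcal A}(2n+1)$, then \cref{thm:reach}\ref{it:usvp} returns it, and $s$ can be read off its first coordinates. This gives the sufficient direction. For the converse, I will not claim one. If the estimate falls below the bound, the comparison is inconclusive, exactly as \cref{prop:samples}\ref{it:gapm} already states.

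\emph{The ML-KEM numbers.} For ML-KEM, the figures $0.3$--$0.4$ are read off from \cref{tab:gaps}. They come from dividing the estimate by $c_0(2n+1)^{1/2}$ in the $f\to0^+$ normalisation. I will stress that this is only a comparison with the numerical model $c_0d^{1/2+2f}$, not with $\Gamma_{f,\mathcal A}$ itself. That is why an explicit finite-dimensional bound and a verified uniqueness promise are still needed before anything can be called a certificate.

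\emph{The two negative statements.} The unstructured claim is exactly \cref{prop:samples}\ref{it:bkswm}. The structured claim is a forward reference to \cref{prop:wz-certificate}, which I will quote as stated and describe as a limitation of the printed bound only.

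\emph{Main obstacle.} The only delicate point is the numerical sentence. The normalised gaps depend on the concrete $n$, $q$, $\sigma$ and $c_0$ that enter \cref{tab:gaps}. I would check one entry by hand: ML-KEM-768 with $n=768$, $q=3329$ and $\sigma\approx1$. This gives $\sqrt q/(\sigma\sqrt{2\pi e})\approx 14$, against $\sqrt{1537}\approx39$, a ratio of about $0.36$, which is consistent with the stated range.
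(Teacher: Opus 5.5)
Your proposal is correct and follows the paper, which obtains this corollary by specialising \cref{prop:samples}\ref{it:gapm} to $m=n$ and citing \cref{tab:gaps}, \cref{prop:samples}\ref{it:bkswm} and \cref{prop:wz-certificate} for the numerical and negative clauses. Two small corrections, neither affecting the stated range $0.3$--$0.4$: \cref{tab:gaps} normalises by $\sqrt D=\sqrt{2n}$ with $c_0=1$, not by $c_0\sqrt{2n+1}$; and its ciphertext rows use the two-width estimate with $\sigma_s\neq\sigma_e$, because compression error enters $\sigma_e$, rather than the single-$\sigma$ formula.
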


\paragraph{A worked example.} \Cref{fig:samples} plots the first-order normalised gap $g(m)/\sqrt{n+m}$ for ML-KEM-768-type parameters ($q=3329$, $\sigma=1$, $n=768$) and ML-DSA-44-type parameters ($q=8\,380\,417$, $\sigma=\sqrt2$, $n=1024$). The line at one is the comparison model $f\to0^+$, $c_0=1$, not a theorem threshold. The ML-KEM-type public-key value is about $0.4$ and its compressed ciphertext component about $0.3$; the model curve crosses one near $m=1.85n$. The ML-DSA-type value is about $11$ only in the separate scenario where full $t$ is supplied; the published-key estimate is about $0.3$. These comparisons illustrate sample sensitivity, without certifying actual gaps or attacks.

\begin{figure}[tbp]
\centering
\begin{tikzpicture}
\begin{axis}[
  width=0.86\textwidth, height=6.6cm,
  xmin=1, xmax=3, ymin=0.2, ymax=30000,
  ymode=log, log basis y=10,
  xtick={1,1.5,2,2.5,3},
  xlabel={samples relative to secret dimension, $m/n$},
  ylabel={normalised gap $g(m)/\sqrt{n+m}$},
  label style={font=\small}, tick label style={font=\footnotesize},
  grid=major, major grid style={line width=0.3pt, draw=black!12},
  axis line style={draw=black!50, line width=0.4pt},
  tick style={draw=black!50},
  legend style={at={(0.02,0.98)}, anchor=north west, draw=black!30, fill=white, font=\footnotesize, row sep=1pt},
  legend cell align=left,
]
% ML-DSA-44-type: q = 8380417, sigma = sqrt(2), n = 1024
\addplot[color=wVermillion, line width=1.4pt, dash pattern=on 6pt off 2pt on 1.5pt off 2pt, domain=1:3, samples=81, forget plot]
  {8380417^(x/(1+x))/(1.41421*4.13273*sqrt(1024*(1+x)))};
\addplot[color=wVermillion, only marks, mark=square*, mark size=2.4pt, mark options={draw=white, line width=0.6pt}]
  coordinates {(1,10.945) (1.5,48.205) (2,127.36) (2.5,251.91) (3,416.41)};
\addlegendentry{{ML-DSA-44-type: $q\approx2^{23}$, $\sigma=\sqrt2$, $n=1024$}}
% ML-KEM-768-type: q = 3329, sigma = 1, n = 768
\addplot[color=wBlue, line width=1.4pt, solid, domain=1:3, samples=81, forget plot]
  {3329^(x/(1+x))/(4.13273*sqrt(768*(1+x)))};
\addplot[color=wBlue, only marks, mark=*, mark size=2.4pt, mark options={draw=white, line width=0.6pt}]
  coordinates {(1,0.3562) (1.5,0.7170) (2,1.1239) (2.5,1.5310) (3,1.9133)};
\addlegendentry{{ML-KEM-768-type: $q=3329$, $\sigma=1$, $n=768$}}
% threshold
\addplot[color=black!75, line width=0.9pt, dashed, domain=1:3, samples=2] {1};
\addlegendentry{{threshold $\gu=\sqrt{n+m}$ ($f\to0^+$, $c_0=1$)}}
% annotations
\node[font=\scriptsize, anchor=west] at (axis cs:1.03,0.27) {public key, $m=n$: $0.36$};
\draw[black!60, line width=0.4pt] (axis cs:1.85,1.0) -- (axis cs:1.85,1.45);
\node[font=\scriptsize, anchor=south] at (axis cs:1.85,1.45) {crosses at $m\approx1.85n$};
\node[font=\scriptsize, anchor=west] at (axis cs:1.04,6.5) {full $t$ supplied: $10.9$};
\end{axis}
\end{tikzpicture}
\caption{The sample-count obstruction (\cref{prop:samples}). Normalised unique-SVP gap of the LWE embedding lattice as a function of the number of samples, for two parameter families, against the heuristic yardstick $g=\sqrt{n+m}$ ($f\to0^+$, $c_0=1$); points above it satisfy only this model comparison, subject to the actual gap and finite-dimensional guarantee. Series are distinguished by colour (Wong palette), line pattern, marker shape and legend; the plotted values for public keys also appear in \cref{tab:gaps}, together with those of ciphertexts.}
\label{fig:samples}
\end{figure}

The two analysed unstructured direct constructions lose a factor $q^{n/m}$ in their sufficient noise guarantees. Large modulus-to-noise ratios can improve those comparisons, but a structured embedding still needs its actual gap and distribution promises. More samples do not suffice by themselves: at the standard noise rate $\alpha=1/\Otil(\sqrt n)$ of Regev's and dual-Regev encryption, the displayed direct guarantees supply no certificate even with $m=\Theta(n\log q)$ (\cref{sec:collapse-lwe}).

\subsection{The window outside the reach}\label{sec:window}

\Cref{thm:reach} gives sufficient guarantees for lattice instances with verified promises. A planted-shortest-vector instance with actual gap above $\gu$ is solved by the ordinary-uSVP route. An approximate lattice problem is solved when the available oracle's factor meets the target specification. Under $\HHS$, use the corresponding bounds for some solver-dependent $f_{\mathcal A}$. Under $\HEDCP[\ell]$, the restricted ball composition and the worst-case route have their own sample, distribution and gap conditions. Failure of one of these comparisons to certify an instance gives no converse about the solver's behaviour. We use ``window outside the reach'' only for the region without a certificate from these stated guarantees; its boundary is not a measured security margin. Solving a SIS proof assumption also remains distinct from forging a signature (\cref{sec:sis}). \Cref{fig:reach} retains the distinct quantitative routes. In particular, its two SIS branches use dual decoding and Gaussian sampling (\cref{rem:sis-routes}); they do not follow by reversing a worst-case-to-average-case hardness reduction. Its conditional application panels require the stated instance or preparation certificates, and its research-question boxes assert no additional reduction. The ordinary-EDCP solver in the structured panel has exponential modulus in the ring degree and is not supplied by the polynomial-modulus hypothesis.

% ======================================================================
\section{The Wen--Zheng route: recovered map and audited scope}\label{sec:wz-audit}\label{rem:wz-scope}\label{rem:wz-map}
% ======================================================================

We use the full version of \cite{WZ26} to recover the exact conversion of
structured states, transcribe the forward parameter conditions, and test
them at the public-key parameters. These are distinct results: the
state conversion is exact, the forward construction has restrictive
sufficient conditions, and its polynomial-modulus adapter needs a
complete resource ledger before it can invoke $\HEDCP[\ell]$ as defined here.
The theorem numbers and page references below refer to the
86-page full version of \cite{WZ26}.

\paragraph{Parameters and distributions.}
Write $k$ for the module rank (the source's $n$), $m$ for the number of
independent ring samples under one secret, $t$ for the amplified sample
count (the source's $m'$), and $L$ for the number of structured states
(the source's $l$). Thus the scalar secret dimension is $dk$, not $k$.
Set
\[
 F(X)=X^d+1,\qquad Q=F(q)=q^d+1,\qquad
 S_q=\frac{q^d-1}{q-1},\qquad z=(1,q,\ldots,q^{d-1}).
\]
Theorem 4 of \cite{WZ26} assumes power-of-two $d=\poly(\kappa)$,
constant $k$, $m=O(\log\kappa)$, and a polynomially bounded unramified
prime $q$. It requires the prime ideals above $q$ to satisfy
$\min_i\operatorname{N}(\mathfrak p_i)\geq
(kd)^{\omega(1/(m-k+1))}$. Its polynomial-LWE input has independent
coefficient discrete Gaussians of widths $\sigma_s,\sigma_e\geq\sqrt d$
for the secret and errors, respectively; these are Gaussian parameters,
not coefficient standard deviations. Bounded, centred-binomial,
compressed or rounded inputs need a separate conversion.
For power-of-two cyclotomics the unnormalised canonical embedding
scales coefficient Euclidean norms by $\sqrt d$; Gaussian covariances
must be converted accordingly. This does not identify different
error distributions. In particular, $O(\log\kappa)$ is an upper bound
on $m$, not a promise that every smaller sample count satisfies the
regularity argument: the invoked Lemma 49 and Corollary 6 require $m>k$.

\paragraph{The printed forward certificate.}
Lemma 28 and Theorem 4, pp. 39 and 44--45 of \cite{WZ26}, use
\begin{align}
 \rho_2&\geq 2d\,q^{k/m+2/(dm)},\nonumber\\
 B&=\frac{15}{2}+3\sigma_s\sqrt\kappa\,k^2+18k
       +2m\sigma_e\rho_2d^2\sqrt\kappa,\label{eq:wz-B}\\
 \left(3+r\sqrt\kappa\,Bd(\sqrt2+2/q)\right)S_q
   &\leq \frac{Q^{1-k/t}}{4tLC},\qquad C\geq8.
 \label{eq:wz-printed}
\end{align}
Here $B$ is the source's $c'_{\max}$ and $r$ is its structured-state
width. The amplified matrix is statistically close to uniform over
$\Z_Q^{t\times k}$; its error may depend on that matrix, which is
explicitly allowed in Lemma 30. The evaluation encoding of the secret
is injective because $F(q)\geq q^d$, so recovering the encoded secret
recovers the polynomial secret by the source's centred digit map.
The construction is claimed to produce $L$ approximate states with
negligible individual trace error and constant joint good-branch
probability, using polynomial time in the full input length and
$k,d,t,L,\log q,\kappa$.
We use \eqref{eq:wz-printed} exactly as a \emph{printed sufficient
certificate}, not as a necessary condition for any algorithm.
Lemma 30 instead retains the expansion factor $EF(F)$, and the proof
uses $EF(F)\leq2$; using $2$ in place of $\sqrt2$ is a conservative
strengthening. All negative certificate checks below already use the
more permissive printed constant.

The source's ideal all-good product is
$(1-1/(tL))^{tL}$, which is below $1/e$, not above it.
For $tL\geq2$ it is at least $1/4$; this repairs the stated constant
without changing polynomial-time claims, after accounting separately
for the preprocessing and matrix events. If the usable good-branch
probability is $c$, the target solver succeeds with probability $p$,
and the full preparation instrument has trace error at most $\varepsilon$,
the composed recovery probability is at least $cp-\varepsilon$.
Thus individual errors must be budgeted against $L$ and $p$, not merely
called small after conditioning on a measurement outcome.

\begin{lemma}\label{lem:wz-state-map}
An ideal product-amplitude IP-M-EDCP state
\[
 \sum_{c\in\Z^d}\left(\prod_{i=0}^{d-1}a(c_i)\right)
   \ket{c}\ket{x+\langle c,z\rangle\widetilde s\bmod Q}
\]
reduces by measurement to one ordinary EDCP state of dimension $k$,
modulus $Q$ and one-dimensional amplitude $a$, with the same secret.
The conversion preserves the number of states and cannot increase
trace distance. It holds for every secret and arbitrary offsets.
\end{lemma}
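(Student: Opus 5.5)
The plan is to imitate the proof of \cref{lem:dcp-edcp}: measure a classical function of the index register that leaves exactly one degree of freedom coherent. The structured state is indexed by $c=(c_0,\dots,c_{d-1})\in\Z^d$ with product amplitude $\prod_i a(c_i)$, and the second register depends on $c$ only through the scalar $j:=\langle c,z\rangle=\sum_i c_iq^i$. I would measure the coordinates $c_1,\dots,c_{d-1}$ in the computational basis and keep $c_0$ coherent. Because the amplitude factorises, the outcome distribution of $(c_1,\dots,c_{d-1})$ is the product of the marginals $a(c_i)^2$ (normalised). For each outcome $\bar c=(c_1,\dots,c_{d-1})$, the post-measurement state is, up to normalisation and a discarded classical register,
\[
 \sum_{c_0\in\Z}a(c_0)\ket{c_0}\ket{x'+c_0\widetilde s\bmod Q},\qquad x':=x+\Big(\sum_{i\geq1}c_iq^i\Big)\widetilde s\bmod Q,
\]
because $z_0=1$. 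The factor $\prod_{i\geq1}a(c_i)$ is a common scalar and disappears on normalisation. This is one ordinary EDCP state with modulus $Q$, dimension $k$ (the dimension of $\widetilde s\in\Z_Q^k$), amplitude $a$, secret $\widetilde s$ and offset $x'$.

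The offset $x'$ is unknown, since it involves $\widetilde s$, but the target problem allows arbitrary offsets, as in \cref{lem:dcp-edcp}. So no information about $x'$ is needed, and the statement holds for every secret and every initial offset. Each input state yields exactly one output state with no rejection, since every outcome $\bar c$ is useful (unlike the pairing event $\pi_\rho$ in \cref{lem:dcp-edcp}). Hence the number of states is preserved and no Chernoff step is required.

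For the trace-distance claim, I would describe the whole conversion as a fixed quantum channel: a measurement of $c_1,\dots,c_{d-1}$, followed by tracing out the classical outcome and relabelling $c_0$ as the one-dimensional index register. It does not depend on $\widetilde s$ or $x$. Contractivity of trace distance under CPTP maps then shows that an approximate input is mapped at least as close to the ideal output, and applying the channel tensorwise to $L$ states handles joint errors by the same contractivity.

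The step I expect to need the most care is bookkeeping rather than mathematics. The claim \emph{up to normalisation} must be exact, which relies on the product structure: if the amplitude did not factorise over coordinates, the conditional amplitude on $c_0$ would depend on $\bar c$ and would no longer equal $a$. The support of $a$ must also be handled, since $\Z$ is infinite. I would note that normalisability of the input gives $\sum_j a(j)^2<\infty$, so the conditional state is well defined. I would also observe that the reduction modulo $Q$ in the second register is compatible with folding the measured part into $x'$, because all arithmetic is in $\Z_Q^k$ and $\langle c,z\rangle\widetilde s\equiv c_0\widetilde s+\big(\sum_{i\geq1}c_iq^i\big)\widetilde s\pmod Q$.
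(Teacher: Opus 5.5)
Your proposal is correct and is essentially the paper's proof: measure $c_1,\dots,c_{d-1}$, absorb the measured part $\sum_{i\geq1}q^ic_i\,\widetilde s$ into the offset using $z_0=1$, and obtain trace-distance contraction from the secret-independent measurement channel. The only difference is cosmetic. You trace out the measurement outcome, whereas the paper retains it as a classical register. Both are quantum channels, and the resulting mixture over offsets is still covered by a solver that succeeds for every secret and arbitrary offsets.
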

\begin{proof}
Measure $c_1,\ldots,c_{d-1}$, obtaining $u_1,\ldots,u_{d-1}$.
With $x'=x+(\sum_{i=1}^{d-1}q^iu_i)\widetilde s$, the remaining state is
\[
 \sum_{j\in\Z}a(j)\ket{j}\ket{x'+j\widetilde s\bmod Q}.
\]
This is the construction of \cite[Thm. 5, pp. 45--46]{WZ26}.
Its action is independent of the secret distribution; in particular it
also applies to the encoded Gaussian secret produced by Theorem 4,
although the formal statement of Theorem 5 names a uniform secret.
Trace-distance contraction follows because measurement is a quantum
channel when its classical outcome is retained.
\end{proof}

\paragraph{What solver this actually reaches.}
The output has $\log Q=\Theta(d\log q)$, but $Q$ is exponential in $d$.
Polynomial bit length alone does not place this state in
$\HEDCP[\ell]$, which assumes a polynomial modulus in its EDCP dimension.
The source's Definition 22 uses amplitudes $\sqrt{D_{\Z,r}}$, whereas
its construction writes $\rho_r$. Under the former convention the
amplitude width in \cref{def:edcp} is $\sqrt2r$; under the latter it is
$r$. One must fix this convention in a quantitative adapter. The
constant factor does not change the exclusions below.

There is nevertheless a precise positive consequence under the already
stated abelian hypothesis. For amplitude width $R$ in this survey's
convention, project an EDCP register onto $j\in\{0,1\}$ and attenuate
the $j=0$ amplitude to equal the $j=1$ amplitude. This produces a
noiseless coset state over $(\Z_Q)^k$ with heralded probability
\begin{equation}\label{eq:wz-filter}
 p_R=\frac{2e^{-2\pi/R^2}}{\sum_{j\in\Z}e^{-2\pi j^2/R^2}}.
\end{equation}
For $R$ bounded below by a positive constant and bounded above by a
polynomial in the input length, $p_R\geq1/\poly(kd\log q)$.
If the witnessing $\HHS^{\mathrm{ab}}$ solver needs $J$ registers,
$O((J+\log(1/\delta))/p_R)$ independently prepared ideal EDCP states
suffice to obtain $J$ coset states except with probability $\delta$.
Consequently, an efficient forward preparation meeting all its
premises, this larger state budget, and the $cp-\varepsilon$ error
budget gives a search algorithm under $\HHS^{\mathrm{ab}}$.
For $k=1$ the group is cyclic, so $\HHS$ suffices.
This consequence does not assume that cyclic and general abelian
state-input problems are interchangeable, and does not insert an
extra clause into $\HEDCP$.

For the original $\HEDCP[\ell]$ route, Appendix E, Lemma 58 of \cite{WZ26}
proposes a modulus-to-dimension adapter. Its intermediate dimension
satisfies $n'\geq(k+1)d\lceil\log q\rceil+2\kappa$; it reserves
$16k+4\lceil\ln\ln Q\rceil$ input states for normal form, requires the
remaining sample count to be at least $n'\log q$, and incurs a further
width loss and output-state budget. These are additional requirements,
not properties of the exact map above. In particular the invoked
EDCP-to-LWE lemma requires $r=\Omega(\sqrt\kappa)$, and the unknown-noise
routines invoked twice have their own polynomial sample overheads.
The displayed fixed sample ledger in Lemma 58 does not explicitly
propagate those overheads. We therefore retain this adapter as a
source-stated route requiring a completed search, width, success and
sample audit before it supports a new $\HEDCP[\ell]$ consequence.

\begin{proposition}\label{prop:wz-certificate}
For the printed certificate \eqref{eq:wz-printed}, define
\[
 A_{\max}=\frac{Q}{4eCkL S_q\ln Q}.
\]
For every real $t>0$, hence also every admissible integer amplified
sample count, the certificate requires
\begin{equation}\label{eq:wz-optimized}
 3+r\sqrt\kappa\,Bd(\sqrt2+2/q)\leq A_{\max}.
\end{equation}
At $d=256$, $q=3329$, $k\in\{2,3,4\}$, $C\geq8$, and $L\geq1$,
$A_{\max}<0.01$. Thus this certificate has no feasible positive width
at the ML-KEM ring and modulus, regardless of $t$, even before
checking input distributions and amplification premises.
\end{proposition}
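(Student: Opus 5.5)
The plan is to remove the dependence on the amplified sample count $t$ by maximising the right-hand side of \eqref{eq:wz-printed} over all real $t>0$, and then to evaluate the resulting bound numerically. Since $S_q>0$, dividing \eqref{eq:wz-printed} by $S_q$ shows that the certificate at a given $t$ is the inequality
\[
 3+r\sqrt\kappa\,Bd(\sqrt2+2/q)\leq\frac{Q}{4LCS_q}\,g(t),\qquad g(t):=\frac{Q^{-k/t}}{t}.
\]
It therefore suffices to bound $\sup_{t>0}g(t)$.

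For this I will use $\ln g(t)=-k\ln Q/t-\ln t$. Its derivative is $k\ln Q/t^2-1/t$, which is positive for $t<k\ln Q$ and negative for $t>k\ln Q$. So the unique maximiser on $(0,\infty)$ is $t^\ast=k\ln Q$, and
\[
 g(t^\ast)=\frac{e^{-1}}{k\ln Q}.
\]
Substituting this value gives \eqref{eq:wz-optimized} with exactly the stated $A_{\max}=Q/(4eCkLS_q\ln Q)$. The bound holds for every real $t>0$, and hence for every admissible integer $t$. Nothing about $B$, $r$ or $\kappa$ enters at this step; they only need to be nonnegative.

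The numerical step comes next. I will use
\[
 \frac{Q}{S_q}=\frac{(q^d+1)(q-1)}{q^d-1}=(q-1)\bigl(1+\tfrac{2}{q^d-1}\bigr),
\]
which at $q=3329$ and $d=256$ is $3328$ up to a negligible correction. I will also use $\ln Q>d\ln q=256\ln3329\approx2076$. The quantity $A_{\max}$ is decreasing in $k$, $C$ and $L$, so the worst case is $k=2$, $C=8$, $L=1$. In that case
\[
 A_{\max}\lesssim\frac{3328}{4e\cdot8\cdot2\cdot2076}\approx\frac{3328}{3.6\times10^5}\approx0.0092<0.01.
\]
I will write this with explicit rounding in the safe direction, so that each factor is bounded conservatively.

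Finally, the left side of \eqref{eq:wz-optimized} is at least $3$ for every $r\geq0$, because $B>0$ and the other factors are positive. Since $3>0.01>A_{\max}$, no width $r$ (not even $r=0$) satisfies the certificate, whatever $t$ is. The only real care point is the monotonicity and sign bookkeeping in the optimisation over $t$, together with conservative rounding of $\ln Q$ and $Q/S_q$. I do not expect a genuine obstacle in this argument.
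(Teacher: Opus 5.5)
Your proposal is correct and follows essentially the same route as the paper. You divide by $S_q$, maximise $Q^{-k/t}/t$ at $t=k\ln Q$ to obtain $A_{\max}$, bound $A_{\max}$ at the most permissive $k=2$, $C=8$, $L=1$ using $\ln Q>d\ln q$, and compare with the additive $3$. The only difference is cosmetic: where you keep the exact expression for $Q/S_q$, the paper uses the bound $Q/S_q<q$ (valid for $q\geq3$, $d\geq2$). This gives $A_{\max}<q/(4eCkLd\ln q)<0.009217$ and settles the safe-direction rounding you left for later.
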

\begin{proof}
Divide \eqref{eq:wz-printed} by $S_q$. The only dependence of its
right-hand side on $t$ is $e^{-k\ln Q/t}/t$, whose maximum for $t>0$
is $1/(e k\ln Q)$, attained at $t=k\ln Q$.
For $q\geq3$, $d\geq2$, we have
\[
 \frac{Q}{S_q}=(q-1)\frac{q^d+1}{q^d-1}<q,
 \qquad \ln Q>d\ln q.
\]
Hence $A_{\max}<q/(4eCkLd\ln q)$.
Substituting $q=3329$, $d=256$, and even the most permissive
$C=8$, $k=2$, $L=1$ gives less than $0.009217$.
The left-hand side of \eqref{eq:wz-optimized} exceeds $3$.
\end{proof}

\begin{table}[htbp]
\centering\small
\renewcommand{\arraystretch}{1.15}
\begin{tabular}{@{}lrrll@{}}
\toprule
\textbf{Input parameters} & $k$ & $m$ & $A_{\max}$, approximately & \textbf{Width required by certificate}\\
\midrule
ML-KEM-512 & 2 & 2 & 0.009214 & no feasible $r$\\
ML-KEM-768 & 3 & 3 & 0.006142 & no feasible $r$\\
ML-KEM-1024 & 4 & 4 & 0.004607 & no feasible $r$\\
ML-DSA-44 & 4 & 4 & 5.901936 & $r<2.17\cdot10^{-19}$\\
ML-DSA-65 & 5 & 6 & 4.721549 & $r<1.24\cdot10^{-18}$\\
ML-DSA-87 & 7 & 8 & 3.372535 & $r<1.04\cdot10^{-19}$\\
\bottomrule
\end{tabular}
\caption{Optimistic tests of the printed Wen--Zheng certificate, not
attack costs or security bounds. We set $C=8$, $L=1$, relax $\kappa$
to $1$, take $\sigma_s=\sigma_e=\sqrt d=16$, minimise $\rho_2$, and
maximise over all real $t>0$. The ML-DSA bounds use
$(A_{\max}-3)/(Bd(\sqrt2+2/q))$ and are rounded upwards.
These Gaussian surrogates are not the deployed distributions.
Real state budgets and statistical parameters only tighten the test.}
\label{tab:wz-audit}
\end{table}

The positive widths in the last three rows are far below the
$\Omega(\sqrt\kappa)$ input requirement of the source's
polynomial-modulus adapter; no efficient two-point-filter application is certified here by \eqref{eq:wz-filter}. None of these rows is a recovered
attack. For ML-KEM and ML-DSA-44, $m=k$ also lies outside the strict
$m>k$ premise of the regularity lemma used in amplification.
Ciphertext compression and ML-DSA's omitted low bits introduce further
conversion obligations; they cannot repair the displayed sufficient
tests by themselves.

\paragraph{The route with more input samples.}
Appendix D.5, Lemma 52 of \cite{WZ26}, avoids amplification when
sufficiently many independent samples are supplied, with the source
suggesting order $k\log q$. It allows nonconstant rank and a broader
class of polynomials. In the cyclotomic case above, one may take the carry bound
\[
 B_{\mathrm{many}}=\frac{15}{2}+3\sigma_s\sqrt\kappa\,k^2+18k
                         +\sigma_e\sqrt\kappa.
\]
Lemma 30 then uses $t=m$, $B=B_{\mathrm{many}}$, and $EF(F)$ in place
of $\sqrt2$ in \eqref{eq:wz-printed}. This removes the amplification
error but not the modulus, state count, width or distribution
conditions. In particular, extra states generated from the same
input are not extra independent ring samples, and the additive-$3$
obstruction of \cref{prop:wz-certificate} still applies to this printed
bound at the ML-KEM ring and modulus. For any parameter family satisfying the forward inequalities and
the full preparation contract, the positive solver composition above
applies. We do not infer such a family merely from the failure of the
displayed deployed-parameter tests.

% ======================================================================
\section{Consequences for the LWE family}\label{sec:collapse}
% ======================================================================

This section records sufficient consequences for LWE-type assumptions under each hypothesis and the checks still needed for particular instances. Most entries follow from \cref{thm:reach}. Their applicability depends on the number of samples under one secret, the modulus-to-noise ratio, the algebraic structure of the public matrix, and the promises of the chosen embedding; the name ``Module-LWE'' alone fixes none of these.

An algorithm for LWE with uniform $A$ does not automatically solve a structured instance. The worst-case oracles of \cref{thm:reach} can apply to structured lattices, but their promises must be established for the particular embedding. Under $\HHS$, \cref{cor:hs-lattice} supplies these oracles at some solver-dependent polynomial factors. Under $\HEDCP[\ell]$, \cref{cor:edcp-sivp} supplies a separate worst-case route. The exact structured state conversion of Wen and Zheng \cite{WZ26} is restated in \cref{lem:wz-state-map}; its first output has exponential modulus, and the numerical certificate audit gives no deployed application (\cref{sec:wz-audit}).

\paragraph{How conclusions are assigned.} For $\HReg(f)$, check a sufficient finite-dimensional promise $\Gamma_{f,\mathcal A}$ and the actual planted vector, or the sample and distribution premises of the direct LWE result. For $\HHS$, make the same checks with the solver-dependent exponent from \cref{lem:hs-rare}. For $\HEDCP[\ell]$, select an admissible number $m_0$ of uniform samples with $\alpha<\alpha_{\mathrm E}(n,m_0,\ell)$; reserve $m'$ additional samples if using the amplified guarantee. Alternatively verify the worst-case embedding promise at $g_{\mathrm E}$. For the structured route, check the source-specific inequalities and solver domain in \cref{sec:wz-audit}. The remaining polynomial-modulus adapter audit is recorded in \cref{q:wz}; its status is \textsuperscript{U}. Heuristic gap and norm calculations are marked \textsuperscript{H} and remain conditional. The conjunction $\Hstar$ makes each certified route available but does not remove its premises.

\begin{table}[htbp]
\centering\footnotesize
\renewcommand{\arraystretch}{1.08}
\setlength{\tabcolsep}{4pt}
\begin{tabularx}{\textwidth}{@{}L{3.3cm}cccY@{}}
\toprule
\textbf{Assumption} & $\HHS$ & $\HReg(f)$ & $\HEDCP$ & \textbf{Sufficient guarantee or remaining check}\\
\midrule
Worst-case lattice problems & \vC\bR & \vC\bR & \vC\bR & Some solver-dependent polynomial factors under $\HHS$; the explicit factors of \cref{thm:reach,cor:hs-lattice,cor:edcp-sivp} otherwise.\\
LWE, many samples, small noise & \vK\bR & \vC\bR & \vC\bR & Uniform-matrix and distribution conditions; $\alpha\leq\alpha_{\mathrm{Reg}}$ or $\alpha<\alpha_{\mathrm E}$ with admissible subset and holdout samples. Under $\HHS$, substitute $f_{\mathcal A}$.\\
LWE, many samples, standard noise & \vS & \vS & \vS & No certificate from the stated sufficient bounds for $\alpha=1/\Otil(\sqrt n)$ (\cref{sec:collapse-lwe}).\\
LWE, few samples, large $q/\sigma$ & \vK\bH & \vK\bH & \vK\bH & Verify the planted-shortest-vector promise and actual gap against $\Gamma_{f,\mathcal A}$ or $g_{\mathrm E}$. The FHE row is one illustrative comparison.\\
Few-sample unstructured LWE (FrodoKEM) & \vK\bH & \vK\bH & \vS & Finite-dimensional embedding comparison unresolved; restricted ball construction supplies no certificate at $m=n$.\\
Few-sample MLWE (ML-KEM, ML-DSA keys) & \vK\bH & \vK\bH & \vK\bH & Embedding promises remain conditional; the audited structured bounds give no deployed certificate (\cref{tab:wz-audit}).\\
Gaussian EDCP & \vK\bR & \vK\bR & \vC\bR & By hypothesis in its domain. Cyclic hypotheses cover parameters whose EDCP-to-LWE image \cite[Thm. 4]{BKSW18} meets the inherited LWE bounds; no general cyclic-to-vector implication asserted.\\
Ring-LWE, Module-LWE with a large gap & \vK\bH & \vK\bH & \vK\bH & Worst-case oracles apply when the actual structured embedding promise is verified; uniform-matrix estimates do not establish it. The structured route has separate width, sample and modulus conditions (\cref{sec:wz-audit}).\\
MLWE, short secrets & \multicolumn{3}{c}{as MLWE} & Restricted distributions and their conversions must be checked \cite{BJRW23}.\\
MP-LWE, Order-LWE & \vK\bH & \vK\bH & \vK\bH & Conditional worst-case embedding route; hardness reductions from Ring-LWE give no solver for the target \cite{RSSS17,NP26,PX26,PP24,BBPS19}.\\
LWR & \vK\bR & \vK\bR & \vK\bH & Many-sample BDD certificate for $p\geq4\sqrt m\,\gu(m+1)$; otherwise verify the embedding promise (\cref{sec:collapse-variants}).\\
Succinct / decomposed LWE & \multicolumn{3}{c}{as LWE} & Ignore the hint for succinct LWE; use the parameter-matched equivalence for decomposed LWE \cite{AACLMR26}.\\
CLWE & \multicolumn{3}{c}{as LWE / \vS} & The discrete-secret reduction to LWE \cite{GVV22}; general secrets lack a certificate here.\\
$S|\LWE\rangle$ & \multicolumn{3}{c}{as LWE} & Measurement gives corresponding classical samples \cite{CLZ22,CH26}.\\
Stehl\'e--Steinfeld NTRU variant & \vK\bH & \vK\bH & \vK\bH & Conditional Ring-LWE ciphertext embedding route \cite{SS11}; ordinary deployed NTRU is treated separately in \cref{sec:ntru}.\\
\bottomrule
\end{tabularx}
\caption{Consequences under each hypothesis, with all stated parameter and promise conditions. \vC: proved consequence; \vK: conditional; \vO: unresolved connection; \vS: no certificate from the analysed reductions. Basis marks: R rigorous, H heuristic, U unverified parameter application. The $\HHS$ lattice bounds depend on its witness through $f_{\mathcal A}$; they are not numerical certificates for a deployed scheme. No absence of a certificate asserts security or failure of all attacks.}
\label{tab:collapse}
\end{table}

% ==================================================================
\subsection{Unstructured LWE}\label{sec:collapse-lwe}

\paragraph{Four regimes.} With sufficiently many samples and sufficiently small noise, \cref{thm:reach} certifies LWE under $\HReg(f)$; \cref{prop:edcp-lwe} gives a second sufficient range under $\HEDCP[\ell]$, using $m_0\in\mathcal M_q(n)$ and, for amplification, $m'$ holdout samples. The displayed bounds do not certify many-sample LWE at the standard noise rate. For few samples, the ordinary-uSVP comparison depends on the embedded planted vector and its actual gap: a large modulus-to-noise ratio improves the heuristic estimate, while the estimates for the deployed small-ratio instances remain inconclusive. At $m=n$ the restricted ball construction provides no certificate. These are regimes of the stated guarantees, not necessary conditions on all possible attacks.

\paragraph{Regev's cryptosystem and its dual.} Regev's cryptosystem uses $q\asymp n^2$, $m=\Theta(n\log q)$ and $\alpha=1/(\sqrt n\log^2n)$ \cite{Reg09}. The $q/4$ lower bound used in the displayed BDD certificate gives a ratio of only $\Theta(\log^{3/2}n)$ against the error norm, which does not verify the required polynomial promise. Likewise $\alpha$ exceeds the sufficient ball-reduction noise bound. These failed comparisons do not prove that an actual lattice gap or another attack cannot be better. The dual cryptosystem of \cite{GPV08} supplies many samples in a fresh secret per ciphertext; its standard-noise regime is also uncertified by the stated bounds. We therefore record no attack certificate here under any of the three hypotheses, without asserting an impossibility result.

\paragraph{Lindner--Peikert-type designs and FrodoKEM.} Lindner and Peikert's compact design publishes $b=As+e$ with square $A$; a ciphertext supplies an instance in a fresh secret and a message-bearing component \cite{LP11}. FrodoKEM uses unstructured $A$ and parameters $(n,q,\sigma)=(640,2^{15},2.8)$, $(976,2^{16},2.3)$ and $(1344,2^{16},1.4)$ \cite{Frodo16,FrodoKEM21}. Its eight secret columns share $A$ but each has only $m=n$ samples. The heuristic gap estimates are about $16$, $27$ and $44$, or $0.4$--$0.9$ times $\sqrt d$ (\cref{tab:gaps}). These numbers specify comparisons that would suffice under the model $c_0d^{1/2+2f}$; they do not prove failure below a bound. The restricted ball composition does not certify these samples, and the structured connection does not apply to unstructured $A$. This leaves quantitative applicability unresolved, rather than establishing a comparative security margin.

\paragraph{Normal form, small errors and the role of samples.} The normal-form transformation of Applebaum, Cash, Peikert and Sahai \cite{ACPS09} converts LWE with a uniform secret into LWE with a secret drawn from the error distribution, at the cost of $n$ samples; it is why short secrets are standard and why the gap of \cref{prop:samples} is computed with secret and error of comparable size. Two results discussed in \cref{sec:prelim-lwe} show that the number of samples is a security parameter in its own right: the Arora--Ge linearisation \cite{AG11} breaks LWE with very small errors given many samples, while Micciancio and Peikert prove LWE with binary errors hard when the number of samples is limited \cite{MP13}. The sample-count obstruction is the counterfactual analogue of the same phenomenon.

% ==================================================================
\subsection{Ring- and Module-LWE}\label{sec:collapse-structured}

\paragraph{What the worst-case reductions say.} Lyubashevsky, Peikert and Regev defined Ring-LWE over the ring of integers $R$ of an arbitrary number field $K$, with the secret in the dual ideal $R^\vee$ and errors that are Gaussian in the \emph{canonical embedding} of $K$ into $\R^n$ \cite{LPR13}. Their first theorem is a quantum reduction, modelled on Regev's, from worst-case approximate $\SVP$ on \emph{ideal} lattices to search Ring-LWE. Their second, for cyclotomic fields and primes $q\equiv1$ modulo the conductor, is a search-to-decision reduction: such $q$ split completely, the Chinese remainder theorem decomposes $R_q$ into $n$ copies of $\Zq$, and the Galois automorphisms of the field permute the copies transitively, which lets an algorithm that solves decision on one copy recover the whole secret. Peikert, Regev and Stephens-Davidowitz later reduced worst-case ideal-lattice problems directly to decision Ring-LWE, for any number field and any modulus, removing the algebraic restrictions of the search-to-decision step \cite{PRS17}. Langlois and Stehlé proved the analogous worst-case to average-case reductions for Module-LWE and Module-SIS from problems on module lattices \cite{LS15}, and Boudgoust, Jeudy, Roux-Langlois and Wen gave hardness reductions for bounded secret and error distributions under the conditions of their theorems \cite{BJRW23}. These reductions establish conditional hardness of the structured problems from their source assumptions. They also show that the worst-case problems behind Ring- and Module-LWE are problems on structured lattices, which is why the quantum algorithms for ideal lattices of \cref{sec:intro-alarms} \cite{CDPR16,CDW17,PHS19} matter for their foundations, though not, at subexponential approximation factors, for their concrete security.

\paragraph{Conditional structured consequences.} The worst-case ordinary-uSVP and BDD oracles apply to Ring- or Module-LWE embeddings satisfying their actual planted-vector and gap promises. The average-case uniform-matrix theorem cannot supply those promises for structured $A$. Under $\HHS$ the same statement holds with a solver-dependent polynomial factor; under $\HEDCP[\ell]$ the sufficient worst-case bound is $g_{\mathrm E}$. Wen and Zheng's structured route \cite{WZ26} is analysed with its explicit forward certificate and exact state map in \cref{sec:wz-audit}. The numerical audit supplies no deployed attack: ML-KEM fails the printed width inequality even after optimising the amplified sample count, and the ML-DSA Gaussian surrogate bounds are too small for the source's polynomial-modulus adapter. This is a limitation of those sufficient certificates, not an impossibility result.

\paragraph{Rank changes in the heuristic gap comparison.} Albrecht and Deo reduced Module-LWE of rank $k$ and modulus $q$ to Ring-LWE with modulus $q^k$, sample by sample, at the cost of increasing the noise rate by a factor $n^{c+1/2}\sqrt k$ for ring degree $n$, rank $k$ and any constant $c>0$, in power-of-two cyclotomics \cite{AD17}; together with the trivial converse this makes the two problems polynomially equivalent. In the normal-form heuristic used here, this map does not increase the estimated gap. For an instance with secret dimension $n$, $m$ scalar samples and noise rate $\alpha$, \cref{prop:samples}\ref{it:gapm} gives $g\approx q^{-n/(n+m)}/(\alpha\sqrt e)$, so that at fixed noise rate a larger modulus \emph{decreases} this heuristic estimate for a few-sample instance. For an ML-KEM public key ($n=m=256k$) the gap is $q^{-1/2}/(\alpha\sqrt e)$; after the Albrecht--Deo map the secret dimension is $256$, the number of samples $256k$, the modulus $q^k$ and the noise rate $\alpha'\geq\alpha$, so the gap becomes $q^{-k/(k+1)}/(\alpha'\sqrt e)$, smaller by at least a factor $q^{k/(k+1)-1/2}$. Thus this map does not improve the displayed heuristic comparison. This is not a theorem about actual structured gaps or every possible rank-changing reduction.

% ==================================================================
\subsection{Middle-product, order-based and unified variants}\label{sec:collapse-mplwe}

\paragraph{The constructions.} Middle-Product LWE, introduced by Roşca, Sakzad, Stehlé and Steinfeld, takes $a\in\Zq^{<n}[x]$ uniform, a secret $s\in\Zq^{<2n-1}[x]$, and returns $b=a\odot_n s+e$, where $a\odot_n s$ consists of the $n$ middle coefficients of the product $a\cdot s$, which has degree less than $3n-2$ \cite{RSSS17}. The point of the definition is that it refers to no particular polynomial modulus: the authors proved that MP-LWE is at least as hard as Polynomial-LWE modulo $f$ for \emph{every} $f$ in an exponentially large family of monic polynomials with bounded expansion factor, so a suitable MP-LWE solver yields solvers for each source family at the parameters of its reduction, and they built a public-key encryption scheme from it. Njah Nchiwo and Pellet-Mary improved the parameters of the reduction from Ring-LWE \cite{NP26}, and Pellet-Mary and Xia gave a single reduction, parameterised by field elements, that recovers the earlier ones and applies to number fields satisfying their polynomial-boundedness conditions \cite{PX26}. Order-LWE, introduced by Bolboceanu, Brakerski, Perlman and Sharma, replaces the ring of integers by an arbitrary order of the field; they reduced worst-case problems on lattices associated with ideals of the order to it and used it to study Ring-LWE with secrets from non-uniform, entropic distributions \cite{BBPS19}. Peikert and Pepin placed Ring-, Polynomial-, Module-, Order- and Middle-Product LWE in one framework and gave simpler and tighter reductions among them \cite{PP24}.

\paragraph{Conclusion.} Hardness reductions from Ring-LWE or worst-case problems to these variants do not supply algorithms in the reverse direction. The worst-case ordinary-uSVP route applies if the chosen embedding has the planted vector as a shortest vector and its actual gap meets the oracle promise. Uniform-matrix estimates do not establish that promise for middle-product or order-based instances; we have not verified it here. Under $\HEDCP[\ell]$, the worst-case route of \cref{cor:edcp-sivp} similarly requires the appropriate embedding promise at $g_{\mathrm E}$. We establish no additional EDCP-based certificate for these variants. A future structured reduction would require its own parameter and state-preparation audit.

% ==================================================================
\subsection{Rounding, short secrets and strengthened assumptions}\label{sec:collapse-variants}

\paragraph{Learning with Rounding.} An LWR sample $(a,\lfloor(p/q)\langle a,s\rangle\rceil\bmod p)$, multiplied by $q/p$, is an LWE-type sample whose error is deterministic and bounded by $q/(2p)$ in absolute value \cite{BPR12}. Banerjee, Peikert and Rosen proved LWR at least as hard as LWE when $q/p$ exceeds the LWE error by a sufficiently large (in their reduction, super-polynomial) factor, and built pseudorandom functions from it. For the algorithmic direction, take prime $q$, uniform independent rows and $m\geq2n\lg q$. Outside the exceptional matrices in \cref{lem:qary}, the rescaled instance meets a $\BDD$ promise of gap at least $(q/4)/((q/2p)\sqrt m)=p/(2\sqrt m)$, which \cref{thm:reach}\ref{it:bdd} solves when $p\geq4\sqrt m\,\gu(m+1)=\Otil(n^{1+2f})$. Few-sample instances require a separate embedding analysis. For structured Module-LWR matrices, the uniform-matrix minimum bound cannot be assumed; the actual lattice minimum and planted-vector promise must be checked.

\paragraph{Short secrets.} Shorter secrets and errors increase the displayed heuristic gap estimate. With the usual rescaling of the secret coordinates, an instance with secret standard deviation $\sigma_s$ and error standard deviation $\sigma_e$ has heuristic gap $q^{m/(n+m)}/(\sigma_s^{n/(n+m)}\sigma_e^{m/(n+m)}\sqrt{2\pi e})$, which reduces to \cref{prop:samples}\ref{it:gapm} when $\sigma_s=\sigma_e$; smaller widths improve that model comparison. Applying an oracle still requires the actual distribution and embedding promises. The hardness reductions of \cite{BJRW23} alone do not establish those algorithmic premises.

\paragraph{Succinct and decomposed LWE.} Wee's $\ell$-succinct LWE \cite{Wee24} and the decomposed LWE assumption of Abram, Malavolta and Roy \cite{AMR25}, proved equivalent by Abram, Arnon, Cini, Lou, Malavolta and Roy \cite{AACLMR26}, assert that LWE remains hard when the adversary also receives structured auxiliary information, in the case of decomposed LWE, information that can be read as circular-security leakage of a natural encryption scheme. They are \emph{strengthenings} of LWE: their hardness implies that of LWE. The converse direction, which is the one an algorithm needs, requires an argument. For succinct LWE it is immediate: an instance consists of a standard LWE instance with a uniform matrix together with a hint, and a solver for that LWE instance, run while ignoring the hint, succeeds with the same probability, because the marginal distribution of the instance is unchanged. For decomposed LWE, whose samples may encrypt functions of the secret, a standard LWE solver does not apply directly; the transfer goes through the reduction of \cite{AACLMR26}, which turns an adversary against succinct LWE into one against decomposed LWE, and holds only at the parameters for which their equivalence is proved. These arguments transfer an algorithm to the corresponding assumption only with matched parameters and access models. An attack on an attribute-based encryption, laconic function evaluation or succinct randomised encoding construction requires a further construction-specific argument; it does not follow merely because its security proof uses that assumption.

\paragraph{Continuous LWE.} Bruna, Regev, Song and Tang reduced worst-case lattice problems to CLWE by a quantum reduction \cite{BRST21}, and Gupte, Vafa and Vaikuntanathan reduced LWE to CLWE classically \cite{GVV22}. These reductions show that CLWE is at least as hard as the source problems; they give no algorithm for CLWE. For an algorithm the relevant result is the reduction in the opposite direction, also given by Gupte, Vafa and Vaikuntanathan, from CLWE with a \emph{discrete} secret to LWE, which, as they note, nearly completes the picture modulo the discrete-secret requirement \cite[App. C]{GVV22}. Under $\Hstar$, discrete-secret CLWE therefore falls whenever the LWE instance produced by that reduction is in the reach, while for general-secret CLWE the reductions reviewed here provide no algorithmic certificate, recorded as \Survive.

\paragraph{Quantum-state LWE.} In $S|\LWE\rangle$ the adversary receives, instead of a classical sample, a quantum state that superposes the possible errors. Measuring the state yields a classical LWE sample with error distributed according to the squared amplitudes, so a classical solver applies when its required distribution, parameters and sample budget match that measured marginal. Chailloux and Hermouet proved a quantum equivalence between $S|\LWE\rangle$ and ISIS \cite{CH26}, building on the filtering technique of Chen, Liu and Zhandry \cite{CLZ22} and on the quantum oblivious LWE sampling of Debris-Alazard, Fallahpour and Stehlé \cite{DFS24}; transferring an algorithm to ISIS requires the parameters and resources of their equivalence. This does not itself give a forgery for the SIS-based signatures discussed in \cref{sec:sis}.

% ==================================================================
\subsection{Search, decision and quantum auxiliary states}\label{sec:collapse-sd}

Search and decision LWE are related by reductions with explicit modulus, distribution and resource conditions \cite{Reg09,MM11}. The quantum reduction of Sudo, Tezuka, Hara and Yoshida, built on a quantum Goldreich--Levin theorem over $\Zq$ for prime $q$, recovers the secret with only two queries to a decision oracle \cite{STHY23}, but the query count alone does not establish preservation of a fixed classical sample budget. A use under $\Hstar$ must check the decision-oracle interface, coherent access and the reduction parameters; sample-preserving results such as \cite{MM11} apply under their own conditions. In the other direction the relevant works are the quantum bridges between LWE and SIS. Chen, Liu and Zhandry's \emph{filtering} technique converts states with a known amplitude profile into states with a profile suited to the problem at hand; it yields polynomial-time quantum algorithms for $\SIS^\infty$ with very wide matrices, polynomial modulus and bound $q/2-O(1)$, and for EDCP with suitable amplitude functions \cite{CLZ22}. Kothari, O'Donnell and Wu later gave classical deterministic algorithms for all the $\SIS^\infty$ variants treated there \cite{KOW26}. Debris-Alazard, Fallahpour and Stehlé showed how to sample LWE instances obliviously, that is, without knowing a secret for them, with a quantum algorithm, which contradicts the lattice knowledge assumptions used by some standard-model SNARKs \cite{DFS24}. These results matter here because they are explicit quantum connections between the LWE and SIS worlds. None of them gives a DCP-based algorithm for SIS at the small norm bounds of signatures (\cref{q:sis}).

% ==================================================================
\subsection{Advanced primitives}\label{sec:collapse-advanced}

Fully homomorphic encryption was first realised over ideal lattices \cite{Gen09}; later leveled and approximate-eigenvector constructions use LWE or its structured variants \cite{BGV12,GSW13}. Evaluation and bootstrapping keys encrypt functions of a secret and cannot automatically be counted as independent LWE samples. For the illustrative FHE parameters in \cref{tab:gaps},
the heuristic public-key gap is much larger than the
model $d^{1/2+2f}$ over the displayed range of $f$.
This is a conditional comparison for that parameter
set. A rigorous attack certificate additionally
requires a finite-dimensional reduction bound and
the appropriate embedding promise; we do not infer
a uniform classification of all FHE parameters from
this example. The table's illustrative public key uses ring degree $2^{15}$, modulus $2^{880}$ and error standard deviation $3.2$. Attribute-based encryption, pseudorandom functions and identity-based encryption likewise require their own sample and noise analysis \cite{Wee24,AMR25,BPR12,GPV08,Pei16}; no family-wide conclusion follows from this example.

% ==================================================================
\subsection{Consequences for the lattice standards}\label{sec:collapse-standards}

ML-KEM \cite{FIPS203,Kyber18} and ML-DSA \cite{FIPS204,ducas2017} are built on Module-LWE over $R_q=\Zq[x]/(x^{256}+1)$, the ring setting of \cite{WZ26}, and ML-DSA's unforgeability rests additionally on Module-SIS. \Cref{tab:standards} lists the parameters needed for their comparisons.

\begin{table}[htbp]
\centering\small
\renewcommand{\arraystretch}{1.2}
\setlength{\tabcolsep}{4pt}
\begin{tabular}{@{}llrlll@{}}
\toprule
\textbf{Parameter set} & \textbf{rank} & $q$ & \textbf{distributions} & \textbf{$m/n$: key; ctxt} & \textbf{other}\\
\midrule
ML-KEM-512 & $k=2$ & 3329 & CBD, $\eta_1=3$ / $\eta_2=2$ & $1$; $1$ ($u$) & $(d_u,d_v)=(10,4)$\\
ML-KEM-768 & $k=3$ & 3329 & CBD, $\eta_1=2$ / $\eta_2=2$ & $1$; $1$ ($u$) & $(d_u,d_v)=(10,4)$\\
ML-KEM-1024 & $k=4$ & 3329 & CBD, $\eta_1=2$ / $\eta_2=2$ & $1$; $1$ ($u$) & $(d_u,d_v)=(11,5)$\\
ML-DSA-44 & $(4,4)$ & $\approx2^{23}$ & uniform, $\eta=2$ & $1$ & $d=13$, $\tau=39$\\
ML-DSA-65 & $(6,5)$ & $\approx2^{23}$ & uniform, $\eta=4$ & $6/5$ & $d=13$, $\tau=49$\\
ML-DSA-87 & $(8,7)$ & $\approx2^{23}$ & uniform, $\eta=2$ & $8/7$ & $d=13$, $\tau=60$\\
FrodoKEM-640 & none & $2^{15}$ & $\sigma=2.8$ & $1$; $1$ & $\bar n=8$\\
FrodoKEM-976 & none & $2^{16}$ & $\sigma=2.3$ & $1$; $1$ & $\bar n=8$\\
FrodoKEM-1344 & none & $2^{16}$ & $\sigma=1.4$ & $1$; $1$ & $\bar n=8$\\
\bottomrule
\end{tabular}
\caption{Parameters of the LWE-based standards \cite{FIPS203,FIPS204} and of FrodoKEM \cite{FrodoKEM21} that determine the conclusions. CBD is the centred binomial distribution; for ML-KEM, $\eta_1$ governs the key-generation secret and error and the encryption secret $r$, and $\eta_2$ the encryption errors $e_1,e_2$; ML-DSA uses $q=8\,380\,417$ and module rank $(k,\ell')$, written $(k,\ell)$ in \cite{FIPS204}; FrodoKEM's matrix is unstructured and its error distribution approximates a rounded Gaussian of standard deviation $\sigma$. The column $m/n$ gives the number of scalar samples per secret coordinate in a public key and, for KEMs, in the ciphertext component that is an LWE instance in the fresh secret ($u$ for ML-KEM, $C_1$ for FrodoKEM); the other component carries the encoded message and is not an LWE sample in that secret. ML-KEM ciphertexts are compressed to $d_u$ and $d_v$ bits per coefficient, and ML-DSA public keys omit the $d$ low-order bits of $t$; both add deterministic noise and lower the gap estimates in the heuristic model of \cref{tab:gaps}.}
\label{tab:standards}
\end{table}

\begin{itemize}
  \item \textbf{ML-KEM} (\vK\bH{}\,/\,\vK\bH{}\,/\,\vK\bH{} under $\HHS$\,/$\HReg(f)$\,/$\HEDCP[\ell]$). A key publishes $t=As+e$ and a ciphertext includes compressed $u=A^\top r+e_1$ and a message-bearing $v$ \cite{FIPS203}. The key and $u$ each have $m/n=1$; recovering $r$ from $u$ would decrypt the message. Their normalised heuristic gaps are about $0.3$--$0.4$. An explicit finite-dimensional guarantee and the embedding promises are not established here. The printed Wen--Zheng certificate has no feasible width here (\cref{prop:wz-certificate}).
  \item \textbf{ML-DSA} (\vK\bH{}\,/\,\vK\bH{}\,/\,\vK\bH). Its public key supplies the high bits $t_1$ of $t=As_1+s_2$ and has $m/n\in\{1,6/5,8/7\}$ \cite{FIPS204}. Dropping thirteen bits lowers the published-key heuristic gap to about $0.3\sqrt d$. The full-$t$ rows describe a separate scenario in
which $t_0$ is supplied to the adversary. We do not
derive this additional input from ordinary signatures.
The dependence of hints on $ct_0$ does not by itself
give an efficient reconstruction algorithm for
$t_0$. Conclusions for the published key therefore
use the published-key rows; the full-$t$ rows are
reported only as conditional sensitivity estimates. The Module-SIS norm comparisons in \cref{tab:sisreach} concern a proof assumption, and do not themselves yield a forgery.
  \item \textbf{FN-DSA} (Falcon, \vO{}\,/\,\vO{}\,/\,\vO). The power-of-two cyclotomic key lattice has $\lambda_2=\lambda_1$, so it does not meet the ordinary-uSVP promise (\cref{lem:ntru-sym}). The lossiness connection does not supply a verified deployed key attack, and the displayed sampler norm calculation does not certify the proof's SIS bound (\cref{sec:ntru,sec:sis}).
  \item \textbf{FrodoKEM} (\vK\bH{}\,/\,\vK\bH{}\,/\,\vS). The sufficient embedding comparison is unresolved; the restricted unstructured ball construction gives no certificate for its square-matrix instances (\cref{sec:collapse-lwe}).
  \item \textbf{SLH-DSA} (\vS{}\,/\,\vS{}\,/\,\vS) has no lattice dependency \cite{FIPS205}. XMSS and LMS share that property \cite{SP800208}; HQC is a code-based KEM selected by NIST \cite{HQC,NISTIR8545}.
\end{itemize}

For the few-sample Module-LWE instances considered here,
the unstructured compositions do not furnish a
deployed-parameter attack certificate. The printed structured forward certificate is audited in
\cref{sec:wz-audit} and gives no certificate at the ML-KEM
ring and modulus; the polynomial-modulus adapter and
conversions for deployed distributions remain unverified here. The embedding estimates
also require a finite-dimensional reduction guarantee
and a justified gap estimate. These are limits of the
present classification, not a proof of security or a
measured security margin.

% ======================================================================
\section{Other lattice assumptions: reach and limitations}\label{sec:lattice-survivors}
% ======================================================================

This section examines SIS at small norm bounds, CVP-based encryption, NTRU and lattice isomorphism. Each presents additional requirements or limitations for the compositions of \cref{thm:reach}. For NTRU, the obstruction to an ordinary-uSVP promise depends on the key ring and its symmetries. Few-sample LWE and Module-LWE are treated in \cref{sec:collapse,sec:parameters}. The absence of a certificate from these routes supplies neither a separation nor positive evidence of hardness. \Cref{tab:survivors} summarises the conclusions of this section.

\begin{table}[htbp]
\centering\footnotesize
\renewcommand{\arraystretch}{1.08}
\setlength{\tabcolsep}{4pt}
\begin{tabularx}{\textwidth}{@{}L{3.3cm}cccY@{}}
\toprule
\textbf{Assumption or scheme} & $\HHS$ & $\HReg(f)$ & $\HEDCP$ & \textbf{Reason or remaining check}\\
\midrule
SIS, $\beta\geq\beta_{\mathrm{col}}$ & \vK\bR & \vC\bR & \vC\bR & Solved under the modulus, matrix, sample and norm conditions of \cref{prop:sis-reg,prop:sis-edcp}: $\beta_{\mathrm{Reg}}(m)=O(m^{3/2+2f})$; $\beta_{\mathrm E}=\Otil(\ell n^{3/2})$ for $m=\Theta(n\log q)$. Under $\HHS$, substitute $f_{\mathcal A}$. For a scheme this is kind (B), which does not by itself give a forgery.\\
SIS, $\beta_{\mathrm{void}}\leq\beta<\beta_{\mathrm{col}}$ & \vK\bR & \vK\bR & \vK\bR & Kind (C): $\SIVP_{\beta\cdot\Otil(\sqrt n)}$, the worst-case premise of \cite{MR07,MP13}, is solved, but no SIS algorithm is certified; $\beta_{\mathrm{void}}=\Otil(n^{1+2f})$ under $\HReg(f)$. Under $\HEDCP[\ell]$ the two thresholds coincide up to polylogarithmic factors.\\
SIS, $\beta<\beta_{\mathrm{void}}$ & \vO & \vO & \vO & Neither guarantee is reached; the DCP--SIS question at small norm bounds is open in both directions (\cref{q:sis}).\\
SIS-only primitives (Ajtai's function, GPV signatures) & \vO & \vO & \vO & Third regime: $\beta=\sqrt m$ for Ajtai's function and $\Otil(n)$ for GPV signatures with the trapdoors of \cite{MP12}, a margin of about $n^{2f}$ under $\HReg(f)$. SWIFFT's collisions also have coefficients in $\{-1,0,1\}$, but its matrix is ring-structured, outside the uniform-matrix thresholds.\\
ML-DSA (Module-LWE key; Module-SIS proof assumption) & \vK\bH & \vK\bH & \vK\bH & Published-key gap about $0.3\sqrt D$ (\cref{tab:gaps}). Module-SIS (SUF): heuristic comparison $c_0\lesssim0.6$--$0.7$ at $f\to0^+$ (\cref{tab:sisreach}), kind (B). Under $\HEDCP$ the glyph rests on the key; for Module-SIS neither \cref{prop:sis-edcp} nor the audited Wen--Zheng route supplies a certificate (\cref{sec:wz-audit}).\\
FN-DSA (Falcon) & \vO & \vO & \vO & $\lambda_2=\lambda_1$ excludes the ordinary-uSVP route (\cref{lem:ntru-sym}); the estimated typical norm of the quantum-step samples exceeds the proof's SIS bound even at $\gu=1$ (our norm-model estimate, \cref{tab:sisreach}), so it certifies no solution at that bound; the lossiness connection gives no verified key attack.\\
GGH-type encryption & \vO & \vO & \vO & The $\BDD$ gap is a small constant rather than $\gu$; no worst-case foundation, and classical breaks of GGH encryption \cite{Ngu99} and of GGH signatures \cite{NR06}.\\
NTRU-HPS, NTRU-HRSS, Streamlined NTRU Prime & \vO & \vO & \vO & Glyphs: key recovery and decryption, not established by these reductions. For keys, rotations exclude an ordinary-uSVP promise above $1$ for $x^n-1$ (keys not fixed by $x$) and above $\sqrt3$ for NTRU Prime (\cref{lem:ntru-sym}); density ratios are not uSVP gaps (\cref{tab:gaps}). Decision NTRU: conditional gap-based and lossiness routes (\cref{lem:ntru-lossiness}), neither certified here for deployed parameters. Overstretched NTRU: independent classical attacks \cite{ABD16,KF17}.\\
LIP-based schemes (e.g.\ Hawk) & \vO & \vO & \vO & For the class of $\Z^n$, no ordinary-uSVP promise above one, and the guarantee of a polynomial-factor $\SIVP$ oracle alone does not recover an exact isometry; these observations concern that class, and the compositions give no general attack certificate. Hawk has been withdrawn \cite{NIST26r3}; the structural attack of \cite{SW26} reduces its key recovery to exact $\SVP$ in dimension $n/2+1$, which the polynomial-factor guarantees do not solve automatically.\\
ML-KEM, FrodoKEM (keys, ciphertexts) & \multicolumn{3}{c}{as in \cref{sec:collapse-standards}} & Normalised heuristic gaps $0.28$--$0.85$ at $f\to0^+$, $c_0=1$ for ML-KEM keys and $u$ and FrodoKEM keys; the restricted ball construction gives no certificate (\cref{tab:gaps}).\\
\bottomrule
\end{tabularx}
\caption{Summary of the conclusions of this section under each hypothesis, with all stated parameter and promise conditions; glyphs and basis marks as in \cref{tab:collapse}. SIS rows classify instances by their norm bound relative to the thresholds of \cref{sec:sis}. For SIS-based primitives a conclusion concerns the assumption of the security proof, kind (B), unless stated, and does not by itself give a forgery. No absence of a certificate asserts security or failure of all attacks.}
\label{tab:survivors}
\end{table}

% ==================================================================
\subsection{SIS and Module-SIS}\label{sec:sis}

\paragraph{Three regimes.} Recall from \cref{sec:prelim-sis} that $\SIS_{n,m,q,\beta}$ asks for a nonzero $z\in\Lambda_q^\perp(A)$ with $\|z\|\leq\beta$, an \emph{absolute} length, and that for uniform $A$ with $m\geq2n\lg q$ the lattice $\Lambda_q^\perp(A)$ has $\lambda_m\leq\sqrt{2m}\,\omega(\sqrt{\log m})$ (\cref{lem:qary,lem:tails}). Under the modulus, matrix, sample and norm conditions of the cited propositions, we compare two sufficient bounds. Their asymptotic forms suppress constants and logarithms; they are not sharp finite-dimensional thresholds. The first is the norm $\beta_{\mathrm{col}}$ above which SIS instances are solved: $\beta_{\mathrm{Reg}}(m)=O(m^{3/2+2f})$ under $\HReg(f)$ (\cref{prop:sis-reg}), $\beta_{\mathrm E}(n,m,\ell)=\Otil(\ell n^{3/2})$ under $\HEDCP[\ell]$ for $m=\Theta(n\log q)$ (\cref{prop:sis-edcp}), and the smaller of the two under $\Hstar$. The second is the norm $\beta_{\mathrm{void}}$ above which the stated worst-case hardness premise becomes unavailable, because \cite{MR07,MP13} base $\SIS_\beta$ on $\SIVP_{\beta\cdot\Otil(\sqrt n)}$ and that problem has become easy: $\beta_{\mathrm{void}}:=\gs/\Otil(\sqrt n)$, which is $\Otil(n^{1+2f})$ under $\HReg(f)$, $\Otil(\ell n^{3/2})$ under $\HEDCP[\ell]$, and the smaller of the two under $\Hstar$. With those qualifications, the following three cases organise the sufficient guarantees.
\begin{enumerate}
  \item $\beta\geq\beta_{\mathrm{col}}$: \Collapse. The SIS instance is solved.
  \item $\beta_{\mathrm{void}}\leq\beta<\beta_{\mathrm{col}}$: \Cond. The stated direct SIS algorithm supplies no certificate, while the cited worst-case hardness premise becomes unavailable. The reduction remains mathematically valid; other hardness evidence is not excluded.
  \item $\beta<\beta_{\mathrm{void}}$: \Open. Neither guarantee is reached by these comparisons; this does not prove the instance or its foundation hard.
\end{enumerate}
Under $\HReg(f)$, the displayed asymptotic bounds differ by about $\sqrt n$, before accounting for suppressed factors and side conditions. Under $\HEDCP[\ell]$ alone, for uniform $A$ with $\Theta(n\log q)$ columns (a wider matrix can be restricted to that many columns), both thresholds are $\Otil(\ell n^{3/2})$, at least about $n^{5/2}$ by \cref{cor:edcp-ceiling}, so their displayed orders coincide up to polylogarithmic factors. This comparison neither fixes a finite-dimensional boundary nor proves hardness below it. SIS with norm bound $\beta$ and LWE with noise rate $\alpha\approx\sqrt m/\beta$ rest on the same worst-case problem, $\SIVP$ within $\Otil(n/\alpha)=\Otil(\sqrt n\,\beta)$ \cite{Reg09,MR07,MP13}, and \cref{prop:sis-edcp}, following \cite{SSTX09}, turns an LWE solver at noise rate $\alpha$ into a SIS solver at norm $\sqrt m/(\sqrt2\,\alpha)$; the proposition certifies this connection when its full modulus, distribution, sample and norm conditions hold. Under $\HReg(f)$ the band is wider because the direct $\SIVP$ algorithm of \cref{prop:sivp-reg} is stronger, by a factor of about $\sqrt m$, than the one obtained through LWE. Under $\Hstar$ with $\ell\approx n$, the solution threshold is about $n^{\min\{3/2+2f,\,5/2\}}$ and the worst-case-premise threshold about $n^{\min\{1+2f,\,5/2\}}$, up to logarithmic factors.

\paragraph{Solving SIS is not forging.} The regimes classify SIS instances, not schemes. For a primitive whose security proof reduces to SIS we distinguish three kinds of consequence, and conclusions on SIS-based primitives in this survey refer to the second unless we say otherwise.
\begin{enumerate}[label=(\Alph*)]
  \item \emph{Explicit attack}: an algorithm that recovers keys, decrypts, forges signatures or finds collisions for the scheme as deployed.
  \item \emph{Proof assumption solvable}: the problem to which the security proof reduces, at the parameters of the proof, is solved (regime 1). The hardness premise is then unavailable, but an attack follows only if a solution of that norm and form is itself a break, which needs a reduction in the converse direction.
  \item \emph{Guarantee no longer informative}: the worst-case problem on which the assumption is based is solved (regime 2), while the assumption itself is not known to be.
\end{enumerate}
The gap between (A) and (B) is concrete. A SIS solution of norm $\sqrt m$ is a collision of Ajtai's function only if it lies in $\{-1,0,1\}^m$, and a forgery of a hash-and-sign signature needs a short preimage of a \emph{hashed} target, that is, a solution of an inhomogeneous problem, which a solver for homogeneous SIS does not provide. A forger yields a SIS solution; a SIS solver does not, by itself, yield a forger.

\paragraph{SIS-only primitives.} The primitives whose security rests on SIS alone have norm bounds of very different sizes. Ajtai's function $f_A(x)=Ax\bmod q$ on binary inputs $x\in\{0,1\}^m$ has a collision-resistance reduction to the stated SIS assumption because a collision $x\neq x'$ yields $z=x-x'\in\{-1,0,1\}^m$ with $Az\equiv0$ and $\|z\|\leq\sqrt m$ \cite{ajtai1996}; with $\beta=\sqrt m=\Otil(\sqrt n)$ it is deep in the third regime. SWIFFT is a fast instantiation of the same idea over the ring $\Z_{257}[x]/(x^{64}+1)$, compressing $1024$ input bits through sixteen ring elements, with collisions again giving solutions with coefficients in $\{-1,0,1\}$ \cite{LMPR08}. Hash-and-sign signatures in the style of Gentry, Peikert and Vaikuntanathan sign a message by sampling, with a trapdoor, a short preimage $x$ of the hashed message from a discrete Gaussian; a forgery, combined with the signer's own preimage of the same hash, gives a SIS solution of norm about twice the signature bound \cite{GPV08}. With the trapdoors of Micciancio and Peikert \cite{MP12} the Gaussian width is $\Otil(\sqrt n)$ and the signature bound $\Otil(n)$, so these signatures lie in the third regime: clearly under $\HEDCP[\ell]$, whose thresholds are at least about $n^{5/2}$, and under $\HReg(f)$ by a margin of $n^{2f}$, up to logarithmic factors and constants. Lyubashevsky's Fiat--Shamir with aborts reduces to SIS alone only when the public key is statistically close to uniform \cite{Lyu12}; in practical instantiations, ML-DSA included, the key is only computationally hidden and its secrecy rests on Module-LWE.

\paragraph{The deployed signatures.} \Cref{tab:sisreach} compares estimated typical sampler norms with the bounds in the security proofs of FN-DSA and ML-DSA. The model replaces dual minima by Gaussian-heuristic values unless a known short vector invalidates that substitution. For Falcon, the dual contains the conjugate secret scaled by $q^{-1}$, of norm at most $1.17/\sqrt q\approx0.011$ \cite{Falcon20}. With $\gu=1$, the resulting typical $\ell_2$ norms are about $5.5\cdot10^4$ and $1.1\cdot10^5$, versus proof bounds about $1.2\cdot10^4$ and $1.7\cdot10^4$.

The short vector in the dual invalidates the use of
the larger Gaussian-heuristic dual minimum in this
calculation. With the resulting dual bound, the
displayed sampler's estimated typical output norm
exceeds the target bound even in the optimistic
comparison $\gu=1$. Thus this norm calculation
does not certify a solution at the required bound.
We do not claim an impossibility result for this
sampler or for other attacks; that would require
an appropriate bound on the probability of a useful
short output.

For ML-DSA the strong-unforgeability proof uses Module-SIS with the displayed $\ell_\infty$ bounds \cite{DilSpec21,FIPS204}. The model comparisons at $c_0\lesssim0.6$--$0.7$ are heuristic sufficient comparisons, not attack thresholds. Ordinary unforgeability also involves SelfTargetMSIS and the hash function. Under $\HEDCP[\ell]$, \cref{prop:sis-edcp}'s uniform-matrix and sample premises do not certify these structured instances; the audited Wen--Zheng search-MLWE route does not supply a small-norm Module-SIS solver (\cref{sec:wz-audit}). A homogeneous SIS solution still does not supply an inhomogeneous hashed-target preimage or an actual forgery.

\begin{table}[htbp]
\centering\footnotesize
\renewcommand{\arraystretch}{1.2}
\setlength{\tabcolsep}{3.5pt}
\begin{tabular}{@{}lrrrlll@{}}
\toprule
\textbf{Instance} & $m$ & $n$ & $\lambda_1(\Lat^*)$ & \textbf{Required} & \shortstack[l]{\textbf{Heuristic}\\\textbf{comparison}} & \textbf{$\HReg$, $f\to0^+$}\\
\midrule
Falcon-512 (as SIS) & 1024 & 512 & $\leq0.011$ & $\ell_2\leq11\,668$ & not certified & \Open\\
Falcon-1024 (as SIS) & 2048 & 1024 & $\leq0.011$ & $\ell_2\leq16\,765$ & not certified & \Open\\
ML-DSA-44 MSIS (SUF) & 2304 & 1024 & 0.0097 & $\ell_\infty\leq2^{18.5}$ & $c_0\lesssim0.7$ & \Cond\textsuperscript{H}\\
ML-DSA-65 MSIS (SUF) & 3072 & 1536 & 0.0046 & $\ell_\infty\leq2^{20}$ & $c_0\lesssim0.7$ & \Cond\textsuperscript{H}\\
ML-DSA-87 MSIS (SUF) & 4096 & 2048 & 0.0054 & $\ell_\infty\leq2^{20}$ & $c_0\lesssim0.6$ & \Cond\textsuperscript{H}\\
\bottomrule
\end{tabular}
\caption{Heuristic reach of \cref{prop:sis-reg} on the lattice problems behind the security proofs of deployed signatures, for $f\to0^+$ (our norm-model estimates; Falcon's displayed dual-minimum upper bound comes from a known short vector). $\Lat^*$ is the dual of the relevant $q$-ary lattice of dimension $m$ with $n$ constraints, normalised as in the proof of \cref{prop:sis-reg}; with the decoding distance $d=\lambda_1(\Lat^*)/(2\gu(m+1))$, the quantum step produces samples of width $w=\sqrt{2m}\,\gu(m+1)/\lambda_1(\Lat^*)$, and the last but one column gives the largest $c_0$ for which their typical norm meets the required bound. ``SUF'' is strong unforgeability. A conclusion here is of kind (B): the assumption of the proof would be solvable, which does not by itself give a forgery. Falcon's problem is inhomogeneous and structured and is treated, optimistically for the attacker, as homogeneous SIS with the bound $2\beta$ of the proof; for it, $\lambda_1(\Lat^*)$ is bounded by the conjugate of the secret key, not taken from the Gaussian heuristic ($0.070$ and $0.099$), and ``not certified'' means that the estimated typical norm exceeds the bound even at $\gu(m+1)=1$. This does not bound the probability of a useful short output or rule out other attacks.}
\label{tab:sisreach}
\end{table}

\paragraph{Two caveats.} First, many practical ``SIS-based'' schemes are SIS \emph{and} LWE schemes: ML-DSA's public key is a Module-LWE sample, so key recovery is reached whenever that instance is (\cref{sec:parameters}), whatever the status of Module-SIS. Second, the SIS-only constructions reviewed here supply hash functions, commitments and signatures; they do not establish a SIS-only KEM. This is a limit of the reviewed constructions, not an impossibility claim for encryption. For Module-SIS \cite{LS15}, the worst-case decoder can still be used, but the direct SIS propositions also require distribution and dual-minimum conditions. Uniform-matrix bounds do not transfer automatically to a structured matrix; the signature calculations above remain heuristic.

\paragraph{The $\SIS^\infty$ results.} The algorithms of Chen, Liu and Zhandry for $\SIS^\infty$ with very wide matrices, polynomial modulus and bound $q/2-O(1)$ \cite{CLZ22} are sometimes cited as evidence that SIS is quantumly fragile. They concern a regime far from cryptographic SIS, pass through LWE with quantum auxiliary states rather than through DCP, and have been fully dequantised: Kothari, O'Donnell and Wu solve all the $\SIS^\infty$ variants treated there with classical deterministic algorithms \cite{KOW26}. They are relevant here only as a reminder that the DCP--SIS question at small norm bounds is open in both directions (\cref{q:sis}).

% ==================================================================
\subsection{Unstructured CVP-based encryption}\label{sec:ggh}

An early lattice-based public-key scheme designed for efficiency, by Goldreich, Goldwasser and Halevi (GGH), uses a ``good'' private basis, nearly orthogonal, and publishes a ``bad'' basis of the same lattice. A message is encoded as a lattice point $v$, and the ciphertext is $c=v+e$ for a small error $e$ with entries $\pm\sigma$; the owner of the good basis decodes by Babai's rounding, which succeeds because the error is small compared with the good basis \cite{GGH97}. A signature scheme follows the same pattern, signing a hashed message by finding a nearby lattice point. Nguyen showed that the encryption scheme leaks information: since every error entry is $\pm\sigma$, the ciphertext determines the message modulo $2\sigma$, and the remaining problem is a closest-vector instance with a much smaller error; with this observation he decrypted the authors' challenge ciphertexts in dimensions up to $350$ \cite{Ngu99}. Nguyen and Regev later showed that GGH and NTRUSign signatures leak the shape of the secret basis, each signature is a point in a hidden parallelepiped, which can be learned from enough of them, and recovered secret keys experimentally \cite{NR06}. This is the attack that hash-and-sign schemes with Gaussian sampling \cite{GPV08} were designed to avoid.

Under $\Hstar$, GGH decryption is a $\BDD$ instance whose gap is fixed by the quality of the private basis. For the nearly orthogonal private bases recommended by its authors, a scaled identity plus a small perturbation, and an error spread over all coordinates, the Euclidean gap $\lambda_1/\|e\|$ is a small constant rather than the polynomial factor $\gu$; decryption works because Babai's rounding tolerates errors that are small in the coordinates of the private basis, not because the instance is a unique-SVP instance with a large gap. GGH-type encryption is therefore \Open{} under all three hypotheses: outside the reach of the reductions, but with no worst-case foundation and with a record of classical breaks, which makes it an unattractive candidate.

% ==================================================================
\subsection{NTRU and NTRU Prime}\label{sec:ntru}

\paragraph{The schemes.} NTRU (\cref{sec:prelim-ntru}) is used in three forms. NTRU-HPS and NTRU-HRSS, merged as the NTRU submission to the NIST process, use $\Z_q[x]/(x^n-1)$ with $q$ a power of two and ternary keys; Streamlined NTRU Prime uses the field $\Q[x]/(x^p-x-1)$ of prime degree with a large Galois group, to avoid algebraic structure that attacks might exploit, and NTRU LPRime is the corresponding Ring-LWE-style variant \cite{BCLV17}; and Falcon, selected by NIST for standardisation as FN-DSA (FIPS 206, in preparation), uses NTRU lattices over $\Z[x]/(x^n+1)$ with $q=12289$ as the basis of a hash-and-sign signature with fast Fourier sampling \cite{Falcon20}.

\paragraph{No unique shortest vector.} NTRU key recovery asks for a short vector of the lattice $\Lat_h:=\{(u,v)\in R^2:uh\equiv v\pmod q\}$, which contains the secret key $(f,g)$. It is tempting to read it as a unique-SVP instance with gap $\mathrm{gh}(\Lat_h)/\|(f,g)\|=\Theta(\sqrt q/\sigma)$, but $\Lat_h$ is an $R$-module, and multiplication by $x$ maps the key to other lattice vectors that are as short, or almost as short.

\begin{lemma}[Rotations of NTRU keys]\label{lem:ntru-sym}
Let $R=\Z[x]/(\Phi)$ with $\Phi$ monic and irreducible over $\Q$ of degree $n\geq2$, and let $\Lat\subseteq R^k$ be a nonzero $R$-submodule, viewed as a lattice through the coefficient embedding. For every nonzero $v\in\Lat$, the vectors $v,xv,\dots,x^{n-1}v$ lie in $\Lat$ and are linearly independent.
\begin{enumerate}[label=(\alph*)]
  \item If $\Phi=x^n+1$, they all have norm $\|v\|$; hence $\lambda_1(\Lat)=\lambda_2(\Lat)=\dots=\lambda_n(\Lat)$.
  \item If $\Phi=x^p-x-1$, then $\|xv\|\leq\sqrt3\,\|v\|$; hence $\lambda_2(\Lat)\leq\sqrt3\,\lambda_1(\Lat)$.
\end{enumerate}
In particular, the NTRU lattice $\Lat_h$ of Falcon's ring or of NTRU Prime's field is never a $\gamma$-unique-SVP instance for $\gamma>1$, respectively $\gamma>\sqrt3$, whatever $h$ and $q$.
\end{lemma}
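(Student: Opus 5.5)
The argument is elementary and splits into three parts: a common independence step, an explicit norm computation for each ring, and the deduction about successive minima and NTRU lattices.

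\emph{Membership and independence.} Since $\Lat$ is an $R$-submodule, $x^iv\in\Lat$ for every $i$. For independence, I will show that the $\Z$-span of $v,xv,\dots,x^{n-1}v$ has rank $n$. Suppose $\sum_i c_ix^iv=0$ with $c_i\in\Q$, and set $c(x)=\sum_ic_ix^i$, a polynomial of degree less than $n$. Then $c\cdot v=0$ in $(\Q[x]/(\Phi))^k$. Because $\Phi$ is irreducible, $K=\Q[x]/(\Phi)$ is a field, and $v\neq0$ has some nonzero coordinate $v_j\in K$. Multiplying by $v_j^{-1}$ gives $c=0$ in $K$, so $c$ is the zero polynomial and all $c_i=0$. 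Rational independence of lattice vectors is the same as linear independence in $\R^{nk}$, which completes this step.

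\emph{Norm computations.} Multiplication by $x$ acts coordinatewise on $R^k$, so it suffices to treat one ring element $a=\sum_{i<n}a_ix^i$ and then sum the squares over the $k$ coordinates.
\begin{itemize}
\item For $\Phi=x^n+1$: we have $xa=-a_{n-1}+\sum_{i\geq1}a_{i-1}x^i$. This is a signed cyclic shift of the coefficients, so it preserves the norm, and therefore $\|x^iv\|=\|v\|$.
\item For $\Phi=x^p-x-1$: using $x^p=x+1$ gives $xa=a_{p-1}+(a_0+a_{p-1})x+\sum_{i\geq2}a_{i-1}x^i$. Hence
\[
\|xa\|^2=\|a\|^2+a_{p-1}^2+2a_0a_{p-1}\leq\|a\|^2+2a_{p-1}^2+a_0^2\leq3\|a\|^2.
\]
The first inequality uses $2a_0a_{p-1}\leq a_0^2+a_{p-1}^2$. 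Summing over the $k$ coordinates gives $\|xv\|\leq\sqrt3\|v\|$.
\end{itemize}
I will double-check the constant in the second case, although any constant suffices for the qualitative conclusion.

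\emph{Deduction.} Take $v$ to be a shortest nonzero vector of $\Lat$.
\begin{itemize}
\item In case (a), $\Lat$ contains $n$ linearly independent vectors of norm $\lambda_1$. By the definition of successive minima, $\lambda_n\leq\lambda_1$, and since $\lambda_1\leq\lambda_2\leq\dots\leq\lambda_n$ all of them are equal.
\item In case (b), $v$ and $xv$ are two independent vectors of norms at most $\lambda_1$ and $\sqrt3\lambda_1$, so $\lambda_2\leq\sqrt3\lambda_1$.
\end{itemize}

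For the final claim, $\Lat_h=\{(u,v)\in R^2:uh\equiv v\pmod q\}$ is an $R$-submodule of $R^2$, and it is nonzero because it contains $(q,0)$. The uSVP promise $\lambda_2>\gamma\lambda_1$ therefore fails for $\gamma\geq1$ in case (a) and for $\gamma\geq\sqrt3$ in case (b), and in particular for every $\gamma$ above these thresholds as stated. The irreducibility of $x^p-x-1$ is an input from \cite{BCLV17}, and $x^n+1$ is irreducible as a cyclotomic polynomial with $n$ a power of two.

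The main obstacle is the independence step, since it is the only place where irreducibility enters, and the precise constant in (b).
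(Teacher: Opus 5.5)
Your proof is correct and follows the paper's argument step for step. You use the same field-inverse argument for $\Q$-linear independence: you invert a nonzero coordinate $v_j$ where the paper inverts $c(x)$, which is the same observation. You also use the same signed-shift isometry for $x^n+1$, and the same bound $2a_0a_{p-1}\le a_0^2+a_{p-1}^2$ giving $\sqrt3$ for $x^p-x-1$.

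Your explicit checks that $\Lat_h$ is a nonzero $R$-submodule (it contains $(q,0)$) and that the two defining polynomials are irreducible supply points the paper leaves implicit. The constant in (b) is right, so it needs no further checking.
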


\begin{proof}
$\Lat$ is closed under multiplication by $x$. If $\sum_{i<n}c_ix^iv=0$ with $c\in\Q^n$ not all zero, then $c(x)\,v=0$ in $(\Q[x]/(\Phi))^k$, a vector space over the field $\Q[x]/(\Phi)$ in which $c(x)\neq0$ is invertible, so $v=0$; this proves linear independence. (a) Multiplication by $x$ sends the coefficient vector $(a_0,\dots,a_{n-1})$ of each component to $(-a_{n-1},a_0,\dots,a_{n-2})$, an isometry; applying it to a shortest nonzero vector gives $n$ linearly independent vectors of norm $\lambda_1$. (b) Since $x^p=x+1$, multiplication by $x$ sends $(a_0,\dots,a_{p-1})$ to $(a_{p-1},a_0+a_{p-1},a_1,\dots,a_{p-2})$, of squared norm at most $\|a\|^2+a_0^2+2a_{p-1}^2\leq3\|a\|^2$, using $(a_0+a_{p-1})^2\leq2a_0^2+2a_{p-1}^2$; applying this to a shortest nonzero vector $v$ gives a second, linearly independent vector of norm at most $\sqrt3\,\lambda_1$.
\end{proof}

The observation is elementary, and folklore among cryptanalysts; we state it to identify an obstruction to the direct ordinary-uSVP route. For $\Z[x]/(x^n-1)$ with $n$ an odd prime, used by NTRU-HPS and NTRU-HRSS, the ring is not a field, but multiplication by $x$ is still a permutation of coefficients: if $xv=\lambda v$ for a nonzero $v$, then $|\lambda|=1$; $\lambda=-1$ is impossible, since it would give $v=x^nv=-v$; so every nonzero lattice vector not fixed by multiplication by $x$ has a linearly independent rotation of the same norm. The Kannan embedding of an LWE, Ring-LWE or Module-LWE instance is not an $R$-module, its embedding coordinate is a single integer, so the rotation argument does not automatically obstruct those affine embeddings. For NTRU keys, the exact obstruction and excluded approximation factors depend on the ring, as specified above.

\paragraph{Ordinary and module uniqueness.}
In a rank-two module, many short vectors may belong
to one line over the number field while being linearly
independent over $\mathbb Q$. Ordinary Euclidean
uniqueness and module uniqueness are therefore different
promises. Felderhoff, Pellet-Mary and Stehl\'e study
module-uSVP and its relation to NTRU, including reductions
with ideal-SVP oracle requirements \cite{FPS22}.
These results do not directly instantiate Regev's
ordinary-uSVP promise, but they are relevant to any
proposed module-aware replacement for that route.

For the power-of-two cyclotomic NTRU lattice,
\cref{lem:ntru-sym} gives $\lambda_2=\lambda_1$.
For the NTRU Prime lattice it instead gives
$\lambda_2\leq\sqrt3\,\lambda_1$, excluding the ordinary
$\gamma$-uSVP promise for $\gamma>\sqrt3$.
The latter statement does not assert equality of the
two minima or exclude every $\gamma>1$.

\paragraph{What the subfield attack does not supply.}
The attack of Albrecht, Bai and Ducas \cite{ABD16} concerns
$h=g/f\bmod q$ and uses the multiplicative norm
$N_{K/L}(h)=N_{K/L}(g)/N_{K/L}(f)\bmod q$ in a proper subfield.
Their Theorem 2 requires coprimality of the normed ideals and a
sufficient length separation; it recovers a multiple of the normed
key, which must then be lifted. The asymptotic attack analysis uses
Heuristic 1 to control the sizes of norms and inverse norms.
As the authors emphasise in their introduction, the relative norm
is not additive. It therefore does not turn $As+e$ into a fresh
lower-dimensional Module-LWE sample with a controlled additive error,
and supplies neither the sample amplifier nor the modulus adapter
missing from a deployed Wen--Zheng application. Its overstretched
NTRU attacks are independent of the present DCP hypotheses and do
not assert ordinary-uSVP uniqueness of a ring-generated short submodule.

\paragraph{Reach.} The direct ordinary-uSVP route is excluded for Falcon's cyclotomic keys at factors above one and for NTRU Prime keys at factors above $\sqrt3$. The ratio $\mathrm{gh}(\Lat_h)/\|(f,g)\|$ in \cref{tab:gaps} is a density comparison, not that promise. A decision-NTRU distinguisher based on $\GapSVP$ could apply if the structured and uniform branches satisfy a proved length separation exceeding the available approximation bound; the displayed density estimates alone do not prove it. The Stehl\'e--Steinfeld variant \cite{SS11} has a separate Ring-LWE ciphertext connection, again requiring the embedding or oracle promises. Overstretched NTRU has independent classical subfield and dense-sublattice attacks \cite{ABD16,KF17}. None of these observations establishes deployed NTRU key recovery or decryption from the present hypotheses.

\paragraph{The indirect Ring-LWE route.}
The lossiness reduction gives a possible route to decision
NTRU through a search Ring-LWE solver, subject to its
sample and width requirements. It does not interpret
$(h,0)$ as an ordinary Ring-LWE sample and does not
recover an NTRU key. We have not established that the
dihedral algorithms considered here meet those oracle
requirements for a deployed NTRU distribution.
Consequently this connection gives no deployed-parameter
attack in the present analysis. The rotation obstruction
to the direct ordinary-uSVP argument remains separate. The precise sufficient variant below separates the sample and width requirements from the rotation obstruction.

\begin{lemma}
\label{lem:ntru-lossiness}
Let $R$ be a ring whose additive group has an integral basis
of rank $d$, let $q\geq2$, and use the Euclidean coefficient
norm in that basis. Write $R_q=R/qR$ and
\[
 D_{R,w}(e)\ \propto\ \exp(-\pi\|e\|^2/w^2).
\]
Suppose a polynomial-time quantum algorithm $\mathcal O$
recovers a uniform $s\in R_q$ with probability at least
$1-\varepsilon$ from $t$ Ring-LWE samples
$(a_i,a_is+e_i)$, where the $a_i$ are independent uniform
elements of $R_q$ and the $e_i$ are independent
$D_{R,w}$ errors.

Consider a distribution on tuples $(a_1,\ldots,a_t)$
such that every tuple admits $f,g_1,\ldots,g_t\in R$ with
$f\not\equiv0\pmod{qR}$,
\[
 a_if=g_i\pmod{qR}\quad(1\leq i\leq t),\qquad
 \sum_{i=1}^t\|g_i\|^2\leq B^2.
\]
The lossiness distinguisher below has, under ideal
Gaussian sampling, a difference in acceptance
probabilities between independent uniform $a_i$
and this distribution of at least
\[
 \frac12-\varepsilon-\sqrt{\frac{\pi}{8}}\frac{B}{w}.
\]
In particular, if $w\geq\sqrt{2\pi}\,B$ and
$\varepsilon\leq1/8$, the advantage is at least $1/8$.
Efficient approximate Gaussian sampling preserves
this conclusion up to its statistical error: joint
sampling error at most $\eta_{\rm samp}$ costs at
most $2\eta_{\rm samp}$ in the advantage.
\end{lemma}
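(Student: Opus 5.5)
The distinguisher is the standard lossiness test from \cite[Sec. 4.4.4]{Pei16}, which I will make quantitative. Given $(a_1,\ldots,a_t)$, sample a uniform $s\in R_q$ and independent errors $e_i\leftarrow D_{R,w}$, form $b_i=a_is+e_i$, run $\mathcal O$ on $(a_i,b_i)_i$, and accept if and only if the output equals $s$. The proof has two halves: the uniform branch and the structured branch.

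\emph{Uniform branch.} When the $a_i$ are independent and uniform, the input to $\mathcal O$ is exactly its promised Ring-LWE distribution. So the distinguisher accepts with probability at least $1-\varepsilon$.

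\emph{Structured branch.} Here I will show that $s$ is information-theoretically hidden with probability at least $1/2$. Fix a tuple and its witnesses $f,g_i$. Consider shifting the secret by $c\,f^{\ast}$, where $f^{\ast}\in R_q$ is chosen so that $a_i f^{\ast}$ is small. The cleanest choice is the direct one: replace $s$ by $s+f$. Then
\[
 b_i=a_i(s+f)+(e_i-g_i)\pmod{qR}.
\]
So the pair $(s,e)$ and the pair $(s+f,\,e-g)$ produce the same sample vector $b$. The secret $s$ is uniform and $s+f$ is also uniform, and $f\not\equiv0$, so the two secrets are distinct.

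The posterior on the secret given $b$ therefore gives weight to both $s$ and $s+f$, in the ratio $D(e)/D(e-g)$ of the error densities on $R^t$ in dimension $dt$. I will bound the acceptance probability by
\[
 \mathbb E_b\big[\max_{s'}\Pr[s'\mid b]\big]\leq \tfrac12\big(1+\Delta\big),
 \qquad \Delta:=\Delta\big(D_{R^t,w},\,g+D_{R^t,w}\big),
\]
which follows from a pairing argument: pair each posterior atom $s'$ with $s'+f$. To see this, note that any estimator succeeds with probability at most $\sum_b\max(p(b,s),p(b,s+f))$ averaged over pairs. This is at most $\tfrac12\sum\max(P,P')=\tfrac12(1+\mathrm{TV}(P,P'))$, where $P,P'$ are the laws of $b$ given the secret $s$ and given $s+f$; these differ exactly by the error shift $g$.

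\emph{Main obstacle: bounding $\Delta$.} The remaining work is to bound the statistical distance between a discrete Gaussian on the lattice $R^t\cong\Z^{dt}$ and its shift by the lattice vector $g$. I plan to prove $\Delta\leq\sqrt{\pi/2}\,\|g\|/w$. For continuous Gaussians this is the standard estimate $\Delta\leq 2\Phi\big(\sqrt{\pi/2}\,\|g\|/(2w)\big)-1$, using the standard deviation $w/\sqrt{2\pi}$.

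The discrete case is the step I expect to be hardest. My first attempt is a Pinsker/KL computation with $\rho_w(e-g)/\rho_w(e)$. Because $g$ is a lattice vector and $\rho_w(\Z^{dt}-g)=\rho_w(\Z^{dt})$, the normalisers cancel. This gives
\[
 \mathrm{KL}=\frac{\pi\|g\|^2}{w^2}-\frac{2\pi}{w^2}\,\mathbb E\langle e,g\rangle=\frac{\pi\|g\|^2}{w^2},
\]
where $\mathbb E e=0$ by symmetry. Pinsker then yields $\Delta\leq\sqrt{\pi/2}\,\|g\|/w$, and $\|g\|\leq B$.

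\emph{Conclusion.} Combining the two branches, the structured branch accepts with probability at most $\tfrac12+\tfrac12\sqrt{\pi/2}\,B/w=\tfrac12+\sqrt{\pi/8}\,B/w$. The advantage is therefore at least $\tfrac12-\varepsilon-\sqrt{\pi/8}\,B/w$, as claimed. If $w\geq\sqrt{2\pi}B$, then $\sqrt{\pi/8}\,B/w\leq1/4$, and with $\varepsilon\leq1/8$ the advantage is at least $1/8$.

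For approximate sampling, the joint sampling error $\eta_{\rm samp}$ perturbs each branch's acceptance probability by at most $\eta_{\rm samp}$, by data processing. The advantage therefore drops by at most $2\eta_{\rm samp}$.
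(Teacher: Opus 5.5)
Your proposal is correct and follows the paper's proof essentially step for step. It uses the same distinguisher, the same translation $(s,e)\mapsto(s+f,e-g)$ that leaves every $b_i$ unchanged, the same zero-mean KL computation $\pi\|g\|^2/w^2$ with cancelling normalisers, and Pinsker.

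The halving step differs only cosmetically. Translation by $f$ is not an involution unless $2f\equiv0$, so your ``pairing'' has to be read as an average over all $s$. The paper sidesteps this by noting that the probabilities $u$ of outputting $s$ and $u'$ of outputting $s+f$ satisfy $u+u'\le1$ and $|u-u'|\le\Delta$. Separately, your unused continuous-Gaussian aside should read $2\Phi(\sqrt{\pi/2}\,\|g\|/w)-1$.
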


\begin{proof}
Given $(a_i)$, sample $s$ uniformly and the errors
independently as above, call $\mathcal O$ on
$(a_i,a_is+e_i)_{i=1}^t$, and accept exactly when its
answer is $s$. Uniform inputs are accepted with
probability at least $1-\varepsilon$.

Fix a tuple satisfying the short-relation promise, and
let $P$ be the distribution of $(s,e_1,\ldots,e_t)$.
Define the translation
\[
 T(s,e_1,\ldots,e_t)
  =(s+f\bmod qR,e_1-g_1,\ldots,e_t-g_t).
\]
It preserves all pairs given to $\mathcal O$, while
changing the correct secret. Let $P'=T_*P$. The uniform
secret is invariant under translation. For the errors,
symmetry gives zero mean and hence, using natural
logarithms,
\[
 D_{\rm KL}(P\|P')
   =\frac{\pi}{w^2}\sum_i\|g_i\|^2.
\]
Pinsker's inequality therefore gives
\[
 \Delta(P,P')\leq\sqrt{\frac{\pi}{2}}\frac{B}{w}.
\]
Let $u$ be the recovery probability under $P$. Under
$P'$, the same observed pairs have the alternative
target $s+f$, distinct from $s$. Thus the two recovery
probabilities sum to at most one. They differ by at
most $\Delta(P,P')$, implying
$u\leq(1+\Delta(P,P'))/2$. Subtract this upper bound
from the uniform-input acceptance probability.
\end{proof}

For a single public key the lemma supplies only $t=1$: reusing $h$ does not give independent uniform $a_i$ in the uniform branch. If each $\|g_i\|\leq B_0$, the sufficient width is $w\geq\sqrt{2\pi t}\,B_0$. A short-relation promise holding with probability $1-\eta$ costs at most a further $\eta$ in advantage. The conclusion is a decision distinguisher, not recovery of a denominator.

\paragraph{Conclusion.} Deployed NTRU key recovery and decryption are not established by these reductions under any of the three hypotheses; they remain \Open{}. The direct ordinary-uSVP obstruction is exact for power-of-two cyclotomics and excludes factors above $\sqrt3$ for NTRU Prime. Decision NTRU has conditional gap-based and lossiness routes, with their respective promises and oracle requirements; neither is certified here for deployed parameters. Overstretched NTRU has independent classical attacks \cite{ABD16,KF17}. Module-uSVP and ideal-lattice algorithms remain relevant background \cite{FPS22,EHKS14,BS16,CDPR16,CDW17,PHS19}, without yielding the missing deployed key-recovery certificate.

% ==================================================================
\subsection{The Lattice Isomorphism Problem}\label{sec:lip}

\paragraph{The problem.} Two lattices are isomorphic if an orthogonal transformation maps one onto the other. Given bases $B$ and $B'=OBU$ with $O$ orthogonal and $U$ unimodular, the search Lattice Isomorphism Problem (LIP) asks for $O$, or equivalently for $U$. In the language of quadratic forms, which Ducas and van Woerden adopted because it removes the real-valued $O$ from the picture, the Gram matrices $Q=B^{\top}B$ and $Q'=B'^{\top}B'$ satisfy $Q'=U^{\top}QU$, and LIP asks for $U\in\mathrm{GL}_n(\Z)$; the decision version asks whether two forms are equivalent \cite{DvW22}.

\paragraph{Cryptography from LIP.} Ducas and van Woerden defined an average-case distribution on an isomorphism class by sampling a basis through discrete Gaussian sampling, proved a worst-case to average-case reduction \emph{within} a class, and built from LIP an identification scheme in the style of the graph-isomorphism protocol, a key encapsulation mechanism and a hash-and-sign signature scheme \cite{DvW22}. The idea of the KEM and of the signature is to hide a lattice with good decoding or sampling properties, for instance $\Z^n$ or a remarkable lattice, behind a random isomorphism, so that the secret isomorphism gives the decoder and the public form does not. Bennett, Ganju, Peetathawatchai and Stephens-Davidowitz studied rotations of $\Z^n$ specifically, the simplest case, with both algorithms and cryptographic constructions \cite{BGPS23}. Hawk, by Ducas, Postlethwaite, Pulles and van Woerden, is a signature based on module LIP over power-of-two cyclotomics: the secret is a basis of a rank-two module isomorphic to $\Z^{2n}$, and signing requires only discrete Gaussian sampling over $\Z^{2n}$, which is simple and fast \cite{DPPvW22}.

\paragraph{Reach.} For the isomorphism class of $\Z^n$, every rotation has all successive minima equal. Such an instance supplies no ordinary-uSVP promise above one. The guarantee of a polynomial-factor $\SIVP$ oracle alone also does not recover an exact isometry. These observations concern this class, not all LIP instances, and do not exclude other reductions. We record LIP-based schemes as \Open{} because the compositions established here provide no general attack certificate.

\paragraph{The 2026 attack on Hawk.} The concrete history is cautionary. Straznickas and Weis showed in July 2026 that key recovery for Hawk-$n$ reduces in polynomial time to $\SVP$ in dimension $n/2+1$, by exploiting the structure of the power-of-two cyclotomic field; this lowers the estimated key-recovery cost from $2^{150}$ to $2^{108}$ for Hawk-512 and from $2^{288}$ to $2^{182}$ for Hawk-1024, and a Hawk-256 key was recovered in hours on a single server \cite{SW26}. The attack does not apply to Falcon. NIST's page for the third round of the additional-signature process, which had admitted Hawk in May \cite{NIST26sig}, records that the submission team has withdrawn it \cite{NIST26r3}; the date of the team's announcement, July 29, 2026, is taken from secondary reporting \cite{HawkWd26}. Two observations matter here. First, the attack targets the algebraic structure of the module instance, not LIP as such. Second, it reduces key recovery to \emph{exact} $\SVP$ in half the dimension, the polynomial-factor guarantees of \cref{thm:reach} do not automatically solve that exact problem. Thus this published attack does not by itself supply an additional efficient attack under $\Hstar$. The episode shows that structured instances of young assumptions need independent scrutiny before they can carry the weight that ML-KEM and ML-DSA would leave behind.

% ==================================================================
\subsection{Heuristic gaps of deployed parameter sets}\label{sec:parameters}

\paragraph{Sufficient certificates and heuristic comparisons.}
Fix a failure exponent $f$ and a witnessing algorithm
$\mathcal A$. Let $\Gamma_{f,\mathcal A}(d)$ be an upper
bound on a sufficient ordinary-uSVP promise for the
resulting reduction. A key-recovery application requires
that its embedded planted vector be shortest and that
the actual gap exceed this bound. Failure to establish
that inequality gives no certificate from this theorem;
it does not prove failure of the algorithm on the
instance.

The model $c_0d^{1/2+2f}$ and the limit $f\to0^+$ are
used only for comparisons. The asymptotic constants
and the starting dimension may depend on $f$ and
$\mathcal A$. Gaussian-heuristic gaps are additional
estimates. None of these comparisons is a necessary
and sufficient cryptanalytic threshold.

\paragraph{Method.} \Cref{tab:gaps} applies \cref{prop:samples}\ref{it:gapm} to deployed parameter sets. For an instance with $n$ secret coordinates of standard deviation $\sigma_s$ and $m$ samples with error standard deviation $\sigma_e$, the embedding lattice has exact dimension $n+m+1$. We report $D=n+m$ as the first-order dimension in the numerical table. After rescaling the secret coordinates, its first-order heuristic gap is
\[
  g\;\approx\;\frac{q^{m/(n+m)}}{\sigma_s^{\,n/(n+m)}\,\sigma_e^{\,m/(n+m)}\,\sqrt{2\pi e}},
\]
which for keys with $m=n$ and $\sigma_s=\sigma_e=\sigma$ is $\sqrt{q/(2\pi e)}/\sigma$. For ML-KEM ciphertexts we take the component $u=A^{\top}r+e_1$ alone: an instance with $m=n$ samples in the fresh secret $r$, whose error is $e_1$ plus the compression error, of variance about $(q/2^{d_u})^2/12$; with the rounding rules of \cite{FIPS203} its exact variance is $0.92$ for $d_u=10$ and $0.38$ for $d_u=11$, the values we use. The component $v$ is not an LWE sample in $r$, since it contains the encoded message, but recovering $r$ from $u$ decrypts it; for ML-DSA we give the gap both for the published key, where the dropped low bits $t_0$ contribute an error of standard deviation $2^{13}/\sqrt{12}$, and for the full $t$. For NTRU keys the same formula, with secret $f$ and error $g$, gives the ratio $\mathrm{gh}(\Lat_h)/\|(f,g)\|$, which is not a unique-SVP gap (\cref{lem:ntru-sym}); we list it to show how dense the planted sublattice is. The column $g/\sqrt D$ compares the gap with the model $\sqrt D$ at $f\to0^+$ and $c_0=1$; it records the $c_0$ comparison that would suffice within this heuristic model at $f\to0^+$, without certifying a finite-dimensional bound (the NTRU density ratios are not ordinary-uSVP gaps). These are first-order estimates: they ignore sublattice selection and other refinements that a real attacker would optimise, treat compression errors as independent noise, and are given to two significant digits because nothing in them is more precise.

\begin{table}[htbp]
\centering\small
\renewcommand{\arraystretch}{1.15}
\setlength{\tabcolsep}{4pt}
\begin{tabular}{@{}lrrrrll@{}}
\toprule
\textbf{Instance} & $q$ & $D$ & $g$ & $g/\sqrt D$ & \textbf{$\HReg(f)$} & \textbf{Ball}\\
\midrule
ML-KEM-512 key & 3329 & 1024 & 11 & 0.36 & \Cond & no cert.\\
ML-KEM-768 key & 3329 & 1536 & 14 & 0.36 & \Cond & no cert.\\
ML-KEM-1024 key & 3329 & 2048 & 14 & 0.31 & \Cond & no cert.\\
ML-KEM-512 ctxt ($u$) & 3329 & 1024 & 11 & 0.33 & \Cond & no cert.\\
ML-KEM-768 ctxt ($u$) & 3329 & 1536 & 12 & 0.30 & \Cond & no cert.\\
ML-KEM-1024 ctxt ($u$) & 3329 & 2048 & 13 & 0.28 & \Cond & no cert.\\
FrodoKEM-640 key & $2^{15}$ & 1280 & 16 & 0.44 & \Cond & no cert.\\
FrodoKEM-976 key & $2^{16}$ & 1952 & 27 & 0.61 & \Cond & no cert.\\
FrodoKEM-1344 key & $2^{16}$ & 2688 & 44 & 0.85 & \Cond & no cert.\\
ML-DSA-44 key ($t_1$ only) & $\approx2^{23}$ & 2048 & 12 & 0.27 & \Cond & no cert.\\
ML-DSA-65 key ($t_1$ only) & $\approx2^{23}$ & 2816 & 14 & 0.26 & \Cond & no cert.\\
ML-DSA-87 key ($t_1$ only) & $\approx2^{23}$ & 3840 & 16 & 0.26 & \Cond & no cert.\\
ML-DSA-44 full $t$ supplied & $\approx2^{23}$ & 2048 & 500 & 11 & \Cond{} ($f\lesssim0.15$) & no cert.\\
ML-DSA-65 full $t$ supplied & $\approx2^{23}$ & 2816 & 560 & 11 & \Cond{} ($f\lesssim0.15$) & no cert.\\
ML-DSA-87 full $t$ supplied & $\approx2^{23}$ & 3840 & 840 & 14 & \Cond{} ($f\lesssim0.15$) & no cert.\\
Falcon-512 key (NTRU)\,$\ddag$ & 12289 & 1024 & 6.6 & 0.21 & \Open & n/a\\
Falcon-1024 key (NTRU)\,$\ddag$ & 12289 & 2048 & 9.4 & 0.21 & \Open & n/a\\
sntrup761 key (NTRU Prime)\,$\ddag$ & 4591 & 1522 & 23 & 0.59 & \Open & n/a\\
FHE example RLWE key & $2^{880}$ & $2^{16}$ & $\approx2^{436}$ & $\gg1$ & \Cond & no cert.\\
Many-sample, $q{=}3329$, $\sigma{=}1$ & 3329 & $\gg n$ & $\approx800$ & -- & $n$, $f$ & no cert.\\
\bottomrule
\end{tabular}
\caption{First-order heuristic gap estimates and NTRU density ratios, with parameters from \cite{FIPS203,FIPS204,FrodoKEM21,Falcon20,BCLV17}. The dimension column is $D=n+m$; the exact affine embedding has dimension $D+1$ and determinant $\tau q^m$ (\cref{prop:samples}). Its exact $g_\tau$ expression applies when the first-order approximation is unjustified. All numerical comparisons are conditional; $g/\sqrt D$ is a model normalisation, not an attack threshold. Full-$t$ rows assume that additional input is supplied; the displayed $f$ range also assumes $c_0=1$. ``No cert.'' means the restricted ball construction or its displayed width and noise bounds supply no certificate; it is not a claim about every construction in \cite{BKSW18} or all EDCP algorithms; ``n/a'' means no direct application is made here. $\ddag$: NTRU values are density ratios, not ordinary-uSVP gaps; Falcon has $\lambda_2=\lambda_1$, and NTRU Prime has $\lambda_2\leq\sqrt3\lambda_1$. The FHE row is an illustrative parameter set with error standard deviation $3.2$.}
\label{tab:gaps}
\end{table}

\paragraph{Interpretation.} The tables give sufficient comparisons under explicit models, not necessary attack thresholds. The cyclic hypothesis $\HHS$ supplies polynomial-factor lattice algorithms, but the unknown witnessing solver leaves its quantitative factor unspecified. The structured certificate audit in \cref{sec:wz-audit} supplies no deployed attack; the remaining polynomial-modulus adapter questions are recorded in \cref{q:wz}. An explicit $\Gamma_{f,\mathcal A}$ would resolve one input to the embedding comparison; actual gaps, distributions and sample availability remain to be checked. The FHE row is an illustrative parameter set, and the full-$t$ ML-DSA rows assume extra input. NTRU's density ratios do not remove its ring-dependent uniqueness obstruction. Absence of a certificate records the limits of this analysis.

% ==================================================================
\section{Alternatives outside lattices}\label{sec:non-lattice}
% ==================================================================

\Cref{tab:nonlattice} summarises this section. The common thread is one question that each family must answer: does a complete efficient reduction to the assumed coset-state problem exist, rather than just a hidden-shift subroutine? For each family we describe the underlying problem and the main constructions, representative classical and quantum attacks, and the reason for its conclusion.

\begin{table}[htbp]
\centering\footnotesize
\renewcommand{\arraystretch}{1.02}
\setlength{\tabcolsep}{3.5pt}
\begin{tabularx}{\textwidth}{@{}L{2.9cm}cccL{1.7cm}Y@{}}
\toprule
\textbf{Family (examples)} & $\HHS$ & $\HReg$ & $\HEDCP$ & \textbf{Evidence} & \textbf{Problem and cited attacks}\\
\midrule
Goppa-code McEliece (Classic McEliece) & \vS & \vS & \vS & under revision\,\dag & Decoding random linear codes (NP-hard in the worst case \cite{BMvT78}); Grover-accelerated information-set decoding \cite{Ber10,KT17}; the hiding resists strong Fourier sampling on single coset states \cite{DMR11}. Classical structural recovery estimates (per run $2^{89}$--$2^{128}$ across cost models \cite{Wei26,FV26mceliece}) depend on unverified structural premises, repetitions and the memory model (\cref{sec:mceliece-status}); small challenge recoveries are distinct evidence.\\
Quasi-cyclic codes (HQC, BIKE) & \vS & \vS & \vS & moderate & Quasi-cyclic syndrome decoding; ISD with Grover. The algebraic structure is the analogue of ideal lattices and deserves monitoring.\\
LPN-based PKE (Alekhnovich) & \vS & \vS & \vS & moderate & LPN at noise rate about $1/\sqrt n$ \cite{Ale03}; noisy binary dihedral cosets are LPN (\cref{lem:binary}), an equivalence rather than a barrier.\\
Rank-metric codes (RYDE, Mirath) & \vS & \vS & \vS & moderate & Rank decoding, MinRank; the $\BDD$-to-EDCP reduction of \cite{CMY26} is effective only where decoding is already easy.\\
Lee-metric codes & \vK\bR & \vK\bR & \vK\bR & limited & The Lee-metric connections to lattices and EDCP \cite{CMY26} require the stated metric, distribution, parameter and solver-domain promises; no family-wide conclusion follows.\\
Hash-based (SLH-DSA, XMSS, LMS) & \vS & \vS & \vS & high & Second preimage and related properties; generic Grover and BHT benchmarks \cite{Gro96,BHT98,Ber09}.\\
Symmetric ZK (FAEST, Picnic) & \vS & \vS & \vS & high (AES) & AES or LowMC relation plus hash, commitment and proof-transform assumptions; MPC- and VOLE-in-the-head \cite{IKOS07,Picnic17,FAEST23}.\\
Multivariate (UOV, MAYO, QR-UOV, SNOVA; MQOM) & \vS & \vS & \vS & moderate & MQ (NP-hard \cite{GJ79}) with trapdoor structure; Grover-enhanced algebraic attacks \cite{BY18}. Structural breaks have been classical \cite{Beu22,TPD21}.\\
Code-in-the-head (SDitH) & \vS & \vS & \vS & high & Syndrome decoding of random codes \cite{FJR22}.\\
Cyclic isogeny actions (CSIDH-type, CSI-FiSh) & \vC\bR & \vC\bR & \vS & limited & Cyclic DCP via noiseless cosets, conditional on the complete action, public-input and coherent-evaluation requirements (\cref{sec:isogenies}) \cite{CJS14,BKV19}. Non-cyclic actions require $\HHS^{\mathrm{ab}}$. Known subexponential attacks \cite{Kup05,Pei20}. The polynomial-modulus $\HEDCP$ does not supply this route.\\
Supersingular endomorphism ring (SQIsign) & \vS & \vS & \vS & moderate & Endomorphism ring and isogeny path problems \cite{EHLMP18,Wes21}; generic quantum attacks in time $\Otil(p^{1/4})$ \cite{BJS14}, claw finding \cite{JS19}.\\
Higher-dimensional isogeny PKE (FESTA, POK\'E) & \vS & \vS & \vS & limited & Non-commutative isogeny problems with torsion information \cite{BMP23,BM25}; young assumptions.\\
Equivalence group actions (LESS, MEDS, ALTEQ) & \vS & \vS & \vS & moderate to limited & Code, matrix-code and trilinear-form equivalence. Measurement limitations are proved for specific symmetric-group and code-equivalence settings \cite{MRS08,HMRRS10,DMR11}; no corresponding general bound is established here for MEDS or ALTEQ.\\
\bottomrule
\end{tabularx}
\caption{Non-lattice families under the three hypotheses; glyphs and basis marks as in \cref{tab:collapse}. The glyph \vS{} records no certificate from the analysed reductions, not security. The fifth column is a qualitative assessment of the cited evidence, not a security level or ranking: \emph{high}, an extensively studied underlying problem with generic quantum benchmarks in the cited analyses; \emph{moderate}, a studied problem with structured variants, algorithmic progress or breaks of related schemes; \emph{limited}, a young assumption or known subexponential attack; \emph{under revision} \dag, active structural attacks (\cref{sec:mceliece-status}). Primitive-specific assumptions and measurement-model qualifications remain essential.}
\label{tab:nonlattice}
\end{table}

% ==================================================================
\subsection{Code-based cryptography}\label{sec:codes}

\paragraph{Decoding problems.} A linear $[n,k]$ code over $\F_q$ is a $k$-dimensional subspace $C\subseteq\F_q^n$, given by a generator matrix $G\in\F_q^{k\times n}$ or a parity-check matrix $H\in\F_q^{(n-k)\times n}$ with $C=\{c:Hc^{\top}=0\}$. The \emph{syndrome decoding} problem asks, given $H$, a syndrome $s$ and a weight $w$, for an error $e$ of Hamming weight $w$ with $He^{\top}=s$. Berlekamp, McEliece and van Tilborg proved its decision version NP-complete \cite{BMvT78}, whereas cryptographic security requires hardness of the particular average-case distribution. Worst-case NP-completeness alone does not establish average-case hardness for random codes, Goppa codes or deployed parameters.

\paragraph{McEliece and Niederreiter.} In McEliece's 1978 cryptosystem the secret is a binary Goppa code, defined by a support and a Goppa polynomial of degree $t$ over $\F_{2^m}$, for which an efficient decoder corrects $t$ errors; the public key is a disguised generator matrix $G_{\mathrm{pub}}=SGP$ with $S$ invertible and $P$ a permutation; a message $\mu$ is encrypted as $c=\mu G_{\mathrm{pub}}+e$ with $e$ of weight $t$, and decrypted by undoing $P$, decoding, and undoing $S$ \cite{McE78}. Niederreiter's dual version publishes a disguised parity-check matrix and encrypts by encoding the message into the error vector, the ciphertext being its syndrome \cite{Nie86}. Classic McEliece, a modern instantiation, is a Niederreiter-style KEM with binary Goppa codes and a public key in systematic form; its smallest parameter set, \texttt{mceliece348864}, has $n=3488$, $m=12$ and $t=64$, a public key of $261\,120$ bytes and a ciphertext of $96$ bytes \cite{ClassicMcEliece}. Classic McEliece was a fourth-round NIST candidate that NIST did not select \cite{NISTIR8545}; it was standardised by ISO/IEC in June 2026 and has been recommended by the BSI since 2020, a recommendation now restricted (\cref{sec:mceliece-status}) \cite{ISO18033Amd2,BSI26}.

\paragraph{Information-set decoding.} The generic attack on syndrome decoding goes back to Prange \cite{Prange62}: choose $k$ coordinates at random, hope that the error vanishes on them, and solve the resulting linear system; each attempt succeeds with probability $\binom{n-k}{t}/\binom{n}{t}$, so the expected cost is exponential in $t$. Its refinements lower the exponent but keep the structure of a search over a large space of guesses. That structure is exactly what Grover's algorithm accelerates: Bernstein showed that a Grover search over Prange's guesses halves the exponent \cite{Ber10}, and Kachigar and Tillich obtained small further gains from quantum walks applied to the advanced variants \cite{KT17}. Until recently no structural attack came close: the best key recovery for Classic McEliece parameters, a brute-force search combined with Sendrier's support splitting algorithm, cost more than $2^{3000}$ even for the smallest parameter set \cite{Sen00,BSI26}, and the security levels of the scheme were set by generic decoding \cite{ClassicMcElieceSec}. \Cref{sec:mceliece-status} summarises the subsequent structural attacks and their evidential limits, which a companion paper analyses in detail \cite{FV26mceliece}.

\paragraph{Quantum Fourier sampling.} In a code-equivalence formulation where both $G$ and $G_{\mathrm{pub}}$ are supplied, recovering a hiding pair $(S,P)$ can be expressed as a hidden-subgroup problem over a group built from $\mathrm{GL}_k(\F_2)$ and the symmetric group $S_n$. Dinh, Moore and Russell showed that, for well-permuted and well-scrambled codes, this hidden-subgroup problem cannot be solved by strong Fourier sampling, the measurement strategy behind the efficient abelian algorithms, or by any measurement of a single coset state, extending the negative results for graph isomorphism to subgroups of arbitrary structure, including automorphism groups of codes \cite{DMR11}, in line with the general limitations of coset states for $S_n$ \cite{MRS08,HMRRS10} discussed in \cref{sec:frontier}. That formulation supplies the hidden generator $G$ as an additional input; it is not itself a key-recovery reduction from the McEliece public key alone. These results limit a class of measurements on coset states; they are not lower bounds against every quantum algorithm, and they do not by themselves establish quantum security.

\paragraph{Quasi-cyclic codes: HQC and BIKE.} HQC works in $\F_2[x]/(x^n-1)$ with $n$ prime. The secret key is a pair of sparse polynomials $(x,y)$ and the public key is $(h,s=x+hy)$ for uniform $h$; to encrypt $\mu$, one samples sparse $r_1,r_2,e$ and sends $u=r_1+hr_2$ and $v=\mu G+sr_2+e$, where $G$ generates a \emph{public} decodable code; the receiver computes $v-uy=\mu G+(xr_2-r_1y+e)$, whose error term has low weight, and decodes \cite{HQC}. Since the decoding code is public, security rests on the hardness of decoding random quasi-cyclic codes rather than on hiding a structured code. NIST selected HQC for standardisation in March 2025 \cite{NISTIR8545}, with a draft standard (FIPS 207) in preparation. BIKE uses quasi-cyclic moderate-density parity-check codes: the secret is a pair of sparse polynomials $(h_0,h_1)$, the public key is $h=h_1h_0^{-1}$, and decryption uses bit-flipping decoding, whose failure rate must be negligible for CCA security \cite{BIKE}; it was not selected \cite{NISTIR8545}. The BSI notes that HQC is not affected by the 2024--2026 structural attacks on McEliece (\cref{sec:mceliece-status}) \cite{BSI26}. Quasi-cyclic codes are the code-based analogue of ideal lattices; no attack certificate is established by the DCP routes reviewed here, and structured decoding requires its own analysis (\cref{sec:open}).

\paragraph{LPN.} Learning Parity with Noise, $\mathrm{LPN}_{n,\tau}$, asks to recover $s\in\F_2^n$ from samples $(a,\langle a,s\rangle+\epsilon)$ with uniform $a$ and Bernoulli noise $\epsilon$ of rate $\tau$; it is the binary analogue of LWE and the average-case form of decoding a random linear code. Alekhnovich built public-key encryption from LPN with noise rate about $1/\sqrt n$ \cite{Ale03}. Brakerski, Lyubashevsky, Vaikuntanathan and Wichs gave a worst-case hardness result for LPN, through a smoothing lemma for codes: a worst-case nearest-codeword problem for balanced codes, with relative error about $\log^2n/n$, can be solved with an LPN distinguisher at the very high noise rate $1/2-1/\poly(n)$; the worst-case problem is only mildly hard, it has a quasi-polynomial algorithm and lies in $\mathrm{Search}\text{-}\mathsf{BPP}^{\mathsf{SZK}}$, and LPN at noise rate $1/2-1/\poly(n)$ already implies symmetric cryptography. They also showed that LPN with the very low noise rate $\log^2(n)/n$ implies collision-resistant hash functions \cite{BLVW19}. These results give the code-based family a foundation independent of lattices at the extremes of the noise range, but not a worst-case foundation for the noise rates used by public-key encryption.

\paragraph{The dihedral route beyond lattices.} The $q$-ary lattice of an LPN instance contains $2\Z^m$, so its minimum is at most $2$ while the error has norm $\Theta(\sqrt{\tau m})$; when $\tau m$ grows, this typically violates the small-radius Euclidean $\BDD$ promise used here. This embedding therefore supplies no certificate from \cref{thm:reach}. The dihedral route is nevertheless not intrinsically lattice-specific. Debris-Alazard, Remaud and Tillich gave a quantum reduction from finding short codewords in a random linear code to decoding, by adapting to the Hamming metric the reading of Regev's quantum step as a reduction from short vectors to closest vectors \cite{DRT21}; the coarser metric required decoding beyond the unique-decoding radius, errors uniform of fixed weight instead of Bernoulli, and an analysis with orthogonal polynomials. In September 2026 Cartor, Manganiello and Youmans gave a general quantum reduction from $\BDD$ over finite abelian groups with translation-invariant metrics to EDCP, using locality-sensitive hashing to play the role of Regev's partition into cubes; they instantiated it for the Hamming, rank and $p$-Lee metrics and recovered the LWE-to-EDCP reduction through the equivalence between lattice problems and Lee-metric decoding \cite{CMY26}. The authors note that for the Hamming and rank metrics their reductions yield a non-negligible number of EDCP states only in parameter regimes where known decoding algorithms are already efficient. These instantiations supply no attack certificate for cryptographic Hamming- or rank-metric parameters here. The Lee-metric connection must be matched to the exact metric, distributions, parameters and EDCP solver domain before transferring a lattice consequence. The following lemma describes what a dihedral problem over a binary group amounts to, in one natural noise model.

\begin{lemma}[Binary dihedral cosets are LPN]\label{lem:binary}
Over the group $\F_2^n$, consider the dihedral coset problem in which each register is, independently, with probability $1-\tau$ a good register $\tfrac{1}{\sqrt2}(\ket{0,y}+\ket{1,y+s})$ with uniformly random $y$, and with probability $\tau$ a uniformly random computational-basis state $\ket{b,x}$. This problem is equivalent, under quantum polynomial-time reductions in both directions, to search-$\mathrm{LPN}_{n,\tau/2}$. In particular the noiseless problem ($\tau=0$) is solved by Simon's algorithm \cite{Sim97}.
\end{lemma}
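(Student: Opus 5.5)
The plan is to prove each direction by a direct state transformation that maps samples of one problem to samples of the other with matching noise rates.

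\emph{From dihedral cosets to LPN.} Apply the Hadamard transform $H^{\otimes(n+1)}$ to a register and measure all $n+1$ qubits. On a good register $\tfrac{1}{\sqrt2}(\ket{0,y}+\ket{1,y+s})$ with uniform $y$, the outcome $(c,a)\in\F_2\times\F_2^n$ has amplitude proportional to $(-1)^{\langle a,y\rangle}(1+(-1)^{c+\langle a,s\rangle})$. Averaging over $y$, $a$ is uniform and $c=\langle a,s\rangle$ with certainty. On a bad register $\ket{b,x}$ the outcome $(c,a)$ is uniform on $\F_2^{n+1}$, so $c$ is a uniform bit independent of $a$. It therefore equals $\langle a,s\rangle$ with probability $1/2$.

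Mixing the two cases, each register yields an independent sample $(a,\langle a,s\rangle+\epsilon)$ with uniform $a$ and $\Pr[\epsilon=1]=\tau/2$. This is exactly one $\mathrm{LPN}_{n,\tau/2}$ sample, so an LPN solver recovers $s$. For $\tau=0$ the samples are noiseless, $n+O(1)$ of them span $\F_2^n$ with constant probability, and Gaussian elimination recovers $s$; this is Simon's algorithm.

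\emph{From LPN to dihedral cosets.} Given a sample $(a,c)$, prepare the register
\[
\tfrac{1}{\sqrt{2^{n+1}}}\textstyle\sum_{z}(-1)^{\langle a,z\rangle}\bigl(\ket{0,z}+(-1)^{c}\ket{1,z}\bigr).
\]
Then apply $H^{\otimes(n+1)}$, which inverts the outcome distribution above. When $c=\langle a,s\rangle$, I will check that this state is a coherent average of good registers over $y$; more precisely, it should be the $a$-Fourier component of the mixture of good registers. The key observation is that the mixed state of a good register with uniform $y$ is $\tfrac{1}{2^{n+1}}\sum_{a}\ket{\phi_{a,\langle a,s\rangle}}\bra{\phi_{a,\langle a,s\rangle}}$, where $\phi_{a,c}$ denotes the Hadamard basis vector. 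So choosing $a$ uniformly and emitting $\ket{\phi_{a,\langle a,s\rangle}}$ reproduces the good-register density matrix exactly. In the same way, the uniform mixture over all $\ket{\phi_{a,c}}$ is the maximally mixed state, which coincides with the density matrix of a uniformly random basis state $\ket{b,x}$.

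An LPN sample at rate $\tau/2$ gives $\ket{\phi_{a,\langle a,s\rangle}}$ with probability $1-\tau/2$ and the flipped vector with probability $\tau/2$. Split the correct branch into ``good with probability $1-\tau$'' plus ``correct with probability $\tau/2$''. The latter, together with the flipped branch, forms an equal $\tau/2$–$\tau/2$ mixture over $c$, that is $\tau$ times the maximally mixed state. The two density matrices therefore agree, and the DCP solver run on these registers recovers $s$.

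The main obstacle is this exact density-matrix identification. I must check that the bad-register model (uniform $b$ and $x$) yields precisely the maximally mixed state, and that independence across registers is preserved. Both follow once the single-register identity is verified, since each register is built from an independent sample.
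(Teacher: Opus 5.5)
Your proposal is correct and is essentially the paper's proof. Both directions use the Hadamard basis $\ket{\phi_{a,c}}=H^{\otimes(n+1)}\ket{c,a}$. In that basis a good register with uniform $y$ has density matrix $2^{-n}\sum_a\ket{\phi_{a,\langle a,s\rangle}}\bra{\phi_{a,\langle a,s\rangle}}$ and a bad one is maximally mixed, so measuring in that basis and preparing its vectors turn registers into $\mathrm{LPN}_{n,\tau/2}$ samples and back with exactly matching distributions.

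Two slips to fix:
\begin{itemize}
\item The normalisation in that identity is $2^{-n}$, not $2^{-(n+1)}$.
\item The state displayed in your converse direction is already $\ket{\phi_{a,c}}$, so emit it directly. Applying $H^{\otimes(n+1)}$ to it again would return $\ket{c,a}$.
\end{itemize}
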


\begin{proof}
Write $\ket{\hat c,\hat z}:=H^{\otimes(n+1)}\ket{c,z}$. Expanding a good register in this basis, the coefficient of $\ket{\hat c,\hat z}$ is proportional to $(-1)^{\langle z,y\rangle}\big(1+(-1)^{c+\langle z,s\rangle}\big)$. Averaging over uniform $y$ kills every off-diagonal term between distinct $z,z'$, since it carries the phase $(-1)^{\langle z\oplus z',y\rangle}$; hence the density matrix of a good register is the classical mixture of $\ket{\hat c,\hat z}$ with $z$ uniform and $c=\langle z,s\rangle$. A bad register is maximally mixed, i.e.\ the uniform mixture over all $\ket{\hat c,\hat z}$, for which $c=\langle z,s\rangle$ with probability $1/2$. Each register is therefore \emph{exactly} the mixture $\sum_{z,c}\Pr[z,c]\,\ket{\hat c,\hat z}\!\bra{\hat c,\hat z}$ with $(z,c)$ distributed as an LPN sample of noise rate $\tau/2$. Measuring in the Hadamard basis turns registers into LPN samples, and preparing $\ket{\hat c,\hat z}$ from an LPN sample $(z,c)$ turns samples into registers with the correct density matrix; both maps are efficient. For $\tau=0$, $n+O(1)$ samples determine $s$ by linear algebra.
\end{proof}

The lemma is an equivalence, not a barrier. In its noise model, uniform independent offsets and maximally mixed bad registers; adversarial bad registers are not covered, a noise-tolerant dihedral algorithm over $\F_2^n$ is an LPN algorithm and conversely, and the noiseless problem is easy. It says nothing about whether LPN or decoding is hard, and it does not exclude attacks on binary codes through reductions of another kind. What it shows is narrower: this noisy binary-coset formulation supplies no attack without an LPN solver, so a reduction from binary decoding to a dihedral problem over $\F_2^n$ would lead back to a decoding problem. The hypotheses of \cref{sec:hyp-def} specify cyclic coset-state or Gaussian extrapolated-state interfaces; they do not automatically provide this noisy binary solver. The surveyed reductions supply no attack certificate for the cryptographic Hamming-decoding regimes considered \cite{CMY26}. We therefore record no DCP-based certificate for Classic McEliece, HQC or BIKE here. Their independent attack analyses remain necessary (\cref{tab:nonlattice}).

\paragraph{Rank and Lee metrics.} Rank-metric codes are subspaces of $\F_{q^m}^n$ whose vectors are viewed as $m\times n$ matrices over $\F_q$, with the rank as weight; their decoding problem and the related MinRank problem underlie the signature schemes RYDE and Mirath, second-round candidates of the additional-signature process that did not advance to the third round \cite{NIST26sig}. As noted above, the reduction of \cite{CMY26} reaches rank-metric decoding only in easy regimes, so these assumptions remain outside the reach. The Lee-metric connections are conditional on the metric, distributions and parameter promises of \cite{CMY26}; their applicability must be checked for each proposed cryptographic instance (\Cond).

\paragraph{Code-based signatures.} Syndrome Decoding in the Head, by Feneuil, Joux and Rivain, is a Fiat--Shamir signature from a zero-knowledge proof of knowledge of a syndrome-decoding solution, built with the MPC-in-the-head technique described in \cref{sec:hash} \cite{FJR22}; it is a third-round candidate \cite{NIST26sig}.

% ==================================================================
\subsection{Classic McEliece: structural cryptanalysis and current status}\label{sec:mceliece-status}

Between 2024 and October 2026 the structural cryptanalysis of McEliece with binary Goppa codes moved further, at the parameters of Classic McEliece, than in the previous four decades. A companion paper \cite{FV26mceliece} analyses these works in detail, version by version; here we summarise what matters for the classification. The attacks are classical attacks on the hidden Goppa description, not applications of a DCP solver; none uses a hidden-shift structure, so none changes a classification of this survey. They bear, however, on the independent evidence recorded for Classic McEliece in \cref{tab:nonlattice}.

\paragraph{What has been shown.} Randriambololona's syzygy distinguisher is asymptotically subexponential in the error-correcting capacity $t$ \cite{Ran25,BSI26}. The higher-order-vanishing distinguisher of Hemmert and Wiemers led to key recoveries by Hemmert and by Saarinen's HOVER algorithm, which solved public TII challenges up to TII-252 without threatening Classic McEliece parameters \cite{HW26,Hem26,Saa26a}. A Pfaffian attack breaks very high-rate Goppa codes, including the CFS signature scheme, but does not apply at the rates of Classic McEliece \cite{BLT26,BSI26}, and a heuristic key recovery conjectured to be subexponential remains far above the targets, its distinguishing step alone costing about $2^{454}$ at level 1 \cite{BLRT26,Wei26}. The Holdout/Hermite distinguisher of Ghoshal, Ishai, Jain and Sun is quasi-polynomial in the asymptotic Classic McEliece regime, at an estimated $2^{114}$--$2^{124}$ bit operations for the five parameter sets \cite{GIJS26}, and the lowest estimates for key recovery rest on it. Saarinen's itemised conditional costing gives $2^{126.8}$--$2^{145.2}$ bit operations in a model that excludes memory traffic \cite{Saa26cost}. Weis estimates one run of the underlying sparse linear algebra at $2^{89}$--$2^{98}$ bit operations, at $2^{107}$--$2^{117}$ when each operation is charged the square root of the memory it addresses, and at $2^{117}$--$2^{128}$ under the rule of the Classic McEliece security guide, with one run or $103$ to $1443$ runs per key recovery \cite{Wei26}; Vedenev's revision of October 7 reports conditional estimates of $2^{89.9}$--$2^{98.6}$ \cite{Ved26}. Public challenges with brute-force labels up to $254$ have been solved along the way \cite{BLT26,Hem26,Saa26a,Wei26,Ved26,Saa26b}; these labels are brute-force work factors, not security levels.

\paragraph{What has not been shown.} No key recovery has been run at a Classic McEliece size. The lowest estimates rest on heuristic assumptions, on kernel dimensions, the span of gradients, the structure of a column kernel and the behaviour of the iterative solver, tested on instances with at most about $10^7$ unknowns, whereas a full-size run has $2^{38}$ to $2^{42}$. For one and the same run, the published figures differ by about thirty bits between cost models, and a run would need $100$ TiB to $1.5$ PiB of memory and, on a mesh at $10^9$ steps per second, between $7$ and $420$ years \cite{Wei26}. Against Grover-accelerated information-set decoding, about $2^{101}$ to $2^{183}$ logical gates for the five sets \cite{Ber10,Wei26}, the conditional structural estimates are lower for the four larger sets in essentially every cost model and comparable for \texttt{mceliece348864}; Weis knows of no quantum speed-up for the structural computation \cite{Wei26}.

\paragraph{Standardisation and guidance.} NIST did not select Classic McEliece \cite{NISTIR8545}; ISO/IEC standardised selected parameter sets in June 2026, excluding \texttt{mceliece348864} \cite{ISO18033Amd2,ClassicMcElieceISO}. In its note of 29 September 2026 the BSI records that the published results do not yet enable a practical attack on its recommended parameter sets. It nevertheless advises against Classic McEliece in new developments, recommends FrodoKEM, ML-KEM or, once its standard is published, HQC for key agreement, limits the current recommendation for Classic McEliece to hybrid use with a classical scheme, and expects to revise it in early 2027; it states that FrodoKEM, ML-KEM and HQC are not affected by these results \cite{BSI26}. A classical component can preserve a classical security guarantee if the combiner has the required property; it alone cannot provide a post-quantum guarantee against Shor's algorithm.

\paragraph{How the evidence fits together.} \Cref{tab:mceliece-evidence} separates the questions that are easily conflated.

\begin{table}[htbp]
\centering\small
\renewcommand{\arraystretch}{1.15}
\begin{tabularx}{\textwidth}{@{}L{3.15cm}Y Y@{}}
\toprule
\textbf{Evidence} & \textbf{What it supports} & \textbf{What remains separate}\\
\midrule
GIJS distinguisher \cite{GIJS26} & Recognition of Goppa public keys in the stated asymptotic regime. & Efficient decryption and equivalent-key recovery need additional arguments.\\
HOVER and TII recoveries \cite{Saa26a,Saa26b,Wei26,Ved26} & Demonstrated recovery of particular small public challenge keys. & Success and resources for a Classic McEliece parameter distribution.\\
Full-size cost analyses \cite{Saa26cost,Wei26,Ved26} & Conditional arithmetic estimates for specified algorithms. & Structural success, repetitions, memory costs and implementation validation.\\
Fourier-sampling limitations \cite{DMR11} & Limits on specified quantum measurements in a code-equivalence formulation. & Security against classical structure recovery or arbitrary quantum algorithms.\\
BSI guidance \cite{BSI26} & A dated deployment recommendation responding to the developing evidence. & A practical key-recovery demonstration or a general security theorem.\\
\bottomrule
\end{tabularx}
\caption{Evidence categories for Classic McEliece, as checked on October 11, 2026. No row by itself settles the other questions.}
\label{tab:mceliece-evidence}
\end{table}

\paragraph{Conclusion.} We retain \SurviveDag. The surveyed DCP reductions supply no attack certificate for Classic McEliece (\cref{sec:codes}, with \cref{lem:binary} and the discussion following it). What changes is the independent evidence recorded in the fifth column of \cref{tab:nonlattice}: the lowest published estimates are no longer those of generic decoding with a Grover speed-up but those of classical structural attacks whose concrete cost is under revision, and the dagger in \cref{tab:nonlattice,tab:sizes} records exactly this. This is neither an endorsement for new deployment nor a claim of a practical full-size break. Two consequences follow for the rest of the survey. First, the KEM layer of a post-DCP world becomes thinner: among standardised and standard-track KEMs, HQC, whose decoding code is public and whose security rests on decoding random quasi-cyclic codes, is the code-based KEM outside the reach of the counterfactual that, according to the BSI, is not affected by these attacks \cite{BSI26}. This raises the stakes of \cref{q:codes}; HQC still needs its own security analysis, since its lack of exposure to these particular Goppa attacks is not a blanket security guarantee. FrodoKEM and ML-KEM keep the status given in \cref{sec:collapse-standards}. Second, the hybrid designs of \cref{sec:hybrids} would take HQC as the code-based component to analyse first, with Classic McEliece used, if at all, as an additional component pending re-evaluation, in line with the BSI's advice. The episode also illustrates again a lesson of \cref{sec:intro-alarms}: weaknesses have appeared where algebraic structure is hidden, in ideal lattices, in overstretched NTRU (\cref{sec:ntru}), and now in the Goppa structure of McEliece, and step 9 of the scenario-planning checklist (\cref{tab:checklist}) is not a formality.

% ==================================================================
\subsection{Hash-based and symmetric-key signatures}\label{sec:hash}

\paragraph{One-time and few-time signatures.} Lamport's one-time signature takes as secret key $2\ell$ random strings $x_{i,b}$ and as public key their images $y_{i,b}=H(x_{i,b})$; a message digest $d\in\{0,1\}^\ell$ is signed by revealing $x_{i,d_i}$ for each $i$ \cite{lamport1979}. A key can be used once. Winternitz-type schemes replace the $2\ell$ strings by hash chains, trading signature size for computation; WOTS$^+$ adds randomising bitmasks so that collision resistance of the hash function is not needed \cite{hulsing2013w}.

\paragraph{Merkle trees.} Merkle authenticated $2^h$ one-time public keys with a binary hash tree whose root is the long-term public key; a signature consists of a one-time signature and the authentication path of its leaf \cite{Mer89}. The resulting schemes are \emph{stateful}: the signer must never reuse a leaf. XMSS \cite{RFC8391} and LMS \cite{RFC8554} are standardised stateful schemes, approved by NIST together with requirements on state management \cite{SP800208}.

\paragraph{Stateless signatures: SPHINCS$^+$ and SLH-DSA.} SPHINCS removed the state by using a hypertree of Merkle trees, a pseudorandomly chosen leaf for each message, and a few-time signature at the bottom \cite{BHHLNPSW15}; SPHINCS$^+$ improved it with the FORS few-time scheme and tweakable hash functions \cite{bernstein2019}. FIPS 205 standardises it as SLH-DSA, with SHA-2 and SHAKE instantiations at three security levels, each in a ``small'' and a ``fast'' variant; at level 1 the public key has $32$ bytes and signatures $7\,856$ bytes (small) or $17\,088$ bytes (fast) \cite{FIPS205}. Security reduces to properties of the hash function, pseudorandomness, second-preimage resistance and related properties of tweakable hashes, without a lattice or class-group hardness premise.

\paragraph{Quantum attacks.} Generic quantum attack benchmarks for hash functions include the following: Grover search finds preimages in about $2^{n/2}$ evaluations \cite{Gro96}, and the Brassard--H{\o}yer--Tapp algorithm finds collisions in about $2^{n/3}$ evaluations using comparable quantum-accessible memory \cite{BHT98}. Bernstein argued that, once hardware costs are accounted for, the latter is no better than parallel classical collision search \cite{Ber09}. These benchmarks do not use a DCP oracle. The reductions analysed here establish no additional hash attack; this does not rule out attacks exploiting a particular hash function or future algorithms.

\paragraph{Signatures from symmetric primitives.} The MPC-in-the-head paradigm of Ishai, Kushilevitz, Ostrovsky and Sahai turns a secure multiparty computation of a relation into a zero-knowledge proof: the prover simulates the parties ``in its head'', commits to their views, and opens a subset chosen by the verifier \cite{IKOS07}. Applied to the relation $y=E_k(x)$ for a block cipher $E$, with Fiat--Shamir, it yields signatures based on the one-wayness of $k\mapsto E_k(x)$ together with the commitment, hash and proof-transform assumptions: Picnic used LowMC \cite{Picnic17}, and FAEST uses AES with the more efficient VOLE-in-the-head technique \cite{FAEST23}. FAEST, MQOM (the same paradigm for multivariate quadratic systems) and SDitH are third-round candidates \cite{NIST26sig}.

\paragraph{The encryption question.} The signature constructions discussed here do not supply a KEM. Merkle-puzzle key agreement illustrates a restricted query-complexity gap, whose quantum treatment depends on the access model \cite{BHKKLS11}; it is not an impossibility theorem for every construction using symmetric primitives. This survey establishes no KEM based solely on the one-wayness premise used by these signatures.

% ==================================================================
\subsection{Multivariate cryptography}\label{sec:mq}

\paragraph{The MQ problem and its trapdoors.} Multivariate schemes rely on the problem of solving a system of $m$ quadratic equations in $n$ variables over a finite field (MQ), which is NP-hard in the worst case \cite{GJ79}. A trapdoor is introduced by composing an easily invertible quadratic map $\mathcal F$ with secret affine maps, $\mathcal P=\mathcal T\circ\mathcal F\circ\mathcal S$. In Patarin's oil-and-vinegar construction the variables are split into ``oil'' and ``vinegar'' variables and the central polynomials contain no oil$\times$oil terms, so that fixing the vinegar variables leaves a linear system in the oil variables \cite{Pat97}. Kipnis and Shamir broke the balanced version, with as many oil as vinegar variables, by recovering the hidden oil subspace \cite{KS98}; Kipnis, Patarin and Goubin proposed the unbalanced version, UOV, with more vinegar than oil variables, which has resisted for over twenty-five years \cite{KPG99}.

\paragraph{The record of the family.} The failures of the family are instructive. Rainbow, a multi-layer version of UOV \cite{DS05} and a NIST finalist, was broken by Beullens with a key-recovery attack that takes about a weekend on a laptop at security level 1 \cite{Beu22}. HFE uses a low-degree univariate central map over an extension field \cite{Pat96}. Tao, Petzoldt and Ding give key-recovery attacks on the HFE variants analysed in \cite{TPD21}; this does not justify a blanket claim about every HFE-based signature or parameter set. These attacks and the Rainbow break exploit structure classically. The candidates considered here include UOV and compact variants: MAYO, which uses a small oil space and recovers invertibility by ``whipping'' the public map into several copies \cite{Beu21}, QR-UOV and SNOVA, all third-round candidates together with MQOM \cite{NIST26sig}.

\paragraph{Quantum attacks.} The cited quantum improvements include Grover-enhanced algebraic methods, which accelerate the exhaustive-search component of hybrid Gröbner-basis and linearisation attacks \cite{BY18}. The reviewed routes give no DCP-based attack certificate. They also do not establish a multivariate KEM as a replacement for the schemes compared here.

% ==================================================================
\subsection{Isogenies: commutative versus non-commutative}\label{sec:isogenies}

The relevant distinction is between complete action-based reductions and attacks that contain only a hidden-shift subroutine. Calling a construction commutative or non-commutative does not alone verify either its applicability or its resistance to a DCP solver.

\paragraph{Background.} An isogeny is a non-constant morphism between elliptic curves that preserves the point at infinity; a separable isogeny is determined, up to isomorphism of the target, by its finite kernel, and Vélu's formulas compute it in time polynomial in the kernel size. The endomorphism ring of an elliptic curve over a finite field is either an order in an imaginary quadratic field (ordinary curves, and the $\F_p$-rational endomorphisms of supersingular curves over $\F_p$) or a maximal order in the quaternion algebra ramified at $p$ and $\infty$ (supersingular curves over $\F_{p^2}$). The first case carries a commutative group action; the second does not.

\paragraph{Commutative group actions.} When the relevant endomorphism ring is an order $\mathcal O$ in an imaginary quadratic field, the class group $\mathrm{cl}(\mathcal O)$ acts freely and transitively on the curves with that ring: an ideal class $[\mathfrak a]$ maps $E$ to the target of the isogeny with kernel $E[\mathfrak a]$. Couveignes \cite{Cou06} and Rostovtsev and Stolbunov \cite{RS06} proposed Diffie--Hellman key exchange from this ``hard homogeneous space'' on ordinary curves. CSIDH made it efficient with supersingular curves over $\F_p$: for $p=4\ell_1\cdots\ell_n-1$ with small odd primes $\ell_i$, the ideals above $\ell_i$ act through $\ell_i$-isogenies computable over $\F_p$, a secret key is an exponent vector $(e_1,\dots,e_n)$ in a small box, and the public key is the curve $[\mathfrak l_1^{e_1}\cdots\mathfrak l_n^{e_n}]\star E_0$, represented by one coefficient of a Montgomery equation \cite{CLMPR18}. CSIDH-512 uses $n=74$ primes and exponents in $[-5,5]$, with $64$-byte public keys. Alamati, De Feo, Montgomery and Patranabis abstracted the properties such actions provide into a framework of cryptographic group actions and built further primitives on it \cite{ADMP20}. Beullens, Kleinjung and Vercauteren computed the class group of the CSIDH-512 order, it is cyclic, of order roughly $\sqrt p$, together with a reduced basis of its relation lattice, which gives every class a canonical representation and makes Fiat--Shamir signatures with uniformly random challenges possible (CSI-FiSh) \cite{BKV19}.

\paragraph{Quantum attacks: a hidden shift.} Childs, Jao and Soukharev observed that recovering the class $[\mathfrak a]$ with $E'=[\mathfrak a]\star E$ is an abelian hidden shift problem: the functions $x\mapsto x\star E$ and $x\mapsto x\star E'$ on the class group differ by a shift by $[\mathfrak a]$. With Kuperberg's or Regev's sieve this gives a subexponential quantum algorithm, under the generalised Riemann hypothesis for the evaluation of the action \cite{CJS14}. This attack supplies a concrete quantum benchmark for CSIDH, analysed by Bonnetain and Schrottenloher \cite{BS20} and, with Kuperberg's collimation sieve, by Peikert, who estimated that CSIDH-512 key recovery needs only about $2^{16}$ quantum evaluations of the group action using $2^{40}$ bits of quantumly accessible classical memory, which challenges the level-1 estimate under that evaluation and memory model; an evaluation count is not an end-to-end gate or hardware cost \cite{Pei20}; the cost of a single quantum evaluation of the action was analysed by Bernstein, Lange, Martindale and Panny \cite{BLMP19}. Since the CSIDH-512 class group is cyclic \cite{BKV19}, coset sampling turns its key recovery into a cyclic DCP instance with perfect coset states (\cref{lem:hs-dcp}), so it is covered by $\HHS$ when the complete evaluation and public-input conditions below are met.

\paragraph{Actions satisfying the evaluation hypotheses.}
The hidden-shift argument applies to free and
transitive actions with an efficiently available
group decomposition and efficient coherent evaluation
on arbitrary group elements, using a consistent
encoding of the acted-on objects. The inputs required
by the evaluation algorithm must be public or
efficiently recoverable from the public instance.
For example, the cited Clapoti result assumes an
effective primitive orientation \cite{PR23}.
The application must verify this requirement, as well
as the relevant cyclic or general abelian group
hypothesis. We therefore state the consequence for
actions satisfying these conditions, rather than
for every construction described as commutative.

\paragraph{SIDH and its break.} The Supersingular Isogeny Diffie--Hellman protocol of Jao and De Feo avoided commutativity: each party computes a secret isogeny of smooth degree from a fixed supersingular curve and publishes its target together with the images of a basis of a torsion subgroup, which the other party needs to complete the exchange \cite{JDF11}. Its KEM, SIKE, reached the fourth round of the NIST process. In 2022 it was broken classically: Castryck and Decru recovered keys in polynomial time when the starting curve has a known endomorphism ring, Maino, Martindale, Panny, Pope and Wesolowski gave a direct attack, and Robert extended the attack to polynomial time for every starting curve \cite{CD23,MMPPW23,Rob23}. All three exploit the torsion-point images through higher-dimensional isogenies, without using a DCP oracle.

\paragraph{Non-commutative problems.} The supersingular $\ell$-isogeny graph over $\F_{p^2}$ is a Ramanujan graph, and walks in it define the hash function of Charles, Goren and Lauter, whose collision resistance rests on the difficulty of finding paths \cite{CGL09}. Eisentr\"ager, Hallgren, Lauter, Morrison and Petit showed, heuristically, that finding isogeny paths and computing endomorphism rings reduce to each other \cite{EHLMP18}, and Wesolowski proved the equivalence under the generalised Riemann hypothesis \cite{Wes21}. The quaternion side is easy: Kohel, Lauter, Petit and Tignol gave a heuristic polynomial-time algorithm that connects maximal orders by ideals of $\ell$-power norm \cite{KLPT14}; the hardness lies in passing from a curve to its endomorphism ring. Generic attacks find a path between two supersingular curves in time $\Otil(p^{1/2})$ classically; Delfs and Galbraith gave an $\Otil(p^{1/4})$ classical algorithm for curves defined over $\F_p$ \cite{DG16}, and Biasse, Jao and Sankar a quantum algorithm in time $\Otil(p^{1/4})$ for arbitrary supersingular curves, which walks from each curve to one defined over $\F_p$ with a Grover-accelerated search and then solves the commutative problem between the two $\F_p$-curves with a subexponential hidden-shift algorithm \cite{BJS14}. For the specific claw-finding instances of SIKE, Jaques and Schanck showed that quantum claw-finding is no better than the classical parallel collision search once memory costs are accounted for \cite{JS19}.

\paragraph{What the counterfactual changes.} In the algorithm of \cite{BJS14}, a hidden-shift subroutine handles the commutative problem between $\F_p$-rational curves. Replacing that subroutine under $\HHS$, or $\HHS^{\mathrm{ab}}$ for a non-cyclic class group, requires the complete public-input and evaluation conditions above. If those conditions hold, the unchanged outer Grover search still has cost $\Otil(p^{1/4})$. This describes that particular composition; it is neither an optimality statement for supersingular path finding nor a bound on every attack a hypothetical solver might enable.

\paragraph{SQIsign.} SQIsign, by De Feo, Kohel, Leroux, Petit and Wesolowski, is a Fiat--Shamir signature from an identification protocol: the prover knows a secret isogeny from a curve $E_0$ with known endomorphism ring to its public curve $E_A$, hence the endomorphism ring of $E_A$; it commits to a random isogeny $E_0\to E_1$, receives a challenge isogeny $E_1\to E_2$, and responds with an isogeny $E_A\to E_2$ of smooth degree computed on the quaternion side with a variant of KLPT and translated back to curves \cite{DKLPW20}. Security rests on the endomorphism ring problem and on the response distribution revealing nothing useful. Later versions use higher-dimensional isogeny representations to make signing and verification efficient \cite{SQIsignSpec}; SQIsign is a third-round candidate, with the smallest combined size of public key and signature among the schemes compared in \cref{tab:sizes} \cite{NIST26sig}.

\paragraph{Encryption after SIDH.} Post-SIDH public-key encryption uses higher-dimensional isogenies constructively: FESTA turns the machinery of the torsion attacks into a trapdoor \cite{BMP23}, and POK\'E is a compact PKE built on similar techniques \cite{BM25}. Both rely on non-commutative problems with controlled torsion information; they are \Survive{} in our classification, but their assumptions are young.

% ==================================================================
\subsection{Equivalence problems and non-abelian group actions}\label{sec:equiv}

\paragraph{The problems.} Three families of signature schemes use group actions of non-abelian groups. In linear code equivalence, given generator matrices $G,G'$ of two $[n,k]$ codes, one seeks $S\in\mathrm{GL}_k(\F_q)$ and a monomial matrix $Q$ (a permutation combined with nonzero scalings) such that $G'=SGQ$; LESS builds a Fiat--Shamir signature from the corresponding identification protocol, in which the prover commits to a random code equivalent to $G$ and reveals, according to the challenge, an equivalence with $G$ or with $G'$ \cite{BMPS20}. In matrix code equivalence the codes are subspaces of $\F_q^{m\times n}$ and the acting group is $\mathrm{GL}_m(\F_q)\times\mathrm{GL}_n(\F_q)$, acting by $C\mapsto ACB$; this is the basis of MEDS \cite{MEDS23}. In alternating trilinear form equivalence, $\mathrm{GL}_n(\F_q)$ acts on alternating trilinear forms by change of variables; this is the basis of ALTEQ \cite{ALTEQ22}. LESS was a second-round candidate and MEDS and ALTEQ first-round candidates of NIST's additional-signature process; none is among the third-round candidates \cite{NIST26sig}; that selection outcome alone establishes neither hardness nor a cryptanalytic break.

\paragraph{Conclusion.} The natural quantum approach to these problems is a hidden-subgroup or hidden-shift problem in non-abelian groups such as $\mathrm{GL}_k\times S_n$, for which the negative results proved for the symmetric group and for code equivalence, strong Fourier sampling on single coset states, and entangled measurements on few of them, fail \cite{MRS08,HMRRS10,DMR11}, suggest similar limits; for matrix-code and trilinear-form equivalence we know of no corresponding proof. The reductions analysed here give no attack certificate (\Survive); the Fourier-sampling results are independent evidence that limits a class of measurements, not a consequence of the counterfactual and not a proof of quantum security. They are the natural candidates when one wants a group-action interface, for ring signatures or threshold constructions, say, without the commutativity that makes CSIDH fragile under $\HHS$.

% ==================================================================
\subsection{Efficiency and concrete security of the alternatives}\label{sec:efficiency}

\Cref{tab:sizes} reports public-key and ciphertext or signature sizes for selected parameter sets; ML-KEM-768 is included as a level-3 reference. Classic McEliece has a large public key and short ciphertext, HQC a smaller key and larger ciphertext, and FrodoKEM keys and ciphertexts near ten kilobytes. These sizes affect protocol design, but do not establish deployability or a security ranking. In particular, FrodoKEM's lack of a certificate from the restricted ball route does not prove a stronger position than ML-KEM under the counterfactual. Their unresolved checks differ (\cref{sec:collapse-lwe,sec:wz-audit}). Classic McEliece also requires the independent structural-attack assessment summarised in \cref{sec:mceliece-status}; the companion paper \cite{FV26mceliece} compares the structural estimates with classical and Grover-accelerated generic decoding for each parameter set.

For signatures, SLH-DSA uses small public keys and kilobyte-scale signatures, UOV trades larger keys for short signatures, and MAYO and SQIsign have the displayed size profiles. FAEST additionally illustrates signatures using a symmetric relation, with the hash, commitment and proof-transform assumptions described above. Security comparisons require family-specific attacks and cost models: information-set decoding \cite{Ber10,KT17}, generic preimage and collision search \cite{Gro96,BHT98,Ber09}, algebraic attacks on MQ \cite{BY18}, and isogeny-path algorithms \cite{BJS14,JS19}. Their resource measures and hypotheses differ; no uniform ``at most Grover'' bound or common security ordering follows from this table.

\begin{table}[htbp]
\centering\small
\renewcommand{\arraystretch}{1.2}
\setlength{\tabcolsep}{4pt}
\begin{tabular}{@{}llrrccc@{}}
\toprule
\textbf{Scheme (level 1)} & \textbf{Family} & \textbf{pk (B)} & \textbf{ct / sig (B)} & $\HHS$ & $\HReg$ & $\HEDCP$\\
\midrule
\multicolumn{7}{@{}l}{\emph{Key encapsulation}}\\
ML-KEM-768 (ref., level 3) & module lattice & 1\,184 & 1\,088 & \vK\bH & \vK\bH & \vK\bH\\
FrodoKEM-640 & unstructured lattice & 9\,616 & 9\,720 & \vK\bH & \vK\bH & \vS\\
Classic McEliece 348864 & binary Goppa codes\,\dag & 261\,120 & 96 & \vS & \vS & \vS\\
HQC-1 & quasi-cyclic codes & 2\,241 & 4\,433 & \vS & \vS & \vS\\
\midrule
\multicolumn{7}{@{}l}{\emph{Signatures}}\\
ML-DSA-44 (reference) & module lattice & 1\,312 & 2\,420 & \vK\bH & \vK\bH & \vK\bH\\
FN-DSA-512 (reference) & NTRU lattice & 897 & $\approx$666 & \vO & \vO & \vO\\
SLH-DSA-128s / 128f & hash-based & 32 & 7\,856 / 17\,088 & \vS & \vS & \vS\\
UOV (OV-Ip-pkc) & multivariate & 43\,576 & 128 & \vS & \vS & \vS\\
MAYO-1 & multivariate & 1\,420 & 454 & \vS & \vS & \vS\\
SQIsign (NIST-I, v3.0) & isogenies & 83 & 200 & \vS & \vS & \vS\\
\bottomrule
\end{tabular}
\caption{Sizes of level-1 parameter sets (ML-KEM-768 shown as the commonly deployed reference). Sources: the normative standards FIPS 203--205 \cite{FIPS203,FIPS204,FIPS205}; for FN-DSA, the Falcon specification \cite{Falcon20}, the predecessor of the forthcoming FIPS 206; the FrodoKEM specification \cite{FrodoKEM21}; the Open Quantum Safe parameter tables for Classic McEliece, HQC, UOV and MAYO \cite{OQS}; and the SQIsign specification v3.0 \cite{SQIsignSpec}. Sizes of round-3 candidates are those of their specifications as of October 2026 and may change during the NIST process. The last three columns give the conclusion under each hypothesis, with the glyphs and marks of \cref{tab:collapse}; the dagger is explained in \cref{tab:nonlattice}.}
\label{tab:sizes}
\end{table}

% ==================================================================
\subsection{Subset sum}\label{sec:subsetsum}

In the subset-sum problem modulo $N$ one is given $a_1,\dots,a_n\in\Z_N$ and a target $t$ and asked for a subset $S$ with $\sum_{i\in S}a_i\equiv t\pmod N$; the \emph{density} $n/\log N$ governs its difficulty. Regev showed that an average-case subset-sum algorithm yields a DCP algorithm with failure parameter $f=1$ \cite[Thm. 1.3]{Reg04}; the implication runs from subset sum to DCP, so this direction alone does not turn a DCP algorithm into a subset-sum algorithm, and the 2026 claim tried precisely to replace this step (\cref{sec:simon}). Lyubashevsky, Palacio and Segev built public-key encryption and other primitives provably as secure as subset sum \cite{LPS10}. In the other direction, Lagarias and Odlyzko showed that subset sum of sufficiently low density reduces to a short-vector problem in a lattice \cite{LO85}, and the densities that make the cryptosystems of \cite{LPS10} secure are close to the lattice regime. We classify subset-sum public-key encryption as \Open.

% ======================================================================
\section{Notes on non-commutative hidden structures}\label{sec:frontier}
% ======================================================================

Non-abelian hidden-subgroup and hidden-shift problems are one research direction for assumptions outside the dihedral routes analysed here. The evidence includes limitations on particular quantum measurements, alongside algorithms that solve some non-abelian cases. Neither noncommutativity nor a measurement lower bound establishes resistance to every DCP-enabled attack. This section reviews those results, the record of group-based constructions and relaxed dihedral algorithms. It does not derive a new deployable KEM or a formal separation from DCP.

\subsection{Where the hidden subgroup problem is easy}\label{sec:frontier-easy}

\paragraph{Query complexity is not the obstacle.} Ettinger, H{\o}yer and Knill showed that for every finite group polynomially many coset states determine the hidden subgroup information-theoretically \cite{EHK04}. The difficulty of the HSP is therefore computational: the question is always whether an efficient measurement extracts the information that the states contain.

\paragraph{Normal subgroups and weak Fourier sampling.} Hallgren, Russell and Ta-Shma showed that measuring only the name of the irreducible representation in the Fourier transform of a coset state, \emph{weak} Fourier sampling, determines the normal core of the hidden subgroup, and hence solves the HSP for normal subgroups whenever the Fourier transform over the group can be implemented efficiently \cite{HRT03}. Dihedral reflections are not normal, which is why the method fails for DCP.

\paragraph{Strong Fourier sampling and optimal measurements.} Moore, Rockmore, Russell and Schulman showed that \emph{strong} Fourier sampling, which also measures inside the representation, in a suitably chosen basis, solves the HSP in affine groups and in $q$-hedral groups $\Z_p\rtimes\Z_q$ with $q$ large, and gives algorithms for related hidden-shift problems \cite{MRRS07}. Bacon, Childs and van Dam took the opposite route: they computed the optimal (pretty good) measurement on several coset states at once for semidirect products $\Z_p^r\rtimes\Z_p$, including the Heisenberg group, and reduced its implementation to solving systems of polynomial equations, which is efficient for constant $r$ \cite{BCvD05}. Ivanyos, Sanselme and Santha solved the HSP in extraspecial groups \cite{ISS07}, and Friedl, Ivanyos, Magniez, Santha and Sen solved the hidden translation problem, the non-abelian analogue of the hidden shift, in solvable groups of bounded exponent and derived length \cite{FIMSS03}. Most recently, Morales gave polynomial-time algorithms for semidirect products $A\rtimes\Z_{p^k}$ with scalar action and for quasi-Hamiltonian groups given in a structured presentation \cite{Mor26}.

\paragraph{Scope of these algorithms.} A cryptographic problem with a complete efficient reduction to one of these efficiently solvable HSP families is exposed when the reduction meets the algorithm's subgroup, presentation and access promises. A group name alone is insufficient: these results neither solve every problem over that group nor exclude all cryptographic assumptions defined using it.

\subsection{Limitations of known measurements}\label{sec:frontier-hard}

\paragraph{The symmetric group.} For the order-two hidden subgroups used to express graph isomorphism, Moore, Russell and Schulman show that measuring a single coset state in any basis yields only exponentially small information \cite{MRS08}. Hallgren, Moore, R\"otteler, Russell and Sen extended this to entangled measurements: in their coset-state measurement model, polynomial-resource procedures require measurements entangled across $\Omega(n\log n)$ registers for the specified hidden-subgroup family \cite{HMRRS10}. Graph isomorphism itself is classically solvable in quasi-polynomial time by Babai's algorithm \cite{Bab16}, which shows that hardness of the quantum HSP route does not imply hardness of the underlying problem; code equivalence has its own negative results with specific code and measurement premises \cite{DMR11}. We establish no corresponding general lower bound for matrix-code or trilinear-form equivalence in \cref{sec:equiv}.

\paragraph{Sample complexity of hidden shifts.} Zhou studied the generalised hidden shift problem over arbitrary finite groups and its sample complexity, obtaining for the symmetric group $S_n$ a lower bound of $\Omega(\sqrt{n\log n})$ in the saturated case \cite{Zho26}. Such results bound sample requirements in their specified models. They should be distinguished from computational lower bounds and from the state budgets of cryptographic reductions.

\paragraph{Infinite groups.} Kuperberg extended the hidden subgroup problem to infinite groups, including Lie groups and arithmetic groups \cite{Kup25}. This suggests non-abelian analogues of lattice problems, defined by discrete subgroups of non-abelian Lie groups, to which the Euclidean-lattice reduction of \cref{sec:regev02} does not directly apply; a separation from other dihedral routes remains unproved. Dutto, Mercuri, Murru and Romano surveyed the hidden subgroup problem from a mathematical and cryptographic perspective \cite{DMM25}.

\subsection{Group-based cryptography}\label{sec:frontier-groups}

Braid-group key exchange is the classical example of a non-commutative assumption. Anshel, Anshel and Goldfeld proposed a key exchange based on commutators in a non-commutative group \cite{AAG99}, and Ko, Lee, Cheon, Han, Kang and Park a Diffie--Hellman analogue based on the conjugacy search problem in braid groups \cite{KLCHKP00}. Most such proposals were broken by length-based attacks and by attacks through linear representations of braid groups, surveyed by Myasnikov, Shpilrain and Ushakov \cite{MSU08}. The lesson is that non-commutativity is not sufficient: an assumption meant to survive a DCP break also needs average-case hardness under classical attack, which must be assessed independently for each equivalence problem in \cref{sec:equiv}. Nor is it necessary: hash-based and code-based assumptions lie outside the reach of the hypotheses without any non-abelian structure.

\subsection{Relaxed dihedral problems}\label{sec:frontier-relaxed}

The dihedral family itself is not monolithic. Childs and van Dam solved the generalised hidden shift problem in polynomial time once the extrapolation range is a polynomial power of the modulus \cite{CvD07}; Chen, Liu and Zhandry solved EDCP for suitable amplitude functions by filtering \cite{CLZ22}; Bai, Jangir, Kirshanova, Ngo and Youmans solved EDCP over power-of-two moduli in quasi-polynomial time with quasi-polynomially many samples \cite{BJKNY25}; and Boucher, Fouque and Shen solved a cyclotomic generalisation of DCP with a quantum sieve in time $2^{O_p(\log n\log q)}$ \cite{BFS26}. The direct compositions examined in \cref{rem:kuperberg} give no attack on the standard LWE regimes considered here: their sample budgets, amplitudes, widths, moduli and running times do not all match. This does not exclude every LWE parameter family or alternative reduction. Characterising compatible amplitude profiles, moduli and sample counts could identify further positive compositions.

\subsection{Criteria for evaluating alternative assumptions}\label{sec:frontier-needs}

\paragraph{Research considerations.} Three checks can guide the assessment of alternatives, without providing necessary or sufficient conditions for DCP resistance. First, audit all known complete reductions and their access requirements; non-abelian structure and limits on Fourier sampling do not establish a separation. Second, analyse average-case classical and quantum attacks at the intended distribution and parameters. Third, identify the primitive required: an assumption supporting signatures need not supply a KEM. Equivalence-based signatures and higher-dimensional isogeny encryption illustrate different goals and different unresolved security questions. Formal independence from DCP would require a precisely stated model and a separate separation argument.

% ======================================================================
\section{Future research and conclusions}\label{sec:open}
% ======================================================================

This section collects questions that could clarify or extend the classification, draws scenario-planning consequences for hybrid designs, turns the classification into a scenario-planning checklist with worked examples, and concludes.

\subsection{Open questions}\label{sec:open-questions}

The questions below are stated with the background needed to attack them. They are ordered by their consequences for the classification.

\begin{question}[Few-sample reductions]\label{q:fewsample}
Is there a quantum reduction from LWE with $m=n+O(n/\log n)$ samples and $\alpha q=O(1)$ to an extrapolated dihedral problem with non-negligible extrapolation?
\end{question}

\emph{Background.} Both unstructured routes lose a factor $q^{n/m}$ (\cref{sec:samples}), and this loss contributes to the failure of the displayed sufficient bounds for FrodoKEM and the published keys of ML-DSA. Finite-dimensional factors, structure and rounded distributions require additional checks. The loss reflects the fact that the error must be small relative to the minimum of $\Lambda_q(A)$, whose Gaussian-heuristic value scales as $q^{1-n/m}$. A reduction exploiting the short secret, as the normal-form transformation \cite{ACPS09} and affine embedding do, might avoid this particular loss. We establish no such few-sample EDCP construction here.

\begin{question}[The finite-dimensional factor of Regev's theorem]\label{q:constant}
What finite-dimensional factor $\Gamma_{f,\mathcal A}(d)$ does Regev's reduction \cite[Thm. 1.1]{Reg04} guarantee, as a function of $f$, of the dimension and of the parameters of the postulated solver, and how far can it be reduced by optimising the partition, the embedding into $\Z_N$ and the analysis?
\end{question}

\emph{Background.} \Cref{tab:gaps,tab:sisreach} compare the estimates with $c_0d^{1/2+2f}$. In the formal normalisation $f\to0^+$, the sufficient model comparisons correspond to $c_0$ below about $0.3$--$0.9$ for the LWE examples and $0.6$--$0.7$ for ML-DSA's Module-SIS norm model. The constants and starting dimension need not be uniform in $f$. An explicit finite-dimensional bound would resolve one input to the comparison; the actual planted vector, gap, distribution and sample conditions would still be required. It would neither remove NTRU's ordinary-uniqueness obstruction nor implement the hypothetical solver.

\begin{question}[DCP and SIS at small norm bounds]\label{q:sis}
Does a DCP algorithm, under $\HReg(f)$ or $\HEDCP[\ell]$, yield an algorithm for $\SIS_{n,m,q,\beta}$ with $\beta$ below $\beta_{\mathrm{void}}$ ($\Otil(n^{1+2f})$ under $\HReg(f)$, $\Otil(\ell n^{3/2})$ under $\HEDCP[\ell]$), in particular at the norm bounds of the signatures of \cref{tab:sisreach}? Does it yield forgeries, which need inhomogeneous solutions for hashed targets and, for ML-DSA, solutions compatible with its rounding and hints? Conversely, can one establish a separation from the specified DCP capabilities in a precisely defined computational or oracle model?
\end{question}

\emph{Background.} Under $\HEDCP[\ell]$, SIS is solved for $\beta>\beta_{\mathrm E}=\Otil(\ell n^{3/2})$, essentially where its worst-case foundation fails under that hypothesis; under $\HReg(f)$, for $\beta\geq\beta_{\mathrm{Reg}}=O(m^{3/2+2f})$, a factor of about $\sqrt n$ above it (\cref{prop:sis-reg,prop:sis-edcp}, \cref{sec:sis}). Solving the exact SIS assumption of a security proof would remove its hardness premise; a forgery needs more (\cref{sec:sis}). The compositions reviewed here do not certify the signature parameters below those bounds \cite{SSTX09,CLZ22,DFS24,CH26}. The dequantisation in \cite{KOW26} concerns particular wide-matrix $\SIS^\infty$ algorithms; it does not preclude every quantum advantage in such regimes.

\begin{question}[An EDCP characterisation of NTRU]\label{q:ntru}
Is small-modulus NTRU equivalent, or reducible, to an extrapolated dihedral problem, in the way that LWE and Module-LWE are \cite{BKSW18,WZ26}?
\end{question}

\emph{Background.} The exact rotation bounds in \cref{lem:ntru-sym} obstruct a direct ordinary-uSVP application at the stated factors. The homogeneous relation $hf-g=0$ is not a standard Ring-LWE sample. The lossiness route of \cref{lem:ntru-lossiness} requires a search Ring-LWE solver at its particular sample count and width; a single key gives only one candidate ring element. Module-uSVP is a different promise \cite{FPS22}. A new structured route or a verified composition meeting these oracle requirements could change the decision classification; key recovery requires its own reduction.

\begin{question}[From cyclic DCP to vector EDCP]\label{q:translate}
Does an algorithm for $\DCP_N^f$ on cyclic groups imply one for $\mathrm{G}\text{-}\EDCP_{n,N,r}$ over $\Z_N^n$ on the parameter families of \cite{BKSW18}?
\end{question}

\emph{Background.} \Cref{lem:dcp-edcp} reduces EDCP with smooth amplitudes to robust two-point solvers over $\Z_N^n$, so the question is whether such solvers follow from cyclic ones. Regev's embedding of $\{0,\dots,M-1\}^n$ into $\Z_N$ (\cref{sec:regev02}) handles differences with small coordinates, which is the situation of the two-point problem but not of an arbitrary secret in $\Z_N^n$. An affirmative answer on selected parameter families would establish only a restricted implication. To derive $\HReg(f)\Rightarrow\HEDCP[\ell]$, it would need to cover the full latter domain with its secret, offset, width, state-budget and success guarantees; the compositions of \cref{cor:edcp-sivp,prop:sis-edcp} require their respective subdomains.

\begin{question}[Robust extrapolated solvers]\label{q:robust}
Can the reduction from LWE to Gaussian EDCP be arranged so that a solver tolerating a constant fraction of invalid states suffices, removing the factor $\ell$ from the admissible width, and so that the radius condition (B3) no longer forces $m\leq m_{\max}$ (\cref{rem:robust})? Conversely, is robustness to invalid states harder to achieve for extrapolated states than for the two-point problem behind $\HReg(f)$?
\end{question}

\emph{Background.} At fixed width and inverse-polynomial success, \cref{prop:ell-lower} shows that $\ell=\Omega(n\log q)$ states are necessary, so requiring all of them to be valid costs that factor in the noise rate reached; the radius condition costs a further factor of about $\sqrt n$ up to logarithms (\cref{cor:edcp-ceiling,rem:robust}). Removing the first loss alone would bring the route to noise rates of about $1/\Otil(n)$. Within this width calculation, removing both losses would be needed to certify the standard noise rate $\alpha=1/\Otil(\sqrt n)$: the regime of Regev's and dual-Regev encryption and of identity-based encryption in the style of \cite{GPV08}, regimes without a certificate from the stated compositions (\cref{sec:collapse-lwe}).

\begin{question}[Structured codes]\label{q:codes}
Is there a hidden-structure attack on quasi-cyclic syndrome decoding analogous to the quantum attacks on principal ideals \cite{CDPR16,BS16}? Can the $\BDD$-to-EDCP reductions of \cite{CMY26} be made effective for Hamming-metric decoding in cryptographic regimes over large alphabets, where \cref{lem:binary} does not apply?
\end{question}

\emph{Background.} Code-based KEMs are candidates for independent assessment in this scenario (\cref{sec:non-lattice}), so their structured variants deserve the scrutiny that ideal lattices received after 2014. Over $\F_2^n$ the noisy dihedral problem of \cref{lem:binary} is LPN, so that formulation gives no new route; this is an equivalence, not a protection, and it does not exclude reductions of another kind. Over larger alphabets the EDCP route of \cite{CMY26} is a genuinely different problem.

\begin{question}[Structural attacks on Classic McEliece]\label{q:goppa}
Do the Holdout/Hermite key-recovery methods \cite{GIJS26,Saa26cost,Saa26b,Ved26,Wei26} succeed on the Classic McEliece parameter distributions, with what structural success probabilities and end-to-end costs including repetitions and realistic memory access? In particular, do the heuristic assumptions behind the lowest estimates, on the dimension of the row kernel, the span of the gradients, the structure of the column kernel and the behaviour of block Lanczos over $\F_2$ \cite{Wei26}, hold at $2^{38}$ to $2^{42}$ unknowns? Can the reported small-instance recoveries be extended without assuming the full-size rank and locator-uniqueness properties? Does a quantum computer accelerate the underlying sparse linear algebra over $\F_2$? Is there a regime of binary Goppa codes, for example with $n<2^m$ and larger $m$, or another family of codes with an efficient decoder, that resists all known structural attacks with public keys of comparable size?
\end{question}

\emph{Background.} These attacks are classical and Goppa-specific, so they do not rely on the counterfactual, but they remain relevant to assessing the alternative KEMs (\cref{sec:mceliece-status}); the companion paper \cite{FV26mceliece} states this question in more detail. Their asymptotic claims are strong, subexponential and quasi-polynomial distinguishers \cite{Ran25,GIJS26}, while their concrete estimates rest on unreviewed analyses and on heuristic assumptions tested only on small instances, of the order of $10^7$ unknowns at most \cite{Saa26a,Saa26b,Wei26}. For one and the same run, the published figures differ by about thirty bits between cost models \cite{Wei26,FV26mceliece}. The scaling of the run with $m$ suggests that larger extension degrees with $n<2^m$ could restore the margins at public keys several times larger, but Weis presents such sets as illustrations, not proposals \cite{Wei26}. Weis knows of no quantum speed-up for a run; the question matters here because the counterfactual presupposes large quantum computers, and for \texttt{mceliece348864} the structural attack and Grover-accelerated generic decoding are already of comparable cost under the stated assumptions (\cref{sec:mceliece-status}). The BSI expects to revise its recommendation in early 2027 \cite{BSI26}.

\begin{question}[Gapless lattice problems]\label{q:gapless}
Can LIP for $\Z^n$ or for remarkable lattices be related to $\uSVP$ or $\SIVP$ at polynomial factors? Is there a DCP-type quantum reduction for LIP? The 2026 attack on Hawk \cite{SW26} reduces structured LIP to exact $\SVP$ in half the dimension; is there an analogous reduction for unstructured LIP?
\end{question}

\emph{Background.} Rotations of $\Z^n$ have no ordinary uniqueness gap, and the approximation guarantees established here do not recover their exact isometries (\cref{sec:lip}). A reduction to a promised lattice problem would give a positive composition only if its parameters matched the available oracle. A relation to hidden structure over a group such as $\mathrm{GL}_n(\Z)$ would similarly require an explicit access model and reduction.

\begin{remark}\label{q:wz}
The forward parameter audit is given in \cref{sec:wz-audit}. The remaining verification concerns the Appendix E adapter of \cite{WZ26}: propagate every search/decision conversion, unknown-noise sample amplification, Gaussian width convention and success loss, and verify that the final dimension, polynomial modulus, width and state count meet the same witnessing $\HEDCP[\ell]$ solver. This is a source-proof and composition audit, not a new mathematical open problem.
\end{remark}

The audit must include the $\Omega(\sqrt\kappa)$ input-width condition of the invoked EDCP-to-LWE reduction and the two calls to the unknown-noise routine, whose stated sample overhead is polynomial rather than sample preserving. A corrected adapter would still not remove the failure of the printed forward certificate at ML-KEM parameters proved in \cref{prop:wz-certificate}. Extending the forward range, handling actual bounded and rounded distributions, or improving the width bound are separate research questions.

% ==================================================================
%\subsection{Priority directions}\label{sec:priorities}

%We prioritise these directions by their role in resolving the classification, without claiming a quantitative ranking of practical impact: (i) improve the structured forward range and complete the adapter audit of \cref{q:wz}, keeping the obstruction in \cref{prop:wz-certificate} separate; (ii) bound Regev's finite-dimensional factor (\cref{q:constant}), one of several unresolved inputs to concrete embedding comparisons; (iii) investigate few-sample, cyclic-to-vector and robust-state compositions (\cref{q:fewsample,q:translate,q:robust}); (iv) verify or extend the NTRU routes (\cref{q:ntru}); (v) assess concrete structural attacks on Goppa and quasi-cyclic codes independently (\cref{q:goppa,q:codes}); (vi) analyse the specific assumptions and complete attack routes of non-abelian-action and isogeny constructions; and (vii) characterise algorithms for relaxed dihedral states whose full resource requirements match a cryptographic reduction (\cref{sec:frontier-relaxed}).

% ==================================================================
\subsection{Hybrid designs}\label{sec:hybrids}

The observations in this subsection and the next are scenario planning under the counterfactual. They do not provide a migration recommendation. X-Wing combines ML-KEM-768 and X25519 with a SHA3-based combiner and a security argument tailored to those components \cite{XWing24}. Against a sufficiently capable quantum adversary, the discrete-logarithm premise of X25519 is unavailable \cite{Sho97}; any remaining security argument must use the ML-KEM and combiner assumptions in the relevant model. This does not establish exact equality between the concrete security of the hybrid and of ML-KEM. ML-KEM's classification under the present hypotheses remains conditional (\cref{q:wz,q:constant}).

General KEM combiners can preserve security when at least one component remains secure, subject to their component, validation and security-model requirements \cite{GHP18,BBFGS19}. Combining a lattice KEM with a code-based KEM such as HQC is therefore one design to analyse; merely pairing two schemes does not establish the combiner theorem's premises. ML-KEM with FrodoKEM diversifies structure and parameters while retaining LWE-family dependencies, without proving independence of the two assumptions. For signatures, an analogous robust combination requires both component signatures to verify for the same bound message and identity, and an appropriate composition argument. The sizes in \cref{tab:sizes} inform bandwidth assessment but do not by themselves select a secure combination.

% ==================================================================
\subsection{Scenario-planning checklist}\label{sec:checklist}

The checklist in \cref{tab:checklist} records sufficient consequences and unresolved premises for each assumption in a system. A negative or unverified answer gives no attack certificate from that step. Its purpose is scenario planning and dependency accounting.

\begin{table}[htbp]
\centering\small
\renewcommand{\arraystretch}{1.18}
\begin{tabularx}{\textwidth}{@{}L{0.5cm}Y L{3.5cm}@{}}
\toprule
\textbf{\#} & \textbf{Check} & \textbf{Consequence if verified}\\
\midrule
1 & Is there a complete polynomial-time reduction to cyclic coset states (or general abelian states)? Count group-structure computation, public inputs, consistent encodings, coherent preparation, oracle calls, success and all remaining processing. & \vC\bR{} under the matching cyclic or abelian hypothesis.\\
2 & Count independent available samples $M$ for each secret. Separate the reduction subset $m_0$ from holdout samples $m'$. Correlated evaluation keys and signature hints do not automatically add independent samples. & An input budget for later checks.\\
3 & For uniform LWE, does some $m_0\in\mathcal M_q(n)$ satisfy $\alpha<\alpha_{\mathrm E}(n,m_0,\ell)$? For amplification also require $M\geq m_0+m'$, $m'\geq2n\lg q$ and the verification noise bound. Alternatively verify the worst-case $g_{\mathrm E}$ promise. & \vC\bR{} under $\HEDCP[\ell]$ with all conditions. Structured route: also check \cref{sec:wz-audit}; no deployed certificate.\\
4 & Does the actual embedding have its planted vector shortest and gap above $\Gamma_{f,\mathcal A}(D+1)$? Under $\HHS$ use the witness-dependent $f_{\mathcal A}$. Record heuristic comparisons separately. & \vC\bR{} if proved; \vK\bH{} for a heuristic or unspecified-bound comparison.\\
5 & Is the exact SIS or lattice proof assumption solved at its stated distribution and bound? Turning its solutions into a forgery or collision needs an additional algorithm. & Kind (B): proof assumption solvable. Kind (A) requires an explicit attack.\\
6 & Is a worst-case foundation solvable at the reduction's approximation factor? & Kind (C): that hardness guarantee is uninformative; no attack follows automatically.\\
7 & Is a lattice instance left without a certificate by steps 3--6? & Record \vK, \vO{} or \vS{} and the missing premise.\\
8 & Does a non-lattice family lack a known complete reduction to the assumed solver? & \vS{} within this analysis; no security claim.\\
9 & Assess independent classical and quantum attacks, including scheme-specific structure, evidence and implementation costs (\cref{tab:nonlattice,tab:sizes}; for Classic McEliece, \cref{sec:mceliece-status} and \cite{FV26mceliece}). & Record independent evidence; use a dagger where under revision.\\
\bottomrule
\end{tabularx}
\caption{Scenario-planning checklist. A hidden-shift subroutine alone is insufficient for step 1; a failure to certify later steps is not an impossibility theorem.}
\label{tab:checklist}
\end{table}

% ==================================================================
\subsection{Worked examples}\label{sec:worked}

The following applications of the checklist summarise the conclusions of \cref{sec:collapse,sec:lattice-survivors,sec:non-lattice}; conclusions are triples for $\HHS$\,/\,$\HReg(f)$\,/\,$\HEDCP[\ell]$, with the glyphs and marks of \cref{tab:collapse}.
\begin{itemize}
  \item \textbf{ML-KEM-768:} \vK\bH\,/\,\vK\bH\,/\,\vK\bH. \\
  Both key and ciphertext component $u$ have $m/n=1$. The normalised gap estimate is $0.3$--$0.4$; finite-dimensional and actual-gap checks remain, and the printed structured certificate has no feasible width (\cref{prop:wz-certificate}).
  \item \textbf{Regev's 2005 cryptosystem:} \vS\,/\,\vS\,/\,\vS. \\
  Its standard noise rate is outside the sufficient unstructured bounds derived here. This does not exclude another reduction or better behaviour of a solver.
  \item \textbf{FrodoKEM-976:} \vK\bH\,/\,\vK\bH\,/\,\vS. \\
  Each secret has $m/n=1$ and normalised heuristic gap about $0.6$. The sufficient embedding comparison is unresolved, and the restricted ball construction gives no certificate.
  \item \textbf{ML-DSA-44:} \vK\bH\,/\,\vK\bH\,/\,\vK\bH. \\
  Use the published-key estimate of about $0.3\sqrt D$. Full $t$ supplied gives a separate sensitivity estimate near $11\sqrt D$, with no reconstruction from signatures asserted. Module-SIS norm comparisons concern a proof assumption, not a forgery.
  \item \textbf{FN-DSA-512:} \vO\,/\,\vO\,/\,\vO. \\
  The ordinary-uSVP key promise fails by exact rotational symmetry. The displayed typical-norm calculation does not certify the required SIS bound; it proves no short-output impossibility.
  \item \textbf{CSIDH-512:} \vC\bR\,/\,\vC\bR\,/\,\vS. \\
  For the cyclic action satisfying the complete evaluation and public-input conditions (\cref{sec:isogenies}) \cite{BKV19}. The corresponding non-cyclic abelian action uses $\HHS^{\mathrm{ab}}$.
  \item \textbf{Classic McEliece:} \vS\,/\,\vS\,/\,\vS. \\
  With the independent-revision dagger. Step 9 must account for the version-specific structural analyses, challenge recoveries and cost-model conditions summarised in \cref{sec:mceliece-status} and analysed in \cite{FV26mceliece}: conditional per-run estimates of $2^{89}$--$2^{98}$ bit operations, or $2^{107}$--$2^{117}$ with a square-root charge per memory access, for the five parameter sets, with one run or $103$ to $1443$ runs per key recovery \cite{Wei26} (\cref{tab:mceliece-evidence}); the BSI advises against new deployments \cite{BSI26}. The DCP classification does not settle these questions; \cref{q:goppa} records them.
  \item \textbf{SQIsign:} \vS\,/\,\vS\,/\,\vS. \\ 
  The unchanged outer search in the composition of \cite{BJS14} retains its $\Otil(p^{1/4})$ cost if its commutative subroutine can be replaced under the stated evaluation conditions (\cref{sec:isogenies}).
  \item \textbf{SLH-DSA:} \vS\,/\,\vS\,/\,\vS. \\
  No DCP consequence is established; hash preimage and collision attacks remain independent checks.
\end{itemize}

% ==================================================================
\subsection{Conclusions}\label{sec:conclusion}

A noiseless cyclic DCP breakthrough would have lattice consequences at some polynomial factors; the strength needed for a given cryptographic instance depends on the witnessing algorithm and on the parameters and promises of the intervening reductions. We give explicit sufficient guarantees for the unstructured compositions considered and the structured state conversion; the explicit structured forward certificate supplies none for ML-KEM, and the polynomial-modulus adapter and several concrete embedding comparisons remain unresolved. A solved proof assumption is distinguished throughout from an explicit attack on a scheme, and absence of a certificate here is neither a security proof nor an impossibility result for future reductions. Chen's withdrawn 2024 algorithm for LWE and Zhang's conditional repair of it would, if correct, have reached only a many-sample, small-noise LWE regime inside the reach of $\HReg(f)$ for $f<1/2$, not standard-noise LWE or the deployed parameter sets (\cref{sec:chen-zhang}).

Free and transitive cyclic actions satisfying the complete evaluation and public-input conditions are a separate rigorous consequence, extended to abelian actions under $\HHS^{\mathrm{ab}}$. A hidden-shift subroutine does not suffice if the rest of an attack stays super-polynomial. For NTRU, the rotation bounds, module-uSVP distinction and lossiness oracle requirements prevent a blanket key-recovery claim. Outside lattices, no DCP-based certificate is recorded for binary code-based KEMs, but for Classic McEliece conditional classical estimates far below its targeted levels (\cref{sec:mceliece-status}, \cite{FV26mceliece}) mean that, in line with the BSI's advice, a post-DCP KEM layer cannot treat it as a conservative fallback before re-evaluation; HQC, not exposed to these attacks, needs its own analysis. The tables and checklist make explicit these dependencies and the checks a concrete application needs. Relative to the surveys of \cref{tab:prior-surveys}, the contribution is this joint accounting of solver domains, reduction parameters and evidential status, with the stated quantitative derivations, not new reduction mechanisms or an exhaustive security classification.

% ==================================================================
% ACKNOWLEDGEMENTS
% ==================================================================

\section*{Acknowledgements}
The author used Anthropic tools to assist with language editing of this manuscript. The scientific content, claims, proofs, and conclusions are the author's own. The resulting work, in its totality, is an accurate representation of the author's underlying work, and the author take full responsibility for the veracity and correctness of all material.

% ==================================================================
% REFERENCES
% ==================================================================
\bibliographystyle{splncs04}
\bibliography{Bibliography}

\begin{thebibliography}{100}
\providecommand{\url}[1]{\texttt{#1}}
\providecommand{\urlprefix}{URL }
\providecommand{\doi}[1]{https://doi.org/#1}

\bibitem{AACLMR26}
Abram, D., Arnon, G., Cini, V., Lou, P., Malavolta, G., Roy, L.: Decomposed
  {LWE} is equivalent to succinct {LWE}. Cryptology ePrint Archive, Paper
  2026/1829 (2026)

\bibitem{AMR25}
Abram, D., Malavolta, G., Roy, L.: Key-homomorphic computations for {RAM}:
  Fully succinct randomised encodings and more. In: Advances in Cryptology --
  CRYPTO 2025. LNCS, Springer (2025),
  \href{https://doi.org/10.1007/978-3-032-01881-6_8}{doi:10.1007/978-3-032-01881-6\_8};
  ePrint 2025/339

\bibitem{ADRS15}
Aggarwal, D., Dadush, D., Regev, O., Stephens-Davidowitz, N.: Solving the
  shortest vector problem in $2^n$ time using discrete {G}aussian sampling. In:
  Proceedings of the 47th Annual ACM Symposium on Theory of Computing (STOC
  2015). pp. 733--742 (2015)

\bibitem{HQC}
{Aguilar Melchor}, C., et~al.: Hamming quasi-cyclic ({HQC}). Submission to the
  NIST Post-Quantum Cryptography Standardization Process; selected for
  standardisation in March 2025 (2025)

\bibitem{AR05}
Aharonov, D., Regev, O.: Lattice problems in {NP} $\cap$ {coNP}. Journal of the
  ACM  \textbf{52}(5),  749--765 (2005)

\bibitem{ajtai1996}
Ajtai, M.: Generating hard instances of lattice problems (extended abstract).
  In: Proceedings of the 28th Annual ACM Symposium on Theory of Computing (STOC
  1996). pp. 99--108. ACM Press (1996). \doi{10.1145/237814.237838},
  \url{http://dx.doi.org/10.1145/237814.237838}

\bibitem{Ajt98}
Ajtai, M.: The shortest vector problem in {$L_2$} is {NP}-hard for randomized
  reductions. In: Proceedings of the 30th Annual ACM Symposium on Theory of
  Computing (STOC 1998). pp. 10--19 (1998)

\bibitem{AD97}
Ajtai, M., Dwork, C.: A public-key cryptosystem with worst-case/average-case
  equivalence. In: Proceedings of the 29th Annual ACM Symposium on Theory of
  Computing (STOC 1997). pp. 284--293 (1997)

\bibitem{ADMP20}
Alamati, N., Feo, L.D., Montgomery, H., Patranabis, S.: Cryptographic group
  actions and applications. In: Advances in Cryptology -- ASIACRYPT 2020. LNCS,
  vol. 12492, pp. 411--439. Springer (2020)

\bibitem{ABD16}
Albrecht, M.R., Bai, S., Ducas, L.: A subfield lattice attack on overstretched
  {NTRU} assumptions. In: Advances in Cryptology -- CRYPTO 2016. LNCS,
  vol.~9814, pp. 153--178. Springer (2016)

\bibitem{ClassicMcEliece}
Albrecht, M.R., Bernstein, D.J., Chou, T., Cid, C., Gilcher, J., Lange, T.,
  Maram, V., von Maurich, I., Misoczki, R., Niederhagen, R., Paterson, K.G.,
  Persichetti, E., Peters, C., Schwabe, P., Sendrier, N., Szefer, J., Tjhai,
  C.J., Tomlinson, M., Wang, W.: Classic {McEliece}: Conservative code-based
  cryptography. Submission to the NIST Post-Quantum Cryptography
  Standardization Process, Round 4 (2022), \url{https://classic.mceliece.org/}

\bibitem{AD17}
Albrecht, M.R., Deo, A.: Large modulus ring-{LWE} $\geq$ module-{LWE}. In:
  Advances in Cryptology -- ASIACRYPT 2017. LNCS, vol. 10624, pp. 267--296.
  Springer (2017)

\bibitem{ADHKPS19}
Albrecht, M.R., Ducas, L., Herold, G., Kirshanova, E., Postlethwaite, E.W.,
  Stevens, M.: The general sieve kernel and new records in lattice reduction.
  In: Advances in Cryptology -- EUROCRYPT 2019. LNCS, vol. 11477, pp. 717--746.
  Springer (2019)

\bibitem{APS15}
Albrecht, M.R., Player, R., Scott, S.: On the concrete hardness of learning
  with errors. Journal of Mathematical Cryptology  \textbf{9}(3),  169--203
  (2015)

\bibitem{ClassicMcElieceSec}
Albrecht, M.R., et~al.: Classic {McEliece}: Conservative code-based
  cryptography: Guide for security reviewers.
  \url{https://classic.mceliece.org/mceliece-security-20221023.pdf} (2022)

\bibitem{Ale03}
Alekhnovich, M.: More on average case vs approximation complexity. In: 44th
  IEEE Symposium on Foundations of Computer Science (FOCS 2003). pp. 298--307
  (2003)

\bibitem{FrodoKEM21}
Alkim, E., Bos, J.W., Ducas, L., Longa, P., Mironov, I., Naehrig, M.,
  Nikolaenko, V., Peikert, C., Raghunathan, A., Stebila, D.: {FrodoKEM}:
  Learning with errors key encapsulation. algorithm specifications and
  supporting documentation. Submission to the NIST Post-Quantum Cryptography
  Standardization Process, Round 3 (2021), \url{https://frodokem.org/}

\bibitem{AAG99}
Anshel, I., Anshel, M., Goldfeld, D.: An algebraic method for public-key
  cryptography. Mathematical Research Letters  \textbf{6}(3),  287--291 (1999)

\bibitem{Apo25}
Apon, D.: So about that quantum lattice thing: Rebuttal to ``{E}xact {C}oset
  {S}ampling for {Q}uantum {L}attice {A}lgorithms''. Cryptology ePrint Archive,
  Paper 2025/1945 (2025), received October 18, 2025; revised October 28, 2025.
  \url{https://eprint.iacr.org/2025/1945}

\bibitem{ACPS09}
Applebaum, B., Cash, D., Peikert, C., Sahai, A.: Fast cryptographic primitives
  and circular-secure encryption based on hard learning problems. In: Advances
  in Cryptology -- CRYPTO 2009. LNCS, vol.~5677, pp. 595--618. Springer (2009)

\bibitem{BIKE}
Aragon, N., et~al.: {BIKE}: Bit flipping key encapsulation. Submission to the
  NIST Post-Quantum Cryptography Standardization Process, Round 4 (2022)

\bibitem{AG11}
Arora, S., Ge, R.: New algorithms for learning in presence of errors. In:
  Automata, Languages and Programming -- ICALP 2011. LNCS, vol.~6755, pp.
  403--415. Springer (2011)

\bibitem{Bab86}
Babai, L.: On {L}ov{\'a}sz' lattice reduction and the nearest lattice point
  problem. Combinatorica  \textbf{6}(1),  1--13 (1986)

\bibitem{Bab16}
Babai, L.: Graph isomorphism in quasipolynomial time. In: Proceedings of the
  48th Annual ACM Symposium on Theory of Computing (STOC 2016). pp. 684--697
  (2016)

\bibitem{BCvD05}
Bacon, D., Childs, A.M., van Dam, W.: From optimal measurement to efficient
  quantum algorithms for the hidden subgroup problem over semidirect product
  groups. In: 46th IEEE Symposium on Foundations of Computer Science (FOCS
  2005). pp. 469--478 (2005)

\bibitem{DilSpec21}
Bai, S., Ducas, L., Kiltz, E., Lepoint, T., Lyubashevsky, V., Schwabe, P.,
  Seiler, G., Stehl{\'e}, D.: {CRYSTALS}-{D}ilithium: Algorithm specifications
  and supporting documentation (version 3.1). Submission to the NIST
  Post-Quantum Cryptography Standardization Process, Round 3 (2021), february
  8, 2021

\bibitem{BJKNY25}
Bai, S., Jangir, H., Kirshanova, E., Ngo, T., Youmans, W.: A quasi-polynomial
  time algorithm for the extrapolated dihedral coset problem over power-of-two
  moduli. In: Advances in Cryptology -- CRYPTO 2025. LNCS, Springer (2025),
  \href{https://doi.org/10.1007/978-3-032-01878-6_14}{doi:10.1007/978-3-032-01878-6\_14};
  ePrint 2025/1046

\bibitem{Ban93}
Banaszczyk, W.: New bounds in some transference theorems in the geometry of
  numbers. Mathematische Annalen  \textbf{296}(4),  625--635 (1993)

\bibitem{BPR12}
Banerjee, A., Peikert, C., Rosen, A.: Pseudorandom functions and lattices. In:
  Advances in Cryptology -- EUROCRYPT 2012. LNCS, vol.~7237, pp. 719--737.
  Springer (2012)

\bibitem{XWing24}
Barbosa, M., Connolly, D., Duarte, J.D., Kaiser, A., Schwabe, P., Varner, K.,
  Westerbaan, B.: {X-Wing}: The hybrid {KEM} you've been looking for. IACR
  Communications in Cryptology  \textbf{1}(1) (2024)

\bibitem{BLT26}
Bardet, M., Lemoine, A., Tillich, J.P.: An attack on the {CFS} scheme and on
  {TII} {McEliece} challenges. In: Advances in Cryptology -- CRYPTO 2026. LNCS,
  vol. 16803, pp. 380--412. Springer (2026), cryptology ePrint Archive, Paper
  2026/430

\bibitem{BM25}
Basso, A., Maino, L.: {POK{\'E}}: A compact and efficient {PKE} from
  higher-dimensional isogenies. In: Advances in Cryptology -- EUROCRYPT 2025.
  LNCS, Springer (2025)

\bibitem{BMP23}
Basso, A., Maino, L., Pope, G.: {FESTA}: Fast encryption from supersingular
  torsion attacks. In: Advances in Cryptology -- ASIACRYPT 2023. LNCS, vol.
  14444, pp. 98--126. Springer (2023)

\bibitem{FAEST23}
Baum, C., Braun, L., {Delpech de Saint Guilhem}, C., Kloo{\ss}, M., Orsini, E.,
  Roy, L., Scholl, P.: Publicly verifiable zero-knowledge and post-quantum
  signatures from {VOLE}-in-the-head. In: Advances in Cryptology -- CRYPTO
  2023. LNCS, vol. 14085, pp. 581--615. Springer (2023)

\bibitem{BDGL16}
Becker, A., Ducas, L., Gama, N., Laarhoven, T.: New directions in nearest
  neighbor searching with applications to lattice sieving. In: Proceedings of
  the 27th Annual ACM-SIAM Symposium on Discrete Algorithms (SODA 2016). pp.
  10--24 (2016)

\bibitem{Ben73}
Bennett, C.H.: Logical reversibility of computation. IBM Journal of Research
  and Development  \textbf{17}(6),  525--532 (1973)

\bibitem{BGPS23}
Bennett, H., Ganju, A., Peetathawatchai, P., Stephens-Davidowitz, N.: Just how
  hard are rotations of {$\mathbb{Z}^n$}? {A}lgorithms and cryptography with
  the simplest lattice. In: Advances in Cryptology -- EUROCRYPT 2023. LNCS,
  vol. 14008, pp. 252--281. Springer (2023)

\bibitem{BMvT78}
Berlekamp, E.R., McEliece, R.J., van Tilborg, H.C.A.: On the inherent
  intractability of certain coding problems. IEEE Transactions on Information
  Theory  \textbf{24}(3),  384--386 (1978)

\bibitem{Ber09}
Bernstein, D.J.: Cost analysis of hash collisions: Will quantum computers make
  {SHARCS} obsolete? In: Workshop Record of SHARCS 2009 (2009)

\bibitem{Ber10}
Bernstein, D.J.: Grover vs. {McEliece}. In: Post-Quantum Cryptography --
  PQCrypto 2010. LNCS, vol.~6061, pp. 73--80. Springer (2010)

\bibitem{BCLV17}
Bernstein, D.J., Chuengsatiansup, C., Lange, T., van Vredendaal, C.: {NTRU}
  prime: Reducing attack surface at low cost. In: Selected Areas in
  Cryptography -- SAC 2017. LNCS, vol. 10719, pp. 235--260. Springer (2017)

\bibitem{BHHLNPSW15}
Bernstein, D.J., Hopwood, D., H{\"u}lsing, A., Lange, T., Niederhagen, R.,
  Papachristodoulou, L., Schneider, M., Schwabe, P., Wilcox-O'Hearn, Z.:
  {SPHINCS}: Practical stateless hash-based signatures. In: Advances in
  Cryptology -- EUROCRYPT 2015. LNCS, vol.~9056, pp. 368--397. Springer (2015)

\bibitem{bernstein2019}
Bernstein, D.J., H{\"u}lsing, A., K{\"o}lbl, S., Niederhagen, R., Rijneveld,
  J., Schwabe, P.: The {SPHINCS}$^+$ signature framework. In: Proceedings of
  the 2019 ACM SIGSAC Conference on Computer and Communications Security (CCS
  2019). pp. 2129--2146 (2019)

\bibitem{BLMP19}
Bernstein, D.J., Lange, T., Martindale, C., Panny, L.: Quantum circuits for the
  {CSIDH}: Optimizing quantum evaluation of isogenies. In: Advances in
  Cryptology -- EUROCRYPT 2019. LNCS, vol. 11477, pp. 409--441. Springer (2019)

\bibitem{BY18}
Bernstein, D.J., Yang, B.Y.: Asymptotically faster quantum algorithms to solve
  multivariate quadratic equations. In: Post-Quantum Cryptography -- PQCrypto
  2018. LNCS, vol. 10786, pp. 487--506. Springer (2018)

\bibitem{Beu21}
Beullens, W.: {MAYO}: Practical post-quantum signatures from oil-and-vinegar
  maps. In: Selected Areas in Cryptography -- SAC 2021. LNCS, vol. 13203, pp.
  355--376. Springer (2021)

\bibitem{Beu22}
Beullens, W.: Breaking {Rainbow} takes a weekend on a laptop. In: Advances in
  Cryptology -- CRYPTO 2022. LNCS, vol. 13508, pp. 464--479. Springer (2022)

\bibitem{BKV19}
Beullens, W., Kleinjung, T., Vercauteren, F.: {CSI-FiSh}: Efficient isogeny
  based signatures through class group computations. In: Advances in Cryptology
  -- ASIACRYPT 2019. LNCS, vol. 11921, pp. 227--247. Springer (2019)

\bibitem{BBKSS23}
Biasse, J.F., Bonnetain, X., Kirshanova, E., Schrottenloher, A., Song, F.:
  Quantum algorithms for attacking hardness assumptions in classical and
  post-quantum cryptography. IET Information Security  \textbf{17}(2),
  171--209 (2023),
  \href{https://doi.org/10.1049/ise2.12081}{doi:10.1049/ise2.12081}

\bibitem{BJS14}
Biasse, J.F., Jao, D., Sankar, A.: A quantum algorithm for computing isogenies
  between supersingular elliptic curves. In: Progress in Cryptology --
  INDOCRYPT 2014. LNCS, vol.~8885, pp. 428--442. Springer (2014)

\bibitem{BMPS20}
Biasse, J.F., Micheli, G., Persichetti, E., Santini, P.: {LESS} is more:
  Code-based signatures without syndromes. In: Progress in Cryptology --
  AFRICACRYPT 2020. LNCS, vol. 12174, pp. 45--65. Springer (2020)

\bibitem{BS16}
Biasse, J.F., Song, F.: Efficient quantum algorithms for computing class groups
  and solving the principal ideal problem in arbitrary degree number fields.
  In: Proceedings of the 27th Annual ACM-SIAM Symposium on Discrete Algorithms
  (SODA 2016). pp. 893--902 (2016)

\bibitem{BBFGS19}
Bindel, N., Brendel, J., Fischlin, M., Goncalves, B., Stebila, D.: Hybrid key
  encapsulation mechanisms and authenticated key exchange. In: Post-Quantum
  Cryptography -- PQCrypto 2019. LNCS, vol. 11505, pp. 206--226. Springer
  (2019)

\bibitem{BBPS19}
Bolboceanu, M., Brakerski, Z., Perlman, R., Sharma, D.: Order-{LWE} and the
  hardness of ring-{LWE} with entropic secrets. In: Advances in Cryptology --
  ASIACRYPT 2019. LNCS, vol. 11922, pp. 91--120. Springer (2019)

\bibitem{BS20}
Bonnetain, X., Schrottenloher, A.: Quantum security analysis of {CSIDH}. In:
  Advances in Cryptology -- EUROCRYPT 2020. LNCS, vol. 12106, pp. 493--522.
  Springer (2020)

\bibitem{Frodo16}
Bos, J.W., Costello, C., Ducas, L., Mironov, I., Naehrig, M., Nikolaenko, V.,
  Raghunathan, A., Stebila, D.: Frodo: Take off the ring! {P}ractical,
  quantum-secure key exchange from {LWE}. In: Proceedings of the 2016 ACM
  SIGSAC Conference on Computer and Communications Security (CCS 2016). pp.
  1006--1018 (2016)

\bibitem{Kyber18}
Bos, J.W., Ducas, L., Kiltz, E., Lepoint, T., Lyubashevsky, V., Schanck, J.M.,
  Schwabe, P., Seiler, G., Stehl{\'e}, D.: {CRYSTALS}-{K}yber: A {CCA}-secure
  module-lattice-based {KEM}. In: IEEE European Symposium on Security and
  Privacy (EuroS\&P 2018). pp. 353--367 (2018)

\bibitem{BFS26}
Boucher, M., Fouque, P.A., Shen, Y.: Cyclotomic cosets: Hidden subgroup and
  quantum sieving algorithm for prime-power moduli. Cryptology ePrint Archive,
  Paper 2026/2249; arXiv:2609.34996 (2026)

\bibitem{BJRW23}
Boudgoust, K., Jeudy, C., Roux-Langlois, A., Wen, W.: On the hardness of module
  learning with errors with short distributions. Journal of Cryptology
  \textbf{36}(1), ~1 (2023),
  \href{https://doi.org/10.1007/s00145-022-09441-3}{doi:10.1007/s00145-022-09441-3}

\bibitem{BGV12}
Brakerski, Z., Gentry, C., Vaikuntanathan, V.: ({L}eveled) fully homomorphic
  encryption without bootstrapping. In: Proceedings of the 3rd Innovations in
  Theoretical Computer Science Conference (ITCS 2012). pp. 309--325 (2012)

\bibitem{BKSW18}
Brakerski, Z., Kirshanova, E., Stehl{\'e}, D., Wen, W.: Learning with errors
  and extrapolated dihedral cosets. In: Public-Key Cryptography -- PKC 2018.
  LNCS, vol. 10770, pp. 702--727. Springer (2018), arXiv:1710.08223

\bibitem{BLPRS13}
Brakerski, Z., Langlois, A., Peikert, C., Regev, O., Stehl{\'e}, D.: Classical
  hardness of learning with errors. In: Proceedings of the 45th Annual ACM
  Symposium on Theory of Computing (STOC 2013). pp. 575--584 (2013)

\bibitem{BLVW19}
Brakerski, Z., Lyubashevsky, V., Vaikuntanathan, V., Wichs, D.: Worst-case
  hardness for {LPN} and cryptographic hashing via code smoothing. In: Advances
  in Cryptology -- EUROCRYPT 2019. LNCS, vol. 11478, pp. 619--635. Springer
  (2019)

\bibitem{BHKKLS11}
Brassard, G., H{\o}yer, P., Kalach, K., Kaplan, M., Laplante, S., Salvail, L.:
  Merkle puzzles in a quantum world. In: Advances in Cryptology -- CRYPTO 2011.
  LNCS, vol.~6841, pp. 391--410. Springer (2011)

\bibitem{BHT98}
Brassard, G., H{\o}yer, P., Tapp, A.: Quantum cryptanalysis of hash and
  claw-free functions. In: LATIN'98: Theoretical Informatics. LNCS, vol.~1380,
  pp. 163--169. Springer (1998)

\bibitem{BLRT26}
Briaud, P., Lemoine, A., Randriambololona, H., Tillich, J.P.: A heuristic
  subexponential attack on the {McEliece} cryptosystem. Cryptology ePrint
  Archive, Paper 2026/1232 (2026), posted June 10, 2026; revised September 17,
  2026

\bibitem{BRST21}
Bruna, J., Regev, O., Song, M.J., Tang, Y.: Continuous {LWE}. In: Proceedings
  of the 53rd Annual ACM Symposium on Theory of Computing (STOC 2021). pp.
  694--707 (2021)

\bibitem{BSI26}
{BSI (Bundesamt f{\"u}r Sicherheit in der Informationstechnik)}: Notes on
  recent developments concerning {Classic McEliece}. BSI publication (2026),
  dated 29 September 2026; consulted October 11, 2026.
  \url{https://www.bsi.bund.de/SharedDocs/Downloads/EN/BSI/Crypto/Notes_Classic_McEliece.pdf}

\bibitem{CMY26}
Cartor, R., Manganiello, F., Youmans, W.: A locality-sensitive hashing
  framework for reducing bounded distance decoding to {EDCP}. In: Theory of
  Cryptography -- TCC 2026. LNCS, Springer (2026), cryptology ePrint Archive,
  Paper 2026/2055

\bibitem{CD23}
Castryck, W., Decru, T.: An efficient key recovery attack on {SIDH}. In:
  Advances in Cryptology -- EUROCRYPT 2023. LNCS, vol. 14008, pp. 423--447.
  Springer (2023)

\bibitem{CLMPR18}
Castryck, W., Lange, T., Martindale, C., Panny, L., Renes, J.: {CSIDH}: An
  efficient post-quantum commutative group action. In: Advances in Cryptology
  -- ASIACRYPT 2018. LNCS, vol. 11274, pp. 395--427. Springer (2018)

\bibitem{CH26}
Chailloux, A., Hermouet, P.: On the quantum equivalence between
  {$S|\mathsf{LWE}\rangle$} and {ISIS}. In: Advances in Cryptology -- CRYPTO
  2026. LNCS, Springer (2026),
  \href{https://doi.org/10.1007/978-3-032-35377-1_13}{doi:10.1007/978-3-032-35377-1\_13};
  ePrint 2025/1857

\bibitem{CL21}
Chailloux, A., Loyer, J.: Lattice sieving via quantum random walks. In:
  Advances in Cryptology -- ASIACRYPT 2021. LNCS, vol. 13093, pp. 63--91.
  Springer (2021)

\bibitem{CGL09}
Charles, D.X., Goren, E.Z., Lauter, K.E.: Cryptographic hash functions from
  expander graphs. Journal of Cryptology  \textbf{22}(1),  93--113 (2009)

\bibitem{Picnic17}
Chase, M., Derler, D., Goldfeder, S., Orlandi, C., Ramacher, S., Rechberger,
  C., Slamanig, D., Zaverucha, G.: Post-quantum zero-knowledge and signatures
  from symmetric-key primitives. In: Proceedings of the 2017 ACM SIGSAC
  Conference on Computer and Communications Security (CCS 2017). pp. 1825--1842
  (2017)

\bibitem{CS24}
Chen, I., Sun, D.: The dihedral hidden subgroup problem. Journal of
  Mathematical Cryptology  \textbf{18}(1),  20220029 (2024),
  \href{https://doi.org/10.1515/jmc-2022-0029}{doi:10.1515/jmc-2022-0029}

\bibitem{Che24}
Chen, Y.: Quantum algorithms for lattice problems. Cryptology ePrint Archive,
  Paper 2024/555 (2024), received April 10, 2024. On April 18, 2024 the author
  reported a bug in Step~9, found by Hongxun Wu and independently by Thomas
  Vidick, and withdrew the main claim; revised April 19, 2024.
  \url{https://eprint.iacr.org/2024/555}

\bibitem{CHLLT25}
Chen, Y., Hu, Z., Liu, Q., Luo, H., Tu, Y.: {LWE} with quantum amplitudes:
  Algorithm, hardness, and oblivious sampling. In: Advances in Cryptology --
  CRYPTO 2025. LNCS, Springer (2025), ePrint 2023/1498; arXiv:2310.00644

\bibitem{CLZ22}
Chen, Y., Liu, Q., Zhandry, M.: Quantum algorithms for variants of average-case
  lattice problems via filtering. In: Advances in Cryptology -- EUROCRYPT 2022.
  LNCS, vol. 13277, pp. 372--401. Springer (2022)

\bibitem{CvD07}
Childs, A.M., van Dam, W.: Quantum algorithm for a generalized hidden shift
  problem. In: Proceedings of the 18th Annual ACM-SIAM Symposium on Discrete
  Algorithms (SODA 2007). pp. 1225--1232 (2007)

\bibitem{CvD10}
Childs, A.M., van Dam, W.: Quantum algorithms for algebraic problems. Reviews
  of Modern Physics  \textbf{82}(1),  1--52 (2010)

\bibitem{CJS14}
Childs, A.M., Jao, D., Soukharev, V.: Constructing elliptic curve isogenies in
  quantum subexponential time. Journal of Mathematical Cryptology
  \textbf{8}(1),  1--29 (2014)

\bibitem{MEDS23}
Chou, T., Niederhagen, R., Persichetti, E., Randrianarisoa, T.H., Reijnders,
  K., Samardjiska, S., Trimoska, M.: Take your {MEDS}: Digital signatures from
  matrix code equivalence. In: Progress in Cryptology -- AFRICACRYPT 2023.
  LNCS, vol. 14064, pp. 28--52. Springer (2023)

\bibitem{ClassicMcElieceISO}
{Classic McEliece team}: Classic {McEliece}: {ISO}.
  \url{https://classic.mceliece.org/iso.html} (2026), web page, version June
  15, 2026; consulted October 11, 2026

\bibitem{SP800208}
Cooper, D.A., Apon, D.C., Dang, Q.H., Davidson, M.S., Dworkin, M.J., Miller,
  C.A.: Recommendation for stateful hash-based signature schemes. Special
  Publication 800-208, NIST (2020)

\bibitem{Cou06}
Couveignes, J.M.: Hard homogeneous spaces. Cryptology ePrint Archive, Paper
  2006/291 (2006)

\bibitem{CDPR16}
Cramer, R., Ducas, L., Peikert, C., Regev, O.: Recovering short generators of
  principal ideals in cyclotomic rings. In: Advances in Cryptology -- EUROCRYPT
  2016. LNCS, vol.~9666, pp. 559--585. Springer (2016)

\bibitem{CDW17}
Cramer, R., Ducas, L., Wesolowski, B.: Short {S}tickelberger class relations
  and application to {Ideal-SVP}. In: Advances in Cryptology -- EUROCRYPT 2017.
  LNCS, vol. 10210, pp. 324--348. Springer (2017)

\bibitem{DFS24}
Debris-Alazard, T., Fallahpour, P., Stehl{\'e}, D.: Quantum oblivious {LWE}
  sampling and insecurity of standard model lattice-based {SNARKs}. In:
  Proceedings of the 56th Annual ACM Symposium on Theory of Computing (STOC
  2024) (2024)

\bibitem{DRT21}
Debris-Alazard, T., Remaud, M., Tillich, J.P.: Quantum reduction of finding
  short code vectors to the decoding problem. arXiv:2106.02747 (2021)

\bibitem{DG16}
Delfs, C., Galbraith, S.D.: Computing isogenies between supersingular elliptic
  curves over {$\mathbb{F}_p$}. Designs, Codes and Cryptography
  \textbf{78}(2),  425--440 (2016)

\bibitem{DS05}
Ding, J., Schmidt, D.: Rainbow, a new multivariable polynomial signature
  scheme. In: Applied Cryptography and Network Security -- ACNS 2005. LNCS,
  vol.~3531, pp. 164--175. Springer (2005)

\bibitem{DMR11}
Dinh, H., Moore, C., Russell, A.: {McEliece} and {Niederreiter} cryptosystems
  that resist quantum {F}ourier sampling attacks. In: Advances in Cryptology --
  CRYPTO 2011. LNCS, vol.~6841, pp. 761--779. Springer (2011)

\bibitem{ducas2017}
Ducas, L., Kiltz, E., Lepoint, T., Lyubashevsky, V., Schwabe, P., Seiler, G.,
  Stehl{\'e}, D.: {CRYSTALS}-{D}ilithium: A lattice-based digital signature
  scheme. IACR Transactions on Cryptographic Hardware and Embedded Systems
  \textbf{2018}(1),  238--268 (2018)

\bibitem{DPPvW22}
Ducas, L., Postlethwaite, E.W., Pulles, L.N., van Woerden, W.: Hawk: Module
  {LIP} makes lattice signatures fast, compact and simple. In: Advances in
  Cryptology -- ASIACRYPT 2022. LNCS, vol. 13794, pp. 65--94. Springer (2022)

\bibitem{DvW21note}
Ducas, L., van Woerden, W.: A note on a claim of {E}ldar \& {H}allgren: {LLL}
  already solves it. Cryptology ePrint Archive, Paper 2021/1391 (2021)

\bibitem{DvW22}
Ducas, L., van Woerden, W.: On the lattice isomorphism problem, quadratic
  forms, remarkable lattices, and cryptography. In: Advances in Cryptology --
  EUROCRYPT 2022. LNCS, vol. 13277, pp. 643--673. Springer (2022)

\bibitem{DMM25}
Dutto, S., Mercuri, P., Murru, N., Romano, L.: A survey about hidden subgroup
  problem from a mathematical and cryptographic perspective. arXiv:2512.02087
  (2025)

\bibitem{EHKS14}
Eisentr{\"a}ger, K., Hallgren, S., Kitaev, A., Song, F.: A quantum algorithm
  for computing the unit group of an arbitrary degree number field. In:
  Proceedings of the 46th Annual ACM Symposium on Theory of Computing (STOC
  2014). pp. 293--302 (2014)

\bibitem{EHLMP18}
Eisentr{\"a}ger, K., Hallgren, S., Lauter, K., Morrison, T., Petit, C.:
  Supersingular isogeny graphs and endomorphism rings: Reductions and
  solutions. In: Advances in Cryptology -- EUROCRYPT 2018. LNCS, vol. 10822,
  pp. 329--368. Springer (2018)

\bibitem{EH22}
Eldar, L., Hallgren, S.: An efficient quantum algorithm for lattice problems
  achieving subexponential approximation factor. arXiv:2201.13450 (2022), see
  the response of Ducas and van Woerden, Cryptology ePrint Archive, Paper
  2021/1391

\bibitem{EH00}
Ettinger, M., H{\o}yer, P.: On quantum algorithms for noncommutative hidden
  subgroups. Advances in Applied Mathematics  \textbf{25}(3),  239--251 (2000)

\bibitem{EHK04}
Ettinger, M., H{\o}yer, P., Knill, E.: The quantum query complexity of the
  hidden subgroup problem is polynomial. Information Processing Letters
  \textbf{91}(1),  43--48 (2004)

\bibitem{FPS22}
Felderhoff, J., Pellet-Mary, A., Stehl{\'e}, D.: On {Module} {Unique-SVP} and
  {NTRU}. In: Advances in Cryptology -- ASIACRYPT 2022, Part III. pp. 709--740.
  Springer (2022), full version: \url{https://eprint.iacr.org/2022/1203}

\bibitem{FJR22}
Feneuil, T., Joux, A., Rivain, M.: Syndrome decoding in the head: Shorter
  signatures from zero-knowledge proofs. In: Advances in Cryptology -- CRYPTO
  2022. LNCS, vol. 13508, pp. 541--572. Springer (2022)

\bibitem{DKLPW20}
Feo, L.D., Kohel, D., Leroux, A., Petit, C., Wesolowski, B.: {SQISign}: Compact
  post-quantum signatures from quaternions and isogenies. In: Advances in
  Cryptology -- ASIACRYPT 2020. LNCS, vol. 12491, pp. 64--93. Springer (2020)

\bibitem{FV26mceliece}
Fernandez, R.: Structural cryptanalysis of {Classic McEliece}, 2024--2026: A
  critical assessment of claims, costs and evidence. arXiv:2610.31196 [cs.CR]
  (2026)

\bibitem{Falcon20}
Fouque, P.A., Hoffstein, J., Kirchner, P., Lyubashevsky, V., Pornin, T., Prest,
  T., Ricosset, T., Seiler, G., Whyte, W., Zhang, Z.: Falcon: Fast-{F}ourier
  lattice-based compact signatures over {NTRU}. Submission to the NIST
  Post-Quantum Cryptography Standardization Process, Round 3 (2020)

\bibitem{FIMSS03}
Friedl, K., Ivanyos, G., Magniez, F., Santha, M., Sen, P.: Hidden translation
  and orbit coset in quantum computing. In: Proceedings of the 35th Annual ACM
  Symposium on Theory of Computing (STOC 2003). pp.~1--9 (2003)

\bibitem{GJ79}
Garey, M.R., Johnson, D.S.: Computers and Intractability: A Guide to the Theory
  of {NP}-Completeness. W. H. Freeman (1979)

\bibitem{Gen09}
Gentry, C.: Fully homomorphic encryption using ideal lattices. In: Proceedings
  of the 41st Annual ACM Symposium on Theory of Computing (STOC 2009). pp.
  169--178 (2009)

\bibitem{GPV08}
Gentry, C., Peikert, C., Vaikuntanathan, V.: Trapdoors for hard lattices and
  new cryptographic constructions. In: Proceedings of the 40th Annual ACM
  Symposium on Theory of Computing (STOC 2008). pp. 197--206 (2008)

\bibitem{GSW13}
Gentry, C., Sahai, A., Waters, B.: Homomorphic encryption from learning with
  errors: Conceptually-simpler, asymptotically-faster, attribute-based. In:
  Advances in Cryptology -- CRYPTO 2013. LNCS, vol.~8042, pp. 75--92. Springer
  (2013)

\bibitem{GIJS26}
Ghoshal, A., Ishai, Y., Jain, A., Sun, N.: Quasipolynomial cryptanalysis of the
  {McEliece} cryptosystem (or: {PIR} meets {McEliece}). Cryptology ePrint
  Archive, Paper 2026/1630 (2026), posted August 7, 2026; revised August 27,
  2026

\bibitem{GHP18}
Giacon, F., Heuer, F., Poettering, B.: {KEM} combiners. In: Public-Key
  Cryptography -- PKC 2018. LNCS, vol. 10769, pp. 190--218. Springer (2018)

\bibitem{GG00}
Goldreich, O., Goldwasser, S.: On the limits of nonapproximability of lattice
  problems. Journal of Computer and System Sciences  \textbf{60}(3),  540--563
  (2000)

\bibitem{GGH97}
Goldreich, O., Goldwasser, S., Halevi, S.: Public-key cryptosystems from
  lattice reduction problems. In: Advances in Cryptology -- CRYPTO 1997. LNCS,
  vol.~1294, pp. 112--131. Springer (1997)

\bibitem{Gro96}
Grover, L.K.: A fast quantum mechanical algorithm for database search. In:
  Proceedings of the 28th Annual ACM Symposium on Theory of Computing (STOC
  1996). pp. 212--219 (1996)

\bibitem{GY26}
Guo, Y., Yang, S.: Rigorous statements and proofs of the lemmas in {S}imon's
  algorithm for the dihedral coset problem and their underlying hypothesis.
  Cryptology ePrint Archive, Paper 2026/1714; arXiv:2608.16598 (2026),
  \url{https://eprint.iacr.org/2026/1714}

\bibitem{GRZ26}
Gupte, A., Ragavan, S., Zhandry, M.: The {ePrint:2026/1591} quantum algorithm
  does not solve {DCP}. Cryptology ePrint Archive, Paper 2026/1693 (2026),
  posted August 15, 2026; revised September 1, 2026.
  \url{https://eprint.iacr.org/2026/1693}

\bibitem{GRZLean}
Gupte, A., Ragavan, S., Zhandry, M.: Lean~4 formalisation of ``the
  {ePrint:2026/1591} quantum algorithm does not solve {DCP}''. GitHub
  repository (2026), main theorem \texttt{MainTheorem.thm\_main}; refutation
  corollary in \texttt{Refutation.lean}.
  \url{https://github.com/sragavan99/lean-ePrint-2026-1591-refutation};
  consulted October 2, 2026 (commit identifier not recorded); README re-read
  October 18, 2026 (models the draft dated August 11, 2026)

\bibitem{GVV22}
Gupte, A., Vafa, N., Vaikuntanathan, V.: Continuous {LWE} is as hard as {LWE}
  \& applications to learning {G}aussian mixtures. In: 63rd IEEE Symposium on
  Foundations of Computer Science (FOCS 2022). pp. 1162--1173 (2022)

\bibitem{HMRRS10}
Hallgren, S., Moore, C., R{\"o}tteler, M., Russell, A., Sen, P.: Limitations of
  quantum coset states for graph isomorphism. Journal of the ACM
  \textbf{57}(6),  34:1--34:33 (2010)

\bibitem{HRT03}
Hallgren, S., Russell, A., Ta-Shma, A.: The hidden subgroup problem and quantum
  computation using group representations. SIAM Journal on Computing
  \textbf{32}(4),  916--934 (2003)

\bibitem{HawkWd26}
{HAWK team}: Withdrawal of {HAWK} from the third round of the {NIST} additional
  signature process. Announcement on the NIST pqc-forum mailing list, July 29,
  2026, as reported in The Hacker News,
  \url{https://thehackernews.com/2026/07/claude-ai-just-cracked-post-quantum.html}
  (2026)

\bibitem{Hem26}
Hemmert, T.: Key recovery for the {McEliece} cryptosystem using higher-order
  vanishing. Cryptology ePrint Archive, Paper 2026/1339 (2026), posted June 29,
  2026

\bibitem{HW26}
Hemmert, T., Wiemers, A.: Distinguishing {Goppa} codes using higher-order
  vanishing. In: Advances in Cryptology -- CRYPTO 2026. LNCS, vol. 16803, pp.
  191--223. Springer (2026), cryptology ePrint Archive, Paper 2025/1661

\bibitem{HPS98}
Hoffstein, J., Pipher, J., Silverman, J.H.: {NTRU}: A ring-based public key
  cryptosystem. In: Algorithmic Number Theory -- ANTS-III. LNCS, vol.~1423, pp.
  267--288. Springer (1998)

\bibitem{hulsing2013w}
H{\"u}lsing, A.: {W-OTS}$^+$ -- shorter signatures for hash-based signature
  schemes. In: Progress in Cryptology -- AFRICACRYPT 2013. LNCS, vol.~7918, pp.
  173--188. Springer (2013)

\bibitem{RFC8391}
H{\"u}lsing, A., Butin, D., Gazdag, S.L., Rijneveld, J., Mohaisen, A.: {XMSS}:
  {eXtended Merkle Signature Scheme}. RFC 8391 (2018)

\bibitem{IKOS07}
Ishai, Y., Kushilevitz, E., Ostrovsky, R., Sahai, A.: Zero-knowledge from
  secure multiparty computation. In: Proceedings of the 39th Annual ACM
  Symposium on Theory of Computing (STOC 2007). pp. 21--30 (2007)

\bibitem{ISO18033Amd2}
{ISO/IEC}: {ISO/IEC} 18033-2:2006/{Amd}~2:2026, information technology ---
  security techniques --- encryption algorithms --- part 2: Asymmetric ciphers,
  amendment 2. International Organization for Standardization (2026)

\bibitem{ISS07}
Ivanyos, G., Sanselme, L., Santha, M.: An efficient quantum algorithm for the
  hidden subgroup problem in extraspecial groups. In: STACS 2007. LNCS,
  vol.~4393, pp. 586--597. Springer (2007)

\bibitem{JDF11}
Jao, D., Feo, L.D.: Towards quantum-resistant cryptosystems from supersingular
  elliptic curve isogenies. In: Post-Quantum Cryptography -- PQCrypto 2011.
  LNCS, vol.~7071, pp. 19--34. Springer (2011)

\bibitem{JS19}
Jaques, S., Schanck, J.M.: Quantum cryptanalysis in the {RAM} model:
  Claw-finding attacks on {SIKE}. In: Advances in Cryptology -- CRYPTO 2019.
  LNCS, vol. 11692, pp. 32--61. Springer (2019)

\bibitem{KT17}
Kachigar, G., Tillich, J.P.: Quantum information set decoding algorithms. In:
  Post-Quantum Cryptography -- PQCrypto 2017. LNCS, vol. 10346, pp. 69--89.
  Springer (2017)

\bibitem{KPG99}
Kipnis, A., Patarin, J., Goubin, L.: Unbalanced oil and vinegar signature
  schemes. In: Advances in Cryptology -- EUROCRYPT 1999. LNCS, vol.~1592, pp.
  206--222. Springer (1999)

\bibitem{KS98}
Kipnis, A., Shamir, A.: Cryptanalysis of the oil and vinegar signature scheme.
  In: Advances in Cryptology -- CRYPTO 1998. LNCS, vol.~1462, pp. 257--266.
  Springer (1998)

\bibitem{KF17}
Kirchner, P., Fouque, P.A.: Revisiting lattice attacks on overstretched {NTRU}
  parameters. In: Advances in Cryptology -- EUROCRYPT 2017. LNCS, vol. 10210,
  pp. 3--26. Springer (2017)

\bibitem{KLCHKP00}
Ko, K.H., Lee, S.J., Cheon, J.H., Han, J.W., sung Kang, J., Park, C.: New
  public-key cryptosystem using braid groups. In: Advances in Cryptology --
  CRYPTO 2000. LNCS, vol.~1880, pp. 166--183. Springer (2000)

\bibitem{KLPT14}
Kohel, D., Lauter, K., Petit, C., Tignol, J.P.: On the quaternion
  $\ell$-isogeny path problem. LMS Journal of Computation and Mathematics
  \textbf{17}(A),  418--432 (2014)

\bibitem{KOW26}
Kothari, R., O'Donnell, R., Wu, K.: No exponential quantum speedup for
  {SIS}$^\infty$ anymore. In: Proceedings of the 58th Annual ACM Symposium on
  Theory of Computing (STOC 2026) (2026),
  \href{https://doi.org/10.1145/3798129.3800731}{doi:10.1145/3798129.3800731};
  arXiv:2510.07515

\bibitem{Kup05}
Kuperberg, G.: A subexponential-time quantum algorithm for the dihedral hidden
  subgroup problem. SIAM Journal on Computing  \textbf{35}(1),  170--188 (2005)

\bibitem{Kup13}
Kuperberg, G.: Another subexponential-time quantum algorithm for the dihedral
  hidden subgroup problem. In: 8th Conference on the Theory of Quantum
  Computation, Communication and Cryptography (TQC 2013). LIPIcs, vol.~22, pp.
  20--34 (2013)

\bibitem{Kup25}
Kuperberg, G.: The hidden subgroup problem for infinite groups.
  arXiv:2507.18499 (2025)

\bibitem{LMvdP13}
Laarhoven, T., Mosca, M., van~de Pol, J.: Solving the shortest vector problem
  in lattices faster using quantum search. In: Post-Quantum Cryptography --
  PQCrypto 2013. LNCS, vol.~7932, pp. 83--101. Springer (2013)

\bibitem{LvdPdW12}
Laarhoven, T., van~de Pol, J., de~Weger, B.: Solving hard lattice problems and
  the security of lattice-based cryptosystems. Cryptology ePrint Archive, Paper
  2012/533 (2012)

\bibitem{LO85}
Lagarias, J.C., Odlyzko, A.M.: Solving low-density subset sum problems. Journal
  of the ACM  \textbf{32}(1),  229--246 (1985)

\bibitem{lamport1979}
Lamport, L.: Constructing digital signatures from a one way function. Tech.
  Rep. CSL-98, SRI International (1979)

\bibitem{LS15}
Langlois, A., Stehl{\'e}, D.: Worst-case to average-case reductions for module
  lattices. Designs, Codes and Cryptography  \textbf{75}(3),  565--599 (2015)

\bibitem{LLL82}
Lenstra, A.K., Lenstra, H.W., Lov{\'a}sz, L.: Factoring polynomials with
  rational coefficients. Mathematische Annalen  \textbf{261},  515--534 (1982)

\bibitem{LP11}
Lindner, R., Peikert, C.: Better key sizes (and attacks) for {LWE}-based
  encryption. In: Topics in Cryptology -- CT-RSA 2011. LNCS, vol.~6558, pp.
  319--339. Springer (2011)

\bibitem{Lyu12}
Lyubashevsky, V.: Lattice signatures without trapdoors. In: Advances in
  Cryptology -- EUROCRYPT 2012. LNCS, vol.~7237, pp. 738--755. Springer (2012)

\bibitem{LM09}
Lyubashevsky, V., Micciancio, D.: On bounded distance decoding, unique shortest
  vectors, and the minimum distance problem. In: Advances in Cryptology --
  CRYPTO 2009. LNCS, vol.~5677, pp. 577--594. Springer (2009)

\bibitem{LMPR08}
Lyubashevsky, V., Micciancio, D., Peikert, C., Rosen, A.: {SWIFFT}: A modest
  proposal for {FFT} hashing. In: Fast Software Encryption -- FSE 2008. LNCS,
  vol.~5086, pp. 54--72. Springer (2008)

\bibitem{LPS10}
Lyubashevsky, V., Palacio, A., Segev, G.: Public-key cryptographic primitives
  provably as secure as subset sum. In: Theory of Cryptography -- TCC 2010.
  LNCS, vol.~5978, pp. 382--400. Springer (2010)

\bibitem{LPR13}
Lyubashevsky, V., Peikert, C., Regev, O.: On ideal lattices and learning with
  errors over rings. Journal of the ACM  \textbf{60}(6),  43:1--43:35 (2013),
  preliminary version in EUROCRYPT 2010

\bibitem{MMPPW23}
Maino, L., Martindale, C., Panny, L., Pope, G., Wesolowski, B.: A direct key
  recovery attack on {SIDH}. In: Advances in Cryptology -- EUROCRYPT 2023.
  LNCS, vol. 14008, pp. 448--471. Springer (2023)

\bibitem{McE78}
McEliece, R.J.: A public-key cryptosystem based on algebraic coding theory. DSN
  Progress Report 42--44, Jet Propulsion Laboratory (1978)

\bibitem{RFC8554}
McGrew, D., Curcio, M., Fluhrer, S.: {Leighton-Micali} hash-based signatures.
  RFC 8554 (2019)

\bibitem{Mer89}
Merkle, R.C.: A certified digital signature. In: Advances in Cryptology --
  CRYPTO '89. LNCS, vol.~435, pp. 218--238. Springer (1990)

\bibitem{MG02}
Micciancio, D., Goldwasser, S.: Complexity of Lattice Problems: A Cryptographic
  Perspective, The Kluwer International Series in Engineering and Computer
  Science, vol.~671. Kluwer Academic Publishers (2002)

\bibitem{MM11}
Micciancio, D., Mol, P.: Pseudorandom knapsacks and the sample complexity of
  {LWE} search-to-decision reductions. In: Advances in Cryptology -- CRYPTO
  2011. LNCS, vol.~6841, pp. 465--484. Springer (2011)

\bibitem{MP12}
Micciancio, D., Peikert, C.: Trapdoors for lattices: Simpler, tighter, faster,
  smaller. In: Advances in Cryptology -- EUROCRYPT 2012. LNCS, vol.~7237, pp.
  700--718. Springer (2012)

\bibitem{MP13}
Micciancio, D., Peikert, C.: Hardness of {SIS} and {LWE} with small parameters.
  In: Advances in Cryptology -- CRYPTO 2013. LNCS, vol.~8042, pp. 21--39.
  Springer (2013)

\bibitem{MR07}
Micciancio, D., Regev, O.: Worst-case to average-case reductions based on
  {G}aussian measures. SIAM Journal on Computing  \textbf{37}(1),  267--302
  (2007)

\bibitem{MRRS07}
Moore, C., Rockmore, D., Russell, A., Schulman, L.J.: The power of strong
  {F}ourier sampling: Quantum algorithms for affine groups and hidden shifts.
  SIAM Journal on Computing  \textbf{37}(3),  938--958 (2007)

\bibitem{MRS08}
Moore, C., Russell, A., Schulman, L.J.: The symmetric group defies strong
  {F}ourier sampling. SIAM Journal on Computing  \textbf{37}(6),  1842--1864
  (2008)

\bibitem{Mor26}
Morales, M.E.S.: The hidden subgroup problem in semidirect products and
  quasi-{H}amiltonian groups. arXiv:2608.05321 (2026)

\bibitem{MSU08}
Myasnikov, A., Shpilrain, V., Ushakov, A.: Group-Based Cryptography. Advanced
  Courses in Mathematics -- CRM Barcelona, Birkh{\"a}user (2008)

\bibitem{FIPS204}
{National Institute of Standards and Technology}: Module-lattice-based digital
  signature standard. FIPS~204, NIST (2024)

\bibitem{FIPS203}
{National Institute of Standards and Technology}: Module-lattice-based
  key-encapsulation mechanism standard. FIPS~203, NIST (2024)

\bibitem{FIPS205}
{National Institute of Standards and Technology}: Stateless hash-based digital
  signature standard. FIPS~205, NIST (2024)

\bibitem{NISTIR8545}
{National Institute of Standards and Technology}: Status report on the fourth
  round of the {NIST} post-quantum cryptography standardization process. NIST
  IR~8545, NIST (2025)

\bibitem{NIST26sig}
{National Institute of Standards and Technology}: Nine candidates advance to
  the third round of the additional digital signatures for the {PQC}
  standardization process. NIST News, May 14, 2026 (accompanying NIST IR 8610)
  (2026)

\bibitem{NIST26r3}
{National Institute of Standards and Technology}: Post-quantum cryptography:
  Additional digital signature schemes -- round 3 additional signatures. CSRC
  project page, last updated September 28, 2026; records that the HAWK
  submission team has withdrawn HAWK and links its announcement on the
  pqc-forum.
  \url{https://csrc.nist.gov/projects/pqc-dig-sig/round-3-additional-signatures}
  (2026), accessed October 8, 2026

\bibitem{Ngu99}
Nguyen, P.Q.: Cryptanalysis of the {G}oldreich--{G}oldwasser--{H}alevi
  cryptosystem from {Crypto} '97. In: Advances in Cryptology -- CRYPTO 1999.
  LNCS, vol.~1666, pp. 288--304. Springer (1999)

\bibitem{NR06}
Nguyen, P.Q., Regev, O.: Learning a parallelepiped: Cryptanalysis of {GGH} and
  {NTRU} signatures. In: Advances in Cryptology -- EUROCRYPT 2006. LNCS,
  vol.~4004, pp. 271--288. Springer (2006)

\bibitem{Nie86}
Niederreiter, H.: Knapsack-type cryptosystems and algebraic coding theory.
  Problems of Control and Information Theory  \textbf{15}(2),  159--166 (1986)

\bibitem{NP26}
Njah~Nchiwo, R.Y., Pellet-Mary, A.: Improved reduction from {RLWE} to {MP-LWE}.
  In: Public-Key Cryptography -- PKC 2026. LNCS, Springer (2026),
  \href{https://doi.org/10.1007/978-3-032-26731-3_6}{doi:10.1007/978-3-032-26731-3\_6};
  ePrint 2026/342

\bibitem{OQS}
{Open Quantum Safe Project}: liboqs algorithm datasheets: {Classic McEliece},
  {HQC}, {UOV}, {MAYO}. \url{https://openquantumsafe.org/liboqs/algorithms/}
  (2026), accessed October 1, 2026

\bibitem{PR23}
Page, A., Robert, D.: Introducing {Clapoti(s)}: Evaluating the isogeny class
  group action in polynomial time. Cryptology ePrint Archive, Paper 2023/1766
  (2023)

\bibitem{Pat96}
Patarin, J.: Hidden fields equations ({HFE}) and isomorphisms of polynomials
  ({IP}): Two new families of asymmetric algorithms. In: Advances in Cryptology
  -- EUROCRYPT 1996. LNCS, vol.~1070, pp. 33--48. Springer (1996)

\bibitem{Pat97}
Patarin, J.: The oil and vinegar signature scheme. Presented at the Dagstuhl
  Workshop on Cryptography (1997)

\bibitem{Pei09}
Peikert, C.: Public-key cryptosystems from the worst-case shortest vector
  problem. In: Proceedings of the 41st Annual ACM Symposium on Theory of
  Computing (STOC 2009). pp. 333--342 (2009)

\bibitem{Pei10}
Peikert, C.: An efficient and parallel {G}aussian sampler for lattices. In:
  Advances in Cryptology -- CRYPTO 2010. LNCS, vol.~6223, pp. 80--97. Springer
  (2010)

\bibitem{Pei16}
Peikert, C.: A decade of lattice cryptography. Foundations and Trends in
  Theoretical Computer Science  \textbf{10}(4),  283--424 (2016)

\bibitem{Pei20}
Peikert, C.: He gives {C}-sieves on the {CSIDH}. In: Advances in Cryptology --
  EUROCRYPT 2020. LNCS, vol. 12106, pp. 463--492. Springer (2020)

\bibitem{PP24}
Peikert, C., Pepin, Z.: Algebraically structured {LWE}, revisited. Journal of
  Cryptology  \textbf{37}, ~28 (2024),
  \href{https://doi.org/10.1007/s00145-024-09508-3}{doi:10.1007/s00145-024-09508-3}

\bibitem{PRS17}
Peikert, C., Regev, O., Stephens-Davidowitz, N.: Pseudorandomness of ring-{LWE}
  for any ring and modulus. In: Proceedings of the 49th Annual ACM Symposium on
  Theory of Computing (STOC 2017). pp. 461--473 (2017)

\bibitem{PHS19}
Pellet-Mary, A., Hanrot, G., Stehl{\'e}, D.: Approx-{SVP} in ideal lattices
  with pre-processing. In: Advances in Cryptology -- EUROCRYPT 2019. LNCS, vol.
  11477, pp. 685--716. Springer (2019)

\bibitem{PS21}
Pellet-Mary, A., Stehl{\'e}, D.: On the hardness of the {NTRU} problem. In:
  Advances in Cryptology -- ASIACRYPT 2021. LNCS, vol. 13090, pp. 3--35.
  Springer (2021)

\bibitem{PX26}
Pellet-Mary, A., Xia, W.: A unified reduction from {RLWE} to {MP-LWE}. In:
  Advances in Cryptology -- ASIACRYPT 2026. LNCS, Springer (2026), cryptology
  ePrint Archive, Paper 2026/2026. \url{https://eprint.iacr.org/2026/2026}

\bibitem{PQC26news}
{PostQuantum.com}: {S}imon claims polynomial-time {DCP} quantum algorithm. News
  article (2026), timeline of the claim, D.~Apon's analysis (August 9--11,
  2026) and the refutation; update of October 11, 2026 on the revision posted
  October 8.
  \url{https://postquantum.com/security-pqc/simon-quantum-algorithm-lattice-pqc/}

\bibitem{Prange62}
Prange, E.: The use of information sets in decoding cyclic codes. IRE
  Transactions on Information Theory  \textbf{8}(5), ~5--9 (1962)

\bibitem{Quentir26}
{Quentir}: A lattice attack was claimed on 3 august 2026 and answered on 15
  august: What {ePrint} 2026/1591 and 2026/1693 say about {ML-KEM}. Blog post
  (2026),
  \url{https://www.quentir.ai/blog/2026/simon-dcp-lattice-claim-refuted-eprint-1591-1693-ml-kem}

\bibitem{Ran25}
Randriambololona, H.: The syzygy distinguisher. In: Advances in Cryptology --
  EUROCRYPT 2025. LNCS, vol. 15606, pp. 324--354. Springer (2025), cryptology
  ePrint Archive, Paper 2024/1193

\bibitem{Reg04c}
Regev, O.: New lattice-based cryptographic constructions. Journal of the ACM
  \textbf{51}(6),  899--942 (2004)

\bibitem{Reg04}
Regev, O.: Quantum computation and lattice problems. SIAM Journal on Computing
  \textbf{33}(3),  738--760 (2004), preliminary version in FOCS 2002

\bibitem{Reg04b}
Regev, O.: A subexponential time algorithm for the dihedral hidden subgroup
  problem with polynomial space. arXiv:quant-ph/0406151 (2004)

\bibitem{Reg09}
Regev, O.: On lattices, learning with errors, random linear codes, and
  cryptography. Journal of the ACM  \textbf{56}(6),  34:1--34:40 (2009),
  preliminary version in STOC 2005

\bibitem{Rob23}
Robert, D.: Breaking {SIDH} in polynomial time. In: Advances in Cryptology --
  EUROCRYPT 2023. LNCS, vol. 14008, pp. 472--503. Springer (2023)

\bibitem{RSSS17}
Ro{\c{s}}ca, M., Sakzad, A., Stehl{\'e}, D., Steinfeld, R.: Middle-product
  learning with errors. In: Advances in Cryptology -- CRYPTO 2017. LNCS, vol.
  10403, pp. 283--297. Springer (2017)

\bibitem{RS06}
Rostovtsev, A., Stolbunov, A.: Public-key cryptosystem based on isogenies.
  Cryptology ePrint Archive, Paper 2006/145 (2006)

\bibitem{RoyessAudit}
{royess}: Lean~4 audit of the dihedral coset algorithm of {ePrint} 2026/1591.
  GitHub repository (2026),
  \url{https://github.com/royess/dihedral-coset-lean-audit}; consulted October
  2, 2026 (commit identifier not recorded); README re-read October 18, 2026

\bibitem{Saa26cost}
Saarinen, M.J.O.: Bit operation cost of ``holdout'' key-recovery attacks
  against {Classic McEliece}. Cryptology ePrint Archive, Paper 2026/1786
  (2026), posted August 24, 2026; revised September 15, 2026.
  \url{https://eprint.iacr.org/2026/1786}

\bibitem{Saa26a}
Saarinen, M.J.O.: {HOVER}: Higher-order vanishing endomorphism recovery.
  Cryptology ePrint Archive, Paper 2026/1778 (2026), posted August 22, 2026;
  revised August 24 and October 8, 2026. NORDSEC 2026, LNCS to appear; ePrint
  revision consulted October 11, 2026. Technical details cited from this
  preprint follow the revision of August 24, 2026.
  \url{https://eprint.iacr.org/2026/1778}

\bibitem{Saa26b}
Saarinen, M.J.O.: Two-anchor holdout/hermite: Solving the {TII}-254 {McEliece}
  key recovery challenge. Cryptology ePrint Archive, Paper 2026/1986 (2026),
  posted September 11, 2026; revised September 24, 2026.
  \url{https://eprint.iacr.org/2026/1986}

\bibitem{SE94}
Schnorr, C.P., Euchner, M.: Lattice basis reduction: Improved practical
  algorithms and solving subset sum problems. Mathematical Programming
  \textbf{66},  181--199 (1994)

\bibitem{Sen00}
Sendrier, N.: Finding the permutation between equivalent linear codes: The
  support splitting algorithm. IEEE Transactions on Information Theory
  \textbf{46}(4),  1193--1203 (2000)

\bibitem{SL26}
Shen, T., Li, X.: A survey on classical lattice algorithms. Cryptography
  \textbf{10}(2), ~17 (2026),
  \href{https://doi.org/10.3390/cryptography10020017}{doi:10.3390/cryptography10020017}

\bibitem{Sho97}
Shor, P.W.: Polynomial-time algorithms for prime factorization and discrete
  logarithms on a quantum computer. SIAM Journal on Computing  \textbf{26}(5),
  1484--1509 (1997)

\bibitem{Sim97}
Simon, D.R.: On the power of quantum computation. SIAM Journal on Computing
  \textbf{26}(5),  1474--1483 (1997)

\bibitem{Sim26}
Simon, D.R.: A polynomial-time quantum algorithm for the dihedral coset
  problem. Cryptology ePrint Archive, Paper 2026/1591 (2026), received August
  3, 2026; made public August 6; revised six times, last on October 8, 2026
  (PDF dated October 7, 2026, still labelled a preliminary draft), adding
  Section 3, ``A Note on [6]'', on ePrint 2026/1693.
  \url{https://eprint.iacr.org/2026/1591}

\bibitem{SS11}
Stehl{\'e}, D., Steinfeld, R.: Making {NTRU} as secure as worst-case problems
  over ideal lattices. In: Advances in Cryptology -- EUROCRYPT 2011. LNCS,
  vol.~6632, pp. 27--47. Springer (2011)

\bibitem{SSTX09}
Stehl{\'e}, D., Steinfeld, R., Tanaka, K., Xagawa, K.: Efficient public key
  encryption based on ideal lattices. In: Advances in Cryptology -- ASIACRYPT
  2009. LNCS, vol.~5912, pp. 617--635. Springer (2009)

\bibitem{SW26}
Straznickas, Z., Weis, S.A.: {HAWK}-$n$ key recovery reduces to {SVP} in
  dimension $n/2+1$. Cryptology ePrint Archive, Paper 2026/1593 (2026),
  \url{https://eprint.iacr.org/2026/1593}

\bibitem{STHY23}
Sudo, K., Tezuka, M., Hara, K., Yoshida, Y.: Quantum search-to-decision
  reduction for the {LWE} problem. In: Progress in Cryptology -- AFRICACRYPT
  2023. LNCS, vol. 14064. Springer (2023), journal version in IEICE Trans.
  Fundamentals E108.A(2), 2025; ePrint 2023/344

\bibitem{ALTEQ22}
Tang, G., Duong, D.H., Joux, A., Plantard, T., Qiao, Y., Susilo, W.: Practical
  post-quantum signature schemes from isomorphism problems of trilinear forms.
  In: Advances in Cryptology -- EUROCRYPT 2022. LNCS, vol. 13277, pp. 582--612.
  Springer (2022)

\bibitem{TPD21}
Tao, C., Petzoldt, A., Ding, J.: Efficient key recovery for all {HFE} signature
  variants. In: Advances in Cryptology -- CRYPTO 2021. LNCS, vol. 12825, pp.
  70--93. Springer (2021)

\bibitem{SQIsignSpec}
{The SQIsign team}: {SQIsign}: Algorithm specifications and supporting
  documentation, version 3.0. \url{https://sqisign.org/} (2026), round-3
  submission to the NIST additional signatures process

\bibitem{Ved26}
Vedenev, K.: Extending distinguishing to key recovery for subfield subcodes of
  {GRS} codes. Cryptology ePrint Archive, Paper 2026/1747 (2026), posted August
  20, 2026; revised October 7, 2026. \url{https://eprint.iacr.org/2026/1747}.
  Earlier versions, in particular the working draft of September 6, 2026, are
  identified explicitly where they are discussed

\bibitem{Wee24}
Wee, H.: Circuit {ABE} with poly(depth, $\lambda$)-sized ciphertexts and keys
  from lattices. In: Advances in Cryptology -- CRYPTO 2024. LNCS, Springer
  (2024), ePrint 2024/1416

\bibitem{Wei26}
Weis, S.A.: Improving {GIJS} key recovery for {Classic McEliece}. Cryptology
  ePrint Archive, Paper 2026/1984 (2026), version 0.5, October 6, 2026; version
  0.1 posted September 11, 2026. \url{https://eprint.iacr.org/2026/1984}

\bibitem{WZ26}
Wen, W., Zheng, J.: Module learning with errors and structured extrapolated
  dihedral cosets. In: Advances in Cryptology -- CRYPTO 2026. LNCS, Springer
  (2026),
  \href{https://doi.org/10.1007/978-3-032-35377-1_9}{doi:10.1007/978-3-032-35377-1\_9};
  ePrint 2026/155

\bibitem{Wes21}
Wesolowski, B.: The supersingular isogeny path and endomorphism ring problems
  are equivalent. In: 62nd IEEE Symposium on Foundations of Computer Science
  (FOCS 2021). pp. 1100--1111 (2021)

\bibitem{Zha25}
Zhang, Y.: Exact coset sampling for quantum lattice algorithms.
  arXiv:2509.12341 [quant-ph] (2025), first posted September 15, 2025;
  version~8 of May 14, 2026 used here. \url{https://arxiv.org/abs/2509.12341}

\bibitem{Zho26}
Zhou, J.: On the sample complexity of the generalized hidden shift problem over
  arbitrary finite groups. arXiv:2609.05828 (2026)

\end{thebibliography}

\end{document}